\pdfoutput=1
\documentclass[letterpaper,11pt]{article}
\usepackage[margin=1in]{geometry}
\usepackage{amsmath,amsthm,amssymb,amsfonts}
\usepackage[dvipdfmx]{graphicx}
\usepackage{wrapfig}
\usepackage{color}
\usepackage{svg}
\usepackage{url}
\usepackage{etoolbox}
\usepackage{bbm}
\usepackage{algorithm}
\usepackage{algorithmic}
\usepackage{xspace}
\usepackage[normalem]{ulem}
\usepackage{todonotes}

\usepackage{tikz}
\usetikzlibrary{positioning}
\usetikzlibrary{calc}

\tikzset{every picture/.style={font issue=\footnotesize},
            font issue/.style={execute at begin picture={#1\selectfont}}
           }

\usepackage[colorlinks=true,citecolor=blue,linkcolor=blue]{hyperref}
\usepackage[capitalise]{cleveref}

\newtheorem{theorem}{Theorem}[section]
\newtheorem{lemma}[theorem]{Lemma}
\newtheorem{corollary}[theorem]{Corollary}
\newtheorem{definition}[theorem]{Definition}
\newtheorem{proposition}[theorem]{Proposition}
\newtheorem{fact}[theorem]{Fact}
\newtheorem{folklore}[theorem]{Folklore}

\theoremstyle{remark}
\newtheorem{remark}[theorem]{Remark}

\newif\ifdraft
\drafttrue

\newcommand\modify[2]{
	\ifdraft
		\unskip\textcolor{red}{\sout{#1}}\textcolor{cyan}{#2}\unskip
	\else
		\unskip#2\unskip
	\fi
}

\def\defeq{\mathrel{\mathop:}=}
\newcommand{\numberhom}[1]{\#\textsc{Hom}^{(#1)}}
\newcommand{\numberemb}[1]{\#\textsc{Emb}^{(#1)}}
\newcommand{\numberembcol}[1]{\#\textsc{Emb}^{(#1)}_{\mathrm{col}}}

\newcommand{\kparity}[1]{\Pi^{\oplus #1}}
\newcommand{\numberembcolparity}[1]{\oplus \textsc{Emb}^{(#1)}_{\mathrm{col}}}

\newcommand{\distributionparity}[1]{\biguplus^k\distribution{#1}}

\newcommand{\existembcol}[1]{\textsc{Emb}^{(#1)}_{\mathrm{col}}}

\renewcommand{\hom}[2]{\numberhom{#1}(#2)}
\newcommand{\emb}[2]{\numberemb{#1}(#2)}
\newcommand{\embcol}[2]{\numberembcol{#1}(#2)}
\newcommand{\distribution}[1]{\mathcal{G}^{(#1)}_{n,1/2}}
\newcommand{\randombigraphdist}[2]{\mathcal{B}_{#1,#2,1/2}}

\newcommand{\DistKab}[3]{\mathcal{K}_{#1, #2, #3}}
\newcommand{\DistER}{\mathcal{G}_{n,1/2}}
\newcommand{\DistERDisjointUnion}{\biguplus^k\DistER}

\newcommand{\Unif}[1]{\mathrm{Unif}(#1)}
\newcommand{\dtv}{d_{\mathrm{TV}}}

\newcommand{\KabCount}{\numberemb{K_{a,b}}\xspace}
\newcommand{\ColKabDetect}{\textsc{Colorful $K_{a,b}$-Detection}\xspace}

\newcommand{\ColKaaDetect}{\textsc{Colorful $K_{a,a}$-Detection}\xspace}
\newcommand{\KaDetect}{\textsc{$K_{a}$-Detection}\xspace}
\newcommand{\KabParity}{\oplus K_{a,b}\mathchar`-\textsc{Subgraph}\xspace}
\newcommand{\KaParity}{\oplus K_{a}\mathchar`-\textsc{Subgraph}\xspace}

\newcommand{\embcolpoly}[3]{\mathsf{EMBCOL}_{#1,#2,#3}}
\newcommand{\distU}[3]{\mathcal{U}_{#1}^{(#2)}(#3)}
\newcommand{\IP}{\mathsf{IP}}

\DeclareMathOperator{\E}{\mathbf{E}}
\newcommand{\Var}{\mathop{\mathbf{Var}}}

\newcommand{\Fq}{\mathbb{F}_q}
\newcommand{\Nat}{\mathbb{N}}
\newcommand{\binset}{\{0,1\}}
\newcommand{\intset}[1]{\{0, \ldots,  #1\}}
\newcommand{\numset}[1]{\{1, \ldots,  #1\}}

\DeclareMathOperator\polylog{polylog}

\newcommand{\supp}{\mathsf{supp}\xspace}

\newcommand{\kOV}{$k$-\textsc{OV}\xspace}

\crefname{figure}{Figure}{Figures}
\crefname{remark}{Remark}{Remarks}
\crefname{equation}{}{}

\makeatother

\title{Nearly Optimal Average-Case Complexity of Counting Bicliques Under SETH} 
\author{
Shuichi Hirahara%
\thanks{
National Institute of Informatics.
Email: \texttt{s\_hirahara@nii.ac.jp}
}
\ and
Nobutaka Shimizu%
\thanks{The University of Tokyo.
Email: \texttt{nobutaka\_shimizu@mist.i.u-tokyo.ac.jp}}
}

\begin{document}
\maketitle

\begin{abstract}
In this paper, we seek a natural problem and a natural distribution of instances 
such that any $O(n^{c-\epsilon})$ time algorithm fails to solve most instances drawn from the distribution, while the problem admits an $n^{c+o(1)}$-time algorithm that correctly solves all instances.
Specifically,
we consider the $K_{a,b}$ counting problem in a random bipartite graph,
where $K_{a,b}$ is a complete bipartite graph and $a$ and $b$ are constants.
Our distribution consists of the binomial random bipartite graphs $B_{\alpha n, \beta n}$ with edge density $1/2$, where $\alpha$ and $\beta$ are drawn uniformly at random from $\{1,\ldots,a\}$ and $\{1,\ldots,b\}$, respectively.
We determine the nearly optimal average-case complexity of this counting problem
by proving the following results.
\begin{description}
  \item[Conditional Tight Worst-Case Complexity.] Under the Strong Exponential Time Hypothesis, for any constants $a\geq 3$ and $\epsilon>0$, there exists a constant $b=b(a,\epsilon)$ such that no $O(n^{a-\epsilon})$-time algorithm counts the number of $K_{a,b}$ subgraphs in a given $n$-vertex graph.
  On the other hand, for any constant $a\geq 8$ and any $b=b(n)$, we can count all $K_{a,b}$ subgraphs in time $bn^{a+o(1)}$.
  \item[Worst-to-Average Reduction.] If there exists a $T(n)$-time randomized heuristic algorithm that solves the $K_{a,b}$ subgraph counting problem on a random graph $B_{\alpha n,\beta n}$ with success probability $1-1/\mathrm{polylog}(n)$, then there exists a $T(n) \mathrm{polylog}(n)$-time randomized algorithm that solves the $K_{a,b}$ subgraph counting problem for any input with success probability $2/3$.
  \item[Fine-Grained Hardness Amplification.]
  Suppose that there is a $T(n)$-time algorithm with success probability $n^{-\epsilon}$ that
  computes the parity of the number of $K_{a,b}$ subgraphs in $H$, where
  $H:=G_1\uplus \cdots \uplus G_k$
  is the disjoint union of
  $k=O(\epsilon \log n)$ i.i.d.~random graphs $G_1,\ldots,G_k$ each of which is drawn from the distribution of $B_{\alpha n, \beta n}$.
  Then there is a $T(n) n^{O(\epsilon)}$-time randomized algorithm that counts $K_{a,b}$ subgraphs for any input with success probability $2/3$.
  \end{description}
  
The central idea behind these results is \emph{colorful subgraphs}.
For the first result,
we reduce the $k$-Orthogonal Vectors problem to the colorful $K_{a,b}$ detection problem.
In the second result,
we establish a worst-case-to-average-case reduction for a colorful subgraph counting problem
based on the binary-extension technique given by [Boix-Adser\`a, Brennan, and Bresler; FOCS19].
Then, we reduce colorful $K_{a,b}$ counting to $K_{a,b}$ counting.
Regarding the third result,
we prove the classical XOR lemma and the direct product theorem in the fine-grained setting for subgraph counting problems.
The core of the proof is an $O(\log n)$-round doubly-efficient interactive proof system for the colorful subgraph counting problem such that the honest prover is asked to solve $\mathrm{polylog}(n)$ instances of the counting problem.
The new protocol improves the known interactive proof system for the $t$-clique counting problem given by [Goldreich and Rothblum; FOCS18] in terms of query complexity.
\end{abstract}

\thispagestyle{empty}
\newpage
\tableofcontents
\thispagestyle{empty}
\newpage
\setcounter{page}{1}

\section{Introduction}
Understanding the average-case complexity
of a computational problem
is a fundamental question in the theory of computation.
One main reason is that
average-case hardness of a problem can serve as 
the first step towards building secure cryptographic primitives.
Besides,
average-case complexity reflects
the actual performance 
of 
an algorithm in the real world.

Motivated by a practical performance analysis, the average-case
complexity on a \emph{natural}
distribution has gathered special
attention.
A particular focus has been on
the complexity of problems on random graphs.
It is known that
several graph problems
such as 
\textsc{Hamiltonian Cycle} and
\textsc{Graph Isomorphism}, which are 
believed not to be in $\mathsf{P}$,
can be solved with high probability
in polynomial time
if the input is a random graph~\cite{FM97}.
Even for problems in $\mathsf{P}$, similar gaps between average- and worst-case complexity
have been observed.
For example, finding a maximum matching in an unweighted $m$-edge $n$-vertex graph admits an $O(m\sqrt{n})$-time algorithm~\cite{MV80}, while it admits an $\widetilde{O}(m)$-time algorithm that works on random graphs with high probability~\cite{Mot94}.%
\footnote{
    The notation $\widetilde{O}(\cdot)$ hides a $\polylog(n)$ factor.
}
Karp~\cite{Kar76} proposed the
problem of finding a clique
in a random graph:
He conjectured that
one cannot find a clique
of size $(1+\epsilon)\log n$ 
in an $n$-vertex Erd\H{o}s-R\'enyi
random graph of density $1/2$
in polynomial-time for any constant $\epsilon>0$.

The study of average-case complexity  has an application in \emph{Proof of Work} (PoW).
A PoW system, introduced by Dwork and Naor~\cite{DN92}, 
is a proof system that checks whether an untrusted party has consumed some amount of computational resources.
Such a proof system is useful for coping with a denial-of-service attack;
more recently it has been playing an important role in decentralized cryptocurrencies such as Bitcoin.
However, the criticism has been made that
the current Bitcoin system consumes considerable amounts of the computational resources of the miners
for what is a meaningless task (i.e., computing the SHA-256 hash).
To cope with this issue, 
Ball, Rosen, Sabin, and Vasudevan~\cite{BRSV17-2}
suggested a framework
in which the task of a prover is some meaningful one such as machine learning and finding primes.%
\footnote{ The original paper \cite{BRSV17-2} introduced the notion of \emph{Proof of Useful Work}.  However,
it turned out that a na\"ive protocol satisfies their definition \cite{BRSV18}, and it remains an open question to formulate the ideas of proof of useful work in a theoretically meaningful way.}
However, 
there are few natural problems whose average-case complexity is well understood,
which makes it difficult to construct a PoW system based on meaningful computational tasks.%
\footnote{ One approach to cope with this issue is to reduce meaningful computational tasks to an artificial problem whose average-case complexity is well understood, as suggested in \cite{BRSV17-2}; however, such a reduction is not necessarily efficient.}




The aim of this paper is to find a natural computational task whose worst- and average-case complexity has a ``sharp threshold''.
Specifically, for a threshold parameter $c$, we seek a natural problem and a natural distribution of instances 
such that any $n^{c-o(1)}$ time algorithm fails to solve \emph{most instances} drawn from the distribution, while the problem admits an $n^{c+o(1)}$-time algorithm that correctly solves all instances.
A problem which exhibits a sharp threshold between worst- and average-case complexity is required to construct a PoW.%
\footnote{ Roughly speaking, a PoW system requires an honest prover to solve the problem by using a worst-case solver in time $n^{c+o(1)}$, while a cheating prover running in time $n^{c-o(1)}$ fails to solve the problem on most instances.}
The existence of such an \emph{artificial} problem 
can be shown 
by using 
an average-case version of the time hierarchy theorem
(without relying on any unproven assumption) \cite{Wilber83_focs_conf,GoldreichW00_icalp_conf}.
However, prior to our work, no \emph{natural} problem whose sharp threshold between worst-case complexity and average-case complexity on a \emph{natural} distribution is determined under a widely investigated hypothesis.


As a natural computational task, 
we consider the problem of counting the number of bicliques (also known as complete bipartite graphs) on a natural distribution.
To state the problem more formally,
for a fixed graph $H$,
let $\numberemb{H}$ denote the problem that
asks the number of embeddings of $H$ in $G$ for a given graph $G$.
(An \emph{embedding} of $H$ in $G$ is an injective homomorphism from $H$ to $G$; see \cref{sec:preliminaries} for the definition of homomorphisms.)
The problem $\numberemb{H}$ is equivalent to the $H$-subgraph counting problem:
The number of $H$-subgraphs in $G$ is equal to the number of embeddings of $H$ in $G$ divided by the number of automorphisms of $H$.
Our main interest is the complexity of $\numberemb{K_{a,b}}$,
where $a$ and $b$ are constants
and $K_{a,b}$ is the
complete bipartite graph with
$a$ left vertices and $b$ right vertices.

The topic of finding or counting bicliques has been investigated from both practical and theoretical motivations.
On the practical side,
this study 
has applications in
 data mining~\cite{AS94,MT17} and bioinformatics~\cite{DABMMS04}.
See~\cite{MT17,AVJ98} and the references therein for details and lists of further applications.
On a theoretical side,
biclique detection and counting have been extensively studied 
especially in parameterized complexity theory  \cite{Lin18,CK12,GKL12,BFGL10,Kut12}.



For any constant $a \ge 8$,
our results determine a sharp threshold between worst-case complexity and average-case complexity
of counting the number of bicliques $K_{a, b}$ on a natural distribution for a large constant $b$,
under the Strong Exponential Time Hypothesis (SETH) of Impagliazzo, Paturi and Zane~\cite{IRPZ01}.
We consider the following distribution.

\begin{definition}[Random Bipartite Graph $\DistKab a b n$]
For given parameters $a,b,n\in\Nat$,
choose $\alpha, \beta$ uniformly at random from $\numset{a}$ and $\numset{b}$, respectively.
Let $\DistKab a b n$ be the distribution
of a random bipartite graph
with $n \alpha$ left vertices and $n \beta$ right vertices,
where each possible edge is included independently with probability $1 / 2$.
\end{definition}

Our first result determines a (not sharp) threshold between worst- and average-case complexity of  $(\numberemb{K_{a,b}}, \DistKab a b n)$ under SETH.
 
\begin{theorem}
[Worst- and Average-Case Complexity of Counting $K_{a, b}$]
\label{thm_Main}
The following holds.
\begin{itemize}
\item  For any constant $a \ge 8$ and
  any constant $b$,
  there exists a worst-case solver running in time $n^{a + o(1)}$
  that solves $\numberemb{K_{a,b}}$.
\item Under SETH
  \footnote{
    Strictly speaking, we assume a variant of SETH that rules out $(2-\epsilon)^n$-time \emph{randomized} algorithms solving $k$-SAT for some constant $k\geq k(\epsilon)$.
  },
  for any constants $\epsilon > 0$ and $a \ge 3$,
  there exists a constant $b=b(a,\epsilon)$ such that
  no average-case solver running in time $n^{a - \epsilon}$
  solves $\numberemb{K_{a,b}}$
  on a graph drawn from $\DistKab a b n$
  with success probability greater than $1-(1/\log n)^C$,
  where $C=C(a,b,\epsilon)$ is a sufficiently large constant.
  \end{itemize}
\end{theorem}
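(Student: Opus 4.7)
The theorem splits cleanly into (1) an algorithmic upper bound for arbitrary $n$-vertex graphs and (2) a conditional average-case lower bound; the two halves rely on essentially independent techniques.

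\emph{Algorithmic upper bound ($a \geq 8$).} My plan is to phrase $\numberemb{K_{a,b}}$-counting as the computation of all common-neighborhood sizes $d(A) := \lvert \bigcap_{v \in A} N_G(v)\rvert$ over $a$-subsets $A \subseteq V(G)$, so that $\numberemb{K_{a,b}}(G)$ can be recovered from the $d(A)$'s by an $O(b\cdot n^a)$ summation of falling factorials $d(A)(d(A)-1)\cdots(d(A)-b+1)$ with a small inclusion-exclusion correction for overlap between the two sides of the biclique. The dominant cost is therefore the tabulation of $d(\cdot)$, which I would obtain by a rectangular matrix multiplication: form $P \in \{0,1\}^{\binom{n}{a/2}\times n}$ with $P[S,v] = [S \subseteq N_G(v)]$, and observe that $(PP^\top)[S_1,S_2] = d(S_1 \cup S_2)$ whenever $S_1 \cap S_2 = \emptyset$, so that one entry of this product determines $d(A)$ for every $a$-subset $A$. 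The cost of $PP^\top$ is governed by the rectangular exponent $\omega(a/2, 1, a/2)$; for $a \geq 8$ (i.e.\ $a/2 \geq 4$) known rectangular-FMM bounds push this exponent down to $a + o(1)$, matching the output size up to subpolynomial factors and yielding the desired $bn^{a+o(1)}$ running time.

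\emph{Conditional lower bound ($a \geq 3$).} My plan is to chain two independent ingredients.

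\emph{Step 1 (SETH-based worst-case hardness).} Reduce the $k$-Orthogonal Vectors problem with $k=a$ and dimension $d = \polylog(n)$ — which under SETH demands $n^{a-o(1)}$ time — first to the colorful biclique detection problem $\ColKabDetect$, and from there to $\numberemb{K_{a,b}}$. The reduction uses a bipartite host graph with $a$ left color classes (one per vector family) and a right side consisting of a constant number of gadget blocks encoding coordinate subsets of the OV instance; a colorful $K_{a,b}$ precisely witnesses an orthogonal $a$-tuple. The constant $b = b(a,\epsilon)$ is chosen large enough to absorb the $\polylog(n)$ OV-dimension into a constant number of right-side gadgets (via a color-coding or random-sampling device applied to coordinates), while a final inclusion-exclusion pass transports hardness from the colorful to the uncolored problem with only a constant overhead in $b$. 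The net conclusion is that $\numberemb{K_{a,b}}$ requires $n^{a-o(1)}$ time in the worst case under SETH.

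\emph{Step 2 (Worst-to-average).} Apply the paper's second main result (the worst-to-average reduction): any $T(n)$-time algorithm that solves $\numberemb{K_{a,b}}$ on $\DistKab a b n$ with success probability $1 - 1/\polylog(n)$ can be turned, at the cost of a $\polylog(n)$ factor, into a $T(n)\cdot\polylog(n)$-time worst-case algorithm. A hypothetical $n^{a-\epsilon}$-time average-case algorithm with success probability $\geq 1 - (1/\log n)^C$ for $C = C(a,b,\epsilon)$ sufficiently large would therefore yield a worst-case algorithm in time $n^{a-\epsilon + o(1)}$, directly contradicting Step~1.

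\emph{Main obstacle.} The crux is Step~1: engineering the reduction from $k$-OV to colorful $K_{a,b}$ detection so that the OV-dimension collapses into a constant $b = b(a,\epsilon)$ without sacrificing the $n^{a-o(1)}$ exponent, and then passing cleanly from the colorful to the uncolored setting without further inflating $b$. A secondary but delicate calibration is matching the $1 - (1/\log n)^C$ success-probability regime assumed by the theorem with the $1 - 1/\polylog(n)$ regime tolerated by the worst-to-average reduction; this is precisely where the constant $C = C(a,b,\epsilon)$ gets pinned down.
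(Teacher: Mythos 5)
Your proposal matches the paper's proof at every structural level: the upper bound is established via the same $\binom{n}{a/2}\times n$ common-neighborhood matrix and a rectangular matrix product (the paper invokes the Le~Gall--Urrutia bound, with a separate vertex-fixing step for odd $a$, which you omit), and the lower bound chains the same ingredients---a reduction from $a$-\textsc{OV} to \ColKabDetect, inclusion-exclusion to pass from colorful to uncolored counting, and the worst-to-average reduction of \cref{thm:Kabworsttoaverage}. The one place your sketch hand-waves is the dimension collapse in Step~1: the paper simply partitions the $d=K\log n$ coordinates deterministically into $C=K/\epsilon$ blocks of size $\epsilon\log n$, putting one right-color-class $W_j$ of size $n^{\epsilon a}$ per block (no color-coding or random sampling), and also the final normalization is a symmetry-correction constant $\binom{a}{a/2}^{-1}$ rather than an inclusion-exclusion on ``overlap,'' since $v\notin N(v)$ already prevents side-mixing; neither affects the overall validity of your route.
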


\Cref{thm_Main} is the first result 
that determines 
the nearly optimal average-case complexity
of \emph{natural distributional problems}
under the widely investigated hypothesis.
This result provides  insight towards understanding the hardest instance of subgraph counting problems.

However, there still remain two issues.
The first issue is that 
the success probability of the average-case solver mentioned in \cref{thm_Main} is required to be high.
As a consequence, the threshold between worst- and average-case complexity is not sharp enough.
The second issue is that the subgraph $K_{a,b}$ is not fixed in \cref{thm_Main}:
The parameter $b$ depends on $\epsilon$.
The problem of counting fixed $K_{a,b}$ might be more natural
in the context of complexity theory.

To deal with the first issue, we amplify the average-case hardness by considering the problem of 
solving multiple instances
of the $K_{a,b}$ subgraph counting problem.
Let $(\DistKab a b n)^k$
denote the distribution of
$k$ random graphs drawn independently from $\DistKab a b n$.
The following result 
exhibits a \emph{sharp} threshold between worst- and average-case complexity of 
the ``$k$-wise direct product'' of the distributional problem $(\numberemb{K_{a,b}}, \DistKab a b n)$.
\begin{theorem}[Average-Case Complexity of Counting $K_{a,b}$ for Multiple Instances]
\label{thm:fine-grained_direct_product_theorem_Kab}
Under SETH,
  for any constants $\epsilon > 0$ and $a \ge 3$,
  there exists a constant $b=b(a,\epsilon)$ such that
  no average-case solver running in time $n^{a - O(\epsilon)}$
  can
  count
  the numbers of $K_{a, b}$ subgraphs 
  in all the $k$ graphs drawn from the direct product $(\DistKab a b n)^k$
  with success probability greater than $n^{-\epsilon}$,
  where $k = O(\epsilon \log n)$.
\end{theorem}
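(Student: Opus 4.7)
The plan is to deduce \cref{thm:fine-grained_direct_product_theorem_Kab} by chaining two earlier contributions of the paper: the worst-case SETH lower bound of \cref{thm_Main} and the fine-grained hardness amplification listed as the third bullet in the abstract. The missing link is a trivial reduction from the direct-product formulation appearing in \cref{thm:fine-grained_direct_product_theorem_Kab} to the parity-on-the-disjoint-union formulation of the hardness amplification; this reduction works precisely because $K_{a,b}$ is a connected graph.

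\textbf{From direct product to parity on the union.} Suppose, for contradiction, that there is an algorithm $A$ running in time $n^{a - c\epsilon}$, for some large constant $c$ to be fixed later, that on input $(G_1, \ldots, G_k) \sim (\DistKab a b n)^k$ with $k = O(\epsilon \log n)$ outputs the vector $(\numberemb{K_{a,b}}(G_1), \ldots, \numberemb{K_{a,b}}(G_k))$ correctly with probability at least $n^{-\epsilon}$. Since $K_{a,b}$ is connected, every embedding of $K_{a,b}$ into the disjoint union $H := G_1 \uplus \cdots \uplus G_k$ lies entirely inside a unique component $G_i$, so
\[
\numberemb{K_{a,b}}(H) \;=\; \sum_{i=1}^k \numberemb{K_{a,b}}(G_i).
\]
Consequently, by simply XORing the parities of the $k$ numbers returned by $A$, we obtain a parity solver for $\numberemb{K_{a,b}}(H)$ on the disjoint-union distribution $\biguplus^k \DistKab a b n$, with essentially the same runtime and the same success probability $n^{-\epsilon}$.

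\textbf{Amplification and contradiction with SETH.} Plugging this parity solver into the fine-grained hardness amplification theorem (the third bulleted contribution) yields a randomized worst-case algorithm for $\numberemb{K_{a,b}}$ that runs in time $n^{a - c\epsilon} \cdot n^{O(\epsilon)}$ and succeeds with probability $2/3$ on every input. Choosing $c$ strictly larger than the hidden constant in the amplification overhead makes the overall worst-case runtime of the form $n^{a - \Omega(\epsilon)}$. By the first item of \cref{thm_Main}, for $b = b(a, \epsilon)$ chosen sufficiently large (taking the constant $\epsilon' = \Omega(\epsilon)$ coming out of the previous step as the parameter feeding into the worst-case lower bound), no such randomized algorithm can exist under the randomized variant of SETH that the paper assumes. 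This contradiction finishes the proof.

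\textbf{Main obstacle.} The two invoked theorems package essentially all of the mathematical content, so what remains is bookkeeping on exponents. The single point that requires care is reconciling two separate $O(\epsilon)$ slacks, namely the $a - c\epsilon$ in the assumed runtime of $A$ and the $n^{O(\epsilon)}$ overhead of the hardness amplification, so that the resulting worst-case runtime still falls under the scope of the SETH lower bound of \cref{thm_Main}. As long as $c$ is chosen sufficiently large relative to the amplification overhead and $b = b(a, \epsilon)$ is chosen in accordance with the first item of \cref{thm_Main}, both slacks collapse into the single $a - O(\epsilon)$ exponent appearing in the statement, and no further effort is required.
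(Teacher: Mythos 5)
Your approach differs from the paper's and has a genuine gap at the direct-product-to-parity compression step. You correctly note that $\numberemb{K_{a,b}}(G_1\uplus\cdots\uplus G_k)=\sum_i\numberemb{K_{a,b}}(G_i)$, so XORing the parities of $A$'s $k$ coordinates yields the parity of the count on the union. But the resulting parity solver is \emph{not} in the hypothesis class of the XOR lemma. As actually stated and proved (\cref{thm:XOR_for_embcolparity} / \cref{thm:XORlemma}), the XOR lemma requires a heuristic algorithm with success probability $\frac{1}{2}+n^{-c}$, i.e.\ a detectable advantage over random guessing; the abstract's phrasing is not a substitute for the precise hypothesis. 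Your construction only guarantees success probability \emph{at least} $n^{-\epsilon}$: on the $1-n^{-\epsilon}$ mass where $A$ fails, the XOR of a vector with at least one wrong entry can be correct or incorrect with no control --- an $A$ that, when failing, is wrong in exactly one coordinate drives the parity success probability to exactly $n^{-\epsilon}\ll\frac12$, while an $A$ that fails only in pairs drives it to $1$, and intermediate behaviours can pin it at exactly $\frac12$. Since the parity success probability is only constrained to the interval $[n^{-\epsilon},1]$ and could lie arbitrarily close to $\frac12$, neither the output nor its complement is guaranteed an advantage, so the XOR lemma does not apply. The direct product is a strictly more informative encoding: the direct product theorem (\cref{thm:DPTheorem}) can list-decode from success probability as small as $\epsilon=n^{-\alpha/4}$ precisely because the entire answer vector is available, and once you collapse to the single XOR bit that capability is gone.

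The paper's proof (\cref{subsection:DirectProductAndSelector}) therefore never passes through the parity. It applies \cref{thm:DPTheorem} directly to the $(\KabCount,\DistKab a b n)^k$ solver, using the $\polylog(n)$-query selector of \cref{cor:SelectorForKabCount} to pick out the correct candidate from the almost-uniform direct product list of \cite{ImpagliazzoJKW10_siamcomp_journals}, and then contradicts \cref{thm:KabCountSETH}. A secondary issue with your route is that even with a legitimate parity solver, the XOR lemma's conclusion is a worst-case algorithm for the \emph{parity} of the count, not the count itself; to contradict SETH you would additionally need the detection-to-parity reduction (\cref{lem:detection_to_parity}) to route the SETH lower bound through the parity problem, which your outline does not address.
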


Our proof techniques of amplifying average-case hardness can be applied to 
other subgraph counting problems.
Consider the problem 
$\KaParity$ of asking the parity of the number of $K_a$ subgraphs contained in a given input graph,
where $K_a$ denotes the clique of size $a$.
Let $\DistER$ denote the distribution of the Erd\H{o}s-R\'enyi random graph $G(n,1/2)$,
and
let $\DistERDisjointUnion$ denote the distribution of the disjoint union of $k$ random graphs $G_1,\ldots,G_k$ each of which is independently drawn from $\DistER$. 
We show that the distribution $\DistERDisjointUnion$
is a ``hardest'' distribution for $\KaParity$,
by presenting 
an error-tolerant worst-case-to-average-case reduction from $\KaParity$ to
the distributional problem $(\KaParity,\DistERDisjointUnion)$.

\begin{theorem}[Worst-Case to Average-Case Reduction for $\KaParity$] \label{thm:worst_to_average_KaParity-intro}
Let $\epsilon>0$ and $a \in \Nat$ be arbitrary constants.
If there exists a $T(n)$-time randomized heuristic algorithm that solves
$\KaParity$ on a graph drawn from $\DistERDisjointUnion$ with success probability greater than
$\frac{1}{2} + n^{-\epsilon}$ for any $k = O(\epsilon\log n)$,
then there exists a randomized algorithm
that solves $\KaParity$ on any input in time $T(n) n^{O(\epsilon)}$.
\end{theorem}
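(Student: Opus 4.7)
The plan is to decompose the argument into three steps: an XOR-factorization observation, Yao's XOR lemma in the fine-grained regime, and self-correction of the low-degree $\mathbb{F}_2$-polynomial $\#K_a$-parity. For the first step, since $K_a$ is connected, every copy of $K_a$ in the disjoint union $G_1\uplus\cdots\uplus G_k$ lies entirely inside a single component $G_i$. Hence $\numberemb{K_a}(G_1\uplus\cdots\uplus G_k) = \sum_{i=1}^{k}\numberemb{K_a}(G_i)$, and taking parities,
\[
\#K_a\text{-parity}(G_1\uplus\cdots\uplus G_k)\;=\;\bigoplus_{i=1}^{k}\#K_a\text{-parity}(G_i).
\]
Consequently, the hypothesized heuristic $A$ on $\DistERDisjointUnion$ is precisely an algorithm that computes the XOR of $k$ i.i.d.\ $\DistER$-instances of $\#K_a$-parity with advantage $\Omega(n^{-\epsilon})$ over random guessing.

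Second, I apply Yao's XOR lemma with $k = c\epsilon \log n$ for a sufficiently large constant $c = c(a)$ (so that $k = O(\epsilon\log n)$). Setting the slack parameter to $\eta = \Theta(n^{-\epsilon})$, the contrapositive of Yao's lemma converts $A$ into a randomized algorithm $B$ of running time $T(n)\cdot \mathrm{poly}(1/\eta) = T(n)\cdot n^{O(\epsilon)}$ that computes $\#K_a$-parity on a single draw from $\DistER$ with advantage $\Omega\bigl((n^{-\epsilon})^{1/k}\bigr) = \Omega(2^{-1/c})$. By taking $c$ large enough, the error rate of $B$ on $\DistER$ can be pushed below any prescribed constant depending only on $a$.

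Third, I invoke the standard self-correction identity for $\mathbb{F}_2$-polynomials. Viewing $G$ as a string $x \in \{0,1\}^{\binom{n}{2}}$ of edge-indicators, $\#K_a$-parity$(G)$ is a polynomial $P$ over $\mathbb{F}_2$ of degree $d = \binom{a}{2}$. Since the $(d{+}1)$-st discrete derivative of any degree-$d$ polynomial vanishes, over $\mathbb{F}_2$ we obtain
\[
P(x) \;=\; \sum_{\emptyset\neq S\subseteq [d+1]} P(x + r_S) \pmod{2},
\]
where $r_S := \sum_{i\in S} r_i$ and $r_1,\dots,r_{d+1}\in\{0,1\}^{\binom{n}{2}}$ are independent uniform vectors; each $r_S$ is then marginally uniform, so every query point $x + r_S$ is distributed as $\DistER$. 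Calibrating $c$ so that the error rate of $B$ is below $1/(3(2^{d+1}-1))$, the union bound yields $P(x)$ for any worst-case input $x$ with probability at least $2/3$ using $2^{d+1}-1$ queries to $B$, a constant depending only on $a$; this gives the claimed $T(n)\cdot n^{O(\epsilon)}$ worst-case algorithm.

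The main obstacle is ensuring Step 2 fits the $T(n)\cdot n^{O(\epsilon)}$ budget. Because the XOR-advantage $n^{-\epsilon}$ is sub-constant, every $\mathrm{poly}(1/\eta)$ factor in Yao's lemma translates to $n^{O(\epsilon)}$; so one needs a sharp form of the lemma (such as the Impagliazzo hardcore-set version or a fine-grained direct product theorem) to keep the total overhead at $n^{O(\epsilon)}$ rather than blowing up polynomially in $n$.
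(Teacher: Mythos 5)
Your Step 1 is correct and matches the paper, and your Step 3 --- self-correction via the $(d{+}1)$-fold discrete-derivative identity over $\mathbb{F}_2$ with $d = \binom{a}{2}$ --- is a valid and appreciably simpler worst-case-to-average-case reduction for $\KaParity$ than the paper's route (which passes through the colorful parity problem, encodes over $\mathbb{F}_{2^t}$, and applies the binary-expansion trick of Boix-Adser\`a, Brennan, and Bresler), and it makes only $O(1)$ queries instead of $\polylog(n)$. The genuine gap is in Step 2: Yao's XOR lemma, and equally the fine-grained direct product theorem of Impagliazzo, Jaiswal, Kabanets, and Wigderson that controls the overhead, is a \emph{list-decoding} statement. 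From an oracle with advantage $n^{-\epsilon}$ on the $k$-fold XOR, the decoder outputs $m = n^{\Theta(\epsilon)}$ candidate algorithms $B_1, \dots, B_m$ with the guarantee that \emph{some} $B_{i^*}$ has small error on $(\KaParity, \DistER)$ --- but it cannot tell you which one. This non-uniformity is inherent (near advantage $0$ the relevant codeword is not uniquely decodable), and it is orthogonal to the running-time concern you flag at the end; the IJKW theorem is already sharp in running time and still produces a list.

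What your proposal is missing --- and what is the paper's main technical contribution --- is the mechanism to pick the correct candidate: the $\polylog(n)$-query doubly-efficient interactive proof system for colorful subgraph counting, from which the paper derives an instance checker and then an oracle \emph{selector} from $\KaParity$ to $(\KaParity, \DistER)$. The selector simulates the verifier using each candidate as the prover and returns an answer only when it passes verification, which uniformizes the list at a $\polylog(n)$-query overhead. Your Step-3 self-corrector cannot serve this purpose: it upgrades a known-good average-case oracle into a worst-case algorithm, but it gives no way to distinguish an oracle computing $\KaParity$ from one computing some other low-degree $\mathbb{F}_2$-polynomial, so it cannot winnow the candidates. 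Without a selector or an instance checker for $\KaParity$, your construction produces a non-uniform family of algorithms, not the single randomized algorithm that the theorem asserts.
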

Since any decision problem can be solved with success probability $\frac{1}{2}$ by outputting a random bit,
the success probability of an average-case solver in \cref{thm:worst_to_average_KaParity-intro} is nearly optimal.
Therefore, \cref{thm:worst_to_average_KaParity-intro} shows that the decision problem $\KaParity$ exhibits \emph{some} sharp threshold between worst- and average-case complexity.%
\footnote{
The current fastest algorithm \cite{NP85} of counting $K_a$ subgraphs runs in time $O(n^{\omega \lceil a/3\rceil})$ on $n$-vertex graphs, where $\omega$ denotes the matrix multiplication exponent.
However, the precise value of $\omega$ is not well understood.
}

Boix-Adser{\`a}, Brennan, and Bresler~\cite{BBB19} and Goldreich~\cite{Gol20} 
presented worst-case-to-average-case reductions from $\KaParity$ to $(\KaParity, \DistER)$.
Their reductions are not error-tolerant and require the success probability of an average-case solver to be close to $1$.
They left an open question of improving the error tolerance of the reductions. \cref{thm:worst_to_average_KaParity-intro} improves the error tolerance, albeit for a slightly different distribution.

At the heart of the proof of
\cref{thm:fine-grained_direct_product_theorem_Kab,thm:worst_to_average_KaParity-intro}
is a doubly-efficient interactive proof system with subpolynomial number of queries,
which is of independent interest.

\begin{theorem}[Interactive Proof System for $K_{a,b}$ Counting with Subpolynomial Queries]
\label{thm:MainIP}
There is an $O(\log n)$-round interactive proof system
for $\numberemb{K_{a,b}}$
such that the verifier runs in time $O(n^2\log n)$
and asks the prover to
solve $\numberemb{K_{a,b}}$
for $\polylog n$ instances, where $n$ is the number of vertices of the given input graph.
\end{theorem}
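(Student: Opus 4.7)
The plan is to design a sumcheck-style interactive proof for a colorful variant of $K_{a,b}$-counting, and then pull every prover query back to the uncolored $\numberemb{K_{a,b}}$ oracle. First, partition $V(G)$ into $a+b$ color classes $V_1, \ldots, V_{a+b}$ and define the colorful count to enumerate $K_{a,b}$-embeddings that place the $i$-th biclique vertex in $V_i$. Standard inclusion-exclusion over subsets of color classes expresses the uncolored count as a $\mathbb{Z}$-linear combination of $2^{a+b}=O(1)$ colorful counts, and conversely each colorful count on an induced subgraph is an $O(1)$-size combination of $\numberemb{K_{a,b}}$ values on induced subgraphs. It therefore suffices to drive the protocol on the colorful problem and pay only a constant factor when finally invoking the oracle.

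For the colorful instance I would index the $i$-th biclique vertex by $\log_2 |V_i|$ bits and arithmetize the count as
\[
  C \;=\; \sum_{z\in\{0,1\}^{(a+b)\log n}} \hat f(z),
\]
where $\hat f$ is the multilinear extension over a field $\mathbb{F}_p$ with $p = \polylog n$ of the indicator that the tuple of vertices encoded by $z$ forms a colorful $K_{a,b}$. A standard sumcheck then uses $(a+b)\log n = O(\log n)$ rounds; in each round the prover sends a low-degree univariate polynomial obtained by evaluating a partial-sum expression at $O(1)$ field points. Each such evaluation is a colorful $K_{a,b}$-count on a graph obtained from the original by collapsing one color class into a single composite vertex whose adjacencies are $\mathbb{F}_p$-weighted, because previous rounds fixed bit-variables to random field elements. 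After the final round the verifier holds a closed-form value of $\hat f$ at a random point of $\mathbb{F}_p^{(a+b)\log n}$ that reads only $O(ab)$ adjacency entries; soundness follows by Schwartz--Zippel at each step, and the verifier's $O(n^2\log n)$ runtime comes from scanning the adjacency matrix once per round.

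The main obstacle is that the field-weighted counts appearing inside the sumcheck are not themselves legitimate $\numberemb{K_{a,b}}$-queries. I would resolve this with the binary-extension technique of Boix-Adser\`a, Brennan and Bresler: every $\mathbb{F}_p$-weighted colorful count decomposes as an $\mathbb{F}_p$-linear combination of $\polylog n$ \emph{unweighted} colorful $K_{a,b}$-counts by bit-slicing the field weights into their binary representations, and each of those pulls back to $O(1)$ $\numberemb{K_{a,b}}$-queries via the inclusion-exclusion above. The hard quantitative step will be verifying that this expansion does not compound across the $O(\log n)$ sumcheck rounds (each round's query is handled independently on its own weighted graph, rather than being nested inside the expansion of the previous round), so that the total query count stays $\polylog n$; this non-accumulation of the binary extension is the crucial point that separates this protocol from the $\mathrm{poly}(n)$-query interactive proof of Goldreich and Rothblum.
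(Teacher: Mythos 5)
Your protocol shares the paper's two-stage skeleton --- arithmetize a \emph{colorful} $K_{a,b}$-count, run a low-degree interactive protocol, and use binary expansion so that prover queries on field-weighted graphs become $\polylog(n)$ Boolean instances --- but uses a different decomposition: a bit-wise sumcheck over $(a+b)\log n$ variables, versus the paper's recursion that halves each color class in every round, producing a degree-$|E(H)|(2^{|V(H)|}-1)$ univariate polynomial and a half-size instance of the \emph{same} colorful problem, so that ``partial'' color classes never arise. That structural difference is legitimate, but two concrete gaps prevent your version from working as written.

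First, the field is too small. The count $C=\emb{K_{a,b}}{G}$ can be as large as $n^{a+b}$, so over $\mathbb{F}_p$ with $p=\polylog n$ the identity $C=\sum_z\hat f(z)$ certifies only $C\bmod p$, not $C$. You need a prime $p>n^{a+b}$ (the paper takes a prime $q\in(n^{|V(H)|},2n^{|V(H)|})$), or a CRT wrapper over $\Theta(\log n/\log\log n)$ small primes. Taking $p$ polynomially large does not blow up your query budget: the binary expansion then yields $(\log p)^{|E(K_{a,b})|}=(\log n)^{O(ab)}$ Boolean queries per evaluation, with $O(1)$ evaluations per round and $O(\log n)$ rounds, still $\polylog n$ in total.

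Second, the translation from $\numberemb{K_{a,b}}$ to colorful counts is not a ``standard inclusion-exclusion over subsets of color classes.'' Inclusion-exclusion over color-class subsets works in exactly one direction --- from uncolored counts on induced subgraphs to the colorful count, as in \cref{folk:uncolortocolor}. If you fix a partition $V(G)=V_1\cup\cdots\cup V_{a+b}$, a copy of $K_{a,b}$ placing two biclique vertices into the same $V_i$ contributes to \emph{no} colorful term, so no $\mathbb{Z}$-linear combination of colorful counts on $G$ and its induced subgraphs recovers the uncolored count. The paper's \cref{lem:coluncolequivalent} instead invokes Lov\'asz's identity to express $\emb{K_{a,b}}{G}$ as an alternating sum of $\hom{K_{a,b}/\pi}{G}$ over independent-set partitions $\pi$, rewrites each $\hom{K_{c,d}}{G}$ as $\embcol{K_{c,d}}{G\times K_{c,d}}$ via a tensor product (not a partition of $V(G)$), and then pads each $K_{c,d}$ up to $K_{a,b}$. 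That machinery is what actually lets the verifier accept an arbitrary $\numberemb{K_{a,b}}$ claim; without it your protocol only drives the colorful problem. A smaller caution: your description of a prover query as ``collapsing one color class into a single composite vertex'' holds only at color-class boundaries of the sumcheck. In the middle of a class, the partial sum replaces that class by its set of Boolean tails, with adjacencies given by the multilinear extension of $A$ on the fixed prefix; this is still a weighted colorful instance and your non-accumulation point is correct, but it should be spelled out for the prover to be implementable against the oracle.
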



To cope with the second issue of \cref{thm_Main} (i.e., the complexity of counting fixed $K_{a,b}$),
we consider the special case of $b=a$ and obtain the average-case hardness under Exponential Time Hypothesis (ETH).
In this paper, we consider a randomized variant of ETH which asserts that any $2^{o(n)}$-time randomized algorithm fails to solve $3$SAT.

\begin{theorem} \label{thm:Average_KaaCount_ETH}
Under ETH, any $n^{o(a)}$-time algorithm fails to count the number of $K_{a,a}$ subgraphs on more than a $(1-(1/\log n)^C)$ fraction of the inputs drawn from $\DistKab a a n$, where $C=C(a)$ is a constant that depends on $a$.
\end{theorem}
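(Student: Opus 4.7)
The plan is to derive the theorem by combining two ingredients developed elsewhere in the paper: the worst-case-to-average-case reduction for $\numberemb{K_{a,b}}$ counting (the theorem underlying the second bullet of the abstract), instantiated with $b=a$, together with the reduction chain from $k$-Orthogonal Vectors through colorful $K_{a,b}$ detection to $\numberemb{K_{a,b}}$, adapted from the proof of \cref{thm_Main}. The argument runs by contradiction.

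First, I assume that some randomized $n^{o(a)}$-time algorithm $A$ solves $\numberemb{K_{a,a}}$ correctly on more than a $(1-(1/\log n)^C)$-fraction of inputs drawn from $\DistKab{a}{a}{n}$. Choosing $C=C(a)$ larger than the polylogarithmic exponent of the worst-to-average reduction, I feed $A$ into the reduction to obtain a worst-case randomized algorithm for $\numberemb{K_{a,a}}$ running in time $n^{o(a)}\cdot\polylog n = n^{o(a)}$ with success probability $2/3$. It therefore suffices to rule out such a worst-case algorithm under ETH, i.e., to prove a worst-case lower bound of $n^{\Omega(a)}$ on $\numberemb{K_{a,a}}$.

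For that lower bound, I instantiate the reduction chain used in the first bullet of \cref{thm_Main} with $k=a$ and tune the dimension $d=d(a)$ of the $k$-OV instance so that the right-side color classes of the resulting colorful bipartite graph have size exactly $a$. This yields a reduction from $k$-OV on $n$ vectors of dimension $d(a)$ to $\numberemb{K_{a,a}}$ (via colorful $K_{a,a}$ detection). Under ETH, a standard split-and-list reduction from $k$-SAT shows that $k$-OV with dimension at least $\Theta(\log n)$ requires $n^{\Omega(k)}$ time, and this bound propagates through the reduction to give $n^{\Omega(a)}$ worst-case hardness for $\numberemb{K_{a,a}}$, contradicting the worst-case algorithm obtained above.

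The main technical obstacle is the parameter balancing in the final step: in the SETH-based proof of \cref{thm_Main}, $d$ may be chosen arbitrarily large to push the exponent close to $a$, at the cost of letting the right-side arity $b=b(a,\epsilon)$ grow with $\epsilon$. Here, the constraint $b=a$ pins down $d(a)$, so one must verify that a single choice of $d(a)$ is simultaneously (i)~small enough that the produced colorful instance has only $a$ right-side color classes, yet (ii)~large enough that $k$-OV on instances of dimension $d(a)$ remains ETH-hard. Because we only need $n^{o(a)}$ hardness rather than the tight $n^{a-\epsilon}$ bound, there is enough slack to make such a choice, and the constant $C=C(a)$ in the theorem statement absorbs the polylogarithmic overhead introduced by the worst-to-average reduction.
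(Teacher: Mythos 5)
Your decomposition into (i) a worst-case-to-average-case reduction instantiated with $b=a$ and (ii) a worst-case $n^{o(a)}$ lower bound for $\numberemb{K_{a,a}}$ under ETH is exactly the paper's strategy, and step (i) is handled correctly. The gap is in step (ii). You propose to obtain the worst-case lower bound by rerunning the $k$-OV reduction from \cref{thm:mainthm1} with $k=a$, tuning the dimension $d=d(a)$ so that the number of right-side color classes is exactly $a$. But this tuning cannot be made to work. The ETH-based split-and-list reduction from $3$-SAT gives $k$-OV instances with $N$ vectors of dimension $d=\Theta(k\log N)$ (the dimension scales linearly in $k$, not by a free constant), and that is the regime in which $k$-OV requires $N^{\Omega(k)}$ time. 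Forcing $b=a$ color classes then means each block $\mathcal{P}_j$ has size $d/a=\Theta(\log N)$, so each right-side set $W_j$ has $2^{|\mathcal{P}_j|\cdot k}=N^{\Theta(a)}$ vertices. The constructed graph therefore has $M=N^{\Theta(a)}$ vertices, and an $f(a)\,M^{o(a)}$-time solver for $\numberemb{K_{a,a}}$ yields only an $N^{o(a^2)}$-time algorithm for $a$-OV, which does not contradict the $N^{\Omega(a)}$ lower bound. Your remark that ``there is enough slack because we only need $n^{o(a)}$ hardness'' is exactly where the argument breaks: the slack is consumed by the exponential vertex blowup in the $W_j$'s as soon as $b$ is pinned to $a$.

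The paper sidesteps this entirely by not going through $k$-OV for the ETH bound. It instead reduces $\KaDetect$ (whose $f(a)\,n^{o(a)}$ hardness under ETH is classical, \cite{CHKX06}) directly to \ColKaaDetect via a local gadget (\cref{lem:Ka_to_Kaa_reduction}): replace the vertex set of $G$ by $2a$ disjoint copies $U_1,\dots,U_a,W_1,\dots,W_a$ and join $u^{(i)}_j$ to $w^{(k)}_\ell$ iff $i=k,\,j=\ell$ or $i\neq k$ and $\{v_j,v_\ell\}\in E(G)$. This keeps the vertex count at $O(an)$, so the $n^{o(a)}$ hardness of clique transfers without loss; it is then pushed to $\numberemb{K_{a,a}}$ via \cref{lem:coluncolequivalent}. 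To repair your proof you should replace the $k$-OV step by this clique-to-biclique gadget (or an equivalent construction with only polynomial-in-$a$ vertex blowup).
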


\section{Overview of Our Proof Techniques}

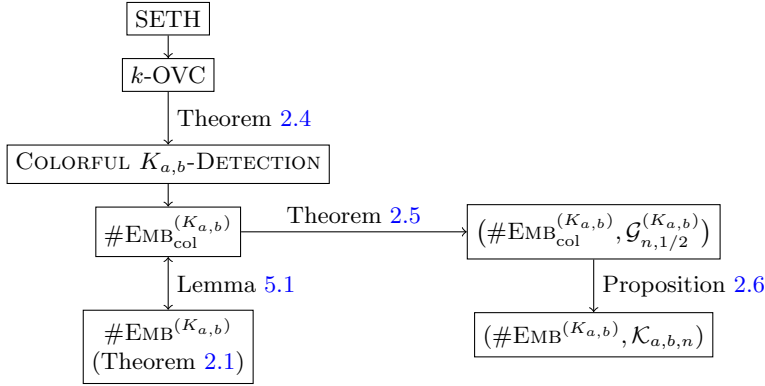
\begin{figure}[ht]
\begin{center}
\begin{tikzpicture}[
           normal node/.style={
               draw,
               text height=0.35cm,
               text centered
			},
           prop node/.style={
               draw,
               text height=0.2cm,
               text width=7em,               
               text centered
			},			
           tool node/.style={
               draw,
               text width=13em,
               text centered,
               text height=0.25cm,
               text centered
			}]
	\def\D{0.7};
	\def\S{0.3};
	\def\W{3};

  \node[draw] (SETH) at (0cm, 0cm) {SETH}; 
  \node[draw, align=center, below = \S of SETH] (kOVC) {$k$-OVC}; 
  \node[draw, align=center, below = \D of kOVC] (ColorfulKabDetection) {$\ColKabDetect$}; 
  \node[draw, align=center, below = \S of ColorfulKabDetection] (numEMBCOLKab) {$\numberembcol{K_{a,b}}$}; 
  \node[draw, align=center, below = \D of numEMBCOLKab] (KabCount) {$\KabCount$ \\ {\footnotesize (\cref{thm:KabCountSETH})}}; 
  \node[draw, align=center, right=\W of numEMBCOLKab] (numEMBCOLKab_avr) {$\bigl(\numberembcol{K_{a,b}}, \distribution{K_{a,b}} \bigr)$};

  \node[draw, align=center, below = \D of numEMBCOLKab_avr] (KabCount_avr) {$(\KabCount, \DistKab a b n)$}; 
  
  \node[above right = 0 and \W/2 of numEMBCOLKab.east, anchor=south] (worst_to_avr) {{\footnotesize \cref{thm:worst-to-average_precise}}}; 
  \node[below = \D/2 of numEMBCOLKab.south, anchor=west] (col_uncol) {{\footnotesize \cref{lem:coluncolequivalent}}}; 
  \node[below = \D/2 of kOVC.south, anchor=west] (kOVCtoColKabDetect) {{\footnotesize \cref{thm:mainthm1}}}; 
  \node[below = \D/2 of numEMBCOLKab_avr.south, anchor=west] (numEMBCOL_to_Kab_avr) {{\footnotesize \cref{prop:ToDistKab}}}; 

  \draw[->] (SETH) to (kOVC);
  \draw[->] (kOVC) to (ColorfulKabDetection);
  \draw[->] (ColorfulKabDetection) to (numEMBCOLKab);
  \draw[<->] (numEMBCOLKab) to (KabCount);
  \draw[->] (numEMBCOLKab) to (numEMBCOLKab_avr);
  \draw[->] (numEMBCOLKab_avr) to (KabCount_avr);

\end{tikzpicture}
\caption{The organization of the proof of average-cane hardness of $\KabCount$. \label{fig:organization}}
\end{center}
\end{figure}

In what follows, we briefly present the ingredients for our results while reviewing the literature.
The overall outline of the proof of \cref{thm_Main} is illustrated in \cref{fig:organization}.
In \cref{subsec:worst-case_subgraph_counting}, we describe the worst-case complexity of $\numberemb{K_{a,b}}$ under SETH and ETH.
In \cref{sec:average-case_complexity_of_subgraph_counting_problems}, we present our idea for the worst-case-to-average-case reduction for subgraph counting problems.
In \cref{sec:IP_intro}, we introduce our new doubly-efficient interactive protocols.
In \cref{sec:fine-grained_hardness_amplification}, we introduce a general framework of hardness amplification in fine-grained complexity settings.
We review related results in \cref{sec:previous}.





\subsection{Worst-Case Complexity of Subgraph Counting Problems} \label{subsec:worst-case_subgraph_counting}

\paragraph*{Nearly Optimal Complexity of $\#\mathsf{EMB}^{(K_{a,b})}$.}
Our first step is to determine the nearly optimal \emph{worst-case} complexity for $\KabCount$ under SETH
by proving the following results.

\begin{theorem} \label{thm:KabCountSETH}
For any constants $\epsilon>0$ and $a\geq 3$,
there exists a constant $b$ such that
one cannot solve $\KabCount$
in time $O(n^{a-\epsilon})$
unless SETH fails.
\end{theorem}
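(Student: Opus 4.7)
The plan is to compose the three reductions along the left column of \cref{fig:organization}: I will pass from SETH to an $a$-OV lower bound, from there to a \ColKabDetect lower bound via \cref{thm:mainthm1}, and finally to a $\KabCount$ lower bound via a trivial detection-to-counting step together with the colorful-to-uncolored conversion of \cref{lem:coluncolequivalent}.

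First I would recall the standard Williams-style reduction to show that, under the randomized variant of SETH, for any $\epsilon > 0$ and $a \geq 3$, $a$-OV on $n$ vectors per set in dimension $d = d(\epsilon,a) \cdot \log n$ cannot be solved in time $O(n^{a - \epsilon})$: given a $k^{*}$-SAT instance with $k^{*} = k^{*}(\epsilon,a)$ chosen large enough under SETH, I partition its variables into $a$ blocks of $n/a$ variables each, enumerate the $N = 2^{n/a}$ partial assignments per block to produce an $a$-OV instance on $N$ vectors per set, and observe that an $O(N^{a-\epsilon})$ solver would give a $2^{(1-\epsilon/a)n}$-time algorithm for $k^{*}$-SAT, refuting SETH.

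Second I would apply \cref{thm:mainthm1} to this $a$-OV instance to obtain a \ColKabDetect instance on an $O(N)$-vertex bipartite graph for some constant $b = b(a,\epsilon)$: the left side is partitioned into $a$ color classes (one per OV set) and the right side into $b$ color classes organized as a coordinate-based witness gadget, so that a colorful $K_{a,b}$ exists iff an OV tuple exists. Third, I would observe that detection is the Boolean cast of counting and apply \cref{lem:coluncolequivalent} to transfer hardness from colorful $K_{a,b}$ counting to uncolored $\KabCount$ with only a constant-factor vertex blowup. Chaining the three steps, any $O(n^{a-\epsilon})$ algorithm for $\KabCount$ would yield an $O(n^{a-\epsilon+o(1)})$ algorithm for $a$-OV, contradicting SETH.

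The hard part will be the gadget construction underlying \cref{thm:mainthm1}: the right-side gadget must absorb the $\Theta(\log n)$ coordinates of the OV instance into a number of color classes that depends only on $a$ and $\epsilon$, while still faithfully encoding the OR-structure of the OV condition (``for every coordinate, some $v_i$ is zero'') into the complete bipartite structure of $K_{a,b}$. Once that gadget is in place, the remaining exponent bookkeeping and the colorful-to-uncolored conversion are routine.
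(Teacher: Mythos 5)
Your proposal follows exactly the same route as the paper: invoke the known SETH-to-$a$-OVC reduction, use \cref{thm:mainthm1} to pass from $a$-OV to \ColKabDetect hardness (via the block-partition gadget on coordinates), and then transfer to $\KabCount$ by combining the trivial detection-to-counting step with the colorful/uncolored equivalence (\cref{lem:coluncolequivalent}, or equivalently \cref{folk:uncolortocolor}). The paper proves \cref{thm:KabCountSETH} by exactly this composition (see \cref{sec:proof_of_Kab_results} and the remark after \cref{thm:mainthm1}), so your plan is correct and matches the paper's argument.
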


\begin{theorem} \label{thm:KaaCountETH}
If there exists an $n^{o(a)}$-time algorithm that solves $\numberemb{K_{a,a}}$,
then ETH fails.
\end{theorem}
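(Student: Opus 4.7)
The plan is to reduce the colorful $a$-Clique problem to $\numberembcol{K_{a,a}}$ via a simple bipartite gadget, invoke the classical ETH-based lower bound stating that colorful $a$-Clique admits no $n^{o(a)}$-time randomized algorithm, and then transfer the hardness to the uncolored counting problem by means of \cref{lem:coluncolequivalent}.

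Concretely, given an instance of colorful $a$-Clique consisting of a graph $G$ with color classes $V_1, \ldots, V_a$, I would construct a bipartite graph $G' = (V_L \cup V_R, E')$ in which $V_L$ and $V_R$ are two disjoint copies of $V(G)$; a left vertex inherits a fresh color $c^L_i$ when its underlying $G$-vertex lies in $V_i$, and the right copy is colored analogously by $c^R_1, \ldots, c^R_a$. The edge rule is designed to force any colorful $K_{a,a}$ in $G'$ to mimic a clique in $G$: put $u^L v^R \in E'$ if and only if either $u = v$, or $u$ and $v$ have different colors in $G$ and $uv \in E(G)$. In any colorful $K_{a,a}$ embedding one picks $u_i^L$ with $u_i \in V_i$ and $v_j^R$ with $v_j \in V_j$ for every $i, j \in [a]$; the $a$ same-color edges force $v_i = u_i$, while the $a(a-1)$ different-color edges force $u_i u_j \in E(G)$ for every $i \neq j$. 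Hence the colorful $K_{a,a}$ embeddings of $G'$ are in bijection with the colorful $a$-cliques of $G$, and $|V(G')| = 2|V(G)|$. Any $n^{o(a)}$-time algorithm for $\numberembcol{K_{a,a}}$ would thus refute the ETH-based lower bound for colorful $a$-Clique, since counting is at least as hard as detection.

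To conclude, I would apply \cref{lem:coluncolequivalent}, which expresses $\numberembcol{K_{a,a}}(G)$ as a fixed signed sum of $\numberemb{K_{a,a}}$ values over at most $2^{2a}$ induced subgraphs indexed by subsets of the $2a$ color classes; any $n^{o(a)}$-time algorithm for $\numberemb{K_{a,a}}$ thereby yields a $2^{O(a)} \cdot n^{o(a)} = n^{o(a)}$-time algorithm for the colorful version, which still contradicts the lower bound once $n$ is large enough that $2^{O(a)}$ is absorbed into $n^{o(a)}$. The one delicate step is confirming the bijection in the gadget: one must rule out spurious colorful $K_{a,a}$ embeddings with $u_i \neq v_i$ inside some color class, but this is automatic because the edge rule forbids any edge between distinct same-color vertices. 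Composing these reductions yields the desired $n^{\Omega(a)}$ lower bound on $\numberemb{K_{a,a}}$ under randomized ETH.
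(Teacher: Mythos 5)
Your proof is correct and takes essentially the same route as the paper's: a bipartite gadget from clique detection to $\ColKaaDetect$, followed by the inclusion--exclusion equivalence of \cref{lem:coluncolequivalent}. The only cosmetic difference is that you factor the gadget through multicolored $a$-Clique (producing a $2n$-vertex graph) rather than going directly from uncolored $K_a$-detection as in \cref{lem:Ka_to_Kaa_reduction} (which produces a $2an$-vertex graph); composing your gadget with the standard uncolored-to-multicolored clique reduction yields exactly the paper's construction, so the arguments coincide.
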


\begin{proposition} \label{prop:fastalgo}
If $a\geq 8$, for any $\epsilon>0$ and $b\in\Nat$,
there is an algorithm
that solves
$\KabCount$ in time $O(b n^{a+\epsilon})$.
\end{proposition}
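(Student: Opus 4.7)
The plan is to express $\emb{K_{a,b}}{G}$ as a short linear combination of homomorphism counts $\hom{K_{a',b'}}{G}$ for $a' \le a, b' \le b$, and then to evaluate the dominant hom counts via rectangular matrix multiplication. By standard Möbius inversion on the partition lattice of $V(K_{a,b})$, we obtain
\[\emb{K_{a,b}}{G} = \sum_{a' \le a,\, b' \le b} C_{a',b'} \hom{K_{a',b'}}{G}\]
for constants $C_{a',b'}$ depending only on $a,b$: since $G$ is simple, any partition identifying a left vertex of $K_{a,b}$ with a right one creates a self-loop and contributes zero, so only $O(ab) = O(b)$ quotients of the form $K_{a',b'}$ survive.

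The key step is to compute each $\hom{K_{a',b'}}{G}$ using the identity $\hom{K_{a',b'}}{G} = \sum_{A \in V(G)^{a'}} |N(A)|^{b'}$, where $N(A) = \bigcap_{v \in A} N_G(v)$ and the sum ranges over ordered tuples (with possible repetition). For the dominant case $a' = a$, I will split $a = a_1 + a_2$ with $a_1 = 4$ and $a_2 = a - 4 \ge 4$ (which uses the hypothesis $a \ge 8$), and form the indicator matrices $M_1 \in \{0,1\}^{n^{a_1} \times n}$ and $M_2 \in \{0,1\}^{n \times n^{a_2}}$ defined by $M_i[T,u] = \prod_{v \in T} \mathbbm{1}[uv \in E(G)]$. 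One checks that $(M_1 M_2)[A_1,A_2] = |N(A_1) \cap N(A_2)| = |N(A_1 \cdot A_2)|$, so a single matrix product yields $|N(A)|$ for all $a$-tuples $A$. Precomputing $M_1,M_2$ takes $O(n^{a-3})$ time, and evaluating the final sum via entrywise fast exponentiation of $b'$-th powers adds $O(n^a \log b)$. For the smaller cases $a' < a$, the straightforward $O(n^{a'+1} \log b) \le O(n^a \log b)$ algorithm suffices.

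The key quantitative input is the rectangular matrix multiplication bound $\omega(1,\alpha,1) = 2$ for all $\alpha \le \alpha^{\ast}$, where $\alpha^{\ast} > 1/4$ (Coppersmith, with subsequent refinements). This bound yields the $n^4 \times n$ by $n \times n^4$ product in $n^{8+o(1)}$ time. For $a > 8$, I will partition the wider $n \times n^{a-4}$ factor into $n^{a-8}$ blocks of $n \times n^4$; each block product costs $n^{8+o(1)}$, giving $n^{a+o(1)}$ in total. Summing over all $O(b)$ partition contributions yields total time $O(b n^{a+o(1)})$, which is at most $O(b n^{a+\epsilon})$ for $n$ sufficiently large. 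The main obstacle is the rectangular matrix multiplication bound itself: the threshold $a \ge 8$ arises precisely because $a_1 = 4$ is the smallest integer with $1/a_1 \le \alpha^{\ast}$, and relaxing this to $a < 8$ would require improved bounds on $\alpha^{\ast}$ beyond what is currently known.
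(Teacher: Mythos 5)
Your proof is correct and takes a genuinely different route from the paper's. The paper (paralleling the P\v{a}tra\c{s}cu--Williams $k$-dominating-set trick) forms a $\binom{n}{a/2}\times n$ indicator matrix $B$ indexed by unordered $(a/2)$-subsets, computes $BB^\top$ by fast rectangular matrix multiplication (Le Gall--Urrutia, $\gamma\leq 0.31389$), reads off $BB^\top[S_1][S_2]=|N(S_1\cup S_2)|$, and directly counts $K_{a,b}$ subgraphs by summing $\binom{BB^\top[S_1][S_2]}{b}$ over disjoint pairs $S_1,S_2$ with a symmetry-correcting prefactor; odd $a$ is handled by a separate outer loop over a distinguished vertex $u$. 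You instead pass through Lov\'asz's embedding-to-homomorphism inversion (the paper's own \cref{thm:lovaszinverse}), evaluate each $\hom{K_{a',b'}}{G}=\sum_A|N(A)|^{b'}$ over ordered tuples, and obtain the values $|N(A)|$ from an asymmetric $n^4\times n$ by $n\times n^{a-4}$ product with block splitting for $a>8$. Both proofs share the same engine --- a single rectangular matrix product computing all common-neighborhood sizes in $n^{a+o(1)}$ time, exploiting that the dual exponent $\alpha^*$ exceeds $1/4$, which is exactly why the threshold $a\geq 8$ appears in both --- but differ in the combinatorial wrapper. Your route buys a uniform treatment of all $a$ (no even/odd case split) and a cleaner layering of the combinatorics (Lov\'asz inversion) over the algebra (matrix product); the paper's route is shorter, reading off subgraph counts directly via binomial coefficients of the matrix entries and thereby avoiding the inversion altogether.
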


We are not aware of previous results that determine the nearly optimal complexity of subgraph counting problems,
while the fine-grained complexity of many natural problems, including the All-Pairs Shortest Paths, 3SUM, Orthogonal Vectors, and related problems, has been extensively explored in the research area of hardness in $\mathsf{P}$~\cite{Vir15,LPW17}.


\paragraph*{SETH-Hardness of $\#\mathsf{EMB}^{(K_{a,b})}$.}
Our key idea for showing
\cref{thm:KabCountSETH}
is to consider \ColKabDetect, which is defined as follows.
Let $K_n$ be the $n$-vertex complete graph.
For a graph $H$, let $K_n\times H$
denote the tensor product
(see \cref{sec:preliminaries} for the definition).
For a graph $G\subseteq K_n\times H$, every vertex $v=(u,i)\in V(G)$ is associated with a color $c(v)\defeq i\in V(H)$.
We say that a subgraph $F\subseteq K_n\times H$ is \emph{colorful} if $\bigcup_{v\in V(F)}\{c(v)\}=V(H)$.
The problem \ColKabDetect asks, given a pair $(n,G)$ of $n\in\Nat$ and a graph $G\subseteq K_n\times K_{a,b}$, to decide
whether $G$ contains a colorful subgraph
$F$ that is isomorphic to
$K_{a,b}$.
%

Exploiting the fact that \ColKabDetect is more ``structured'' than $\numberemb{K_{a,b}}$,
we first present a reduction from 
\textsc{$k$-Orthogonal Vectors} (\kOV) to
\ColKabDetect for $k := a$.
Since \kOV is known to be SETH-hard for any $k \ge 2$~\cite{Vir15,Wil05,LPW17}, this establishes SETH-hardness of  \ColKabDetect:
\begin{theorem} \label{thm:mainthm1}
For any constants $a\geq 2$ and $\epsilon>0$, there exists a constant $b=b(a,\epsilon)\geq a$ such that \textsc{Colorful $K_{a,b}$-Detection} cannot be solved in time $O(m^{a-\epsilon})$ unless SETH fails,
where $m$ is the number of edges of the input graph.
\end{theorem}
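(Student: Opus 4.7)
The plan is to give a fine-grained reduction from $a$-Orthogonal Vectors ($a$-OV), which is SETH-hard at running time $n^{a-o(1)}$ for instances of dimension $d = c(\epsilon)\log n$, to $\ColKabDetect$. The high-level idea is a partition-and-certify gadget: split the $d$ coordinates into $b$ blocks and use the $b$ right color classes of $K_{a,b}$ to record, block by block, which input index serves as the orthogonality witness for each coordinate. This turns the $a$-ary, ``for-every-coordinate-some-index'' constraint of orthogonality into a set of pairwise edge constraints that exactly match the colorful biclique structure.

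Concretely, given an $a$-OV instance $A_1,\ldots,A_a \subseteq \{0,1\}^d$ with $|A_i|=n$, I would partition $[d]$ into $b$ equal-size blocks $B_1,\ldots,B_b$, where $b=b(a,\epsilon)$ is to be chosen. The bipartite graph has $a$ left color classes $L_1,\ldots,L_a$, where $L_i$ contains one vertex $x_v$ per vector $v \in A_i$, and $b$ right color classes $R_1,\ldots,R_b$, where $R_\ell$ consists of all maps $f \colon B_\ell \to [a]$. I join $x_v \in L_i$ to $f \in R_\ell$ exactly when $v[j]=0$ for every $j \in B_\ell$ with $f(j)=i$. Padding each color class with isolated vertices up to a common size $N=O(n)$ embeds the graph into $K_N \times K_{a,b}$ as required by the problem.

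Verifying the correspondence is then a direct unfolding of the edge rule. If $(v_1,\ldots,v_a)$ is an orthogonal tuple, I define $f_\ell(j)$ to be any index $i$ with $v_i[j]=0$; the edge $x_{v_i}f_\ell$ is then present for every $i,\ell$, producing a colorful $K_{a,b}$. Conversely, a colorful $K_{a,b}$ yields vectors $v_1,\ldots,v_a$ and certificates $f_1,\ldots,f_b$ satisfying the edge condition, so $v_{f_\ell(j)}[j]=0$ holds for every $\ell$ and every $j \in B_\ell$; since the blocks partition $[d]$, every coordinate is witnessed and $(v_1,\ldots,v_a)$ is orthogonal.

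The main technical point is the parameter balancing. I would set $\delta := \epsilon/(2a)$ and $b := \max\{a,\lceil c(\epsilon)\log a/\delta\rceil\}$, which forces $|R_\ell| \le a^{d/b} \le n^{\delta}$ and hence $N=O(n)$ and total edge count $m \le ab\cdot n \cdot n^{\delta} = O(n^{1+\delta})$. An $O(m^{a-\epsilon})$ algorithm for $\ColKabDetect$ would then solve $a$-OV in time $O(n^{(1+\delta)(a-\epsilon)}) \le O(n^{a-\epsilon/2})$, contradicting SETH. The obstacle is precisely this balancing act: $b$ must grow with both $a$ and $\epsilon$ (through $\delta$ and the OV dimension constant $c(\epsilon)$) so that the $n^{\delta}$ blow-up contributed by the certificate classes is absorbed by the slack in the $m^{a-\epsilon}$ bound, which is exactly why the theorem can only promise the existence of some $b=b(a,\epsilon)$ rather than an absolute constant independent of $\epsilon$.
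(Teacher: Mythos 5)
Your reduction is correct and follows essentially the same route as the paper's: reduce $a$-OV to \ColKabDetect by partitioning the $d$ coordinates into $b$ blocks, using the $b$ right color classes of $K_{a,b}$ as per-block certificate gadgets, and choosing $b=b(a,\epsilon)$ so that the polynomial blow-up $n^{\delta}$ of each certificate class is absorbed by the slack $\epsilon$ in the running-time exponent. The only cosmetic difference is the certificate encoding: you index right vertices by witness-assignment maps $f\colon B_\ell\to[a]$, whereas the paper indexes them by $a$-tuples of locally orthogonal projections onto the block; both give polynomially bounded certificate sets and the same one-to-one correspondence between orthogonal tuples and colorful $K_{a,b}$ subgraphs.
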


To complete the proof of \cref{thm:KabCountSETH},
we reduce \ColKabDetect to $\numberemb{K_{a,b}}$
by using 
the inclusion-exclusion principle.
This technique is well known 
in the literature of
fixed-parameter complexity (see, e.g.~\cite{CM14,Cur18}).
We will present the detail
in \cref{subsec:SETH-hardness_of_ColorfulKabDetect}.


\paragraph*{ETH-Hardness of $\#\mathsf{EMB}^{(K_{a,a})}$.}
The problem of finding a complete bipartite graph $K_{a,a}$ in a given graph has gathered special attention in parameterized complexity.
Lin~\cite{Lin15,Lin18} proved that the problem is W[1]-hard when $a$ is a parameter.
His proof implies that the problem of finding $K_{a,a}$ does not admit any $n^{o(\sqrt{a})}$-time algorithm unless ETH fails.
In particular, under ETH, any $n^{o(\sqrt{a})}$-time algorithm fails to solve $\numberemb{K_{a,a}}$. 

\cref{thm:KaaCountETH} improves this lower bound by ruling out an $n^{o(a)}$-time algorithm under ETH.
A key idea behind this improvement is to take advantage of the structure of \emph{counting} the number of embeddings:
We first reduce the problem of finding a clique $K_a$ of size $a$ (which is known to be ETH-hard \cite{CHKX06})
to  \textsc{Colorful $K_{a,a}$-Detection},
and then reduce it to $\numberemb{K_{a,a}}$ by using the inclusion-exclusion principle.
The latter reduction exploits the structure of counting.


\subsection{Average-Case Complexity of Subgraph Counting Problems} \label{sec:average-case_complexity_of_subgraph_counting_problems}
Compared to the worst-case hardness, the average-case hardness of subgraph counting problems has not been well understood until very recently~\cite{GR18,BBB19}.
A recent breakthrough result of 
Boix-Adser{\`a}, Brennan, and Bresler~\cite{BBB19}
shows that 
the worst-case and average-case complexities of 
counting $k$-cliques in an $n$-vertex Erd\H{o}s-R\'enyi graph
are equivalent up to a $\polylog(n)$-factor.
They left as an open question 
the extension of their results to other subgraph counting problems.

In this paper, we investigate their open question under a different setting, which is one of our key insights.
We consider the problem $\numberembcol{H}$ of counting $H$-\emph{colorful} subgraphs in a random graph drawn from a specific distribution $\distribution{H}$, which we introduce below.
%
For a fixed graph $H$,
let $\numberembcol{H}$ be the problem that
asks the number of colorful subgraphs that are isomorphic to $H$
contained in a given graph $G\subseteq K_n\times H$.
For $G\subseteq K_n\times H$, let $\embcol{H}{G}$ be the number of colorful $H$-subgraphs contained in $G$.
Equivalently,
$\embcol{H}{G}$ is equal
to the number of embeddings of $H$ in $G$
that preserves colors.
Here, we say that an embedding $\phi$ \emph{preserves} colors if $u=c(\phi(u))$ holds for any $u\in V(H)$, where $c:V(G)\to V(H)$ is the coloring of $G$.
See \cref{fig:embcol} for an illustration.
For a fixed graph $H$, let $G^{(H)}_{n,1/2}\subseteq K_n\times H$ be a random subgraph such that each edge $e\in E(K_n\times H)$ is included independently with probability $1/2$.
The distribution of $G^{(H)}_{n,1/2}$
is denoted by $\distribution{H}$.
We denote by
$(\numberembcol{H},\distribution{H})$
the distributional problem
of solving $\numberembcol{H}$
on a random graph 
drawn from $\distribution{H}$. 

\begin{figure}[htbp]
\centering
\includegraphics[width=9cm]{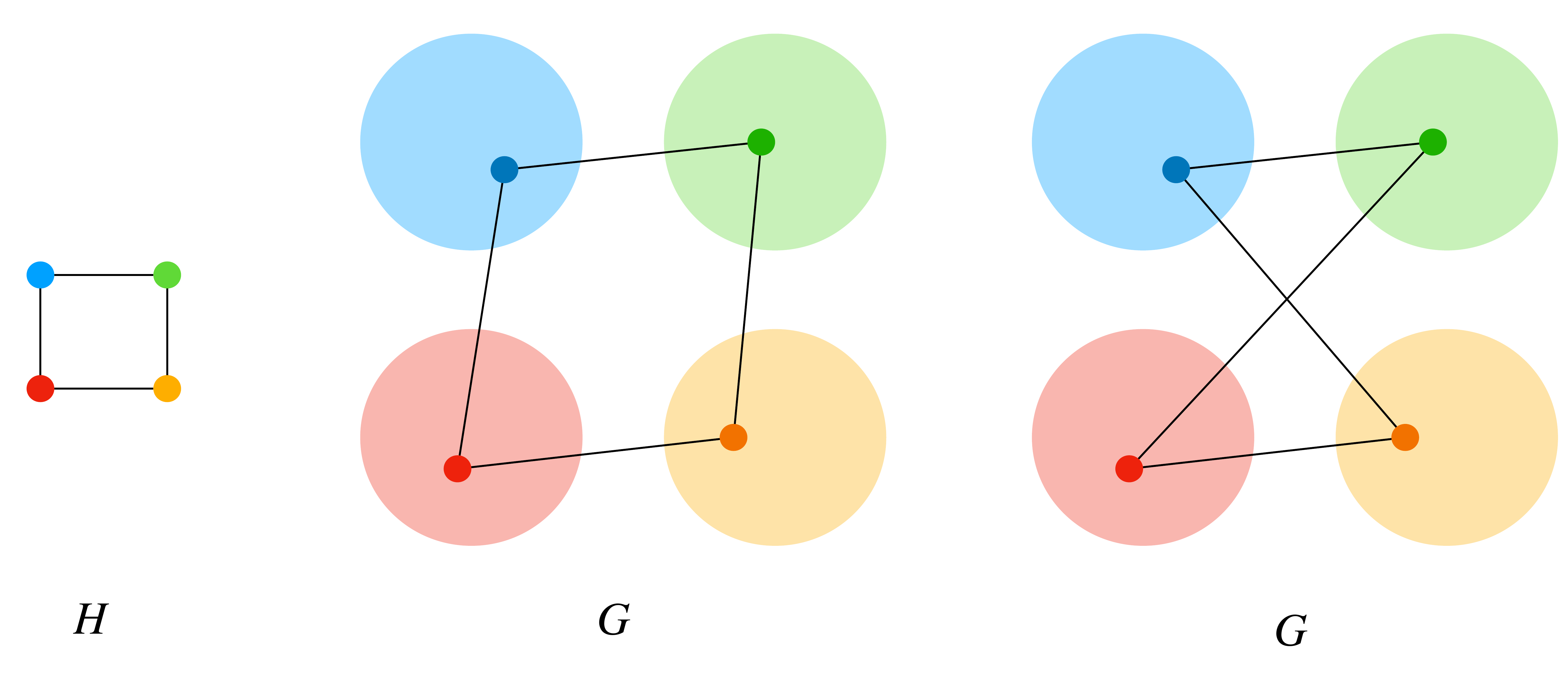}
\caption{An example of colorful subgraphs.
In this paper, we do not consider the case on the right-hand side (if $G\subseteq K_n\times H$, then $G$ contains neither blue-orange nor green-red edges).  \label{fig:embcol}}
\end{figure}

Generalizing the proof techniques of
Boix-Adser{\`a}, Brennan, and Bresler~\cite{BBB19},
we prove that
$\numberembcol{H}$ is reducible
to the distributional problem
$(\numberembcol{H},\distribution{H})$.

\begin{theorem}[Worst-Case-to-Average-Case Reduction for $\numberembcol{H}$]\label{thm:worst-to-average_precise}
Let $H$ be a fixed graph.
Suppose that the distributional problem $(\numberembcol{H},\distribution{H})$ can be solved by a $T(n)$-time randomized heuristic algorithm $A$ with success probability $1-\delta$, where $\delta=(\log n)^{-C}$ and $C=C_H$ is a sufficiently large constant depending on $H$.

Then, there is a $T(n)\cdot \polylog (n)$-time randomized algorithm $B$ that solves $\numberembcol{H}$ for any input with
success probability $2/3$.
Moreover, the number of oracle calls of $A$ by $B$ is at most $(\log n)^{O(|E(H)|)}$.
\end{theorem}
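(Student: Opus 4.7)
The plan is to adapt the worst-case-to-average-case reduction of Boix-Adser\`a, Brennan, and Bresler~\cite{BBB19}, originally proved for $(\numberemb{K_a},\DistER)$, to the colorful setting $(\numberembcol{H},\distribution{H})$. The adaptation is natural because both problems inherit exactly the two structural features on which the reduction rests: the counting function is a fixed multilinear polynomial of constant degree in the edge indicators, and the target average-case distribution is the uniform distribution on $\binset^N$, where $N$ is the number of potential edges of the ambient graph.

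First I would cast the problem algebraically. Identifying $G\subseteq K_n\times H$ with its edge-indicator vector $x(G)\in\binset^N$ for $N=|E(K_n\times H)|$, one has
\[
\embcol{H}{G}=\sum_{\phi}\;\prod_{uv\in E(H)}x(G)_{\phi(u)\phi(v)},
\]
where $\phi$ ranges over color-preserving embeddings $V(H)\to V(K_n\times H)$. This is a fixed multilinear polynomial $P_H\in\mathbb{Z}[x_1,\dots,x_N]$ of degree exactly $d:=|E(H)|$, and under $\distribution{H}$ the input $x(G)$ is uniform on $\binset^N$. Hence the hypothesis on $A$ is equivalent to saying that $A$ computes $P_H$ correctly on a $(1-\delta)$-fraction of Boolean inputs, with $\delta=(\log n)^{-C}$.

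Given this, the reduction of~\cite{BBB19} provides a black-box procedure that, from a worst-case $x^*\in\binset^N$, computes $P_H(x^*)$ using $K=(\log n)^{O(d)}$ queries to $A$ and $\polylog(n)$-time arithmetic. I would instantiate it unchanged. Informally, the procedure has two layers: a Lipton-style outer layer that chooses a random direction $y\in\binset^N$, forms the degree-$d$ univariate curve $Q(t)=P_H(x^*+t(y-x^*))$ in $\Fq[t]$ for a suitable prime $q$, and recovers $P_H(x^*)=Q(0)$ by interpolation from $d+1$ evaluations $Q(t_1),\dots,Q(t_{d+1})$; and a binary-extension inner layer that simulates each non-Boolean evaluation $Q(t_i)$, which lies in $\Fq^N$, by a $\polylog(n)$-sized linear combination of evaluations of $P_H$ on fresh, genuinely uniform Boolean inputs drawn from $\binset^N$, i.e., from $\distribution{H}$. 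These Boolean inputs are precisely the queries issued to $A$.

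Correctness then follows from a union bound: when the constant $C=C_H$ in $\delta=(\log n)^{-C}$ is chosen sufficiently large in terms of $d$, all $K=(\log n)^{O(d)}$ oracle answers are simultaneously correct with probability at least $2/3$, so the outer interpolation returns $P_H(x^*)=\embcol{H}{G^*}$ exactly. The total running time is $T(n)\cdot\polylog(n)$, and the query count is $(\log n)^{O(|E(H)|)}$ as claimed. The main obstacle I anticipate is verifying that the identities of~\cite{BBB19}, which were tailored to the clique polynomial on $\binom{n}{2}$ edge variables, go through verbatim for the colorful polynomial $P_H$ on $|E(K_n\times H)|$ edge variables. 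However, those identities are purely algebraic: they depend only on multilinearity, on the constant degree $d$, and on the independence and unbiasedness of the edge variables under the average-case distribution, and all three properties are inherited unchanged from the clique case. The adaptation should therefore reduce to rewriting~\cite{BBB19} with $K_a\leadsto H$ and $\binom{[n]}{2}\leadsto E(K_n\times H)$, together with a careful tracking of the constant $C_H$ and of the final query complexity $(\log n)^{O(|E(H)|)}$.
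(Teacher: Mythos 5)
Your proposal follows essentially the paper's two-layer approach: encode $\numberembcol{H}$ as the low-degree polynomial $\embcolpoly{n}{H}{q}$ over a prime field, apply random self-reducibility in an outer layer, and simulate the resulting $\Fq$-valued queries by nearly uniform Boolean ones via the BBB19 binary-extension technique in the inner layer. However, as written there is one genuine error: the direction $y$ for the outer curve must be drawn from $\Unif{\Fq^N}$, not from $\binset^N$. With $y\in\binset^N$ the point $x^*+t(y-x^*)$ is, coordinatewise, supported on at most two values, so it is far from uniform over $\Fq^N$, and the random self-reduction step is invalid. The binary-extension step does not repair this; it only converts an already-uniform $\Fq^N$ query into close-to-uniform Boolean queries. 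Once this is fixed (sample $y\sim\Unif{\Fq^N}$), the argument goes through.

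Beyond that slip, there are two small structural differences worth noting. First, you use a degree-$d$ line and plain Lagrange interpolation from $d+1$ points, relying on a single union bound over all $(\log n)^{O(|E(H)|)}$ Boolean queries to guarantee every outer evaluation is exact. The paper instead imports Lemma 3.2 of BRSV17 (its \cref{lem:localdecoding}), which uses a degree-2 curve with $O(d)$ evaluation points and Berlekamp--Welch decoding and therefore tolerates a constant error rate; this robustness is what lets the paper factor the proof into \cref{lem:Fqtobin} (producing an $\Fq^N$ heuristic with only $2/3$ success) followed by a black-box robust self-corrector (\cref{cor:worsttoaverage}). Your plain-interpolation route also works in this setting because $\delta$ is polylogarithmically small, so the two are roughly equivalent here. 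Second, ``instantiating BBB19 unchanged'' hides a running-time pitfall: BBB19's binary-extension sampler runs in time $O(tq)$ per coordinate, which is prohibitive since $q>n^{|V(H)|}$, and BBB19 works around this via the Chinese Remainder Theorem. The paper's \cref{lem:binaryexpand} specializes the sampler to edge density $1/2$ to obtain $O(t)$-time sampling per coordinate, sidestepping the CRT layer; this is a simplification you would want to adopt rather than carry over BBB19's general sampler verbatim.
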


It should be noted that,
Dalirrooyfard, Lincoln, and Vassilevska Williams~\cite{DLW20} proved the same result as \cref{thm:worst-to-average_precise} (their work is independent to us).

Our technical contribution here is to reduce
$(\numberembcol{K_{a,b}},\distribution{K_{a,b}})$
to $(\KabCount,\DistKab a b n)$ using the inclusion-exclusion principle.

\begin{proposition}
\label{prop:ToDistKab}
Suppose that there is a $T(n)$-time randomized heuristic algorithm that solves $(\KabCount,\DistKab{a}{b}{n})$ with success probability $1-\delta$.
Then, there is an $O(ab2^{a+b}\cdot T(n))$-time randomized heuristic algorithm that solves $(\numberembcol{K_{a,b}}, \distribution{K_{a,b}})$ with success probability $1-O(ab2^{a+b}\delta)$.
\end{proposition}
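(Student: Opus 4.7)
My plan is to apply Möbius inversion (inclusion--exclusion) over color subsets, reducing the colorful count to a small collection of ordinary $K_{a,b}$ counts on color-induced bipartite subgraphs, each realized by a single query to the oracle on an appropriately coupled random input. Set $V_u = \{u\} \times [n]$ for each color $u \in L \cup R$, and let $V_A = \bigcup_{u \in A} V_u$, $V_B = \bigcup_{u \in B} V_u$ for $A \subseteq L$, $B \subseteq R$. Let $N(A,B)$ count the bipartite-respecting embeddings $\phi : K_{a,b} \to G$ with $\phi(L_K) \subseteq V_A$ and $\phi(R_K) \subseteq V_B$. Standard Möbius inversion on the product lattice $2^L \times 2^R$ then gives
\[
  a!\,b!\cdot\embcol{K_{a,b}}{G} \;=\; \sum_{A \subseteq L,\,B \subseteq R} (-1)^{(a-|A|)+(b-|B|)}\, N(A, B),
\]
where the $a!\,b!$ factor compensates for the $K_{a,b}$-automorphisms that yield $a!\,b!$ embeddings per colorful subgraph. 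Thus it suffices to compute $N(A, B)$ for each of the $(2^a-1)(2^b-1) = O(2^{a+b})$ non-empty $(A, B)$.

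To evaluate $N(A, B)$ with a single oracle call I need an input distributed as $B_{|A|n, |B|n, 1/2}$. The induced subgraph $G[V_A, V_B]$ is close to this distribution---bipartite on $|A|n + |B|n$ vertices with iid $\mathrm{Bernoulli}(1/2)$ edges---but because $G \subseteq K_n \times K_{a,b}$ the ``diagonal'' edges $\{((v,u_1),(v,u_2)) : v \in [n],\ u_1 \in A,\ u_2 \in B\}$ are always absent. I couple to the oracle's distribution by independently sampling each such diagonal edge with probability $1/2$ and adding the sampled ones to $G[V_A, V_B]$, yielding $H_{A,B} \sim B_{|A|n, |B|n, 1/2}$. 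The oracle then returns $\emb{K_{a,b}}{H_{A,B}}$; I recover $N(A, B)$ by subtracting off the $K_{a,b}$ embeddings of $H_{A,B}$ that use at least one freshly-sampled diagonal edge---a bounded inclusion--exclusion over coincidence patterns $T \subseteq [a] \times [b]$ whose terms count embeddings of the quotient graphs $K_{a,b}/T$ in $H_{A,B}$, which I can compute directly since the sampled diagonals are known to me.

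Finally, by averaging the oracle's $(1 - \delta)$ success on $\DistKab a b n$ over the uniform choice of $(\alpha, \beta) \in [a]\times[b]$, its success probability on any fixed $B_{\alpha n, \beta n, 1/2}$ is at least $1 - ab\delta$; a union bound over the $O(2^{a+b})$ oracle calls yields overall success probability $1 - O(ab\,2^{a+b})\,\delta$. The main obstacle is arranging the diagonal correction to fit within the claimed $O(ab\,2^{a+b}\,T(n))$ time budget: a naive per-pattern enumeration of embeddings using at least one sampled diagonal can cost $\Theta(n^{a+b-1})$, which potentially exceeds the $n^{a+o(1)}$ worst-case time of the fastest $K_{a,b}$ counting algorithm. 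Handling this requires a careful structural decomposition of the quotient graphs $K_{a,b}/T$---each of which, for minimal nonempty $T$, is a $K_{a-1,b-1}$ with a universal vertex attached---so that the correction reduces to a sum of $K_{a-1,b-1}$-counts over individual vertex neighborhoods and fits within the allotted time budget.
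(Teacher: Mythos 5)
Your inclusion--exclusion over color classes is exactly the paper's argument: the paper writes $\embcol{K_{a,b}}{G}=|S|-\sum_{\emptyset\neq J}(-1)^{|J|-1}\bigl|\bigcap_{j\in J}\overline{S_j}\bigr|$, observes that $\bigcap_{j\in J}\overline{S_j}$ is the set of $K_{a,b}$-subgraphs of the color-induced subgraph $G_{\overline{J}}$, makes one oracle call per $J$, and closes with the same averaging-over-$(\alpha,\beta)$ and union-bound computation you give. So the combinatorial core and the success-probability analysis are fine, modulo the usual automorphism bookkeeping.

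The gap is the entire diagonal-edge correction, which you yourself flag as unresolved. The paper performs no such correction: it asserts that $\distribution{K_{a,b}}$ is identical to $\randombigraphdist{an}{bn}$ and that $G_{\overline{J}}$ is distributed as $\randombigraphdist{cn}{dn}$, and feeds $G_{\overline{J}}$ to the oracle directly. (You have correctly spotted that this identification is imprecise, since $K_n\times K_{a,b}$ omits the $abn$ pairs sharing a first coordinate, but the paper treats the two distributions as the same rather than repairing the discrepancy inside the reduction.) Your repair does not go through within the stated resources. After augmenting $G[V_A,V_B]$ with $\Theta(abn)$ fresh diagonal edges, recovering $N(A,B)$ from $\emb{K_{a,b}}{H_{A,B}}$ requires, for each sampled diagonal edge (and each larger coincidence pattern), an exact count of the embeddings passing through those specific edges. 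Each such term is a worst-case $K_{a-1,b-1}$-counting problem on a common neighborhood --- a problem for which you have no oracle (the given oracle only works on average over $\DistKab{a}{b}{n}$, and a neighborhood of $H_{A,B}$ is not so distributed) and no algorithm within budget (the paper's own worst-case algorithm needs $a\geq 8$ and time $n^{a+o(1)}$, and you would need $\Theta(abn)$ such counts, far exceeding $O(ab2^{a+b}T(n))$ when $T(n)$ is small). So the ``careful structural decomposition'' you defer to is not a routine detail; it is the missing step. To match the proposition as stated, follow the paper: query $G_{\overline{J}}$ itself and treat $\distribution{K_{a,b}}$ and $\randombigraphdist{an}{bn}$ as the same distribution, flagging the residual discrepancy as a definitional issue rather than absorbing it into the reduction.
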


Combining \cref{thm:worst-to-average_precise,prop:ToDistKab}, we obtain a worst-case-to-average-case reduction
from $\KabCount$ to $(\numberemb{K_{a,b}},\DistKab{a}{b}{n})$.

\begin{theorem} \label{thm:Kabworsttoaverage}
Let $2\leq a\leq b$ be arbitrary constants.
Suppose that there is a $T(n)$-time randomized heuristic algorithm that solves $(\numberemb{K_{a,b}},\DistKab{a}{b}{n})$ with success probability $1-\delta$, where $\delta=(\log n)^{-C}$ and $C=C(a,b)$ is a sufficiently large constant.

Then, there is a $T(n)\cdot \polylog ( n )$-time randomized algorithm that solves $\KabCount$ for any input with success probability $2/3$.
\end{theorem}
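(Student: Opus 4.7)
The plan is to chain together three ingredients already present in the paper: the reduction of Proposition \ref{prop:ToDistKab} from the uncolored distributional problem to the colorful distributional problem, the worst-case-to-average-case reduction of Theorem \ref{thm:worst-to-average_precise} for $\numberembcol{K_{a,b}}$, and the worst-case equivalence between $\numberembcol{K_{a,b}}$ and $\KabCount$ recorded by Lemma \ref{lem:coluncolequivalent}. Since the conclusion of \cref{thm:worst-to-average_precise} is a worst-case algorithm only for the colorful version $\numberembcol{K_{a,b}}$, the last step is essential in order to recover a worst-case algorithm for $\KabCount$.

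Concretely, starting from the hypothesized $T(n)$-time heuristic $A$ that solves $(\numberemb{K_{a,b}},\DistKab{a}{b}{n})$ with success probability $1-\delta$, I would first invoke Proposition \ref{prop:ToDistKab} to turn $A$ into a heuristic $A'$ for $(\numberembcol{K_{a,b}}, \distribution{K_{a,b}})$ of running time $O(ab\,2^{a+b}\cdot T(n)) = O(T(n))$ and success probability $1-O(ab\,2^{a+b}\,\delta)$; the constant factor is harmless because $a,b$ are fixed. I would then feed $A'$ into Theorem \ref{thm:worst-to-average_precise}, whose hypothesis requires a failure probability of at most $(\log n)^{-C_{K_{a,b}}}$ on $\distribution{K_{a,b}}$ for a certain constant $C_{K_{a,b}}$. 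By taking the constant $C = C(a,b)$ in the statement of \cref{thm:Kabworsttoaverage} to be $C_{K_{a,b}}$ plus enough slack to absorb the $O(ab\,2^{a+b})$ loss, this hypothesis is met, and the theorem produces a randomized algorithm $B'$ that solves $\numberembcol{K_{a,b}}$ on every input in time $T(n)\cdot\polylog(n)$ with success probability $2/3$.

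Finally, I would compose $B'$ with the worst-case reduction of Lemma \ref{lem:coluncolequivalent} from $\numberemb{K_{a,b}}$ to $\numberembcol{K_{a,b}}$, which incurs only an $O(1)$ overhead since $a, b$ are constants. The resulting algorithm solves $\KabCount$ on any input in time $T(n)\cdot\polylog(n)$ with success probability $2/3$, matching the conclusion of the theorem.

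There is no real conceptual obstacle; the proof is a direct chaining of three black-box reductions. The only quantitative care needed is in the choice of $C(a,b)$: it must be large enough to absorb both the constant-factor error blow-up introduced by Proposition \ref{prop:ToDistKab} and the threshold demanded by Theorem \ref{thm:worst-to-average_precise}. Likewise, one checks that the running-time factors compose to $T(n)\cdot\polylog(n)$, which is immediate because each intermediate overhead is either a constant (depending on $a,b$) or a single polylogarithmic factor.
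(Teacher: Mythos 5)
Your proposal is correct and follows exactly the paper's chain of reductions: apply \cref{prop:ToDistKab} to turn the uncolored heuristic into a colorful heuristic for $(\numberembcol{K_{a,b}},\distribution{K_{a,b}})$, feed that into the worst-case-to-average-case reduction of \cref{thm:worst-to-average_precise}, and finally recover a worst-case $\KabCount$ algorithm from the colorful solver. One small point worth noting: the paper's own proof-organization remarks cite \cref{folk:uncolortocolor} for the last step, but as you recognized, the direction actually required there (uncolored from colored) is the Lov\'asz-identity direction provided by \cref{lem:coluncolequivalent}, so your citation is the more precise one.
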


\Cref{thm:Kabworsttoaverage}
is of interest in its own right;
we emphasize that $a$ and $b$ can be chosen arbitrarily unlike \cref{thm:KabCountSETH} (i.e., the SETH-hardness of $\numberemb{K_{a,b}}$).
In the context of subgraph counting,
counting $K_{2,2}$ (i.e., $4$-cycle) subgraphs
in a graph on $n$ vertices with $m$ edges
attracts particular interest:
The current fastest counting algorithm
runs in time $O(n^{\omega})$ or $O(m^{1.48})$~\cite{AYZ97},
whereas finding a $K_{2,2}$ can be done
in time $O(n^2)$~\cite{YZ97} or $O(m^{1.41})$~\cite{AYZ97}.
A central question in this context is whether we can beat the $O(n^{\omega})$-time algorithm for the $K_{2,2}$-counting problem.
The worst-case-to-average-case reduction given in \cref{thm:Kabworsttoaverage}
indicates that a random bipartite graph
is essentially the hardest distribution for the $K_{2,2}$-counting problem.

\paragraph*{Proof Ideas.}
Our proof of \cref{thm:Kabworsttoaverage}
is based on the proof techniques of 
Boix-Adser{\`a}, Brennan, and Bresler~\cite{BBB19}.
They first reduced the clique counting problem to the clique counting problem on partite graphs.
Then they encoded the counting problem as a polynomial
over a large finite field and reduced the counting problem to the evaluation of the polynomial at a random point.
This is a classical method of
local decoding of
Reed-Muller codes,
which is a standard technique in the literature
of random self-reducibility~\cite{Lip91,GS92}.
The key part of the proof in \cite{BBB19} is a \emph{binary expansion technique}, which enables us to reduce
the polynomial evaluation to the clique counting problem on a random partite graph.
Specifically, they reduced the evaluation of $F(x)$ for a random point $x\in\Fq^N$
to the evaluation of $F(z_1),\ldots,F(z_m)$ for $m=\polylog(q)$, where $F(\cdot)$ is the polynomial that encodes the clique counting problem and $z_i\in \{0,1\}^N$ is a random binary point.
Finally, they reduced the clique counting problem on a random partite graph to the clique counting problem on an Erd\H{o}s-R\'enyi random graph by using an inductive argument.

The inductive argument given in \cite{BBB19} does not seem to be easily generalized to subgraphs other than cliques.
Instead, we present a simple proof that can be applied to bicliques (as well as cliques) by using the inclusion-exclusion principle,
and prove \cref{prop:ToDistKab}.

The proof of \cref{thm:worst-to-average_precise} is given by 
observing
that
the binary expansion technique of \cite{BBB19} 
can be applied to any multivariate polynomial $F$ over $\Fq$
on $N$ variables $x_1,\ldots,x_N$ satisfying the following conditions.
\begin{enumerate}
\item 
$F$ is a low-degree polynomial.
(The reduction runs in time 
$(\log n)^{O(D)}$, where $D$ is 
the total degree of $F$.)
\item There is a partition
$(E_1,\ldots,E_\ell)$ of $[N]$
such that $|E_j\cap I|=1$ holds for every monomial $\prod_{i\in I}x_i$ and $j\in[\ell]$.
\end{enumerate}
These two conditions motivate us
to consider 
the problem  $\numberembcol{H}$ of counting colorful subgraphs.
The recent work of Dalirrooyfard, Lincoln, and Vassilevska Williams~\cite{DLW20}
called the polynomial with this property \emph{strongly $\ell$-partite}
and presented a worst-case-to-average-case
reduction for evaluating this polynomial, which implies \cref{thm:worst-to-average_precise} (their work is independent to ours).
See \cref{subsec:step1_reduce_wrs_to_Fq} for details of the our worst-case-to-average-case reduction.

\subsection{Doubly-Efficient Interactive Proof System} \label{sec:IP_intro}
A line of research on interactive proof systems,
pioneered by Goldwasser, Micali, and Rackoff~\cite{GMR89},
revealed the surprising power of interaction. 
Early studies of interactive proof systems focused on efficient verification of intractable problems such as $\mathsf{PSPACE}$-complete problems~\cite{LFKN92,Sha90}.
A recent line of research (e.g., \cite{GKR15,RRR16,GR18-ITCS,GR18,BRSV18}) concerns interactive proof systems
for tractable problems, which are called \emph{doubly-efficient interactive proof systems}:
The goal of a doubly-efficient interactive proof system is to verify a statement in almost linear time%
\footnote{ We often use $n$ to denote the number of vertices of a given graph; thus, ``almost linear time (in the input length)'' means $\widetilde{O}(n^2)$ time in our context.}
by interacting with a polynomial-time prover.
It is worth mentioning that a doubly-efficient interactive proof system plays an important role in Proof of Work systems~\cite{BRSV17,BRSV18}.


\cref{thm:MainIP} provides a doubly-efficient interactive proof system for
the $K_{a,b}$ counting problem with $\polylog(n)$ queries.
More generally,
for any fixed graph $H$,
we present an interactive proof system
for the colorful $H$-subgraph counting problem $\numberembcol{H}$,
in which an honest prover is required to solve 
$(\numberembcol{H},\distribution{H})$ on average.

\begin{theorem}[IP for $\numberembcol{H}$] \label{thm:embcolIP}
Let $H$ be a graph.
There is an $O(\log n)$-round interactive proof system $\IP$ 
for the statement ``$\embcol{H}{G}=C$"
such that,
given an input $(G,n,C)$,
\begin{itemize}
    \item The verifier accepts with probability $1$ for some prover if the statement is true (perfect completeness), while it rejects for any prover
    with probability at least $2/3$ otherwise (soundness).
    \item In each round, the verifier runs in time $n^2(\log n)^{O(|E(H)|)}$ and 
    sends $(\log n)^{O(|E(H)|)}$ instances of $\numberembcol{H}$ to a prover.
\end{itemize}
Furthermore, for any constant $L_0$, there exists a constant $L_1=L_1(H,L_0)$ such that, 
if the statement is true and the prover has oracle
access to a randomized heuristic
algorithm that solves
$(\numberembcol{H},\distribution{H})$
with success probability
$1-(\log n)^{-L_1}$,
then the verifier accepts
with probability $1-(\log n)^{-L_0}$.
\end{theorem}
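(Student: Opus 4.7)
The plan is to build the interactive proof system by combining a Goldreich-Rothblum style recursive halving of the graph with the binary expansion trick of Boix-Adserà, Brennan, and Bresler, so that every query the verifier ships to the prover is marginally distributed as $\distribution{H}$. Each of the $O(\log n)$ rounds halves the number of vertices per color class: starting from the claim ``$\embcol{H}{G} = C$'' with $|V_i| = m$, I split each color class as $V_i = V_i^0 \sqcup V_i^1$ of size $m/2$, observe that
$$\embcol{H}{G} = \sum_{s \in \{0,1\}^k} \embcol{H}{G_s},$$
where $k = |V(H)|$ and $G_s$ is the induced subgraph of $G$ on $\bigcup_i V_i^{s_i}$, and reduce this $2^k$-term sum to a single claim about a half-sized graph via a short inner protocol.

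The inner protocol runs as follows. Let $f(s) := \embcol{H}{G_s}$ viewed as a function $\{0,1\}^k \to \mathbb{Z}$, and let $\tilde{f}$ be its multilinear extension over $\Fq$ with $q = \polylog(n)$. A standard $k$-round sum-check reduces ``$\sum_s f(s) = C$'' to a single claim ``$\tilde{f}(r^\ast) = C^\ast$'' at a uniformly random $r^\ast \in \Fq^k$. The value $\tilde{f}(r^\ast)$ is a colorful count in a ``mixed'' half-sized graph whose edge weights are $\Fq$-valued combinations of the original edges dictated by $r^\ast$; this is the evaluation of the strongly $|E(H)|$-partite polynomial of \cref{thm:worst-to-average_precise} at some field point $y^\ast$. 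Applying the binary expansion trick, I express this evaluation as an $\Fq$-linear combination of at most $(\log n)^{O(|E(H)|)}$ evaluations $F(b_1), \ldots, F(b_T)$, where each $b_t$ is a uniformly random binary edge vector of the half-sized graph and hence encodes a uniformly random subgraph drawn from $\distribution{H}$ on half-sized graphs. The verifier sends these encoded subgraphs as $\numberembcol{H}$ instances to the prover, receives claimed counts, interpolates them to check consistency with ``$\tilde{f}(r^\ast) = C^\ast$'', and carries one of these queries forward (together with its returned count) as the new claim for the next outer round.

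The hardest part will be the joint analysis of soundness and heuristic completeness across all rounds. Soundness of a single sum-check round contributes error $O(k/q)$, and the binary-expansion interpolation step contributes additional error $O(|E(H)| \log q / q)$; summing across $\log n$ outer rounds and choosing $q = (\log n)^{C}$ for a sufficiently large constant $C$ keeps the total soundness error below $1/3$. For completeness under the heuristic oracle, each of the $(\log n)^{O(|E(H)|)}$ total queries across all rounds must be marginally distributed as $\distribution{H}$; this is ensured because the base-$2$ bits of a uniformly random $\Fq$-element are uniformly random in $\{0,1\}$, and a union bound over all queries yields acceptance probability at least $1 - (\log n)^{-L_0}$ provided we set $L_1 := L_0 + C_H$ with a constant $C_H$ depending on $|E(H)|$. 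The delicate point I foresee is the inter-round transition: after the sum-check plus binary expansion, the verifier must promote one of the random half-sized binary queries to become the next outer round's claim, which requires the selected query and the prover-returned count to remain jointly unbiased; I expect this to follow from the random self-reducibility of $\numberembcol{H}$ on $\distribution{H}$ established in \cref{thm:worst-to-average_precise}, combined with a careful scheduling of the fresh randomness used in each round's binary expansion.
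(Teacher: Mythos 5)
Your high-level plan — recursive halving of color classes, downward reducibility of $\embcol{H}{\cdot}$ under that halving, and the binary-expansion trick to translate the prover's work into $\numberembcol{H}$ instances on $\distribution{H}$ — matches the paper's. However, there is a genuine soundness gap at the step you yourself flag as ``delicate,'' namely the inter-round carry-forward, and the argument you gesture at does not close it.

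The problem is this: after the binary expansion you have written the target value as a fixed integer linear combination $\sum_a c_a\, F(\chi^{(a)})$ over $T = (\log n)^{\Theta(|E(H)|)}$ half-sized binary graphs $\chi^{(a)}$. If the carried-in claim is false, a cheating prover need only perturb its answer on a \emph{single} one of these $T$ queries to make your consistency check pass. You then promote a uniformly random query (together with its claimed count) as the next round's claim, but with probability up to $1 - 1/T$ you pick an index where the prover answered truthfully, and from that point on the remaining rounds start from a \emph{true} claim and will accept. Your appeal to ``random self-reducibility of $\numberembcol{H}$ on $\distribution{H}$'' does not help: \cref{thm:worst-to-average_precise} controls completeness against an honest-but-noisy heuristic, whereas soundness must hold against an arbitrary adversarial prover, and self-reduction gives no degree bound or algebraic constraint on the $T$ reported values. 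Nothing forces those values to be consistent with any single $\Fq$-valued object, so nothing protects the selected query.

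What the paper does instead is to never carry a binary query forward. It encodes the $2^{|V(H)|}$ halving summands via an interpolation $\tilde x(\cdot)$ so that $\eta \mapsto \embcolpoly{n/2}{H}{q}(\tilde x(\eta))$ extends to a univariate polynomial of constant degree $D = |E(H)|(2^{|V(H)|}-1)$ over $\Fq$ with $q > n^{|V(H)|}$. The prover is committed (via interpolation of its $D+1$ returned evaluations) to some degree-$\le D$ polynomial $G$, the verifier checks $\sum_{z<2^{|V(H)|}} G(z) = C$, picks a fresh random $i \sim \Unif{\Fq}$, and carries forward the \emph{field-valued} pair $(\tilde x(i), G(i))$. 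Soundness per round is then $D/q$ by Schwartz–Zippel, which is negligible because $q$ is polynomially large. The binary expansion is used only to reduce the prover's workload of evaluating $G$ at $D+1$ points to solving $\numberembcol{H}$ instances on $\distribution{H}$; it never touches the verifier's soundness mechanism. You should adopt this fix: after your sum-check lands on ``$\tilde f(r^*) = C^*$,'' i.e.\ ``$F(y^*) = C^*$'' for a field-valued edge vector $y^*$, recurse on $(y^*, C^*)$ directly — the polynomial $\embcolpoly{\cdot}{H}{q}$ and the halving identity \cref{eq:embcolpoly_recursion1} are defined on all of $\Fq^{E(K_n \times H)}$, not just binary inputs. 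Relatedly, using $q = \polylog(n)$ only certifies the count modulo $q$ unless you add CRT; the paper sidesteps this entirely by taking $q$ in the range $(n^{|V(H)|}, 2n^{|V(H)|})$, which also gives the small per-round error budget for free. Your multilinear $k$-round sum-check versus the paper's degree-$D$ univariate packing is a cosmetic difference (both are $O(\log n)$ rounds since $|V(H)| = O(1)$), but the carry-forward and field-size choices are substantive.
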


The ``Furthermore" part follows the worst-case-to-average-case reduction of 
\cref{thm:worst-to-average_precise}:
We can easily modify an honest prover of $\IP$
so that the prover
is required to solve
$\polylog(n)$ instances
of
the distributional problem $(\numberembcol{H},\distribution{H})$.

The salient feature of our interactive proof system is that 
the amount of communication between a verifier and a prover is at most $\polylog(n)$ bits;
equivalently, the number of queries that a verifier makes to a prover is at most $\polylog(n)$.
This will be important in the next section---where we prove hardness amplification theorems
in a fine-grained setting based on an interactive proof system whose query complexity is subpolynomially small.

The interactive proof system of \cref{thm:embcolIP} can be compared with one given by 
Goldreich and Rothblum~\cite{GR18}.
They presented an $O(1)$-round $\widetilde{O}(n)$-query doubly-efficient
interactive proof system for $\numberemb{K_k}$.
\cref{thm:embcolIP} significantly improves the query complexity from $\widetilde{O}(n)$ to $\polylog(n)$,
at the cost of increasing the round complexity from $O(1)$ to $O(\log n)$.
To explain the source of our improvement,
we review the ideas of \cite{GR18}: 
Their interactive proof system
is essentially a variant of the sum-check protocol~\cite{LFKN92}.
They encoded $\numberemb{K_k}$ as a polynomial over a large finite field
and used 
the following downward self-reducibility
of $\emb{K_k}{G}$:
$\emb{K_k}{G}=\sum_{i\in V(G)}\emb{K_{k-1}}{G-i}$,
where $G-i$ denotes the graph obtained
by removing the vertex $i$ from $G$.
In each round, the prover sends a polynomial of degree $O(n)$ to the verifier.
Each coefficient of the polynomial can be computed by calling a $\numberemb{K_{k}}$ solver $\polylog n$ times.
Overall, the number of queries made by the verifier 
is $O(n \polylog n)$.
To summarize,
the degree of the polynomial is the main bottleneck for the query complexity.


We improve the query complexity by exploiting a different type of downward self-reducibility.
Roughly speaking, at each round, we reduce verifying $\embcol{H}{G}$
for an $n$-vertex graph $G$
to the verification of $\embcol{H}{G_1},\ldots,\embcol{H}{G_m}$ for $m=\polylog(n)$, where each $G_i$ has $n/2$ vertices.
The downward self-reducibility enables us to encode the problem $\numberembcol{H}$ as a polynomial of degree $|E(H)|(2^{|V(H)|} - 1) = O(1)$ for a fixed graph $H$,
thereby reducing the query complexity.
The details are presented in \cref{sec:doubly_efficient_IP}.

We mention that the existence of a doubly-efficient interactive proof system 
with communication complexity $\polylog(n)$
for $\numberembcol{H}$
is guaranteed by using the general result of Goldwasser, Kalai, and Rothblum~\cite{GKR15}.
However,
the strategy of an honest prover of their proof system may not be computed efficiently with $\numberembcol{H}$ oracle.
We need an interactive proof system in which an honest prover is simulated with oracle access to $\numberembcol{H}$, as is guaranteed in \cref{thm:embcolIP}.
This will be important for the applications to hardness amplification theorems, as we explain next.

%
%

\subsection{Fine-Grained Hardness Amplification} \label{sec:fine-grained_hardness_amplification}

The error tolerance of 
the worst-case-to-average-case reduction from $\KabCount$ to the distributional problem $(\numberemb{K_{a,b}}, \DistKab{a}{b}{n})$
presented in \cref{thm:Kabworsttoaverage}
is not satisfactory: it requires a heuristic algorithm to solve random instances with probability at least $1 - 1 / \polylog(n)$.
This comes from the fact that we use a union bound 
in order to guarantee that all the $\polylog(n)$ queries are answered correctly,
exactly as in the work of Boix-Adser{\`a}, Brennan, and Bresler~\cite{BBB19}.
They left an open question of increasing the error tolerance of the reduction.
We answer the open question partially:
By using \emph{hardness amplification theorems}, we increase the error tolerance,
at the cost of modifying the distribution of a random graph (to some other natural distributions).

In this section,
we present a general framework for amplifying average-case hardness
in the fine-grained complexity settings,
based on the techniques from ``coarse-grained'' complexity theory.
Specifically, we prove fine-grained complexity versions of hardness amplification theorems
for any problem $f$ that admits an efficient \emph{selector} that makes $n^{o(1)}$ queries.
Such a selector for $f$ can be constructed from a doubly-efficient interactive proof system
in which (1) the verifier makes at most $n^{o(1)}$ queries and (2) the strategy of an honest prover can be efficiently computed given oracle access to $f$.
In particular, the interactive proof system of \cref{thm:embcolIP} enables us to establish
fine-grained complexity versions of hardness amplification theorems for $f = \numberembcol{H}$ for any fixed graph $H$.
We explain the details below.

\subsubsection{Direct Product Theorem}
A \emph{direct product theorem} is one of the fundamental hardness amplification results:
It states that,
if 
no small circuit can compute $f$ on more than a $(1 - \delta)$-fraction of inputs,
then
no small circuit can compute 
the $k$-wise direct product $f^k$
on a roughly $(1 - \delta)^k$-fraction of inputs.
Here, the $k$-wise direct product $f^k$ of $f$ is defined as $f^k(x_1, \dots, x_k) \defeq (f(x_1), \dots, f(x_k))$.
Our plan is to apply
a direct product theorem
for the function $f \defeq \numberemb{K_{a, b}}$
in order to amplify the average-case hardness of
the distributional problem $(\numberemb{K_{a, b}}, \DistKab{a}{b}{n})$.


However, there is an obstacle for applying hardness amplification to \emph{uniform computational models}
(as opposed to non-uniform computational models such as circuits).
A standard proof of a direct product theorem can be applied to only non-uniform computational models (cf.\ the survey of Goldreich, Nisan, and Wigderson~\cite{GoldreichNW11_goldreich2011_sp_books}).
Impagliazzo, Jaiswal, Kabanets, and Wigderson~\cite{ImpagliazzoJKW10_siamcomp_journals}
overcame this issue and
presented a direct product theorem that is applicable to
\emph{slightly non-uniform} algorithms.
Moreover,
their direct product theorem 
is highly optimized and simplified,
and thus it is 
applicable to the settings of fine-grained complexity.

We note that it is impossible to completely get rid of the non-uniformity from a direct product theorem
(if no property of a function $f$ being amplified is used%
\footnote{ As we will explain in \cref{subsubsection:selector_intro}, by exploiting a specific property of a function $f$ (i.e., the existence of a selector for $f$), we can obtain a completely uniform version of a direct product theorem.}).
In general,
a direct product theorem can be seen as an approximate version of a \emph{local-list-decoding} algorithm
of an error-correcting code. 
To be more specific, 
the function $f$ is encoded as the $k$-wise direct product $f^k$,
and the direct product theorem 
can be regarded as a local-list-decoding algorithm of $f^k$:
Given a circuit $C$ that solves $f^k$ on a roughly $\epsilon \approx (1 - \delta)^k$ fraction of inputs,
the local-list-decoding algorithm of \cite{ImpagliazzoJKW10_siamcomp_journals}
produces a list of candidate circuits $C_1, \cdots, C_m$,
one of which is guaranteed to solve $f$ on more than a $1 - \delta$ fraction of inputs,
where $m = O(1 / \epsilon)$.
The non-uniformity refers to the fact that $m \ge 2$. 
This is an inherent issue; in order to decode an error-correcting code 
from a highly noisy case (i.e., $\epsilon \approx 0$),
one cannot uniquely decode the error-correcting code.

The same issue arose in the work of Goldreich and Rothblum~\cite{GR18}.
In order to construct a strongly average-case hard distribution,
they used the local list-decoding algorithm of Sudan, Trevisan and Vadhan~\cite{SudanTV01_jcss_journals}.
In their case, the average-case distribution is constructed so that the problem of counting $k$-cliques corresponds to computing some low-degree polynomial.
Then, they used a low-degree tester and a self-corrector to obtain the correct value from the list of circuits.
However, we cannot exploit their technique of using the local list-decoding algorithm
since our goal is to obtain a \emph{natural} average-case hard distribution.
Instead, we invoke the direct product theorem of \cite{ImpagliazzoJKW10_siamcomp_journals}
and then use our doubly-efficient interactive
proof system (\cref{thm:MainIP}) to identify the correct circuit.

\subsubsection{Identifying a Correct Circuit by a Selector}\label{subsubsection:selector_intro}
In order to get rid of a small amount of non-uniformity,
we make use of a specific property of a function $f$.
The notion of (oracle) selector, introduced in \cite{Hirahara15_coco_conf},
exactly characterizes  the problem for which a small amount of non-uniformity can be removed (under any relativized world).
%
For problems $\Pi'$ and $\Pi$,
a \emph{selector from $\Pi'$ to $\Pi$}
is an efficient algorithm that solves the problem $\Pi'$
given oracle access to two oracles $A_0, A_1$ one of which is 
guaranteed to compute $\Pi$.
As shown in \cite{Hirahara15_coco_conf},
it is not hard to see that any selector that can identify a correct circuit among \emph{two} circuits can be modified to a selector
that can identify a correct circuit among \emph{many} circuits.
In light of this,
what is needed for applying
the direct product theorem of \cite{ImpagliazzoJKW10_siamcomp_journals}
is 
the existence of a selector from $\KabCount$ to the task of solving the distributional problem
$(\KabCount,\DistKab{a}{b}{n})$ with success probability $1 - \delta$.

In the settings of ``coarse-grained'' complexity \cite{Hirahara15_coco_conf},
it suffices to consider a polynomial-time selector
since polynomial-time algorithms can be composed nicely.
However, in the settings of fine-grained complexity,
one cannot afford even $n^{\Omega(1)}$ queries
for each candidate circuit,
because simulating the circuit takes time $n^{a - \epsilon}$.
The previous interactive proof system given by Goldreich and Rothblum~\cite{GR18} is not efficient enough 
in terms of the query complexity (cf.\ \cref{sec:IP_intro})

We overcome this difficulty by using the doubly-efficient interactive proof system
that makes at most $\polylog(n)$ queries (\cref{thm:MainIP}).
Roughly speaking,
we can construct a selector by simulating the verifier
of an interactive proof system
by using the candidate circuit as a prover.
More precisely, for a given input $x$
and two circuits $C_0$ and $C_1$,
the selector simulates $C_0$ and $C_1$ on input $x$
and obtains the two outputs $C_0(x)$ and $C_1(x)$.
Then, the selector runs the interactive proof
system to check whether the output is correct.
If one of $C_0$ or $C_1$ is correct,
the verifier accepts the corresponding output
and the selector outputs the accepted one.
In this way, by using \cref{thm:MainIP}, we construct a selector as stated in the following theorem.
\begin{theorem}[Selector for $\numberembcol{H}$ Using Subpolynomial Queries]
\label{thm:selector}
Let $C_1,\ldots,C_m$ be circuits such that, for some $i^*$, the circuit $C_{i^*}$ solves $(\numberembcol{H},\distribution{H})$ with success probability $1-(\log n)^{-K_H}$, where $K_H$ is a sufficiently large constant that depends only on $H$ and $m = \polylog(n)$.
Then, there is a randomized $n^2\polylog(n)$-time algorithm
that, given the list of circuits $C_1,\ldots,C_m$ as advice, 
solves $\numberembcol{H}$ correctly
with probability at least $2/3$
by making $\polylog(n)$ queries for each circuit $C_i$.
\end{theorem}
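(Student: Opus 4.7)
The plan is to use the doubly-efficient interactive proof system of \cref{thm:embcolIP} (the colorful version of \cref{thm:MainIP}) as the engine of the selector. The overall idea, following the framework of oracle selectors from \cite{Hirahara15_coco_conf} but adapted to the fine-grained regime, is that the IP lets us \emph{verify} whether a claimed value equals $\embcol{H}{G}$ using only $\polylog(n)$ queries to a prover, and, crucially, an honest prover can be simulated from an average-case oracle.

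First I would amplify the soundness of the IP: running $O(\log m) = O(\log \log n)$ independent sequential copies and accepting only if every copy accepts reduces the soundness error from $1/3$ to below $1/m^2$ while keeping the query complexity at $\polylog(n)$ per circuit and completeness at $1 - (\log n)^{-L_0} - o(1)$ for the chosen constant $L_0$.

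The selector then proceeds as follows. Given $G$ and the circuits $C_1, \dots, C_m$, for each $i \in [m]$ it (i) computes a candidate value $y_i$ for $\embcol{H}{G}$ by running the worst-case-to-average-case reduction of \cref{thm:worst-to-average_precise} using $C_i$ as the average-case oracle, and then (ii) runs the amplified IP on the claim ``$\embcol{H}{G} = y_i$'' with $C_i$ playing the role of the honest prover's oracle. The selector outputs any $y_i$ for which the verifier accepts. For the analysis, fix $K_H \ge L_1(H, L_0)$, where $L_1$ is the constant from the Furthermore clause of \cref{thm:embcolIP} and $L_0$ is sufficiently large. When $i = i^*$, \cref{thm:worst-to-average_precise} returns the correct value $y_{i^*} = \embcol{H}{G}$, and by the Furthermore clause the prover simulated from $C_{i^*}$ makes the verifier accept; both events fail only with probability $O((\log n)^{-L_0})$. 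When $i \ne i^*$ and $y_i$ is incorrect, amplified soundness forces rejection except with probability $1/m^2$; when $y_i$ happens to be correct, acceptance is harmless. A union bound over the $m = \polylog(n)$ executions yields success probability at least $2/3$. The running time is $m \cdot n^2 \polylog(n) = n^2 \polylog(n)$ and the query complexity is $\polylog(n)$ per circuit, as required.

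The main obstacle is conceptual rather than algorithmic: $C_{i^*}$ is only an \emph{average-case} solver while $G$ may be worst-case, so one cannot simply take $y_i := C_i(G)$ as a candidate. The argument crucially relies on the property that the honest prover of \cref{thm:embcolIP}, when simulated from an average-case oracle, still handles \emph{worst-case} inputs $G$ without losing completeness. This is precisely the content of the Furthermore clause, and it is also what \cref{thm:worst-to-average_precise} exploits. The technical check is to verify that the $\polylog(n)$ queries fed to $C_i$ across both the reduction and the amplified IP are close enough to i.i.d.\ samples from $\distribution{H}$ for a single union bound to simultaneously control the completeness failure on $C_{i^*}$, the soundness failure on each incorrect candidate, and the internal randomness of the amplified IP.
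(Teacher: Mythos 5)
Your approach is essentially the same as the paper's. The paper packages the verification step as an \emph{instance checker} (its Theorem 6.5) built from the interactive proof system of \cref{thm:embcolIP}: for an oracle $A$, the checker sets $C := A(G)$ and runs the IP with $A$ as prover to verify ``$\embcol{H}{G}=C$'', outputting $C$ on acceptance and $\mathsf{fail}$ otherwise; the selector then invokes this checker once per circuit with $A := R^{C_i}$, where $R$ is the worst-case-to-average-case reduction of \cref{thm:worst-to-average_precise}. This is precisely your steps (i) and (ii), just phrased without the instance-checker terminology. Your explicit soundness amplification to $1/m^2$ is harmless but unnecessary: the IP's soundness error is already $O(tD/q) = O(\log n / n^{|V(H)|})$ by \cref{prop:soundness}, which is already far below $1/m$ for $m = \polylog(n)$. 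One place where both your write-up and the paper's are slightly imprecise is the treatment of the reduction $R$'s own failure probability: \cref{thm:worst-to-average_precise} is stated with constant success probability $2/3$ (coming from the Berlekamp--Welch step in \cref{lem:localdecoding}), so before a union bound over the $\polylog(n)$ oracle queries of the checker can go through, one must amplify $R$ by $O(\log n)$-fold repetition and majority vote; this keeps the query count at $\polylog(n)$ and does not change the argument.
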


We emphasize the importance of low query complexity of a doubly-efficient interactive proof system.
Suppose that
we can simulate
the candidate circuits $C_0$ and $C_1$
in time $T_C(n)$
and that
the verifier runs in time $T_V(n)$,
making $Q(n)$ queries
in the interactive proof system.
The running time of a selector that is constructed from the interactive proof system
is roughly $O(T_V(n)+Q(n)T_C(n))$.
In our setting, $T_C(n)=n^{a-\epsilon}$ and
thus $Q(n)$ must satisfy $Q(n)=n^{o(1)}$.

Combining the ``almost uniform'' direct product theorem of \cite{ImpagliazzoJKW10_siamcomp_journals}
with the selector of \cref{thm:selector},
we obtain a \emph{completely uniform} and fine-grained version of a direct product theorem for the distributional problem $(\numberemb{K_{a,b}}, \DistKab{a}{b}{n})$,
which completes a proof of \cref{thm:fine-grained_direct_product_theorem_Kab}.

\subsubsection{Yao's XOR Lemma}
Let $f:\binset^n\to\binset$ be a Boolean function.
\emph{Yao's XOR lemma} asserts that,
if
no small circuit can compute $f$ on more than a $(1 - \delta)$ fraction of inputs,
then
no small circuit can compute 
$f^{\oplus k}$
on a roughly $\frac{1}{2} + (1 - \delta)^k$ fraction of inputs,
where $f^{\oplus k} \colon \binset^{n k} \to \binset$ is defined as $f^{\oplus k}(x_1, \dots, x_k) := f(x_1)\oplus \dots\oplus f(x_k)$.


An almost uniform version of Yao's XOR lemma
is given by
Impagliazzo, Jaiswal, Kabanets, and Wigderson~\cite{ImpagliazzoJKW10_siamcomp_journals}
by combining  their direct product theorem
with the local list decoding of the Hadamard code given by Goldreich and Levin~\cite{GL89}.
Since the local list decoding algorithm of~\cite{GL89}
is simple and efficient, we can apply it directly
to the fine-grained complexity.
As a consequence, we can prove a uniform and fine-grained version of Yao's XOR lemma for any problem that admits an efficient selector.

We apply the fine-grained version of Yao's XOR lemma
to the parity variant of $\numberembcol{H}$.
To state our results formally,
let $\numberembcolparity{H}$ denote the problem of computing the parity $\numberembcolparity{H}(G)\defeq (\numberembcol{H}(G)\mod 2)$ 
of the number of colorful embeddings of $H$ in a given graph $G$.
Observe that, for $k$ graphs $G_1,\ldots,G_k\subseteq K_n\times H$, computing $\numberembcolparity{H}(G_1)\oplus \cdots \oplus \numberembcolparity{H}(G_k)$ is equivalent to computing $\numberembcolparity{H}(G_1\uplus \cdots \uplus G_k)$, where $F\uplus G$ denotes the disjoint union of two graphs $F$ and $G$.
Let $\distributionparity{H}$ denote the distribution of $G_1\uplus\cdots\uplus G_k$, where each $G_i$ is independently chosen from $\distribution{H}$.

\begin{theorem}[XOR Lemma for $\numberembcolparity{H}$] \label{thm:XOR_for_embcolparity}
Let $H$ be an arbitrary graph and $c>0$ be an arbitrary constant.
Suppose that there is a $T(n)$-time randomized heuristic algorithm that solves
$(\numberembcolparity{H}, \distributionparity{H})$ for any $k=O(\log n)$ with success probability greater than $\frac{1}{2}+n^{-c}$.
Then, there exists a $T(n)n^{O(c)}$-time randomized algorithm that solves $\numberembcolparity{H}$ with probability at least $2/3$ on every input.
\end{theorem}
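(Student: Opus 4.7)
The plan is to decompose the disjoint-union distribution into a Boolean XOR, apply a uniform fine-grained XOR lemma, and then invoke the selector and worst-case-to-average machinery already constructed earlier in the paper. First, I observe that a colourful copy of $H$ in $G_1\uplus\cdots\uplus G_k$ is necessarily contained in a single component (otherwise one of the colour classes would have to be split across disconnected pieces, and therefore be empty in some component), so
\begin{equation*}
\numberembcolparity{H}(G_1\uplus\cdots\uplus G_k)=\bigoplus_{i=1}^{k}\numberembcolparity{H}(G_i).
\end{equation*}
The hypothesis therefore provides, for any $k=O(\log n)$ we choose, a $T(n)$-time algorithm $A$ that computes the $k$-wise XOR of $f:=\numberembcolparity{H}$ on i.i.d.\ inputs from $\distribution{H}$ with advantage $n^{-c}$.

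Next, I would apply the uniform fine-grained Yao XOR lemma obtained by combining the uniform direct-product theorem of~\cite{ImpagliazzoJKW10_siamcomp_journals} with Goldreich--Levin local list-decoding of the Hadamard code. With $k=K\log n$ for a sufficiently large constant $K=K(c,H,K_H)$, this reduction runs in time $T(n)\cdot n^{O(c)}$ and returns a list of $m=n^{O(c)}$ randomised circuits, at least one of which computes $f$ on an input drawn from $\distribution{H}$ with failure probability at most $(\log n)^{-K_H}$, where $K_H$ is the constant demanded by \cref{thm:selector}.

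Then I would use the parity analogue of the selector of \cref{thm:selector}: the doubly-efficient interactive proof of \cref{thm:embcolIP} is executed over $\mathbb{F}_2$ (or a polylogarithmic extension) so that it certifies $\embcol{H}{G}\bmod 2$, using each candidate $C_i$ as the prover; thanks to the subpolynomial query complexity guaranteed by \cref{thm:embcolIP}, this selector runs in time $n^2\polylog(n)+m\cdot\polylog(n)\cdot T(n)\, n^{O(c)}$ and pinpoints a correct circuit with high probability. Finally, the winning circuit is fed into the parity version of the worst-case-to-average reduction of \cref{thm:worst-to-average_precise}---the same binary-expansion argument carried out over a polylogarithmic-degree extension of $\mathbb{F}_2$---to turn average-case success on $\distribution{H}$ into worst-case success with probability at least $2/3$ on every input. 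Summing the costs yields the promised running time $T(n)\cdot n^{O(c)}$.

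The main obstacle is the quantitative matching between the two halves of the argument: one has to ensure that, starting from advantage $n^{-c}$ on the $k$-wise XOR with $k=O(\log n)$, the uniform XOR lemma indeed produces candidates whose single-instance failure probability is at most $(\log n)^{-K_H}$, the threshold at which both the selector and the worst-case-to-average reduction become applicable. This requires the quantitatively sharp form of the IJKW direct-product analysis together with the correct choice of the hidden constant in $k=O(\log n)$ in terms of $c$, $H$, and $K_H$. A secondary, more routine, task is to verify that \cref{thm:embcolIP,thm:worst-to-average_precise} transfer from the counting setting to the parity setting by re-running the polynomial encoding over a binary extension field; this is straightforward because both proofs rely only on low total degree and on strong partiteness of the encoding polynomial.
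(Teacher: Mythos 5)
Your proposal is essentially the route the paper takes, and you have all the right ingredients: the disjoint-union-to-XOR decomposition, the IJKW direct-product theorem combined with Goldreich--Levin list decoding, the $\polylog(n)$-query selector coming from the $\mathbb F_{2^t}$ variant of the interactive proof, and the parity version of the binary-expansion worst-case-to-average-case reduction. The paper organizes these pieces slightly differently: it first proves an abstract ``XOR lemma for any problem with a selector'' (\cref{thm:XORlemma}), whose proof applies Goldreich--Levin to the $\kparity{k}$ solver to obtain solvers for $(\Pi,\mathcal D)^{2k}$, then applies the IJKW direct-product decoder to each of these, and finally uses the selector to identify a correct candidate in the resulting doubly-indexed list via \cref{lem:solve_by_selector}. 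The parity-specific selector (\cref{thm:SelectorForEmbcolHparity}) is then plugged in to get \cref{thm:XOR_for_embcolparity_refine}, of which \cref{thm:XOR_for_embcolparity} is the $\epsilon = n^{-c}$ specialization. Your ordering---package DP$+$GL first, then use the selector on the resulting list---is equivalent, just less modular.

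Two small remarks. First, your last two steps (``pinpoints a correct circuit'' then ``fed into the worst-case-to-average reduction'') are, in the paper's construction, a single step: the selector already wraps the worst-case-to-average reduction $R$ around each candidate oracle before running the instance checker, so its output is directly the worst-case answer $\Pi(x)$, not an identified circuit. Describing it as two separate stages is harmless but slightly redundant. Second, the ``main obstacle'' you flag---whether $k=O(\log n)$ suffices when the selector demands oracle failure probability $(\log n)^{-K_H}$---is indeed the delicate point, and the paper resolves it via the condition $\epsilon>\exp(-\Omega(\delta k))$ in \cref{thm:AlmostUniformDirectProduct,thm:DPTheorem,thm:XORlemma}, read with $\delta=1-(\log n)^{-C}$ being the success probability (so $\delta k=\Theta(k)$ and $k=O(\log \epsilon^{-1})$). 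You were right to single this out as the place where the quantitative form of the IJKW statement must be used carefully; it is worth checking that the convention (whether $\delta$ in the exponent is the success or the failure probability of the single-instance decoder) is being applied in the way that makes $k=O(\log n)$ legitimate, since the two conventions give very different lower bounds on $k$ when the target failure probability is subconstant.
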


The proof of \cref{thm:XOR_for_embcolparity} is presented in \cref{sec:parity}.
The idea is to combine the fine-grained direct product theorem and the local list decoding of~\cite{GL89}.
Details can be found in \cref{sec:parity}.

 
\subsection{Related Work} \label{sec:previous}

\paragraph*{Complexity of Subgraph Counting.}
The problem $\numberemb{H}$ is a
fundamental task in the
context of graph algorithms.
For a general subgraph $H$,
we can solve $\numberemb{H}$
in time $f(k)\cdot n^{(0.174+o(1))\ell}$
for some function $f(\cdot)$,
where $k$ and $\ell$ are the number of vertices and edges of $H$, respectively~\cite{CDM17}.
If $H$ has some nice structural property (e.g., small treewidth), 
several faster algorithms are known (see \cite{Cur18} and the references therein).
However, to the best of our knowledge, there is no previous result that precisely determines
the complexity of counting subgraphs.
Chen, Huang, Kanj, and Xia~\cite{CHKX06} proved that one cannot find a $k$-clique in a given graph in time $f(k)\cdot n^{o(k)}$ for any function $f(\cdot)$ unless ETH fails.
The current fastest algorithm was given by Nes\v{e}t\v{r}il and Poljak~\cite{NP85}, who presented an $O(n^{\omega \lceil k/3\rceil})$-time algorithm that counts the number of $k$-cliques in a given $n$-vertex graph.
Here, $\omega<2.373$ is the square matrix multiplication exponent~\cite{LeGall14,Wil12}.
Lincoln, Vassilevska Williams, and Williams~\cite{LVW18}
imposed the assumption that detecting a $k$-clique
in an $n$-vertex graphs requires
time $n^{\omega k / 3-o(1)}$
and then derived a super-linear
lower bound for
the shortest cycle problem.
However, the precise value of $\omega$ is
currently not known, and, as a consequence, the precise time complexity of counting $k$-cliques
is not well understood.

\paragraph*{Complexity of Biclique Counting.}
We mention in passing some algorithmic results concerned with finding or counting bicliques.
The results below consider the case where $a$ and $b$ are given as input.
Binkele-Raible, Fernau, Gaspers, and Liedloff~\cite{BFGL10}
proved that, for given $a,b$ and a graph $G$,
one can find a $K_{a,b}$ subgraph in $G$ in time $O(1.6914^n)$.
Couturier and Kratsch~\cite{CK12} gave
an $O(1.6107^n)$-time algorithm 
for $\KabCount$.
They also provided an
$O(1.2691^n)$-time counting algorithm
that works on bipartite graphs.
It is known that
the number of
distinct maximal induced biclique
subgraphs
in any $n$-vertex graph is
$O(3^{n/3})=O(1.442^n)$~\cite{GKL12}.
If a given graph is bipartite,
one can solve $\KabCount$ by
enumerating all
maximal $K_{a,b}$ subgraphs
using a polynomial delay algorithm~\cite{MU04}.
Kutzkov~\cite{Kut12} presented
an $O(1.2491^n)$-time counting algorithm,
which is currently the fastest one.
If $a\leq b$ are small, we can solve $\KabCount$ in time $O(n^{a+1})$ by enumerating all size-$a$ vertex subsets.
If $a=2$, we can solve $\numberemb{K_{2,b}}$ in time $O(n^{\omega})$ by computing $A^2$, where $A\in\{0,1\}^{n\times n}$ is the adjacency matrix of a given graph.
To the best of our knowledge, no other algorithms are known for $\numberemb{K_{a,b}}$.

Finding $K_{a,a}$ is $\mathsf{NP}$-complete if $a$ is given as input~\cite{GJ79}.
The parameterized complexity of finding $K_{a,a}$ (parameterized by $k$) has gathered special attention.
Lin~\cite{Lin15,Lin18} proved the W[1]-hardness, which had been a fundamental open question.
Moreover, his proof implies that, assuming ETH, one cannot find a $K_{a,a}$ in time $n^{o(\sqrt{a})}$.
However, it still remains open whether ETH rules out an $n^{o(a)}$-time algorithm for 
the problem of finding a $K_{a,a}$ subgraph.
\cref{thm:KaaCountETH} rules out an $n^{o(a)}$-time algorithm for $\numberemb{K_{a,a}}$ under ETH.

\paragraph*{Average-Case Complexity in $\mathsf{P}$.}
A detection variant of $(\numberembcol{H},\distribution{H})$
(i.e., the problem of deciding whether
a given graph $G\subseteq K_n\times H$
contains $H$ as a 
colorful subgraph)
has been studied in the literature of
average-case circuit complexity
of the subgraph isomorphism problem (cf.\ Rossman~\cite{Ros18}).

In a pioneering work of Ball, Rosen, Sabin, and Vasudevan~\cite{BRSV17},
they initiated the study of average-case complexity in the context of fine-grained complexity.
Ball et al.~\cite{BRSV17} and
their subsequent work~\cite{BRSV18}
constructed average-case hard tasks
by encoding worst-case problems
by a low-degree polynomial
over a large finite field.
Based on techniques of random
self-reducibility (e.g.~\cite{CPS99}),
they explored the average-case
hardness of the
evaluation of this polynomial under
the worst-case assumptions including
the Orthogonal Vector Conjecture,
APSP Conjecture, and 3SUM Conjecture,
recent hot conjectures
in the study of hardness in $\mathsf{P}$~\cite{LPW17,Vir15}.
Their work is motivated by
the construction of PoW systems.
Due to the construction,
their average-case problems 
are artificial.

Goldreich and Rothblum~\cite{GR18} studied the average-case complexity of $\numberemb{K_k}$ for a constant $k$.
They presented a simple distribution over
$\widetilde{O}(n)$-vertex graphs on which
it is hard to count the number of
$k$-cliques with a
success
probability better than $3/4$.
The distribution is constructed by
a gadget reduction,
and it is somewhat artificial.
The key idea of their reduction is
to consider counting weighted cliques:
The input graph has node and edge weights in $\Fq$,
and the task is to compute the sum
of all weights of clique subgraphs.
The weight of a clique is defined
as the product of all node weights and edge weights contained in the clique.
They represented this counting problem
as a low-degree polynomial $P:\Fq^{n\times n}\to\Fq$ and
used polynomial interpolation
to reduce evaluating $P$ to computing $P(r)$, where $r\sim\Unif{\Fq^{n\times n}}$.
Combining the Chinese Reminder Theorem, a vertex-blowing-up technique and unifying multiple instances into one instance, they further
reduced evaluating $P(r)$ to solving $\numberemb{K_k}$ in a specific random graph.
Their result has an error tolerance of constant probability.
However, the blowing-up technique and unifying instances yielded an artificial random graph distribution.

The proof of \cref{thm:worst-to-average_precise}
is based on techniques of Boix-Adser{\`a}, Brennan, and Bresler~\cite{BBB19}, who reduced
$\numberemb{K_k}$
to $(\numberemb{K_k},\mathcal{G}_{n,p})$,
where $\mathcal{G}_{n,p}$ is the distribution
of an $n$-vertex Erd\H{o}s-R\'enyi graph with edge density $p$.
The reduction runs in time $p^{-1}n^2\polylog n$.
Here, the error probability
of the average-case
solver is assumed to be
at most $(\log n)^{-C}$
for a sufficiently large constant $C=C(k)$.
They also presented
a parity variant of $\numberemb{K_k}$
and obtained a worst-case-to-average-case reduction with a better error tolerance.

Independently of our work,
Dalirrooyfard, Lincoln, and Vassilevska Williams~\cite{DLW20}
reduced $\numberembcol{H}$ to $(\numberemb{H},\mathcal{G}_{n,p})$ for
a constant $p$.
They first reduced $\numberembcol{H}$ to
$(\numberembcol{H},\distribution{H})$
by the same way as the proof of \cref{thm:worst-to-average_precise}
and then reduced
$(\numberembcol{H},\distribution{H})$
to $(\numberembcol{H},\mathcal{G}_{n,p})$.
Using the techniques in the latter
reduction (called \emph{Inclusion-Edgeclusion} in their paper),
we can reduce
$(\numberembcol{H},\distribution{H})$
to
$(\numberemb{H},\DistER)$.


\paragraph*{Hardness Amplification in $\mathsf{P}$.}
The authors of \cite{GoldenbergS20_innovations_conf} studied hardness amplification of optimization problems,
including problems in P.
Unlike our settings (in which it is highly non-trivial to construct a selector as in \cref{thm:selector}),
it is trivial to construct a selector for any optimization problem;
therefore, it is easy to obtain hardness amplification theorems of optimizations problems
by using the powerful direct product theorem of 
Impagliazzo, Jaiswal, Kabanets, and Wigderson~\cite{ImpagliazzoJKW10_siamcomp_journals}.%
\footnote{We mention that the authors of \cite{GoldenbergS20_innovations_conf} do not seem to be aware of \cite{ImpagliazzoJKW10_siamcomp_journals}.} 

\subsection{Organization}
We present formal definitions
 of our framework
 in \cref{sec:preliminaries}.
In \cref{sec:average-case_complexity_of_embcol},
 we present the
 worst-case-to-average-case
 reduction for $\numberembcol{H}$
 and
 prove \cref{thm:worst-to-average_precise}.
In \cref{sec:complexity_of_Kab_counting},
 we investigate the
 worst-case complexity of
 $\numberemb{K_{a,b}}$.
In \cref{sec:doubly_efficient_IP},
 we present the doubly-efficient interactive proof system
 of \cref{thm:MainIP}.
In \cref{sec:fine-grained_hardness_amplification},
 we prove the direct product theorem in the setting of fine-grained complexity.
 Finally, in \cref{sec:parity},
 we prove our fine-grained XOR Lemma.

Here is the organization of the proofs of our main results.
\paragraph*{\cref{thm_Main}.}
The first statement follows from \cref{thm:KabCountSETH,thm:Kabworsttoaverage}.
We can obtain \cref{thm:Kabworsttoaverage} by combining \cref{thm:worst-to-average_precise,prop:ToDistKab}.
See \cref{sec:average-case_complexity_of_embcol} and \cref{subsec:step3} for the proofs of \Cref{thm:worst-to-average_precise} and \cref{prop:ToDistKab}, respectively.
The second statement is equivalent to \cref{prop:fastalgo}, which is shown in \cref{sec:fastalgo}.


\paragraph*{\cref{thm:fine-grained_direct_product_theorem_Kab}.}
The proof is given in \cref{subsection:DirectProductAndSelector}.

\paragraph*{\cref{thm:worst_to_average_KaParity-intro}.}
The proof is given in
\cref{subsec:KaParity}.

\paragraph*{\cref{thm:MainIP}.}
We obtain \cref{thm:MainIP} by combining \cref{thm:embcolIP} and \cref{lem:coluncolequivalent}.
\Cref{thm:embcolIP} and \cref{lem:coluncolequivalent} are shown in \cref{sec:doubly_efficient_IP} and \cref{sec:colvsuncol}, respectively.

\paragraph*{\cref{thm:Average_KaaCount_ETH}.}
This result follows from \cref{thm:KaaCountETH,thm:Kabworsttoaverage}.
\Cref{thm:KaaCountETH} is shown in \cref{sec:proof_of_Kab_results}.
\section{Formal Definitions} \label{sec:preliminaries}
\paragraph*{Notations and Computational Model.}
A graph is a pair $G=(V,E)$ of two finite sets $V$ and $E\subseteq \binom{V}{2}$.
For simplicity, we sometimes use $uv$ to abbreviate an edge $\{u,v\}$.
We denote by $V(G)$ and $E(G$) the vertex and edge set of $G$, respectively.
We sometimes identify a graph $G$ with a vector $x_G\in \binset^{E(H)}$ by regarding $x_G$ as the edge indicator of $G$.
For a finite set $S$ and a positive integer $k$, let $(S)_k=\{(s_1,\ldots,s_k):\{s_1,\ldots,s_k\}\in\binom{S}{k}\}$.

Throughout the paper, our computational model is the $O(\log n)$-Word RAM model.
As a consequence, we assume that any field operation can be done in constant-time if the underlying field is $\Fq$ with $q=n^{O(1)}$ ($n$ is specified by the problem).

\paragraph*{Subgraph Counting Problem.}
The \emph{tensor product} $X\times Y$ of two graphs $X$ and $Y$ is a graph given by $V(X\times Y)=V(X)\times V(Y)$ and $\{(x_1,y_1),(x_2,y_2)\}\in E(X\times Y)$ if and only if $\{x_1,x_2\}\in E(X)$ and $\{y_1,y_2\}\in E(Y)$.

For two graphs $G$ and $H$, a mapping $\phi:V(H)\to V(G)$ is \emph{homomorphism} from $H$ to $G$ if $\{\phi(u),\phi(v)\}\in E(G)$ whenever $\{u,v\}\in E(H)$.
An \emph{embedding} is an injective homomorphism.
Let $\emb{H}{G}$ be the number of embeddings from $H$ to $G$.

For a fixed graph $H$, we consider the problem $\numberemb{H}$ of computing $\emb{H}{G}$ for an input graph $G$.
Note that $\numberemb{H}$ is equivalent to the problem of counting subgraphs isomorphic to $H$:
If $\mathrm{Aut}(H)$ is the set of automorphisms
\footnote{An automorphism is a bijective homomorphism from $H$ to $H$.}
of $H$ and $\mathrm{sub}(H\to G)$
is the number of subgraphs of $G$ that is isomorphic to $H$, then $\emb{H}{G}=|\mathrm{Aut}(H)|\cdot \mathrm{sub}(H\to G)$.
In this paper, we consider
the colorful variant $\numberembcol{H}$
of $\numberemb{H}$, defined in \cref{sec:average-case_complexity_of_subgraph_counting_problems}.


\paragraph*{Distributional problems, Heuristics and Reduction.}
We regard a \emph{problem} $\Pi$ as a function from an input to the solution.
An algorithm is assumed to be randomized (unless otherwise stated).
An algorithm $A$ is said to solve a problem $\Pi$ in time $T(n)$
if $A$ runs in time $T(n)$ for any input $x$ of size $n$ and $\Pr_A [ A(x) = \Pi(x) ] \ge \frac{3}{4}$.

A \emph{distributional problem} is a pair $(\Pi,\mathcal{D})$ of a problem $\Pi$ and a family of distributions $\mathcal{D}=(\mathcal{D}_1,\mathcal{D}_2,\ldots)$, where each $\mathcal{D}_n$ denotes a distribution over inputs of size $n$.
To simplify notations, we shall refer to $(\Pi,D_n)$ rather than $(\Pi,(D_n)_{n\in\Nat})$.
Throughout the paper, $\Pi$ is a graph problem and each $\mathcal{D}_n$ is some random graph distribution.

In this paper, we follow the common notion of average-case complexity (e.g., \cite{BogdanovT06_fttcs_journals}).
We say that a (deterministic) heuristic algorithm $A$ solves a distributional problem $(\Pi, \mathcal{D})$ if,
for every $n \in \Nat$, 
$A$ outputs the solution of $\Pi$ on input $x$ with high probability over the random choice of $x \sim \mathcal{D}_n$.
The definition can be extended to a randomized (two-sided-error) heuristic algorithm:
\begin{definition}
[Randomized Heuristics \cite{BogdanovT06_fttcs_journals}]
\label{def:RandomizedHeuristics}
Let $(\Pi, \mathcal D)$ be a distributional problem and $\delta \colon \Nat \to [0, 1]$ be a function.
We say that a randomized algorithm $A$ \emph{solves} $(\Pi, \mathcal{D})$ with success probability $p$ if, for every $n \in \Nat$,
$\Pr_{x \sim \mathcal{D}_n} [ \Pr_A [ A(x; n) = \Pi(x) ] \ge \frac{3}{4} ] \ge p$.
Such an algorithm $A$ is called a \emph{(two-sided-error) randomized  heuristic algorithm} for $(\Pi, \mathcal{D})$.
\end{definition}

\if0
We observe that these notions are essentially equivalent.
\begin{fact}
\begin{enumerate}
    \item If there exists a randomized algorithm $A$ that solves $(\Pi, \mathcal{D})$ with failure probability $\delta$,
    then $A$ is a two-sided-error randomized heuristic algorithm for $(\Pi, \mathcal{D})$ with failure probability $4 \delta$.
    \item
    Conversely,
    if there exists an efficient two-sided-error randomized heuristic algorithm for $(\Pi, \mathcal{D})$ with failure probability $\delta$,
    then there exists an efficient randomized algorithm that solves $(\Pi, \mathcal{D})$ with failure probability $\delta + o(1)$.
\end{enumerate}
\end{fact}
\begin{proof}
[Proof Sketch]
Let $A$ be a randomized algorithm that solves $(\Pi, \mathcal{D})$ with failure probability $\delta$.
Then, for every $n \in \Nat$, we have $\mathbb{E}_{x \sim \mathcal{D}_n, A} [ A(x; n) \neq \Pi(x) ] \le \delta(n)$.
By Markov's inequality, we obtain $\Pr_{x \sim \mathcal{D}_n} [ \Pr_A [ A(x; n) \neq \Pi(x) ] \ge \frac{1}{4} ] \le 4 \delta(n)$,
which means that $A$ is a two-sided-error randomized heuristic algorithm for $(\Pi, \mathcal{D})$ in the sense of \cref{def:RandomizedHeuristics}.

Conversely, suppose that there exists an efficient randomized heuristic algorithm $A$ for $(\Pi, \mathcal{D})$ with failure probability $\delta$.
Let $A'$ be an algorithm that simulates $A$ $k$ times by using independent coin flips and outputs the most frequent outcome of $A$.
Then we have 
$\Pr_{x \sim \mathcal{D}_n} [ \Pr_{A'} [ A'(x; n) \neq \Pi(x) ] \ge 2^{-\Omega(k)} ] \le \delta(n) $;
thus, $\Pr_{x, A'} [ A'(x; n) \neq \Pi(x) ] \le \delta(n) + 2^{-\Omega(k)}$.
\end{proof}
\fi
The reader is referred to the survey of Bogdanov and Trevisan \cite{BogdanovT06_fttcs_journals} for detailed background.

We will use the following notion of (fine-grained) reductions.
\begin{definition}
[Average-Case-to-Average-Case Reduction]
Let $(\Pi_1, \mathcal{D}_1), (\Pi_2, \mathcal{D}_2)$ be distributional problems.
Solving a distributional problem $(\Pi_1, \mathcal{D}_1)$ with success probability $1 - \delta_1$ is \emph{(quasi-linear-time polylog-query) reducible} to solving $(\Pi_2, \mathcal{D}_2)$ with success probability $1 - \delta_2$
if, there exists a $\widetilde{O}(n)$-time $\mathsf{polylog}(n)$-query randomized oracle algorithm $A$ such that
the algorithm $A$ solves
$(\Pi_1, \mathcal{D}_1)$ with success probability $1 - \delta_1$
given oracle access to an arbitrary algorithm that solves $(\Pi_2, \mathcal{D}_2)$ with success probability $1 - \delta_2$.
\end{definition}

\begin{definition}
[Worst-Case-to-Average-Case Reduction]
Let $\Pi_1$ be a problem, and let $(\Pi_2, \mathcal{D}_2)$ be a distributional problem.
A problem $\Pi_1$ is \emph{(quasi-linear-time polylog-query) reducible} to solving $(\Pi_2, \mathcal{D}_2)$ with success probability $1 - \delta_2$
if, there exists a $\widetilde{O}(n)$-time $\mathsf{polylog}(n)$-query randomized oracle algorithm $A$ such that
the algorithm $A$ solves
$\Pi_1$
given oracle access to an arbitrary algorithm that solves $(\Pi_2, \mathcal{D}_2)$ with success probability $1 - \delta_2$.
\end{definition}



%

\section{Average-Case Complexity of \texorpdfstring{$\#\mathsf{EMB}_{\mathrm{col}}^{(H)}$}{EMBcol}} 
\label{sec:average-case_complexity_of_embcol}

In this section, we present a proof of  \cref{thm:worst-to-average_precise}, that is, 
a worst-case-to-average-case reduction from
$\numberembcol{H}$
to the distributional problem
$(\numberembcol{H},\distribution{H})$.

For a fixed graph $H$ and a prime $q>n^{|V(H)|}$, let $\embcolpoly{n}{H}{q}:\Fq^{E(K_n\times H)} \to \Fq$ be a polynomial defined as
\begin{align}
\embcolpoly{n}{H}{q}(x) = \sum_{\substack{v_1,\ldots,v_k\in V(K_n\times H)\\ c(v_i)=i \ (\forall i)}} \prod_{\{i,j\}\in E(H)} x[v_iv_j]. \label{eq:embcolpolydef}
\end{align}
If $x\in\binset^{E(K_n\times H)}$ is the edge indicator of a graph $G\subseteq K_n\times H$, then $\embcolpoly{n}{H}{q}(x)=\embcol{H}{G} \bmod q = \embcol{H}{G}$ as $q>n^{|V(H)|}$.
For a graph $H$ and a set $S$, let $\distU{n}{H}{S}$ be the uniform distribution over $S^{E(K_n\times H)}$.
We sometimes identify $\Fq$ with the set $\intset{q-1}$.
The proof of \cref{thm:worst-to-average_precise}
consists of two steps.

\subsection{Step 1: Random Self-Reducibility of \texorpdfstring{$\embcolpoly{n}{H}{q}(\cdot)$}{Embcolpoly} over \texorpdfstring{$\distU n H \Fq$}{distU}}
\label{subsec:step1_reduce_wrs_to_Fq}
First, we reduce evaluating $\embcolpoly{n}{H}{q}(x)$ for a given $x$ to solving the distributional problem $(\embcolpoly{n}{H}{q}(\cdot), \distU{n}{H}{\Fq})$ for a large prime $q>n^{|V(H)|}$.
Note that we can obtain
such a prime $q$ as follows.
Sample a random integer $r$ from $\{n^{|V(H)|},n^{|V(H)|}+1,\ldots,2n^{|V(H)|}\}$
and then run the primality test for $r$ (according to the Prime Number Theorem, $r$ is prime with probability $\Omega(1/\log n)$.)

The following is well known in the context result of random self-reducibility.
A precise estimation of the running time was given by~\cite{BBB19,BRSV17}.

\begin{lemma}[Essentially given in Lemma 3.2 of~\cite{BRSV17}] \label{lem:localdecoding}
Let $P:\Fq^N \to \Fq$ be a multivariate polynomial of degree $d$ for a prime $q>12d$.
Suppose that there is a $T(N,q,d)$-time algorithm $A$ satisfying
\begin{align*}
\Pr_{x\sim \Unif{\Fq^N}}\left[A(x)=P(x)\right] \geq 1-\delta,
\end{align*}
where $\delta\in(0,1/3)$.
Then, there is a randomized algorithm $B$ that computes $P(y)$ on input $y\in\Fq^N$ with probability $2/3$ in time $O(Nd^2(\log q)^2+d^3+dT(N,q,d))$.
\end{lemma}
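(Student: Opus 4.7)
The plan is to use the classical random self-correction of low-degree polynomials via a random line, with Berlekamp--Welch decoding to handle the errors introduced by $A$. First I would sample $r \in \mathbb{F}_q^N$ uniformly and form the affine line $\ell(t) := y + t r$, which satisfies $\ell(0) = y$ and whose composition with $P$, namely $Q(t) := P(\ell(t))$, is a univariate polynomial of degree at most $d$. Computing $P(y)$ is thus equivalent to computing $Q(0)$, and the key point is that for any fixed $t \neq 0$, the point $\ell(t)$ is uniform in $\mathbb{F}_q^N$, so the oracle's purported value $A(\ell(t))$ equals $Q(t)$ with probability at least $1-\delta$.

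Next I would pick $m = \Theta(d)$ distinct nonzero scalars $t_1, \dots, t_m \in \mathbb{F}_q \setminus \{0\}$ (the assumption $q > 12 d$ guarantees enough such scalars exist), query $A$ at each $\ell(t_i)$ to obtain values $v_i := A(\ell(t_i))$, and then feed the pairs $\{(t_i, v_i)\}_{i=1}^m$ into the Berlekamp--Welch decoder. If at most $e \leq (m - d - 1)/2$ of the $v_i$ are incorrect, Berlekamp--Welch uniquely recovers $Q$ in $O(d^3)$ field operations, and I would output $Q(0)$ as the answer.

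The heart of the analysis is showing that the error count $e$ stays below $(m-d-1)/2$ with probability at least $2/3$. A naive union bound over the $m$ queries would only tolerate $\delta = O(1/m) = O(1/d)$, which is insufficient. The trick is to exploit that for any two distinct nonzero $t_i, t_j$, the points $\ell(t_i)$ and $\ell(t_j)$ are pairwise independent uniform vectors in $\mathbb{F}_q^N$; this pairwise independence is enough to apply Chebyshev's inequality, giving $\mathbb{E}[e] \leq \delta m$ and $\operatorname{Var}[e] \leq \delta m$. Taking $m$ to be a sufficiently large (depending on $1/3 - \delta$) constant multiple of $d$, Chebyshev yields $\Pr[e > (m-d-1)/2] \leq 1/3$, using crucially $\delta < 1/3$.

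The main obstacle is precisely this probability estimate: since $\delta$ may approach $1/3$, there is very little slack between the expected number of errors and the Berlekamp--Welch tolerance, so the argument must leverage pairwise independence rather than a union bound, and one must choose $m$ carefully as a function of the gap $1/3 - \delta$ (absorbing this dependence into the constant $m/d$). Once this is in place, the running time breaks into $O(N d (\log q)^2)$ bit operations to write down the $m$ line points, $m \cdot T(N,q,d) = O(d \, T(N,q,d))$ oracle calls, an additional $O(d^3)$ field operations for the Berlekamp--Welch step, and an extra factor of $d$ in the line-point computation arising from the polynomial arithmetic, matching the stated bound $O(N d^2 (\log q)^2 + d^3 + d\, T(N,q,d))$.
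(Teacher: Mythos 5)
Your overall plan (random curve through $y$, query the oracle on its points, decode with Berlekamp--Welch, control the error count via pairwise independence and Chebyshev rather than a union bound) is exactly the right idea and matches the paper's strategy. However, there is a real gap in the implementation: a random \emph{line} does not provide pairwise independence. If $\ell(t) = y + tr$ with a single random vector $r$, then for any fixed distinct nonzero $t_1, t_2$ the point $\ell(t_2)$ is a deterministic function of $\ell(t_1)$, namely $\ell(t_2) = y + (t_2/t_1)(\ell(t_1) - y)$; both values are functions of the one random seed $r$. So $\{A(\ell(t_i)) \neq Q(t_i)\}_i$ are not pairwise independent, Chebyshev does not apply, and you are stuck with the union bound you correctly identified as insufficient for $\delta$ approaching $1/3$.

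The fix, which is what the paper (following Ball et al.\ and Boix-Adser\`a--Brennan--Bresler) does, is to replace the line by a random \emph{quadratic} curve $f(t) = y + z_1 t + z_2 t^2$ with two independent uniform vectors $z_1, z_2 \in \Fq^N$. For any two distinct nonzero $t_1, t_2$, the $2\times 2$ Vandermonde-type matrix $\bigl(\begin{smallmatrix} t_1 & t_1^2 \\ t_2 & t_2^2 \end{smallmatrix}\bigr)$ is invertible, so $(z_1, z_2) \mapsto (f(t_1), f(t_2))$ is a bijection on $(\Fq^N)^2$ and $(f(t_1), f(t_2))$ is uniform on $(\Fq^N)^2$; hence the query points are genuinely pairwise independent and Chebyshev goes through. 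The restricted polynomial $P(f(t))$ then has degree at most $2d$ rather than $d$, so you query at $m = \Theta(d)$ points (the paper takes $m = 100d$) and tolerate up to $(m - 2d - 1)/2$ Berlekamp--Welch errors; the rest of your analysis, including the running-time accounting, goes through with this substitution.
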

\begin{proof}[Proof sketch]
Ball et al.~\cite{BRSV17} proved this result under the condition that $d>9$.
Boix-Adser{\`a}, Brennan, and Bresler~\cite{BBB19} obtained the same result for a prime power $q>12d$ (under the same condition) by the same way.
The common idea is to invoke the well-known local decoding of the Reed-Muller code (see, e.g.,~\cite{Lip91,GS92}).
In this paper, we just modify a parameter appeared in their proof to remove the degree condition.
We briefly describe the algorithm and refer to the full version of~\cite{BRSV17} for the analysis.

For a given $y\in\Fq^N$, sample two random vectors $z_1,z_2\sim \Unif{\Fq^N}$ independently, and consider the univariate function $f(t):=y+z_1t+z_2t^2$.
Note that our task is to compute $f(0)$.
Set $m=100d$ (the authors of~\cite{BRSV17} set $m=12d$).
Use the oracle algorithm $A$ and compute $A(f(1)),\ldots,A(f(m))$.
By the Berlekamp--Welch decoding~\cite{BW86}, obtain a polynomial $\hat{f}$ and output $\hat{f}(0)$.
\end{proof}

By applying this result to our setting, we obtain the following.
\begin{corollary} \label{cor:worsttoaverage}
For a fixed graph $H$ and a prime $n^{|V(H)|}<q<2n^{|V(H)|}$, let $\embcolpoly{n}{H}{q}(\cdot)$ be the polynomial given in \cref{eq:embcolpolydef}.
Suppose that there is a $T(n)$-time algorithm $A$ satisfying
\begin{align*}
\Pr_{x\sim \distU{n}{H}{\Fq}}\left[A(x)=\embcolpoly{n}{H}{q}(x)\right] \geq \frac{2}{3}.
\end{align*}
Then, there is a randomized algorithm $B$ that computes $\embcolpoly{n}{H}{q}(y)$ on input $y\in\Fq^{E(K_n\times H)}$ with success probability $2/3$ in time $O(n^2(\log n)^2+T(n))$.
\end{corollary}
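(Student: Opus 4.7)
The corollary is a direct instantiation of \cref{lem:localdecoding} applied to the polynomial $P \defeq \embcolpoly{n}{H}{q}$, so the plan is to identify the right parameters and verify the hypotheses rather than to re-derive any random-self-reducibility machinery.

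First, I would read off the parameters from the definition in \cref{eq:embcolpolydef}. The polynomial $\embcolpoly{n}{H}{q}$ is a polynomial in the $N \defeq |E(K_n \times H)| = O(n^2)$ variables $x[e]$, indexed by edges $e \in E(K_n \times H)$; here the implicit constant depends on $H$ (through $|V(H)|$ and $|E(H)|$), which are fixed. Each monomial in the defining sum is a product of exactly $|E(H)|$ variables $x[v_i v_j]$, so the total degree is $d \defeq |E(H)| = O(1)$. The hypothesized distribution $\distU{n}{H}{\Fq}$ is by definition the uniform distribution over $\Fq^{E(K_n \times H)} = \Fq^N$, which matches the hypothesis of \cref{lem:localdecoding} exactly.

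Next, I would check the side condition $q > 12 d$. Since $d = |E(H)|$ is a fixed constant and $q > n^{|V(H)|}$, this holds for all sufficiently large $n$. The assumed success probability $2/3$ corresponds to the boundary case $\delta = 1/3$ in \cref{lem:localdecoding} (whose statement asks for $\delta < 1/3$); a routine probability-amplification step---either running $A$ on a few independent random shifts and taking a majority, or inspecting the Berlekamp--Welch analysis in the proof sketch of \cref{lem:localdecoding} with a slightly larger constant in $m = \Theta(d)$ so that Chebyshev's inequality (using pairwise independence of the curve evaluations $f(1),\ldots,f(m)$) still yields fewer than $m/2$ errors with constant probability---handles this without changing the asymptotic running time.

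Finally, I would plug the parameters into the running-time bound of \cref{lem:localdecoding}, namely $O(N d^2 (\log q)^2 + d^3 + d \cdot T(N,q,d))$. With $N = O(n^2)$, $d = O(1)$, and $\log q = O(\log n)$ (since $q < 2 n^{|V(H)|}$ and $|V(H)|$ is constant), this simplifies to $O(n^2 (\log n)^2 + T(n))$, matching the claim. I do not expect a substantive obstacle here: the only mildly delicate point is the boundary case in the success-probability hypothesis, and this is resolved by a standard amplification argument; everything else is bookkeeping on top of \cref{lem:localdecoding}.
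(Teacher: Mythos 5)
Your proposal is correct and takes the same route as the paper, which simply instantiates \cref{lem:localdecoding} with $P = \embcolpoly{n}{H}{q}$, $N = |E(K_n\times H)| = O(n^2)$, $d = |E(H)| = O(1)$, and $\log q = O(\log n)$, yielding the stated running time. In fact you are slightly more careful than the paper: you flag that a success probability of exactly $2/3$ gives $\delta = 1/3$, which sits on the boundary of the open interval $(0,1/3)$ in \cref{lem:localdecoding}; as you observe, this is harmless because the proof sketch already takes $m = 100d$ (far more than the $12d$ of \cite{BRSV17}), so there is ample slack for the Berlekamp--Welch decoding to succeed, or alternatively a constant-factor amplification of $A$ resolves it without affecting the asymptotics.
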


\subsection{Step 2: Reduce \texorpdfstring{$\embcolpoly{n}{H}{q}(\distU n H \Fq)$}{embcolpoly} to \texorpdfstring{$\embcolpoly{n}{H}{q}(\distU n H \binset)$}{embcolpoly}} \label{subsec:step2_reduce_Fq_to_01}

We reduce the problem of computing $\embcolpoly{n}{H}{q}(\cdot)$ over the distribution $\distU{n}{H}{\Fq}$ to that over $\distU{n}{H}{\binset}$ based on
the binary extension technique
of \cite{BBB19}.
Observe that the distributional problem $(\embcolpoly{n}{H}{q}(\cdot),\distU{n}{H}{\binset})$ is equivalent to $(\numberembcol{H},\distribution{H})$ if $q>n^{|V(H)|}$.

\begin{lemma} \label{lem:Fqtobin}
Let $H$ be a fixed graph and $q$ be a prime satisfying $n^{|V(H)|}<q<2n^{|V(H)|}$.
Suppose there is a $T(n)$-time randomized heuristic algorithm $A$ satisfying
\begin{align*}
    \Pr_{x\sim\distU{n}{H}{\binset)}}\left[\Pr_A\left[A(x)=\embcolpoly{n}{H}{q}(x)\right]\geq \frac{2}{3}\right] \geq 1-\delta,
\end{align*}
where $\delta=(\log n)^{-C}$ for a sufficiently large constant $C=C_H>0$ that depends on $H$.

Then, there is a $T(n)\cdot \polylog n $-time randomized heuristic algorithm $B$ satisfying
\begin{align*}
	\Pr_{x\sim\distU{n}{H}{\Fq}}\left[ \Pr_B\left[B(x)=\embcolpoly{n}{H}{q}(x)\right] > \frac{2}{3}\right] >\frac{2}{3}.
\end{align*}
\end{lemma}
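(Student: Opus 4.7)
The plan is to exploit the strongly $\ell$-partite structure of the polynomial $\embcolpoly{n}{H}{q}$ (with $\ell := |E(H)|$) via a binary-extension argument adapted from~\cite{BBB19}. The variables $x[e]$ indexed by $e \in E(K_n \times H)$ split naturally into $\ell$ classes $E_1,\ldots,E_\ell$, where $E_k$ contains the edges of $K_n \times H$ that project onto the $k$-th edge of $H$, and by inspection of~\cref{eq:embcolpolydef} every monomial of $\embcolpoly{n}{H}{q}$ uses exactly one variable from each class. This partite property is what makes the binary-extension reduction go through with only $(\log n)^{O(|E(H)|)}$ queries rather than exponentially many.

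Given an input $x \in \Fq^{E(K_n \times H)}$, choose $L := \lceil \log_2(q/\eta) \rceil$ with $\eta$ small (fixed below). For each coordinate $e$, sample $k_e$ uniformly from $\{0, \ldots, \lfloor 2^L/q\rfloor - 1\}$, form the lift $X_e := x_e + k_e q \in \mathbb{Z}$, and take its binary expansion $X_e = \sum_{j=0}^{L-1} 2^j b_{e,j}$. For each tuple $\vec{\jmath} = (j_1,\ldots,j_\ell) \in \{0,\ldots,L-1\}^\ell$, define a binary point $y^{\vec{\jmath}} \in \binset^{E(K_n\times H)}$ by $y^{\vec{\jmath}}_e := b_{e,\,j_{k(e)}}$, where $k(e)$ denotes the class of $e$. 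Expanding $\prod_{e \in I} X_e = \prod_{k=1}^{\ell} \sum_{j_k=0}^{L-1} 2^{j_k} b_{i_k,j_k}$ for each monomial (whose support $I$ meets every class in exactly one element $i_k \in E_k$) and summing over monomials yields the key identity
\begin{align*}
\embcolpoly{n}{H}{q}(x) \;=\; \embcolpoly{n}{H}{q}(X) \;=\; \sum_{\vec{\jmath}\in\{0,\ldots,L-1\}^\ell} 2^{j_1+\cdots+j_\ell}\cdot\embcolpoly{n}{H}{q}(y^{\vec{\jmath}}) \pmod{q}.
\end{align*}
Algorithm $B$ will query an amplified version $A'$ of $A$ on each of the $M := L^\ell = (\log n)^{O(|E(H)|)}$ points $y^{\vec{\jmath}}$ and output the weighted sum modulo $q$.

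The main technical obstacle is that the natural binary bits of a uniform $\Fq$-element are not exactly uniform, because $q$ is not a power of two; the purpose of the lift is precisely to correct this bias. Over the joint randomness of $x_e \sim \Unif{\Fq}$ and the independent $k_e$, the value $X_e$ is uniform over $\{0,\ldots,Mq-1\}$ with $M := \lfloor 2^L/q\rfloor$, which lies within total-variation distance $q/2^L \leq \eta$ of $\Unif{\{0,\ldots,2^L-1\}}$. By independence across $e$, the full bit vector is $N\eta$-close to uniform on $\binset^{NL}$ with $N := |E(K_n \times H)| = O(n^2)$, and consequently each individual $y^{\vec{\jmath}}$ is $N\eta$-close to $\distU{n}{H}{\binset}$ in total variation.

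Set $\eta := 1/(N M^2)$, so that $L = O(\log n)$ and $M = \polylog n$. First amplify $A$ via a majority vote over $O(\log M)$ independent runs to obtain $A'$ which, on a $(1-\delta)$-fraction of inputs from $\distU{n}{H}{\binset}$, errs with probability at most $1/M^2$; this costs only a $\polylog n$ factor in time. Then for each $\vec{\jmath}$ the joint-failure probability that $A'(y^{\vec{\jmath}}) \neq \embcolpoly{n}{H}{q}(y^{\vec{\jmath}})$ is at most $\delta + 1/M^2 + N\eta$, so a union bound over the $M$ queries bounds the total joint failure by $M\delta + O(1/M) < 1/9$ once $C > \ell$ is chosen large enough (using $\delta = (\log n)^{-C}$). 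An application of Markov's inequality over $x$ then yields $\Pr_x[\Pr_B[B(x)=\embcolpoly{n}{H}{q}(x)] > 2/3] > 2/3$, as required, and the total running time is $M \cdot O(\log M) \cdot T(n) = T(n)\cdot \polylog n$.
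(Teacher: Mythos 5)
Your proposal is correct and follows essentially the same route as the paper: both exploit the strongly $|E(H)|$-partite structure of $\embcolpoly{n}{H}{q}$, lift each coordinate from $\Fq$ to roughly $t \approx \log q$ random bits so that the bit-vector is $O(q/2^t)$-close to uniform per coordinate, and invoke the same expansion identity $\embcolpoly{n}{H}{q}(x)=\sum_{a} 2^{\sum_e a[e]}\embcolpoly{n}{H}{q}(\chi^{(a)})$ followed by a union bound over the $(\log n)^{O(|E(H)|)}$ binary queries and a Markov argument. One small point in your favor: you make explicit the majority-vote amplification of $A$ before the union bound, which is needed because $A$ only succeeds with probability $2/3$ over its internal coins on good inputs and there are superconstantly many queries — the paper's proof leaves this implicit.
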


Note that \cref{thm:worst-to-average_precise} follows from \cref{cor:worsttoaverage,lem:Fqtobin}.

\paragraph*{Observation.}
Suppose that, for each $uv\in E(K_n\times H)$, $x[uv]\in \Fq$ can be rewritten as 
\begin{align}
    x[uv]=\sum_{l=0}^{t-1} 2^l\cdot z^{(l)}[uv] \bmod q \label{eq:xbinaryexpansion}
\end{align}
for some binary variables $z^{(0)}[uv],\ldots,z^{(t-1)}[uv] \in \binset$.
Here, $t$ is some large integer that will be specified later.
Then, we obtain
\begin{align}
    \embcolpoly{n}{H}{q}(x)
    & = 
    \sum_{\substack{v_1, \ldots, v_k \in V(G) \\ c(v_i)=i \ (\forall i) }} \prod_{ij \in E(H)} \sum_{l=0}^{t-1} 2^l \cdot z^{(l)}[v_i v_j]
    \nonumber \\
    & = 
    \sum_{\substack{v_1, \ldots, v_k \in V(G) \\ c(v_i) =i \ (\forall i) }} 
    \sum_{a \in \intset{t-1}^{E(H)}}
    \prod_{ij \in E(H)}  2^{a[ij]} \cdot z^{(a[ij])}[v_i v_j]
        \nonumber \\
    & = 
    \sum_{a \in \intset{t-1}^{E(H)}}
    2^{\sum_{e \in E(H)} a[e]}
    \sum_{\substack{v_1, \ldots, v_k \in V(G) \\ c(v_i) \in i \ (\forall i) }} 
    \prod_{ij \in E(H)} z^{(a[ij])}[v_i v_j].
           \nonumber \\
    & = 
    \sum_{a \in \intset{t-1}^{E(H)}}
    2^{\sum_{e \in E(H)} a[e]}
    \cdot \embcolpoly{n}{H}{q}(\chi^{(a)}). \label{eq:EMBCOLpolybinaryexpand}
\end{align}
Here, we define $\chi^{(a)}[uv] \defeq z^{(a[c(u)c(v)])}[uv] \in \binset$ for each $uv\in E(K_n\times H)$.

Thus, our goal is to sample $z$ such that the distribution of $z^{(a)}$ is closed to $\distribution{H}$ for each $a\in\intset{t-1}^{E(H)}$.
In this paper, we invoke a special case of Lemma 4.3 of \cite{BBB19} and improve the running time of a sampling procedure.

\begin{lemma} \label{lem:binaryexpand}
Let $q>2$ be a prime and $t$ be some integer.
For each $x\in\Fq$, let $M_x\defeq \{m\in\intset{2^t-1}:m\bmod q=x\}$ and $Y_x\sim \Unif{M_x}$ be a random variable.
Let $\mathcal{Y}_R$ be the distribution of $Y_R$ for $R\sim\Unif{\Fq}$.
Then, the following hold. 
\begin{enumerate}
    \item $\dtv(\mathcal{Y}_R,\Unif{\intset{2^t-1}}) \leq Cq/2^t$ for some absolute constant $C$.
    \item For any given $x\in\Fq$, we can sample $Y_x$ in time $O(t)$.
\end{enumerate}
\end{lemma}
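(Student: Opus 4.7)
The plan is to prove both items by direct computation, exploiting the fact that each $M_x$ is an arithmetic progression of common difference $q$ inside $\intset{2^t-1}$.

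For the first item, write $2^t = qk + r$ with $0 \le r < q$. Then $M_x = \{x, x+q, x+2q, \ldots\}\cap \intset{2^t-1}$, and a short counting argument shows that exactly $r$ residues $x \in \Fq$ satisfy $|M_x| = k+1$, while the remaining $q-r$ satisfy $|M_x| = k$. Under the composite distribution $\mathcal{Y}_R$, any element $m \in \intset{2^t-1}$ has probability $\frac{1}{q\,|M_{m \bmod q}|}$, so each such $m$ receives mass either $\frac{1}{q(k+1)}$ or $\frac{1}{qk}$. Comparing these two masses with the uniform mass $2^{-t} = 1/(qk+r)$ and summing over all $m$ (there are $r(k+1)$ of the first type and $(q-r)k$ of the second), the absolute deviations telescope to
\[
2\,\dtv(\mathcal{Y}_R,\Unif{\intset{2^t-1}}) \;=\; \frac{r(q-r)}{q(k+1)}\cdot \frac{q-r}{2^t}\cdot(k+1) \;+\; \frac{(q-r)k \cdot r}{qk\cdot 2^t}\;=\;\frac{2r(q-r)}{q\cdot 2^t}.
\]
Using $r(q-r) \le q^2/4 \le q^2$ gives $\dtv \le q/2^t$, which is the desired bound with $C=1$.

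For the second item, given $x\in \Fq$ the set $M_x$ is fully determined: its size is $s_x = \lfloor (2^t-1-x)/q\rfloor + 1 \in \{k, k+1\}$ and its elements are $x + jq$ for $j = 0, 1, \dots, s_x - 1$. Hence sampling $Y_x$ reduces to sampling $j$ uniformly from $\intset{s_x - 1}$ and returning $x + jq$. To sample $j$ uniformly, use rejection sampling: draw $t$ uniform random bits to obtain $j'\in\intset{2^t-1}$, output $j'$ if $j' < s_x$ and retry otherwise. Since $s_x \ge k \ge 2^t/q - 1 \ge 2^{t-1}/q$, the rejection probability in each round is bounded away from $1$ (indeed at most a constant once one normalizes by the tightest power of two $\ge s_x$), so the expected number of rounds is $O(1)$ and the total running time is $O(t)$. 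All arithmetic involves $O(t)$-bit integers and therefore fits within the claimed budget in the Word RAM model.

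The proof is essentially a direct calculation; the only mild subtlety is keeping the bookkeeping of the two cases $|M_x| \in \{k,k+1\}$ consistent when expanding the TV-distance sum, since a sign flip or miscount would spoil the clean cancellation that yields the factor $r(q-r)$. I expect this to be the main place where one has to be careful, but no deeper ideas are required.
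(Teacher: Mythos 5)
Your proposal is correct and, for item~1, takes a genuinely sharper route than the paper. The paper's proof only uses the coarse bound $N-2 \le |M_x| \le N$ with $N := 2^t/q$, and then bounds $|\Pr[Y_R\in A]-\Pr[Y\in A]|$ for an arbitrary event $A$ by the triangle inequality, obtaining $O(q/2^t)$ without an explicit constant. You instead observe that $|M_x|\in\{k,k+1\}$ with exactly $r$ residues of size $k+1$ and $q-r$ of size $k$ (writing $2^t=qk+r$), and compute the total variation distance exactly as $\dtv = r(q-r)/(q\,2^t) \le q/(4\cdot 2^t)$. This is a cleaner and quantitatively tighter argument. One small bug: the displayed intermediate expression $\frac{r(q-r)}{q(k+1)}\cdot\frac{q-r}{2^t}\cdot(k+1)$ simplifies to $\frac{r(q-r)^2}{q\,2^t}$, which carries an extraneous factor $(q-r)$; the correct per-term contribution is $r(k+1)\cdot\frac{q-r}{q(k+1)2^t} = \frac{r(q-r)}{q\,2^t}$. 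Your stated final equality $2\dtv = \frac{2r(q-r)}{q\,2^t}$ is nonetheless correct, so this is a transcription slip rather than a gap.

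For item~2, you describe essentially the same sampler as the paper (output $x+jq$ for $j\sim\Unif{\intset{|M_x|-1}}$), but you additionally spell out the rejection-sampling implementation of the uniform draw over a non-power-of-two range, rounding to the nearest power of two so that the per-round acceptance probability is at least $1/2$. The paper silently asserts $O(t)$ time without addressing this; your version makes the argument explicit but yields an expected (or with-high-probability) $O(t)$ bound rather than a deterministic worst-case one. This is harmless in context, and it is no weaker than what the paper itself establishes, but you might flag that the time bound is in expectation.
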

\begin{corollary} \label{cor:binarysampling}
Let $t$ be some integer.
Let $Z_0,\ldots,Z_{t-1}\sim \Unif{\binset}$ be i.i.d.~random variables.
Then, for any given $x\in\Fq$, we can sample $t$ random variables $z_0,\ldots,z_{t-1}$ satisfying the following in time $O(t)$.
\begin{enumerate}
    \item It holds that $\sum_{i=0}^{t-1}2^i\cdot z_i \bmod q = x$. \item The distribution of $(z_0,\ldots,z_{t-1})$ when $x$ is sampled from $\Unif{\Fq}$ is of total variation distance at most $O(q/2^t)$ from the uniform distribution $(Z_0,\ldots,Z_{t-1})$. 
\end{enumerate}
\end{corollary}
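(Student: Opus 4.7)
The plan is to obtain this corollary as a straightforward consequence of \cref{lem:binaryexpand} by identifying binary strings of length $t$ with integers in $\intset{2^t-1}$ via the binary expansion map. Given the input $x \in \Fq$, the proposal is to first sample $m \sim \Unif{M_x}$ using the sampling procedure from the second item of \cref{lem:binaryexpand}, which takes $O(t)$ time. Then, write $m = \sum_{i=0}^{t-1} 2^i \cdot z_i$ with $z_i \in \binset$ (the unique binary expansion), which costs another $O(t)$ time. Output $(z_0, \ldots, z_{t-1})$.

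For the first property, note that since $m \in M_x$ we have by definition $m \bmod q = x$, and by construction $\sum_i 2^i z_i = m$, so $\sum_i 2^i z_i \bmod q = x$ as required. For the second property, I would use the fact that the map $\phi \colon \binset^t \to \intset{2^t-1}$ sending $(z_0,\ldots,z_{t-1}) \mapsto \sum_i 2^i z_i$ is a bijection, and under this bijection the uniform distribution $\Unif{\intset{2^t-1}}$ pulls back exactly to the product distribution of $t$ i.i.d.\ $\Unif{\binset}$ variables $(Z_0,\ldots,Z_{t-1})$. When $x \sim \Unif{\Fq}$, the output integer $m$ is distributed as $\mathcal{Y}_R$ in the notation of \cref{lem:binaryexpand}, and the first item of that lemma gives $\dtv(\mathcal{Y}_R, \Unif{\intset{2^t-1}}) \leq C q / 2^t$. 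Since total variation distance is preserved under deterministic (in fact any) post-processing, applying $\phi^{-1}$ to both distributions yields $\dtv\bigl((z_0,\ldots,z_{t-1}), (Z_0,\ldots,Z_{t-1})\bigr) \leq C q / 2^t = O(q/2^t)$, as claimed.

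There is no real obstacle here; the corollary is essentially a reformulation of \cref{lem:binaryexpand} through the binary-expansion bijection. The only point worth double-checking is the running time of binary expansion in the $O(\log n)$-word RAM model: extracting $t$ bits of an integer $m < 2^t$ requires $O(t / \log n)$ word operations if we allow bulk bit extraction, and certainly $O(t)$ bit operations, which matches the stated bound and does not dominate the $O(t)$ sampling time from \cref{lem:binaryexpand}. Thus the overall running time is $O(t)$ and both claimed properties hold.
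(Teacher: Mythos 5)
Your proposal is correct and follows essentially the same route as the paper's own proof: sample from $M_x$ via item 2 of \cref{lem:binaryexpand}, take the binary expansion, and transfer the total-variation bound from item 1 of that lemma through the (bits $\leftrightarrow$ integers) bijection. The only difference is cosmetic (you write the bijection bits-to-integer where the paper writes integer-to-bits), plus your extra remark on the word-RAM cost of the bit extraction, which the paper leaves implicit.
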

\begin{proof}
For a given $x\in\Fq$, let $z_0,\ldots,z_{t-1}$ be the binary expansion of $Y_x$ of \cref{lem:binaryexpand}.
Then, $Y_x=\sum_{i=0}^{t-1}2^i\cdot z_i =x \pmod q$ by the definition of $Y_x$.
Let $Y\defeq \sum_{i=0}^{t-1}2^i\cdot Z_i\sim \Unif{\intset{2^t-1}}$.
Let $f:\{0,\ldots,2^t-1\}\to\{0,1\}^t$ denote the function that maps $y\in\{0,\ldots,2^t-1\}$ to the binary representation of $y$.
Note that $f$ is a bijection and $f(Y_x)=(z_0,\ldots,z_{t-1})$ holds.
Then, from \cref{lem:binaryexpand}, for any $A\subseteq \binset^{t}$, we have
\begin{align*}
    |\Pr[(z_0,\ldots,z_{t-1})\in A]-\Pr[(Z_0,\ldots,Z_{t-1})\in A]| &= |\Pr[Y_x\in f^{-1}(A)]-\Pr[Y\in f^{-1}(A)]| \\ &= O(q/2^t).
\end{align*}
This implies the statement 2 of \cref{cor:binarysampling}.
\end{proof}
\begin{remark}
Boix-Adser{\`a}, Brennan, and Bresler~\cite{BBB19} considered the general case of $Z_i \sim \mathrm{Ber}(c_i)$, where $\mathrm{Ber}(c_i)$ is the Bernouli random variable with success probability $c_i$.
Roughly speaking, for some $t=\Theta(c^{-1}(1-c)^{-1}\log (q/\epsilon^2)\log q)$, they proved (1) $\dtv(\mathcal{L}(Y),\mathcal{L}(Y_R)) \leq \epsilon$, and (2) For any given $x\in\Fq$, $Y_x$ can be sampled in time $O(tq)$.
Since $q>n^{V(H)}$, the sampling of $Y_x$ cannot be applied directly due to the running time $O(tq)$.
To avoid the large running time, Boix-Adser{\`a}, Brennan, and Bresler~\cite{BBB19} used the Chinese Reminder Theorem to reduce computing $\embcolpoly{n}{H}{q}(\cdot)$ to the computing $\embcolpoly{n}{H}{q_1}(\cdot),\ldots,\embcolpoly{n}{H}{q_m}(\cdot)$, where $q_1,\ldots,q_m$ are small primes.
In \cref{lem:binaryexpand}, we focus on the special case of $c_i=1/2$ and improve the running time of sampling $Y_x$.
\end{remark}

We will present the proof of \cref{lem:binaryexpand}  later.

\paragraph*{Proof of \cref{lem:Fqtobin}.}
We describe the randomized algorithm $B$ that computes $\embcolpoly{n}{H}{q}(x)$ for a given $x\sim \distU{n}{H}{\Fq}$.

Set $t=K\log q$ for a sufficiently large constant $K=K(H)$ that will be chosen later depending only on $H$.
For each $e\in E(K_n\times H)$, do the following:
For $x=x[e]\in\Fq$, sample $z[e]=(z_0[e],\ldots,z_{t-1}[e])$
of \cref{cor:binarysampling} in time $O(t)$.
Note that \cref{eq:xbinaryexpansion} holds.

After sampling $(z[e])_{e\in E(K_n\times H)}$, the algorithm $B$ computes $\embcolpoly{n}{H}{q}(x)$ using \cref{eq:EMBCOLpolybinaryexpand}:
For each $a\in\intset{t-1}^{E(H)}$, construct $\chi^{(a)}$ using $(z[e])_{e\in E(K_n\times H)}$ and compute $\embcolpoly{n}{H}{q}(\chi^{(a)})$ using the $T(n,H)$-time heuristic algorithm $A$ that solves $(\embcolpoly{n}{H}{q}(\cdot),\distU{n}{H}{\binset})$ with success probability $1-\delta$.

We claim that $B$ has success probability $1-t^{|E(H)|}\delta - O(n^2|E(H)|q/2^t)$, which completes the proof of \cref{lem:Fqtobin}:
Indeed, choosing $t=K\log n$ for a sufficiently large constant $K=K(H)$, the success probability of $B$ is at least $1-O(\delta (\log n)^{2|E(H)|}) - o(1) \geq 2/3$ if $\delta = o((\log n)^{-2|E(H)|})$.

\paragraph*{Success probability of $B$.}
Since $x[e]\sim\Unif{\Fq}$, \cref{lem:binaryexpand} implies that the distribution of $z[e]\defeq (z_i[e])_{i\in\intset{t-1}}$ is total variation distance at most $\epsilon\defeq O(q/2^t)$ from that of $Z[e]\defeq (Z_0[e],\ldots,Z_{t-1}[e])$, where $Z_0[e],\ldots,Z_{t-1}[e]\sim\Unif{\binset}$ are i.i.d.~random variables.
Therefore, the distribution of $z=(z[e])_{e\in E(K_n\times H)}$ is total variation distance at most $|E(K_n\times H)|\epsilon$ from $Z=(Z[e])_{e\in E(K_n\times H)}$ (here, $z[e]$ are independent as well as $Z[e]$).

Let $A$ be the randomized heuristic algorithm described in \cref{lem:Fqtobin}.
Let $\mathcal{S}$ be the set of graphs that is solved by $A$.
Formally,
\begin{align*}
\mathcal{S}=\left\{F\subseteq K_n\times H:\Pr_{A}[A(F)=\embcol{H}{F}] \geq \frac{3}{4}\right\}.
\end{align*}

Let $z\defeq (z[e])_{e\in E(K_n\times H)}$ and $Z\defeq (Z[e])_{e\in E(K_n\times H)}$ be random variables described above.
For each $a\in\intset{t-1}^{E(H)}$, we have $\Pr_Z\left[\tilde{\chi}^{(a)} \in \mathcal{S}\right] \geq 1-\delta$, where $\tilde{\chi}^{(a)}=(\tilde{\chi}^{(a)}[e])_{e\in E(K_n\times H)}$ is defined as $\tilde{\chi}^{(a)}[uv]\defeq Z^{(a[c(u)c(v)])}[uv]$.
Here, we identify a graph with a binary vector in $\binset^{E(K_n\times H)}$.
Recall that $c:V(K_n\times H)\to V(H)$ maps a vertex to its color.
Note that the distribution of $\tilde{\chi}^{(a)}$ is the same as $\distribution{H}$ for every fixed $a\in\intset{t-1}^{E(H)}$.
By the union bound, we have
\begin{align*}
    \Pr_Z\left[\forall a\in\intset{t-1}^{E( H)}:\tilde{\chi}^{(a)} \in \mathcal{S}\right] \geq 1-t^{|E(H)|}\delta.
\end{align*}
Since $z$ is total variation distance at most $|E(K_n\times H)|\epsilon$ from $Z$, this implies
\begin{align*}
    \Pr_z\left[\forall a\in\intset{t-1}^{E(H)}:\chi^{(a)} \in \mathcal{S}\right] \geq 1-t^{|E(H)|}\delta-|E(K_n\times H)|\epsilon.    
\end{align*}
This completes the proof of the claim.

\paragraph*{Proof of \cref{lem:binaryexpand}.}
Indeed, the statement 1 is a special case of Lemma 4.3 in \cite{BBB19} and the proof is already given (see p.~23 of~\cite{BBB19}).
For completeness, we present the proof by focusing on the special case.
Consider the size of $M_x$.
Let $N\defeq 2^t/q$.
Since $x\in\{0,\ldots,q-1\}$, it holds that
\begin{align*}
     N-2\leq 
     \left\lfloor \frac{2^t}{q}-1\right\rfloor \leq |M_x| \leq  \left\lfloor \frac{2^t}{q}\right\rfloor
     \leq  N.
\end{align*}
Let $Y\sim \Unif{\intset{2^t-1}}$ and $Y_R\sim 
\mathcal{Y}_R$ be random variables, where $R\sim \Unif{\Fq}$.
For any $A\subseteq \intset{2^t-1}$, consider the events that $Y\in A$ and $Y_R \in A$.
Observe
\begin{align*}
    \Pr[Y_R\in A] = \sum_{x\in\Fq}\Pr[Y_x\in A\cap M_x|R=x]\Pr[R=x] 
    = \frac{1}{q}\sum_{x\in\Fq}\frac{|A\cap M_x|}{|M_x|}
\end{align*}
and
\begin{align*}
    \Pr[Y\in A] = \frac{|A|}{2^t} 
    = \frac{1}{q}\sum_{x\in \Fq} \frac{|A\cap M_x|}{N}.
\end{align*}
Therefore, it holds for any $A\subseteq\intset{2^t-1}$ that
\begin{align*}
    |\Pr[Y_R\in A] - \Pr[Y\in A]| &\leq \frac{1}{q}\sum_{x\in \Fq}|A\cap M_x|\left| |M_x|^{-1}-N^{-1} \right| \\
    &\leq \frac{|A|}{q}\left(\frac{1}{N-2}-\frac{1}{N}\right) \\
    &= \frac{|A|}{q}\cdot O(N^{-2}) \leq O(q/2^t).
\end{align*}
This completes the proof of the statement 1.

We show the statement 2.
The sampling can be done by the following scheme:
For a given $x\in \mathbb{F}_q$, let $M\defeq \lfloor (2^t-x-1)/q\rfloor=|M_x|-1$ and sample $K\sim\Unif{\intset{M}}$.
Then, output $L\defeq Kq+x$.
For any $k\in\{0,\ldots,M\}$,
\begin{align*}
    \Pr[L=kq+x] = \Pr[K=k] = \frac{1}{M+1}.
\end{align*}
In other words, $L\sim \Unif{M_x}$ for any $x$.

\section{Complexity of \texorpdfstring{$\#\mathsf{EMB}^{(K_{a,b})}$}{Counting Kab}}
\label{sec:complexity_of_Kab_counting}
This section is devoted to prove \cref{thm:KabCountSETH,prop:fastalgo,thm:Kabworsttoaverage,thm:KaaCountETH}.
In \cref{sec:ETH-hardness,sec:colvsuncol,subsec:SETH-hardness_of_ColorfulKabDetect,subsec:step3,sec:fastalgo}, we provide several technical results.
Finally, in \cref{sec:proof_of_Kab_results}, we combine these results to show \cref{thm:KabCountSETH,prop:fastalgo,thm:Kabworsttoaverage,thm:KaaCountETH}.

\subsection{Colorful Subgraph Counting vs.~(Uncolored) Subgraph Counting} \label{sec:colvsuncol}
We first prove that the $K_{a,b}$-subgraph counting and colorful $K_{a,b}$-subgraph counting are equivalent.
\begin{lemma} \label{lem:coluncolequivalent}
Consider $\numberembcol{K_{a,b}}$ and $\KabCount$.
Given oracle access to one of them, we can solve the other one in time $2^{O(a+b)}+O(n^2)$ (in the worst-case sense).
\end{lemma}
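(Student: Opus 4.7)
The plan is to prove the two directions separately, each via a standard combinatorial reduction.

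For reducing $\numberembcol{K_{a,b}}$ to $\KabCount$, I would use inclusion--exclusion over color subsets. Given $G \subseteq K_n \times K_{a,b}$, for each $S \subseteq V(K_{a,b})$ let $G[S]$ be the subgraph of $G$ induced on those vertices whose color belongs to $S$, and set $f(S) \defeq \emb{K_{a,b}}{G[S]}$, which is a single query to the $\KabCount$ oracle. If $N_T$ denotes the number of embeddings $\psi : V(K_{a,b}) \to V(G)$ whose image uses exactly the color set $T$, then $f(S) = \sum_{T \subseteq S} N_T$, and M\"obius inversion gives
\[
N_{V(K_{a,b})} = \sum_{S \subseteq V(K_{a,b})} (-1)^{a+b-|S|} f(S).
\]

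The key step is then to relate $N_{V(K_{a,b})}$ to $\embcol{K_{a,b}}{G}$. Since $G \subseteq K_n \times K_{a,b}$, every edge of $G$ joins a vertex whose color lies on one side of $K_{a,b}$ to a vertex whose color lies on the other side. Consequently, for any embedding $\psi$ using all $a+b$ colors, the induced color map $\sigma \defeq c \circ \psi$ is a bijection of $V(K_{a,b})$ that preserves the bipartition, and hence is an element of $S_a \times S_b$. The map $\psi \mapsto (\psi \circ \sigma^{-1},\, \sigma)$ then establishes a bijection between such embeddings and pairs consisting of a colorful embedding and an element of $S_a \times S_b$, yielding $N_{V(K_{a,b})} = a!\, b! \cdot \embcol{K_{a,b}}{G}$. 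Dividing by $a!\, b!$ completes the first direction; it uses $2^{a+b}$ oracle calls and $O(n^2)$ time to read $G$ and to specify each $G[S]$ by its color mask.

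For reducing $\KabCount$ to $\numberembcol{K_{a,b}}$, a single oracle call suffices. Given $G$ on $n$ vertices, construct the tensor product $G' \defeq G \times K_{a,b}$, which naturally lies in $K_n \times K_{a,b}$ under the coloring $c(u,i) \defeq i$. Every colorful embedding $\phi : V(K_{a,b}) \to V(G')$ must map each $i$ to a vertex with color $i$, i.e.\ $\phi(i) = (\psi(i), i)$ for some $\psi : V(K_{a,b}) \to V(G)$; moreover $\phi$ is an injective homomorphism iff $\psi$ is. Hence $\embcol{K_{a,b}}{G'} = \emb{K_{a,b}}{G}$, and one oracle call on $G'$ yields the answer. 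The construction of $G'$ takes $O(n^2)$ time (with an $ab$-factor absorbed into the constant).

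I do not foresee a major obstacle here: both reductions are elementary. The only subtle point is the automorphism factor $a!\, b!$ in the first direction, and correctly justifying it requires exploiting the structural assumption $G \subseteq K_n \times K_{a,b}$ to argue that every full-color embedding induces a color permutation preserving the bipartition.
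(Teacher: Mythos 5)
Your second direction has a genuine gap. You construct $G' = G \times K_{a,b}$ and write ``$\phi$ is an injective homomorphism iff $\psi$ is,'' but this is false: the map $\phi(i) = (\psi(i), i)$ is \emph{automatically} injective because the second coordinate already separates the images, regardless of whether $\psi$ is injective. Consequently $\embcol{K_{a,b}}{G \times K_{a,b}}$ equals the number of (not necessarily injective) homomorphisms $\hom{K_{a,b}}{G}$, not $\emb{K_{a,b}}{G}$. These quantities really differ: for $G = K_3$ and $H = K_{1,2}$ one has $\emb{K_{1,2}}{K_3} = 6$ but $\hom{K_{1,2}}{K_3} = 12$. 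This is exactly \cref{fact:homembcol} in the paper, which carefully states the identity with $\hom$, not $\emb$. Passing from $\hom$ to $\emb$ is the nontrivial part of this direction, and the paper does it via Lov\'asz's identity (\cref{thm:lovaszinverse}), which expresses $\emb{K_{a,b}}{G}$ as a signed sum of $\hom{K_{a,b}/\pi}{G}$ over partitions $\pi$ of $V(K_{a,b})$ into independent sets; each quotient $K_{a,b}/\pi$ is some $K_{c,d}$ with $c \le a$, $d \le b$, and \cref{prop:KcdpitoKab} shows $\numberembcol{K_{c,d}}$ reduces to $\numberembcol{K_{a,b}}$ by padding with new vertices adjacent to everything on the opposite side. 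Your proposal omits this whole layer, so a single oracle call cannot suffice.

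Your first direction is essentially the paper's \cref{folk:uncolortocolor} and the inclusion--exclusion is fine, but there is a small slip in the normalizing constant. From $G \subseteq K_n \times K_{a,b}$ one can conclude that $\sigma = c \circ \psi$ is an \emph{automorphism} of $K_{a,b}$ (a bijective endomorphism of a finite graph maps the edge set onto itself), so the correct relation is $N_{V(K_{a,b})} = |\mathrm{Aut}(K_{a,b})| \cdot \embcol{K_{a,b}}{G}$. When $a \ne b$ this equals $a!\,b!$ as you write, but when $a = b$ the automorphism group also contains the side-swapping involution and has order $2(a!)^2$; your argument that $\sigma$ ``preserves the bipartition'' only shows it maps each side to a single side, which when $a = b$ allows a swap. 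Replace $a!\,b!$ by $|\mathrm{Aut}(K_{a,b})|$ and this direction is correct.
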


One direction is well known:
The problem
$\numberembcol{H}$
is reducible to
$\numberemb{H}$
by using the inclusion-exclusion principle~\cite{CM14,Cur18}.

\begin{folklore} \label{folk:uncolortocolor}
Let $H$ be a graph.
If $\numberemb{H}$ for $n$-vertex graphs can be solved in time $T(n)$, 
then
$\numberembcol{H}$ can be solved in time $O(2^{|V(H)|} T(n))$.
\end{folklore}

Now we discuss the converse direction:
Can we solve $\numberemb{H}$
given oracle access to $\numberembcol{H}$?
We show that $\numberemb{H}$ is reducible to $\numberembcol{H}$ when $H = K_{a, b}$.
To this end,
we consider the problem $\numberhom{H}$
that asks the number $\hom{H}{G}$ of homomorphisms from $H$ to a given graph $G$.
Recall that a mapping $\phi:V(H)\to V(G)$ is a homomorphism if $\{\phi(u),\phi(v)\}\in E(G)$ whenever $\{u,v\}\in E(H)$.

We reduce
$\numberemb{K_{a, b}}$ to $\numberembcol{K_{a, b}}$ by the following three steps.
First, we show that
$\hom{H}{G}$ is equal to $\numberembcol{H}(G \times H)$ (\cref{fact:homembcol}).
Second,
we use Lov\'asz's identity~\cite{Lov12}
to reduce $\numberemb{H}$ to $\numberhom{H'}$ for some family of graphs $H'$ (\cref{thm:lovaszinverse}).
Finally, we observe that $\numberhom{H'}$ is reducible to $\numberemb{K_{a, b}}$ when $H = K_{a, b}$ (\cref{prop:KcdpitoKab}).

The following well-known fact asserts
that $\numberhom{H}$ is reducible to $\numberembcol{H}$. 

\begin{fact} \label{fact:homembcol}
Let $H$ be a fixed $k$-vertex graph.
For any graph $G$, it holds that $\hom{H}{G}=\embcol{H}{G\times H}$.
Consequently, if $\numberembcol{H}$ can be solved in time $T(kn)$ on $kn$-vertex graphs, then $\numberhom{H}$ can be solved in time $O(T(kn)+kn^2)$ on $n$-vertex graphs.
\end{fact}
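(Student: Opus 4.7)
The plan is to establish the claimed identity by a direct bijection between homomorphisms $H \to G$ and colorful embeddings $H \to G \times H$, and then bound the running time of the reduction.

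First, I will spell out the coloring on $G \times H$: every vertex has the form $(g, h)$ with $g \in V(G)$ and $h \in V(H)$, and by the definition of $K_{kn} \times H$ its color is the $H$-coordinate $h$. A colorful embedding $\psi \colon V(H) \to V(G \times H)$ must then satisfy $\psi(u) = (\phi(u), u)$ for some function $\phi \colon V(H) \to V(G)$; conversely any such $\phi$ determines a color-preserving map $\psi$. The key observation is that such a $\psi$ is automatically injective, since distinct vertices of $H$ already carry distinct colors, so the two coordinates never collide regardless of $\phi$. It therefore suffices to check when $\psi$ is a homomorphism: for $\{u,v\} \in E(H)$, the edge condition $\{\psi(u),\psi(v)\} \in E(G \times H)$ unfolds, by the definition of the tensor product, to $\{\phi(u),\phi(v)\} \in E(G)$ and $\{u,v\} \in E(H)$, the second part being free. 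Thus $\psi$ is a colorful embedding iff $\phi$ is a homomorphism from $H$ to $G$, and the assignment $\phi \mapsto \psi$ is a bijection between $\mathrm{Hom}(H,G)$ and the set of colorful embeddings of $H$ into $G \times H$. This yields $\hom{H}{G} = \embcol{H}{G \times H}$.

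For the algorithmic consequence, given $G$ on $n$ vertices and fixed $H$ on $k$ vertices, I would first construct $G \times H$ explicitly: this graph has $kn$ vertices and at most $O(k^2 n^2)$ edges, and each potential edge $\{(g_1,h_1),(g_2,h_2)\}$ can be tested in $O(1)$ time given adjacency matrices, so the construction takes $O(k n^2)$ time (hiding factors depending only on the constant $k$). Then invoke the assumed $\numberembcol{H}$ oracle on the $kn$-vertex graph $G \times H$ in time $T(kn)$ and return the value. The total running time is $O(T(kn) + kn^2)$, as claimed.

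There is no real obstacle here; the only subtle point to emphasize in the write-up is the automatic injectivity of any color-preserving map into $G \times H$, which is what makes $\numberembcol{H}$ (rather than merely the number of colorful homomorphisms) equal to $\numberhom{H}(G)$.
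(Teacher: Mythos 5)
Your proof is correct and takes essentially the same route as the paper's: both establish the bijection $\phi \mapsto \psi$ with $\psi(v) = (\phi(v), v)$ between homomorphisms $H \to G$ and color-preserving embeddings $H \to G \times H$, and both note that color-preservation forces injectivity. Your write-up is somewhat more explicit than the paper's (spelling out the coloring, the form a color-preserving map must take, and the edge-condition verification), but the underlying argument and the running-time analysis are the same.
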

\begin{proof}
We can solve $\numberhom{H}$ on input $G$ as follows.
Construct $G\times H$ and then run the algorithm for $\numberembcol{H}$ on input $G\times H$.
Now we show $\hom{H}{G}=\embcol{H}{G\times H}$.
Let $\phi$ be a homomorphism from $H$ to $G$.
Then, the mapping $\psi:V(H)\ni v \mapsto (\phi(v),v)\in V(G\times H)$ is also a homomorphism and moreover it is injective.
This correspondence between $\phi$ and $\psi$ is one-to-one.
\end{proof}

In light of \cref{fact:homembcol}, 
it suffices to reduce $\numberemb{H}$ to
$\numberhom{H}$.
To this end, we invoke the following identity.

\begin{theorem}[Lov\'asz \cite{Lov12}; See (2) of \cite{CDM17}] \label{thm:lovaszinverse}
Let $H$ be a fixed graph.
Let $\mathcal{P}(H)$ be the set of partitions of $V(H)$ such that, for every $\pi=\{B_1,\ldots,B_t\}\in \mathcal{P}(H)$, each $B_i\subseteq V(H)$ is an independent set \upshape{(}$i=1,\ldots,t$\upshape{)}.
For each $\pi\in \mathcal{P}(H)$, define $H/\pi$ as the graph obtained by contracting each vertex set in $\pi$.
Then
\begin{align*}
    \emb{H}{G}=\sum_{\pi\in\mathcal{P}(H)} (-1)^{|V(H)|-|\pi|}\prod_{B\in\pi}(|B|-1)!\cdot \hom{H/\pi}{G}.
\end{align*}
Here, $|\pi|$ denotes the number of subsets in $\pi$.
\end{theorem}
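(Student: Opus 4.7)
The plan is to derive the identity by Möbius inversion on the partition lattice of $V(H)$, combined with the standard ``classify homomorphisms by their kernel partition'' decomposition. First I would establish the direct identity
\begin{align*}
\hom{H/\pi}{G} \;=\; \sum_{\sigma\geq\pi}\emb{H/\sigma}{G},
\end{align*}
where $\sigma$ ranges over coarsenings of $\pi$ in the partition lattice $\Pi_{V(H)}$ (which correspond bijectively to partitions of $V(H/\pi)$). To prove this expansion, I would group every homomorphism $\phi:H/\pi\to G$ according to its kernel partition $\sigma_\phi$, namely the partition of $V(H)$ whose blocks are the fibers of $\phi\circ q_\pi$ where $q_\pi:V(H)\to V(H/\pi)$ is the natural quotient. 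Such a $\phi$ then factors uniquely as a surjection onto $H/\sigma_\phi$ followed by an injective homomorphism into $G$; conversely every such factorization arises this way. Partitions $\sigma$ with a non-independent block contribute zero, since $H/\sigma$ then carries a self-loop while $G$ is loopless.

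Specializing to $\pi=\hat 0$, the all-singletons partition (so $H/\hat 0=H$), Möbius inversion on $\Pi_{V(H)}$ yields
\begin{align*}
\emb{H}{G}\;=\;\sum_{\sigma\in\Pi_{V(H)}}\mu(\hat 0,\sigma)\,\hom{H/\sigma}{G}.
\end{align*}
I would then invoke the classical formula
\begin{align*}
\mu(\hat 0,\sigma)\;=\;\prod_{B\in\sigma}(-1)^{|B|-1}(|B|-1)!
\end{align*}
for the Möbius function of the partition lattice, which is provable by induction on $|\sigma|$ using the defining relation $\sum_{\tau\leq\sigma}\mu(\hat 0,\tau)=0$ for $\sigma\neq\hat 0$. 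Since $\sum_{B\in\sigma}(|B|-1)=|V(H)|-|\sigma|$, this coefficient simplifies to $(-1)^{|V(H)|-|\sigma|}\prod_{B\in\sigma}(|B|-1)!$, matching the claimed expression; non-independent $\sigma$ again contribute nothing, so the sum restricts to $\mathcal{P}(H)$ as stated.

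The main obstacle is not conceptual but notational: one has to identify partitions of $V(H/\sigma)$ with coarsenings of $\sigma$ in $\Pi_{V(H)}$ and keep the refinement-order direction straight throughout the Möbius inversion. Once these identifications are set up, the proof reduces to the direct homomorphism decomposition above plus a single application of the standard Möbius function computation for the partition lattice.
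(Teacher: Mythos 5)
The paper states this identity as a known result of Lov\'asz (citing \cite{Lov12} and \cite{CDM17}) and does not supply a proof, so there is no in-paper argument to compare against. Your M\"obius-inversion derivation is correct and is the standard way to establish it: the direct decomposition $\hom{H/\pi}{G}=\sum_{\sigma\geq\pi}\emb{H/\sigma}{G}$ by kernel partition is right (with the observation that non-independent blocks force a self-loop in $H/\sigma$, killing the embedding count), the dual M\"obius inversion at $\pi=\hat 0$ is applied in the correct direction, and the product formula $\mu(\hat 0,\sigma)=\prod_{B\in\sigma}(-1)^{|B|-1}(|B|-1)!$ together with $\sum_{B\in\sigma}(|B|-1)=|V(H)|-|\sigma|$ recovers the stated coefficients and justifies restricting the sum to $\mathcal{P}(H)$. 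One small point worth making explicit when you write this up: the intermediate identity must hold for every $\pi$, not just $\hat 0$, for the inversion to apply, so one should note that when $u\sim_\pi v$ with $\{u,v\}\in E(H)$ the quotient $H/\pi$ acquires a self-loop, and that such loops cause no trouble because either $\sigma$ separates $u$ and $v$ (and the loop in $H/\pi$ is never inspected) or $\sigma$ does not and $\emb{H/\sigma}{G}=0$.
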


Combining \cref{thm:lovaszinverse,fact:homembcol}, we can reduce $\numberemb{H}$ to solving
a family of problems $(\numberembcol{H/\pi})_{\pi\in\mathcal{P}(H)}$.
If $H=K_{a,b}$, we can enumerate all elements of $\mathcal{P}(H)$ in time $O(2^{a+b})$,
and thus the reduction runs in time $O(n^2 + 2^{a+b})$.
Moreover, we show in \cref{prop:KcdpitoKab} that 
$\numberembcol{K_{a,b}/\pi}$ is reducible to 
$\numberembcol{K_{a,b}}$
for 
every $\pi\in\mathcal{P}(K_{a,b})$,
which enables us to reduce $\numberemb{K_{a,b}}$ to $\numberembcol{K_{a,b}}$.

\begin{proposition} \label{prop:KcdpitoKab}
Assume that $\numberembcol{K_{a,b}}$ can be solved in time $T(n)$.
Let $\pi \in \mathcal{P}(K_{a,b})$.
Then, $\numberembcol{K_{a,b}/\pi}$ can be solved in time $O(n^2+ T(n))$.
\end{proposition}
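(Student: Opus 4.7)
The plan is to reduce $\numberembcol{K_{a,b}/\pi}$ to a single oracle call to $\numberembcol{K_{a,b}}$ via a simple padding construction. First, observe that every $\pi\in\mathcal{P}(K_{a,b})$ partitions $V(K_{a,b})$ into independent sets, and since $K_{a,b}$ is complete bipartite with sides $L$ and $R$, each block of $\pi$ is entirely contained in one side. Thus $\pi$ decomposes as $L$-blocks $L_1,\ldots,L_s$ and $R$-blocks $R_1,\ldots,R_t$ with $s\le a$ and $t\le b$, and the quotient $K_{a,b}/\pi$ is isomorphic to $K_{s,t}$.

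Given input $(n,G)$ with $G\subseteq K_n\times K_{s,t}$, I would fix representatives $w_i^{*}\in L_i$ for each $i\in[s]$ and $u_j^{*}\in R_j$ for each $j\in[t]$, and then build $G'\subseteq K_n\times K_{a,b}$ as follows. Each vertex $v\in V(G)$ with block color $L_i$ (resp.\ $R_j$) is placed in $G'$ with the individual color $w_i^{*}$ (resp.\ $u_j^{*}$), inheriting all edges of $G$. For every non-representative vertex $w\in V(K_{a,b})$, add a single \emph{dummy} vertex $x_w$ of color $w$ and make $x_w$ adjacent to every vertex of $G'$ whose color lies on the opposite side of the bipartition of $K_{a,b}$. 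The algorithm then outputs $\embcol{K_{a,b}}{G'}$ using the oracle.

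The correctness claim to verify is $\embcol{K_{a,b}}{G'}=\embcol{K_{s,t}}{G}$. Any colorful $K_{a,b}$-embedding $\phi$ of $G'$ is forced to send each non-representative $w$ to its unique dummy $x_w$, so $\phi$ is determined by the choices of $\phi(w_i^{*})$ among $V(G)$-vertices of block color $L_i$ and $\phi(u_j^{*})$ among those of block color $R_j$. All edge constraints of $K_{a,b}$ that touch a dummy are satisfied by construction, while the constraints between pairs of representatives coincide exactly with the edge constraints defining a colorful $K_{s,t}$-embedding of $G$; distinctness of $\phi$ is automatic since distinct colors map to distinct vertices. For the running time, $G'$ has $|V(G)|+O(1)$ vertices and is constructed in $O(n^2)$ time (copying the edges of $G$ plus $O(n)$ edges for each of the constantly many dummies), so together with one oracle call of cost $T(n)$ we obtain the claimed bound $O(n^2+T(n))$.

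Substantively, there is no hard step here: the key structural observation is that $K_{a,b}/\pi$ is always isomorphic to a ``sub-biclique'' $K_{s,t}$ of $K_{a,b}$, so the whole proof reduces to completing $K_{s,t}$ to $K_{a,b}$ by universal dummy vertices, a standard gadget argument. The only point requiring slight care is the bookkeeping of which edges in $G'$ are forced (those incident to dummies) versus which are inherited from $G$, and making sure that the induced counting identity is exact rather than up to a combinatorial factor.
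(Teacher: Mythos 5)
Your proposal is correct and is essentially the same construction as the paper's: both first observe that $K_{a,b}/\pi$ is isomorphic to $K_{s,t}$ for some $s\le a$, $t\le b$ (since each block of $\pi$ is contained in one side of the biclique), then pad $G$ with one dummy vertex per missing color, making each dummy universal to the opposite side of the bipartition, and appeal to the forced assignment of the unique dummies to get an exact bijection between colorful embeddings. The only cosmetic difference is that the paper realizes the dummies concretely as $(l_i,v)$, $(r_j,v)$ for a fixed $v\in V(K_n)$, whereas you introduce them abstractly; and when you write that each $x_w$ is made adjacent to ``every vertex of $G'$ whose color lies on the opposite side,'' you should make explicit that this includes the other dummies (your correctness argument already relies on this), which is exactly what the paper's edge sets $E(R_{\mathrm{add}},L\cup L_{\mathrm{add}})$ and $E(L_{\mathrm{add}},R\cup R_{\mathrm{add}})$ encode.
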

\begin{proof}
Observe that, for any $\pi\in\mathcal{P}(K_{a,b})$, we have $K_{a,b}/\pi=K_{c,d}$ for some constants $c\leq a$ and $d\leq b$;
therefore, it suffices to reduce $\numberembcol{K_{c, d}}$ to $\numberembcol{K_{a,b}}$.

Let $(n,G)$ be an input of $\numberembcol{K_{c,d}}$, where $G\subseteq K_{c,d}\times K_n$.
Regard the vertices in $V(K_{a,b})$ as $V(K_{a,b})=\{l_1,\ldots,l_a,r_1,\ldots,r_b\}$ so that $E(K_{a,b})=\{l_i, r_j\}_{i\in[a],j\in[b]}$.
Then, each vertex $v\in V(G)$ can be represented as the form $(r_i,u)$ or $(l_i,u)$.
We write $V(G)=R\cup L$, where $R$ is the set of vertices of the form $(r_i,u)$, and $L$ is that of the form $(l_i,u)$.
Fix a vertex $v\in V(K_n)$ and let $L_{\mathrm{add}}=\{(l_i,v)\}_{i=c+1}^{a}$ and $R_{\mathrm{add}}=\{(r_i,v)\}_{i=d+1}^{b}$ be vertex sets.
We construct a graph $\hat{G}\subseteq K_{a,b}\times K_n$ as follows.
\begin{align*}
    V(\hat{G})&=V(G)\cup L_{\mathrm{add}} \cup R_{\mathrm{add}},\\
    E(\hat{G})&=E(G)\cup E(R_{\mathrm{add}},L\cup L_{\mathrm{add}}) \cup E(L_{\mathrm{add}},R\cup R_{\mathrm{add}}),
\end{align*}
where, for two vertex subsets $S$ and $T$, $E(S,T)=\{s\modify{}{, }t\}_{s\in S,t\in T}$.
See \cref{fig:Ghat} for an illustration.

\begin{figure}[htbp]
\centering
\includegraphics[width=5cm]{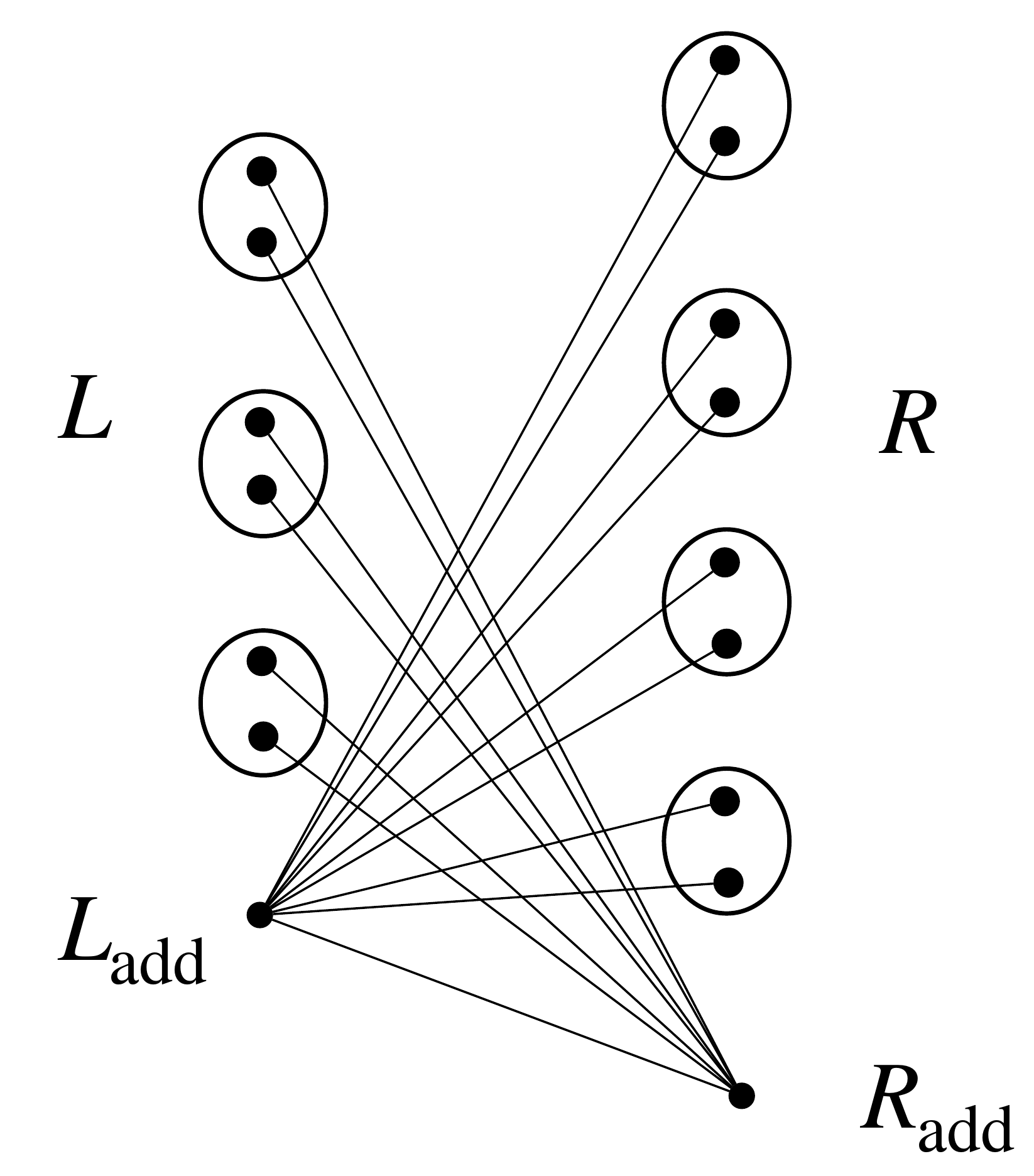}
\caption{The graph $\hat{G}$ of the reduction.
In this figure, $\embcol{K_{3,4}}{G}=\embcol{K_{4,5}}{\hat{G}}$ holds. \label{fig:Ghat}}
\end{figure}

Note that $\embcol{K_{a,b}}{\hat{G}}=\embcol{K_{c,d}}{G}$ holds since there is a one-to-one correspondence between copies of $K_{a,b}$ in $\hat{G}$ and that of $K_{c,d}$ in $G$.
\end{proof}

\Cref{lem:coluncolequivalent} follows from \cref{folk:uncolortocolor,fact:homembcol,thm:lovaszinverse,prop:KcdpitoKab}.

\begin{remark}
We comment on the relationship between $\numberhom{H}$ and $\numberemb{H}$.
It is easy to see that the problems $\numberhom{K_k}$ and $\numberemb{K_k}$ are equivalent.
More generally, such an equivalence holds 
if $H$ is a \emph{core}; here, a graph $H$ is said to be a core if any homomorphism from $H$ to $H$ is an isomorphism.
However, for some $H$, it is widely believed that there is a gap between $\numberhom{H}$ and $\numberemb{H}$:
For example, let $M_k$ be the graph of disjoint $k$ edges.
It is known that $\numberemb{M_k}$ (which is the problem of counting the number of matchings of size $k$) is $\#\mathrm{W}$[1]-hard~\cite{Cur13}, while $\numberhom{M_k}$ can be solved in linear time (observe that $\emb{M_k}{G}=(2|E(G)|)^k$).
\end{remark}

\subsection{SETH-Hardness of COLORFUL \texorpdfstring{$K_{a,b}$}{Kab}-DETECTION}
\label{subsec:SETH-hardness_of_ColorfulKabDetect}
Assume $a\leq b$.
By enumerating all subsets of size $a$, we can solve both \ColKabDetect in time $O(n^{a+1})$.
If a given graph $G$ is sparse and has $m$ edges, we can solve the problem in time $O(m^a)$ by enumerating $\binom{N(v)}{a}$ for every vertex $v$, where $N(v)$ denotes the set of vertices adjacent to $v$.

\begin{theorem}[Reminder of \cref{thm:mainthm1}]
For any constants $a\geq 2$ and $\epsilon>0$, there exists a constant $b=b(a,\epsilon)\geq a$ such that \textsc{Colorful $K_{a,b}$-Detection} cannot be solved in time $O(m^{a-\epsilon})$ unless SETH fails,
where $m$ is the number of edges of the input graph.
\end{theorem}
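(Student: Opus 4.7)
The plan is to reduce the $a$-Orthogonal Vectors problem ($a$-OV) to \ColKabDetect, so that the known SETH-hardness of $k$-OV at $k=a$ transfers to \ColKabDetect. Recall that under SETH there is a constant $C_a$ such that no $O(n^{a-\epsilon'})$-time algorithm solves $a$-OV on $n$ vectors in $\binset^d$ with $d \le C_a \log n$, for any $\epsilon' > 0$.

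Given an $a$-OV instance with sets $A_1,\ldots,A_a \subseteq \binset^d$ of size $n$, I will construct a graph $G \subseteq K_N \times K_{a,b}$ with $b$ a constant to be fixed at the end. Partition the coordinates $[d] = I_1 \uplus \cdots \uplus I_b$ into $b$ roughly equal blocks. For each left color class $L_j$ ($j \in [a]$), introduce a vertex $(j,v)$ for every $v \in A_j$. For each right color class $R_t$ ($t \in [b]$), introduce a vertex $(t,s)$ for every function $s \colon I_t \to [a]$; the vertex $(t,s)$ is intended as a partial witness of orthogonality, specifying for each $i \in I_t$ an index $j$ for which the coordinate $v_j[i]=0$ will be demanded. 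Finally, connect $(j,v) \in L_j$ to $(t,s) \in R_t$ iff $v[i] = 0$ for every $i \in I_t$ with $s(i) = j$.

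By direct inspection, a colorful $K_{a,b}$ consisting of vertices $(j,v_j)\in L_j$ for $j\in[a]$ and $(t,s_t)\in R_t$ for $t\in[b]$ exists exactly when the glued map $s \colon [d] \to [a]$ with $s|_{I_t}=s_t$ satisfies $v_{s(i)}[i] = 0$ for every $i \in [d]$, and this is precisely the statement that $v_1,\ldots,v_a$ form an orthogonal $a$-tuple. Hence \ColKabDetect on $G$ decides the underlying $a$-OV instance.

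It remains to tune $b$. Since each right color class has $a^{|I_t|}=a^{d/b}$ vertices, the number of edges of $G$ is bounded by $m \le a b \cdot n \cdot a^{d/b} = O\!\bigl(n^{1+(C_a \log a)/b}\bigr)$. Choosing $b=b(a,\epsilon)$ large enough that $\bigl(1+(C_a \log a)/b\bigr)(a-\epsilon) \le a-\epsilon/2$, any $O(m^{a-\epsilon})$-time algorithm for \ColKabDetect would yield an $O(n^{a-\epsilon/2})$-time algorithm for $a$-OV, contradicting SETH. The only point requiring real care is this quantitative balance: $b$ must be large enough to make the right-side blowup $a^{d/b}$ a sub-polynomial factor, yet it must remain a constant depending only on $a$, $\epsilon$, and the SETH constant $C_a$.
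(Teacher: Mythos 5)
Your proof is correct, and it follows the same high-level approach as the paper: reduce $a$-OV to \ColKabDetect by partitioning the coordinate set $[d]$ into $b$ constant-size blocks, with left colors indexing the input sets and right colors indexing per-block gadgets, then tune $b$ so that the right-side blowup is subpolynomial. The difference is in the right-side gadget. The paper's right vertex in class $W_t$ stores a full tuple of block-restrictions $(y_1,\ldots,y_a) \in (2^{\mathcal{P}_t})^a$ and is only created when that tuple is locally orthogonal on $\mathcal{P}_t$; the edges then just check consistency of each left vertex's restriction with the stored tuple. Your right vertex $(t,s)$ instead stores a witness map $s \colon I_t \to [a]$ singling out, for each coordinate, which input vector is required to vanish there, and the orthogonality requirement is pushed entirely into the edge condition — no filtering of right vertices is needed. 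This gives a smaller right side ($a^{d/b}$ instead of up to $2^{a d/b}$) and a cleaner correctness argument, at no cost in the final parameters. One point worth making explicit when you write this up: the choice of $b$ depends on the SETH constant $K(a,\epsilon')$ for dimension, which the contrapositive of the theorem allows since $K$ itself depends only on $a$ and the target exponent loss, so $b = b(a,\epsilon)$ is still a bona fide constant; you flag this in the last sentence but should state it as part of the argument rather than an afterthought.
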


\begin{remark}
\Cref{thm:KabCountSETH} immediately follows from \cref{folk:uncolortocolor,thm:mainthm1}.
\end{remark}

In the proof of \cref{thm:mainthm1}, we consider \textsc{$k$-Orthogonal Vectors} (\kOV).
In \kOV, we are given sets
$A_1,\ldots,A_k\subseteq \{0,1\}^d$
of binary vectors
each of cardinarity $n$
and dimension $d$
satisfying $d\leq K\log n$
for a constant $K$.
Our task is to decide whether
there exist vectors
$\mathbf{a}_1,\ldots,\mathbf{a}_k$ such that
$\mathbf{a}_i\in A_i$ for any $i$ and
$\sum_{j=1}^d\prod_{i=1}^k \mathbf{a}_i[j]=0$.
The na\"ive exhaustive search solves $k$-OV in time $O(n^kd)=O(n^k\log n)$.
The current known fastest algorithm solves it in time $O(n^{k-1/O(\log(d/\log n))})$~\cite{AWY15}.
The \emph{$k$-Orthogonal Vectors Conjecture} ($k$-OVC) asserts that \kOV requires time
$n^{k-o(1)}$ for any $d=\omega(\log n)$:
More precisely, under $k$-OVC,
for any $k\geq 2$ and $\epsilon>0$, there exists a constant $K\geq 1$ such that no $O(n^{k-\epsilon})$-time algorithm solves \kOV of dimension $d\leq K\log n$.
It is known that, for every constant $k\geq 2$, SETH implies $k$-OVC~\cite{Vir15,Wil05,LPW17}.
Thus, it suffices to reduce \kOV to $\KabCount$ for $k=a$.

\paragraph*{The reduction (Proof of Theorem~\ref{thm:mainthm1}).}
Fix any constant $a\geq 2$.
Assume that there exists a constant $\epsilon>0$ such that \ColKabDetect can be solved in $O(m^{a-\epsilon})$ for every $b\geq a$.
We will prove that, under this assumption, there exists a constant $\epsilon'>0$ such that $a$-\textsc{OV} of dimension $d = K \log n$ can be solved in time $O(m^{a-\epsilon'})$ for any $K$.
To this end, we present a many-to-one reduction: The reduction maps an instance of $a$-\textsc{OV} to an equivalent instance of \ColKabDetect.

Let $\epsilon>0$ be a sufficiently small constant that will be specified later.
Let $A_1,\ldots,A_k\subseteq \{0,1\}^d$ be an instance of $k$-OV of dimension $d=K\log n$.
We identify a vector $\mathbf{x}\in\{0,1\}^d$ with a subset $x\subseteq [d]$.
Thus, each $A_i$ is identified with a family of subsets of $[d]$.
Let $\mathcal{P}_1\cup\cdots\cup\mathcal{P}_C$ be the partition of $[d]$ such that $|\mathcal{P}_i|\leq \epsilon \log n$ holds for every $i\in[C]$, where $C=K/\epsilon$ (we will choose $\epsilon$ so that $K/\epsilon$ is an integer).

The reduction constructs a graph $G$ and a coloring $c:V(G)\to [a+b]$, resulting in an instance of \ColKabDetect of $a:=k$ and $b:=C$.
The vertex set of $G$ is of the form
\begin{align*}
V(G)=V_1\cup\cdots\cup V_k\cup W_1\cup\cdots\cup W_C,
\end{align*}
where each subset $V_1, \ldots, V_k, W_1, \ldots, w_C$ is assigned with a distinct color.
For each subset $a\in A_i$, we create a vertex $v_{a}\in V_i$.
For each index $j\in[C]$, enumerate all subsets of $\mathcal{P}_j$.
We associate a $k$-tuple $z=(y_1,\ldots,y_k) \in (2^{\mathcal{P}_j})^k$ of the subsets with a vertex $w_z\in W_j$, if the corresponding vectors $\mathbf{y}_1,\ldots,\mathbf{y}_k$ are orthogonal on $\mathcal{P}_j$.
Formally, the vertex set $V(G)$ is
\begin{align*}
&V_i:=\{v_a:\,a\in A_i\},\\
&W_j:=\left\{w_z:\,z=(y_1,\ldots,y_k)\in (2^{\mathcal{P}_j})^k\text{ satisfies }\sum_{r\in\mathcal{P}_j}\prod_{s\in[k]} \mathbf{y}_{s}[r]=0 \right\}.
\end{align*}
Two vertices $v_a\in V_i$ and $w_z\in W_j$ of $z=(y_1,\ldots,y_k)$ are joined by an edge if $a\cap \mathcal{P}_j = y_i\cap \mathcal{P}_j$ holds.
The edge set $E(G)$ contains no other edges.
Note that $G\subseteq K_n\times K_{a,b}$ in which $V_1,\ldots,V_a,W_1,\ldots,W_b$ obtain distinct colors.

\paragraph*{Correctness.}
Let $A_1,\ldots,A_k\subseteq \binset^{[d]}$ be the instance of $k$-OV with $d=K\log n$ and $G$ be the graph constructed by the reduction above.
Recall that each $A_i$ is identified with a family of $n$ subsets of $[d]$.
Suppose that the given instance is a YES-instance.
Then, there is a $k$-tuple $(a_1,\ldots,a_k)\in A_1\times\cdots\times A_k$ such that the corresponding vectors $\mathbf{a}_1,\ldots,\mathbf{a}_k$ satisfy $\sum_{r\in[d]}\prod_{s\in[k]}\mathbf{a}_s[r]=0$.
Let
\begin{align*}
&U:=\bigcup_{i\in[k]}\{v_{a_i}\}\subseteq V(G),\\
&W:=\bigcup_{j\in[C]}\{w_z\in W_j:\text{$z=(y_1,\ldots,y_k)$ where each $y_i$ satisfies $y_i\cap \mathcal{P}_j = a_i\cap \mathcal{P}_j$}\}.
\end{align*}
The set $U\cup W$ induces a colorful subgraph isomorphic to $K_{a,b}$, where $a=k$ and $b=C$; thus, the pair $(G,c)$ of a graph $G$ and coloring $c$ is a YES-instance of \textsc{Colorful $K_{a,b}$-Detection}.

Conversely, suppose that $G$ contains a colorful subgraph $H$ isomorphic to $K_{a,b}$.
Then we have $|V(H)\cap V_i|=|V(H)\cap W_j|=1$ for every $i\in[k]$ and $j\in[C]$.
Let $v_i\in V(H)\cap V_i$ and $w_j\in V(H)\cap W_j$.
As $H$ is isomorphic to $K_{a,b}$, $\{v_i,w_j\}\in E(G)$ for every $i,j$.
Let $\mathbf{a}_i\in\{0,1\}^d$ be the vector associated with the vertex $v_i$.
For every $j\in[C]$, we have $\sum_{r\in\mathcal{P}_j}\prod_{s\in[k]}\mathbf{a}_s[r]=0$ since each $w_j$ is incident to $v_i$ for all $i\in[k]$.
Thus, we have $\sum_{r\in[n]}\prod_{s\in[k]}\mathbf{a}_s[r]=0$ and hence $(A_1,\ldots,A_k)$ is a YES-instance of $k$-OV.

\paragraph*{Running time.}
The size of the constructed graph $G$ satisfies
\begin{align*}
&|V(G)|\leq kn+Cn^{\epsilon k},\\
&|E(G)|\leq kCn^{1+\epsilon k}.
\end{align*}
Thus, if \ColKabDetect on $G$ can be solved in time $O(m^{a-\epsilon'})$, letting $\epsilon>0$ be a constant satisfying $(1+\epsilon k)(k-\epsilon')\leq k-\epsilon'/2$ yields an
\begin{align*}
O(m^{a-\epsilon'})=O(n^{(1+\epsilon k)(k-\epsilon')}) = O(n^{k-\epsilon'/2})
\end{align*}
time algorithm for \kOV.
This falsifies $k$-OVC as well as SETH.

\subsection{ETH-Hardness of COLORFUL \texorpdfstring{$K_{a,a}$}{Kaa}-DETECTION}
\label{sec:ETH-hardness}

Consider the decision problem \KaDetect in which we are asked to decide whether the given graph contains a clique of size $a$ or not.
In this section, we reduce \KaDetect to \ColKaaDetect.
Note that \KaDetect does not admit an $f(k)\cdot n^{o(k)}$-time algorithm for any function $f(\cdot)$ unless ETH fails~\cite{CHKX06}; thus, the reduction establishes the ETH-hardness of $\ColKaaDetect$.

\begin{lemma} \label{lem:Ka_to_Kaa_reduction}
There is an $O(n^2)$-time algorithm that, given a graph $G$ of $n$ vertices, outputs a graph $G'\subseteq K_n\times K_{a,a}$ of $O(an)$ vertices such that $G$ contains an $a$-clique if and only if $G'$ contains a colorful $K_{a,a}$-subgraph.
\end{lemma}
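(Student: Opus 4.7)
The plan is to construct $G'$ on two disjoint copies $V_L$ and $V_R$ of $V(G)$, so that the base of the tensor product has size $2n$ (which still fits the $O(an)$ bound on $|V(G')|$). Writing $V(K_{a,a}) = \{l_1,\dots,l_a\}\cup\{r_1,\dots,r_a\}$ with left part $\{l_i\}$ and right part $\{r_j\}$, I put into $V(G')$ the vertex $(v_L, l_i)$ for every $v\in V(G)$ and $i\in[a]$, and the vertex $(v_R, r_j)$ for every $v\in V(G)$ and $j\in[a]$. Because every possible edge goes from a vertex in $V_L$ to a vertex in $V_R$, the ``distinct bases'' requirement of the tensor product $K_{2n}\times K_{a,a}$ is satisfied for free; in particular I never need a self-loop of $K_{2n}$.

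The key idea in the edge rule is to use one rule to force, for each $i$, the $l_i$-colored and $r_i$-colored vertices to represent the \emph{same} underlying vertex of $G$, and a second rule to encode clique adjacency. Concretely, I plan to connect $(u_L, l_i)$ and $(v_R, r_j)$ exactly when either (i) $i=j$ and $u=v$, or (ii) $i\neq j$ and $uv\in E(G)$. Given an $a$-clique $\{v_1,\dots,v_a\}$ in $G$, the $2a$ vertices $(v_{i,L},l_i)$ and $(v_{i,R},r_i)$ then form a colorful $K_{a,a}$: the $a$ diagonal edges are present by rule (i), and the $a(a-1)$ off-diagonal edges by rule (ii). Conversely, from a colorful $K_{a,a}$ with left vertices $(u_i,l_i)$ and right vertices $(w_j,r_j)$, each $i=j$ edge cannot come from rule (ii), hence must come from rule (i), forcing $u_i$ and $w_i$ to encode the same underlying vertex $v_i\in V(G)$; every $i\neq j$ edge then forces $v_iv_j\in E(G)$ via rule (ii), and because $G$ has no self-loops the same rule precludes $v_i=v_j$, so $\{v_1,\dots,v_a\}$ is an $a$-clique.

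The running time is $O(a^2n^2)=O(n^2)$ (treating $a$ as constant), since the algorithm just enumerates the $O(a^2n^2)$ candidate edges and decides each in $O(1)$. The only delicate point — and really the one design decision in the proof — is how to implement the diagonal rule (i): a naive single-copy construction on base $V(G)$ fails because $(v,l_i)$ and $(v,r_i)$ are never joined in any $K_n\times K_{a,a}$, as the base edge $\{v,v\}$ does not exist. Duplicating $V(G)$ into $V_L\sqcup V_R$ is exactly the device that lets rule (i) be realized as a legitimate edge of the tensor product. Beyond this observation, both directions of the equivalence and the running-time bound are routine verifications.
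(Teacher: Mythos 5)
Your construction and edge rule are essentially the same as the paper's: the paper also works with $2a$ copies of $V(G)$ (called $U_1,\dots,U_a,W_1,\dots,W_a$), one per color of $K_{a,a}$, and joins $u^{(i)}_j$ to $w^{(k)}_l$ exactly when $i=k$ and $j=l$, or $i\neq k$ and $\{v_j,v_l\}\in E(G)$. (The paper's proof contains a few typos in this edge rule --- it writes $\{u^{(i)}_j,u^{(k)}_l\}$ and the condition ``$i=j$ and $j=l$'' --- but the intended rule is the one you use.) Both correctness directions in your argument are right, including the small but necessary observation that rule (ii) together with the absence of self-loops in $G$ forces the $a$ recovered vertices to be distinct.

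Where you genuinely diverge from the paper is in treating the tensor-product embedding carefully, and this is in your favor. As you correctly observe, if one naively uses base $K_n$ and sends $u^{(i)}_j\mapsto (v_j,l_i)$, $w^{(i)}_j\mapsto (v_j,r_i)$, then the rule-(i) ``diagonal'' edge $\{u^{(i)}_j,w^{(i)}_j\}$ would need the nonexistent base edge $\{v_j,v_j\}$, so $G'$ would not in fact be a subgraph of $K_n\times K_{a,a}$ as the lemma claims. Your device of splitting the base into two disjoint copies $V_L\sqcup V_R$ and embedding into $K_{2n}\times K_{a,a}$ repairs this cleanly (no edges ever go within $V_L$ or within $V_R$, so no further collisions can arise). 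This relaxes the statement from $K_n\times K_{a,a}$ to $K_{2n}\times K_{a,a}$, which is the correct formulation; the change is harmless for the paper's use of the lemma (the ETH lower bound for $\ColKaaDetect$), since the vertex count is still $O(an)$ and $(2n)^{o(a)}=n^{o(a)}$. In short, your proof is correct, follows the paper's approach, and is in fact more careful about the one point the paper's own write-up glosses over.
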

\begin{proof}
Let $G$ be an instance of \KaDetect.
We transform $G$ to the graph $G'$ mentioned in \cref{lem:Ka_to_Kaa_reduction}.

Let $U_1,\ldots,U_a,W_1,\ldots,W_a$ be copies of $V(G)$.
For notational convenience, we write $V(G)=\{v_1,\ldots,v_n\}$ and $U_i=\{u^{(i)}_1,\ldots,u^{(i)}_n\},\,W_i=\{w^{(i)}_1,\ldots,w^{(i)}_n\}$.
Here, each $u^{(i)}_j$ corresponds to $v_j$ (and so does $w^{(i)}_j$).
We set $V(G')=\bigcup_{i\in [a]} (U_i \cup W_i)$; each $V_i$ and $W_i$ is assigned with a distinct color. (More formally, each vertex in $V_i$ is assigned with a color $i$ and each vertex in $W_i$ is assigned with a color $a+i$).
We construct $E(G')$ such that,
for all $i,k\in[a]$ and $j,l\in[n]$,  an edge $\{u^{(i)}_j,u^{(k)}_l\}$ is in $E(G')$ if either (1) $i=j$ and $j=l$, or (2) $i\neq j$ and $\{v_j,v_l\}\in E(G)$ holds.
The set $E(G')$ does not contain any other edges.
This graph can be constructed in time $O(an^2)$.

Now we check the correctness.
Suppose that a vertex set $S=\{v_{i_1},\ldots,v_{i_a}\}$ forms an $a$-clique in $G$.
Then, the vertex set $\{u^{(1)}_{i_1},\ldots,u^{(a)}_{i_a},w^{(1)}_{i_1},\ldots,w^{(a)}_{i_a}\}$ forms a colorful $K_{a,a}$-subgraph in $G'$.
Conversely, if the set $\{u^{(1)}_{i_1},\ldots,u^{(a)}_{i_a},w^{(1)}_{j_1},\ldots,w^{(a)}_{j_a}\}$ forms a colorful $K_{a,a}$-subgraph in $G'$, then it holds that $i_1=j_1,\ldots,i_a=j_a$ and the set $\{v_{i_1},\ldots,v_{i_a}\}$ forms an $a$-clique in $G$.
\end{proof}

\subsection{An \texorpdfstring{$n^{a+o(1)}$}{}-Time Algorithm for \texorpdfstring{$\#\mathsf{EMB}^{(K_{a,b})}$}{Counting Kab}} \label{sec:fastalgo}

We now present an algorithm that matches the lower bounds presented so far.
Specifically, we design an algorithm that solves $\numberembcol{K_{a, b}}$ in time $O(b n^{a + o(1)})$,
thereby proving \cref{prop:fastalgo}.
The algorithm of \cref{prop:fastalgo} is similar to the $O(n^{k+o(1)})$-time algorithm for \textsc{$k$-Dominating Set} of $k\geq 7$ proposed by P\v{a}tra\c{s}cu and Williams~\cite{PW10}.
The algorithm of \cite{PW10} adopts the fast rectangular matrix multiplication~\cite{LU18,LeGall12,Cop97}.
Recently, Le Gall and Urrutia~\cite{LU18} proved that we can compute the multiplication of an $n\times n^\gamma$ matrix and an $n^\gamma\times n$ matrix in $n^{2+o(1)}$ arithmetic operations if $\gamma\leq 0.31389$.

Let $G=(V,E)$ be a given instance of $\KabCount$.
We first consider the case when $a$ is even.
We construct an $\binom{n}{a/2}\times n$ matrix $B$ as follows:
For each $S\in\binom{V}{a/2}$ and $v\in V$,
\begin{align*}
    B[S][v]=
    \begin{cases}
    1 & \text{if $S \subseteq N(v)$},\\
    0 & \text{otherwise}.
    \end{cases}
\end{align*}
Then, compute the product $BB^{\top}$ by the fast rectangular matrix multiplication~\cite{LU18}.
The running time is $O(n^{a+o(1)})$ if $a\geq 8$.
Notice that $BB^{\top}[S_1][S_2]$ is equal to the size of the vertex subset $W(S_1,S_2)$, where $W(S_1,S_2)\defeq \{v\in V\setminus (S_1\cup S_2): S_1\cup S_2\subseteq N(v)\}$.
In other words, the set $W(S_1,S_2)$ contains vertices that is adjacent to all vertices in $S_1\cup S_2$.
For any $S_1,S_2\in \binom{V}{a/2}$ with $S_1\cap S_2=\emptyset$ and $T\in \binom{W(S_1,S_2)}{b}$, the vertex set $S_1\cup S_2 \cup T$ forms a $K_{a,b}$ subgraph.
On the other hand, for a $K_{a,b}$ subgraph, there are $c\binom{a}{a/2}$ ways to take $S_1,S_2,T$, where $c=2$ if $a<b$ and $c=4$ if $a=b$.
If $a<b$, the factor $c$ reflects the symmetry of $S_1$ and $S_2$; thus $c=2$.
If $a=b$, we further take the symmetry of $S_1\cup S_2$ and $T$ into account; thus $c=4$.
Then, the number of $K_{a,b}$ subgraphs contained in $G$ is given by
\begin{align*}
    c^{-1} \binom{a}{a/2}^{-1}\cdot \sum_{S_1,S_2\in\binom{V}{a/2}:S_1\cap S_2=\emptyset}\binom{BB^{\top}[S_1][S_2]}{b}.
\end{align*}

Now consider the case when $a$ is odd.
Fix a vertex $u\in V$.
Again, we construct an $\binom{n}{(a-1)/2}\times n$ matrix $B^{(u)}$ as follows:
For each $S\in\binom{V}{(a-1)/2}$ and $v\in V$,
\begin{align*}
    B^{(u)}[S][v] = \begin{cases}
1 & \text{if $\{u,v\}\in E,v\not\in S$ and $S\subseteq N(v)$},\\
0 & \text{otherwise}.
    \end{cases}
\end{align*}
Then compute $B^{(u)}(B^{(u)})^{\top}$ for all $u\in V$.
Note that the multiplication can be computed in
time $n^{a-1+o(1)}$ for each $u\in V$.
Observe that $B^{(u)}(B^{(u)})^{\top}[S_1][S_2]$ is the number of vertices that is adjacent to all vertices in $S_1\cup S_2\cup \{u\}$.
Thus, the number of $K_{a,b}$ contained in $G$ is given by
\begin{align*}
    c^{-1} \left(a\binom{a-1}{(a-1)/2}\right)^{-1}\cdot \sum_{u\in V}\sum_{S_1,S_2\in \binom{V}{(a-1)/2}:S_1\cap S_2=\emptyset} \binom{B^{(u)}(B^{(u)})^{\top}[S_1][S_2]}{b},
\end{align*}
where $c=2$ if $a<b$ and $c=4$ if $a=b$.

This yields an $O(bn^{a+o(1)})$ time algorithm (note that $\binom{n}{k}$ can be computed in $O(k\log n)$ time).

\subsection{Reduce \texorpdfstring{$(\#\mathsf{EMB}_{\mathrm{col}}^{(K_{a,b})}, \distribution{K_{a,b}})$}{} to \texorpdfstring{$(\#\mathsf{EMB}^{(K_{a,b})},\DistKab{a}{b}{n})$}{Kabn}}
\label{subsec:step3}

In this section, we present a proof of 
\cref{prop:ToDistKab}, i.e.,
an average-case-to-average-case reduction from $(\numberembcol{K_{a,b}},\distribution{K_{a,b}})$
to $(\numberemb{K_{a,b}},\DistKab{a}{b}{n})$.
This will complete a proof of \Cref{thm:Kabworsttoaverage}:
Recall that \cref{thm:worst-to-average_precise}
reduces $\numberembcol{K_{a,b}}$ to $(\numberembcol{K_{a,b}},\distribution{K_{a,b}})$.
Combined with \cref{folk:uncolortocolor},
one can reduce $\numberemb{K_{a,b}}$ to $(\numberembcol{K_{a,b}},\distribution{K_{a,b}})$.
Overall, we obtain a reduction from 
$\numberemb{K_{a,b}}$ to $(\numberemb{K_{a,b}},\DistKab{a}{b}{n})$ as stated in \cref{thm:Kabworsttoaverage}.

%


\begin{proof}[Proof of \cref{prop:ToDistKab}]
Let $\randombigraphdist{n}{m}$ be the distribution of a random bipartite graph with left and right vertex sets of size $n$ and $m$, respectively.
Let $G$ be an input of $(\numberembcol{K_{a,b}},\distribution{K_{a,b}})$.
Observe that the distribution $\distribution{K_{a,b}}$ is identical to $\randombigraphdist{an}{bn}$.
We say that a subgraph $F\subseteq G$ \emph{contains color $i$} if $F$ contains a vertex of color $i$.
Let $S$ be the set of subgraphs $F\subseteq G$ isomorphic to $K_{a,b}$.
Let $S_i\subseteq S$ be the set of subgraphs $F \in S$ that contain color $i$.
Observe that $\embcol{K_{a,b}}{G}=\left|\bigcap_{i\in V(K_{a,b})} S_i\right|$.
By the inclusion-exclusion principle, we have
\begin{align*}
    \left|\bigcap_{u\in V(K_{a,b})} S_u\right| &= |S| - \left|\bigcup_{i\in V(K_{a,b})} \overline{S_i}\right| \\
    &= |S| - \sum_{\emptyset\neq J\subseteq V(K_{a,b})} (-1)^{|J|-1} \left| \bigcap_{j\in J} \overline{S_j}\right|.
\end{align*}
In light of this equality, it suffices to compute $|S|$ and $\left|\bigcap_{j\in J} \overline{S_j}\right|$ for all nonempty $J\subseteq V(K_{a,b})$.
Note that the set $\cap_{j\in J}\overline{S_j}$ is equal to the set of $K_{a,b}$ subgraphs in $G$ that does not contain any colors from $J$.
To state it more formally, for a nonempty set $J\subseteq V(K_{a,b})$, let $V_J=\{x\in V(G):c(x)\in J\}$ and $G_J=G[V_J]$ be the induced subgraph of $G$ by $V_J$.
Then, $\bigcap_{j\in J}\overline{S_j}$ is equal to the set of $K_{a,b}$ subgraphs contained in $G_{\overline{J}}$.
Suppose that we have a $T(n)$-time randomized algorithm $A$ that solves $(\numberemb{K_{a,b}},\DistKab{a}{b}{n})$ with failure probability $\delta$.
Note that, for each $J\subseteq V(K_{a,b})$, 
the distribution of  $G_{\overline{J}}$ for $G \sim \randombigraphdist{an}{bn}$
is identical to
$\randombigraphdist{cn}{dn}$ for some $c\leq a$ and $d\leq b$; thus, we can obtain $\left|\bigcap_{j\in J}\overline{S_j}\right|$ with probability at least $1-ab\delta$ since $A(G_{\overline{J}})=c!d!\left|\bigcap_{j\in J}\overline{S_j}\right|$
(here, $c!d!$ is the number of automorphisms of $K_{c,d}$).
Therefore, from the union bound,
we can obtain $\left|\bigcap_{j\in J}\overline{S_j}\right|$ for all $\emptyset\neq J\subseteq V(K_{a,b})$ with probability at least $1-ab2^{a+b}\delta$.
Moreover, $A(G)=a!b!|S|$ holds with probability $1-ab\delta$.
Hence, we can solve $(\numberemb{K_{a,b}},\distribution{K_{a,b}})$ in time $O(ab2^{a+b}\cdot T(n))$ with probability $1-O(ab2^{a+b} \delta)$.
\end{proof}

\subsection{Proofs of \texorpdfstring{\cref{thm:KabCountSETH,prop:fastalgo,thm:Kabworsttoaverage}}{Kab Results}}
\label{sec:proof_of_Kab_results}

\Cref{thm:Kabworsttoaverage} follows from \cref{folk:uncolortocolor,thm:worst-to-average_precise,prop:ToDistKab}.
We can show \cref{thm:KabCountSETH} by combining \cref{thm:mainthm1,lem:coluncolequivalent} (Note that we can solve $\ColKabDetect$ using a solver for $\KabCount$).
Similarly, \cref{thm:KaaCountETH} follows from \cref{lem:Ka_to_Kaa_reduction,lem:coluncolequivalent}
and the well-known fact that ETH rules out an $n^{o(a)}$-time algorithm for $\KaDetect$~\cite{CHKX06}.
\section{Doubly-Efficient Interactive Proof System} \label{sec:doubly_efficient_IP}
This section is devoted to proving \cref{thm:embcolIP}.
Recall that, in $\numberembcol{H}$, we are given $n$ and $G\subseteq K_n\times H$ and then are asked the number of colorful subgraphs $F\subseteq G$ that is isomorphic to a fixed graph $H$.
Fix a prime $n^{|V(H)|}<q<2n^{|V(H)|}$ and consider the polynomial $\embcolpoly{n}{H}{q}:\Fq^{E(K_n\times H)} \to \Fq$ defined in \cref{eq:embcolpolydef}.
In our interactive proof system $\IP$,
the verifier checks the statement that
$\embcolpoly{n}{H}{q}(x)=C$ for given $C\in\Fq$ and $x\in\Fq^{E(K_n\times H)}$.
Recall that, if $x$ is an edge indicator vector of a graph $G$,
then $\embcolpoly{n}{H}{q}(x)=\emb{H}{G}$ holds.

\subsection{Downward Reducibility}
We may assume without loss of generality that 
$n=2^t$ for some $t\in\Nat$.
(If not, we add isolated vertices to $G$.)
For each $i\in V(H)$, let $V_i\defeq \{v\in V(K_n\times H):c(v)=i\}$.

For each $i\in V(H)$, let $(V_{i,0},V_{i,1})$ be a partition of $V_i$ such that $|V_{i,0}|=|V_{i,1}|=|V_i|/2$. 
For $\eta\in\binset^{V(H)}$, let $E_{\eta} = \cup_{ij\in E(H)} E(V_{i,\eta(i)},V_{j,\eta(j)})$, where $E(S,T)=\{e\in E(K_n\times H): e\cap S\neq \emptyset \text{ and }e\cap T\neq \emptyset\}$ for $S,T\subseteq V(K_n\times H)$ (see \cref{fig:example}).
Since $|V_{i,\eta(i)}|=|V_i|/2$, 
we can identify $E_\eta$ with $E(K_{n/2}\times H)$.
From the definition~\cref{eq:embcolpolydef}, we have
\begin{align}
\embcolpoly n H q (x) = \sum_{\eta\in\binset^{V(H)}} \embcolpoly{n/2}{H}{q} (x[E_\eta]), \label{eq:embcolpoly_recursion1}
\end{align}
where $x[E_\eta]\in\Fq^{E_\eta}$ is the restriction of $x$ on $E_\eta$.

\begin{figure}[htbp]
\centering
\includegraphics[width=10cm]{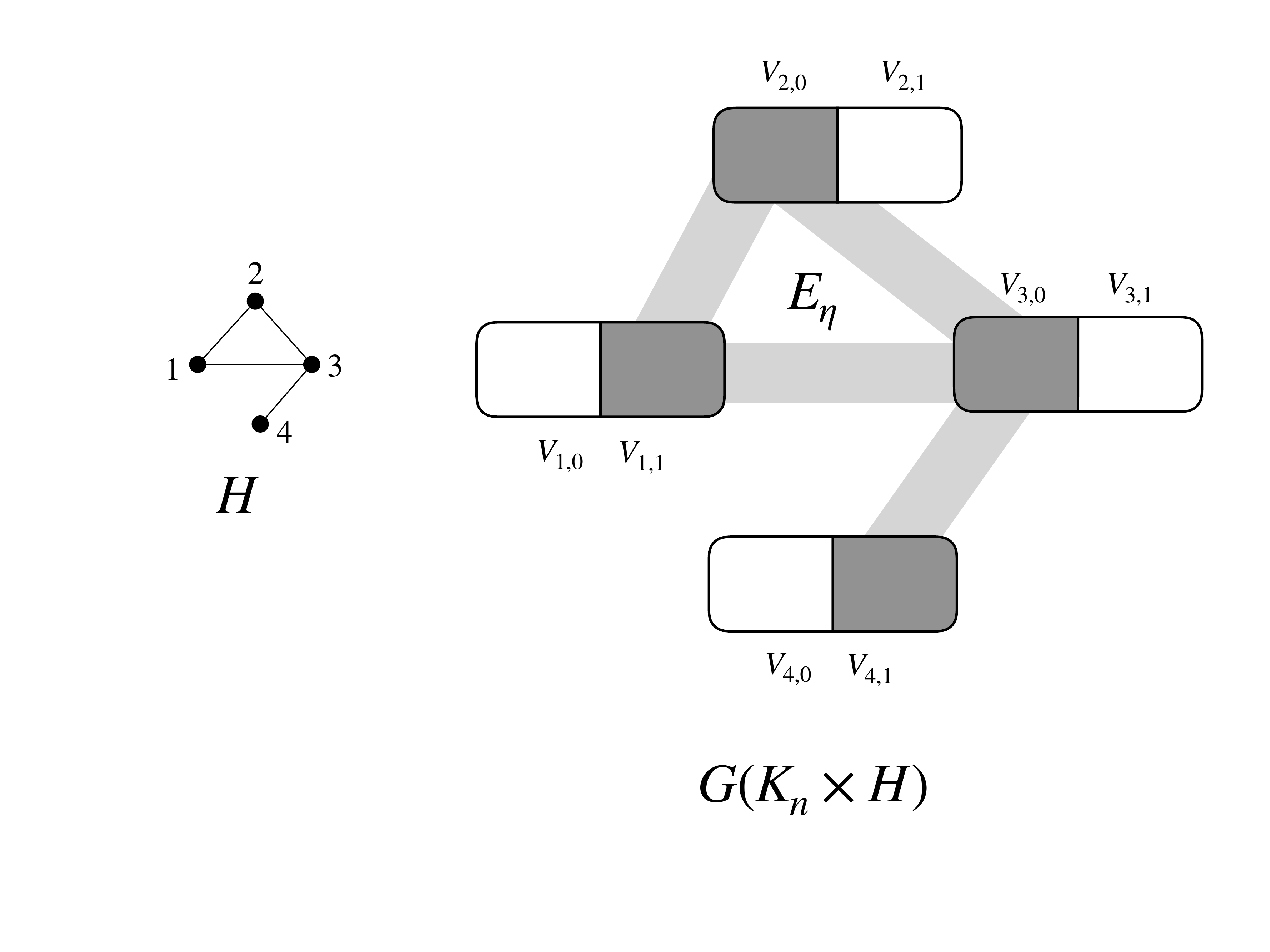}
\caption{An example of $E_\eta$.
In this example, $\eta=(1,0,0,1)\in \binset^{V(H)}$.
Four grey areas represent $V_{i,\eta(i)}$ for $i=1,2,3,4$.
\label{fig:example}}
\end{figure}

We identify $\{0,1\}^{V(H)}$ with $\{0,1,\ldots,2^{|V(H)|}-1\} \subseteq \Fq$
in the following way.
Regard $V(H)=\{0,\ldots,k-1\}\subseteq \Fq$ for $k = |V(H)|$ and
consider the mapping $\{0,1\}^{V(H)}\ni \eta \mapsto \sum_{i\in V(H)} 2^i \eta(i)\in \{0,\ldots,2^{|V(H)|}-1\}\subseteq \Fq$.
This mapping is bijection and thus
we can regard $\eta$ as an element of $\Fq$.
For $\eta\in \{0,\ldots,2^{|V(H)|}-1\}$, let $\delta_\eta:\Fq\to\Fq$ be a degree-$(2^{|V(H)|}-1)$ polynomial
such that
\begin{align*}
\delta_\eta(z)=\begin{cases}
1 & \text{if $z=\eta$},\\
0 & \text{if $z\neq \eta$ and $0\leq z<2^{|V(H)|}$},\\
\text{arbitrary} & \text{otherwise}.
\end{cases}
\end{align*}
Then, define a function 
$\tilde{x}:\Fq\to\Fq^{E(K_{n/2}\times H)}$ by
$\tilde{x}(\cdot)\defeq \sum_{\eta\in\binset^{V(H)}} \delta_\eta(\cdot) x[E_\eta]$.
Note that $\tilde{x}$ satisfies
(1) $\tilde{x}(\eta)=x[E_\eta]$ holds for all $\eta\in\binset^{V(H)}$,
and
(2) for each $e\in E(K_{n/2}\times H)$,
the function
$\Fq \ni z \mapsto \tilde{x}(z)[e]\in \Fq$ is
a polynomial of degree $2^{|V(H)|}-1$.
In condition (1),
we identified $E_\eta$ with
$E(K_{n/2}\times H)$.
For each $\eta$, the function $\delta_\eta$ can be constructed
by $O(2^{|V(H)|}|V(H)|)$ field operations
using the fast univariate polynomial interpolation~\cite{Hor72}.
Since our computational model is $O(\log n)$-Word RAM,
we can perform any field operation on $\Fq$ in constant time.
Thus, the construction of $\tilde{x}$ can be done in
time $O(2^{|V(H)|}|V(H)||E(H)|n^2)$.
Using $\tilde{x}(\cdot)$, we can
rewrite the recursion formula~\cref{eq:embcolpoly_recursion1}
as
\begin{align}
\embcolpoly{n}{H}{q}(x)=\sum_{\eta\in\{0,\ldots,2^{|V(H)|}-1\}} \embcolpoly{n/2}{H}{q}(\tilde{x}(\eta)). \label{eq:embcolpolydownwardreducibility}
\end{align}
Note that $\embcolpoly{n}{H}{q}(\tilde{x}(\cdot))$ is a univariate polynomial of degree $|E(H)|(2^{|V(H)|}-1)$.

\subsection{Description and Analysis of \texorpdfstring{$\IP$}{IP}} \label{sec:ip}
Now we present $\IP$ that verifies the statement ``$\embcolpoly{n}{H}{q}(x)=C$".
Suppose that $n=2^t$.
$\IP$ consists of $t+1$ rounds.
The verifier is given a vector $x\in\Fq^{E(K_n\times H)}$ (in our case, $x$ is the edge indicator of the input graph $G$).
At each round,
the verifier updates the vector $x$ and the constant $C$.
In $r$-th round, the protocol proceeds as follows.

\begin{description}
\item[Verifier.] When $r=t+1$, check $\embcolpoly{n}{H}{q}(x)=C$ and halt.
\item[Prover.] Send a polynomial $G(\cdot)$ of degree at most $|E(H)|(2^{|V(H)|}-1)$ over $\Fq$ to the verifier (here, $G(\cdot)$ is expected to be the univariate polynomial $\embcolpoly{n/2}{H}{q}(\tilde{x}(\cdot))$, where $\tilde{x}$ is the polynomial vector constructed from $x$).
\item[Verifier.] Check $C=\sum_{z\in\{0,\ldots,2^{|V(H)|}-1\}}G(z)$.
If not, reject.
Otherwise, 
construct the polynomial vector $\tilde{x}(\cdot)$ using $x$.
Sample $i\sim\Unif{\Fq}$
and update
$x\leftarrow \tilde{x}(i)$,
$C\leftarrow G(i)$ and $n\leftarrow n/2$.
Then, proceed to the next round
recursively.
\end{description}

The task of the prover is to construct the function $\embcolpoly{n/2}{H}{q}(\tilde{x}(\cdot))$.
Note that we can construct
$\embcolpoly{n/2}{H}{q}(\tilde{x}(\cdot))$ by evaluating $\embcolpoly{n/2}{H}{q}(\tilde{x}(\cdot))$ at $|E(H)|(2^{|V(H)|}-1)+1$ points.
The evaluation is reducible to $\numberembcol{H}$ via \cref{eq:EMBCOLpolybinaryexpand}.
Thus, one can modify $\IP$ above
such that the verifier asks the prover
to solve $\numberembcol{H}$
for $(\log n)^{O(|E(H)|)}$ instances.
In what follows,
we analyze this modified protocol.

\subsubsection{Running Time}
Let $n$ be the size of the original input.
In the beginning of $r$-th round,
the size of $x$ is $|E(K_n\times H)|/4^{r-1}=|E(H)|\cdot 4^{t-r+1}$.
Thus, in the $(t+1)$-th round,
the verifier runs in constant time.
Any other task of the verifier can be
done in time $n^2(\log n)^{O(|E(H)|)}$
(the bottleneck is the simulation of
the reduction of constructing $\embcolpoly{n/2}{H}{q}(\tilde{x}(\cdot))$
to $\numberembcol{H}$).

\subsubsection{Completeness and Soundness}
The perfect completeness of $\IP$ is easy:
If the statement is true, an honest prover convinces the verifier with probability $1$ by sending the polynomial $\embcolpoly{n/2}{H}{q}(\tilde{x}(\cdot))$ at each round (recall that the polynomial $\embcolpoly{n}{H}{q}(\tilde{x}(\cdot))$ satisfies the recurence formula ~\cref{eq:embcolpolydownwardreducibility}).

Now we show the soundness.
\begin{proposition}\label{prop:soundness}
Let $n=2^t$.
If the statement ``$\embcolpoly{n}{H}{q}(x)=C$" is false, then for any provers, the verifier rejects with probability $\left(1-\frac{D}{q}\right)^{t-1}$.
\end{proposition}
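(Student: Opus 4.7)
The plan is to establish soundness by induction on $t$, following the standard sum-check paradigm of Lund, Fortnow, Karloff, and Nisan. Let $D \defeq |E(H)|(2^{|V(H)|}-1)$. Before starting the induction, I will verify that the univariate polynomial $g^*(z) \defeq \embcolpoly{n/2}{H}{q}(\tilde{x}(z))$ that an honest prover is supposed to send has degree at most $D$. This follows because each monomial of $\embcolpoly{n/2}{H}{q}$ contains exactly $|E(H)|$ variables, and each coordinate $\tilde{x}(z)[e]$ is a univariate polynomial in $z$ of degree at most $2^{|V(H)|}-1$ by the construction using the interpolants $\delta_\eta$.

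For the inductive step, I fix an arbitrary prover strategy against a false statement ``$\embcolpoly{n}{H}{q}(x) = C$'' with $n = 2^t$, and let $g$ be the degree-$\leq D$ polynomial sent by the prover in the first round. If $g = g^*$, then the recurrence~\cref{eq:embcolpolydownwardreducibility} gives $\sum_{\eta} g(\eta) = \embcolpoly{n}{H}{q}(x) \neq C$, so the verifier rejects deterministically at the sum check. If $g \neq g^*$, then either the sum check already fails (and the verifier rejects), or it passes, in which case the new candidate statement ``$\embcolpoly{n/2}{H}{q}(\tilde{x}(i)) = g(i)$'' is produced at a uniformly random $i \in \Fq$. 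Since $g - g^*$ is a nonzero polynomial of degree at most $D$, it has at most $D$ roots in $\Fq$, so the new statement remains false with probability at least $1 - D/q$. Conditioned on this event, the inductive hypothesis yields rejection probability at least $(1-D/q)^{t-2}$ in the remaining protocol, and multiplication gives the claimed bound $(1-D/q)^{t-1}$ overall. The base case is handled by noting that once $n=1$ the final deterministic check rejects any false statement with probability $1$.

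The only delicate point I anticipate is the bookkeeping for the degree bound on $g^*$, because the factor $2^{|V(H)|}-1$ must be propagated correctly through all $|E(H)|$ substitutions in each monomial of $\embcolpoly{n/2}{H}{q}$; beyond this, no conceptual obstacle is expected, since the rest of the argument is the textbook Schwartz--Zippel analysis specialized to the downward self-reducibility~\cref{eq:embcolpolydownwardreducibility}.
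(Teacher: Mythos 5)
Your proof follows the same standard sum-check soundness argument as the paper: induction over the $O(\log n)$ rounds, with the observation that either the prover's polynomial equals $g^*$ (and the sum-check rejects) or it differs (and Schwartz--Zippel bounds the probability that the random evaluation point turns the false statement true). The only remark is that both your bookkeeping and the paper's share a harmless off-by-one (the paper's own inductive computation actually yields $(1-D/q)^t$ at round $1$ rather than $(1-D/q)^{t-1}$, and your base case at $n=1$ together with the inductive step $(1-D/q)\cdot(1-D/q)^{t-2}$ doesn't quite align with the formula for $t=1$), but this does not affect the overall soundness guarantee.
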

\begin{proof}
We show \cref{prop:soundness} by induction on the number of rounds.
Suppose that the statement ``$\embcolpoly{n}{H}{q}(x)=C$" is false.
In the last $t$-th round, the verifier immediately reject (with probability $1$).

Let $D\defeq |E(H)|(2^{|V(H)|}-1)$ be the degree of $\embcolpoly{n/2}{H}{q}(\tilde{x}(\cdot))$.
Fix $1\leq r<t$ and suppose that the verifier rejects with probability $(1-D/q)^{t-r}$ during $(r+1)$-th to $t$-th rounds.
Consider $r$-th round with the assumption that the the statement ``$\embcolpoly{n}{H}{q}(x)=C$" is false.
Then, any provers cheat with sending a polynomial $G(\cdot)$ that is not equal to $\embcolpoly{n/2}{H}{q}(\tilde{x}(\cdot))$.
It holds that
\begin{align*}
     \Pr_{i\sim\Unif{\Fq}}\left[G(i)\neq \embcolpoly{n/2}{H}{q}(\tilde{x}(i))\right] \geq 1-\frac{D}{q}
\end{align*}
since both $G$ and $\embcolpoly{n/2}{H}{q}(\tilde{x}(\cdot))$ are degree at most $D$.
Therefore, with probability $1-D/q$, the rounds proceeds to the next $(t+1)$-th round with a false statement.
By the induction assumption, the verifier rejects with probability at least $(1-D/q)\cdot (1-D/q)^{t-r} = (1-D/q)^{1+t-r}$, which completes the proof of \cref{prop:soundness}.
\end{proof}

\subsubsection{Ability of an Honest Prover}
In $\IP$, the prover is required to send a polynomial $G$ that is expected to be $\embcolpoly{n/2}{H}{q}(\tilde{x}(\cdot))$.
As mentioned above, this can be reduced to
solving 
$m=\polylog(n)$ instances of $\numberembcol{H}$.
Moreover, by \cref{thm:worst-to-average_precise},
each of the $m$ instances 
can be reduced to $\polylog(n)$ instances
of $(\numberembcol{H},\distribution{H})$.
In other words, an honest prover
can construct the polynomial $\embcolpoly{n/2}{H}{q}(\cdot)$ by
solving $m\polylog(n) = \polylog(n)$ instances
of the distributional problem $(\numberembcol{H},\distribution{H})$.

Suppose that the prover has oracle access to a randomized heuristic
algorithm that solves $(\numberembcol{H},\distribution{H})$
with success probability $1-(\log n)^{-L_1}$.
Then, by the union bound,
the probability that the oracle outputs
at last one wrong answer
is at most
$\polylog(n)\cdot (\log n)^{-L_1} \leq (\log n)^{-L_0}$
if $L_1=L_1(H,L_0)$ is
sufficiently large.

This completes the proof of \cref{thm:embcolIP}.

\subsection{Interactive Proof System for \texorpdfstring{$\#\mathsf{EMB}^{(K_{a,b})}$}{Kab Counting}}
By combining \cref{thm:Kabworsttoaverage,lem:coluncolequivalent,thm:embcolIP},
we obtain an interactive proof system for $\KabCount$ as follows.

\begin{corollary}[IP for $\KabCount$]
\label{cor:ip_for_kabcounting}
Let $H$ be a fixed graph.
There is an $O(\log n)$-round interactive proof system $\IP$ 
for the statement ``$\emb{K_{a,b}}{G}=C$"
such that,
given an input $(G,n,C)$,
\begin{itemize}
    \item The verifier accepts with probability $1$ if the statement is true (perfect completeness), while it rejects for any prover
    with probability at least $2/3$ otherwise (soundness).
    \item In each round, the verifier runs in time $n^2(\log n)^{O(ab)}$ and sends $(\log n)^{O(ab)}$ instances of $\numberemb{K_{a,b}}$.
\end{itemize}
Furthermore, for any contant $L_0$,
there exists a constant $L_1=L_1(a,b,L_0)$
such that,
if the prover solves
$(\KabCount,\DistKab a b n)$
with success probbalitiy
$1-(\log n)^{-L_1}$,
then the verifier accepts
with probability $1-(\log n)^{-L_0}$.
\end{corollary}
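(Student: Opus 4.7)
The plan is to lift the interactive proof system for $\numberembcol{K_{a,b}}$ guaranteed by \cref{thm:embcolIP} to one for $\numberemb{K_{a,b}}$, by applying the worst-case equivalence of the two problems from \cref{lem:coluncolequivalent}. The ``furthermore'' clause will be obtained by combining the corresponding clause of \cref{thm:embcolIP} with \cref{prop:ToDistKab}, which lets an honest prover simulate a $(\numberembcol{K_{a,b}},\distribution{K_{a,b}})$ oracle from its $(\KabCount,\DistKab{a}{b}{n})$ oracle with only a constant-factor blowup in queries and error.

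First, I would run the reduction inside the proof of \cref{lem:coluncolequivalent} (that is, \cref{thm:lovaszinverse} combined with \cref{fact:homembcol} and \cref{prop:KcdpitoKab}) on the input $(G,n,C)$. This produces $M = 2^{O(a+b)}$ coefficients $\alpha_1,\ldots,\alpha_M \in \mathbb{Z}$ together with graphs $\hat G_1,\ldots,\hat G_M \subseteq K_{O(n)}\times K_{a,b}$ such that $\emb{K_{a,b}}{G} = \sum_{i=1}^M \alpha_i\,\embcol{K_{a,b}}{\hat G_i}$, and these are built in $O(n^2)$ time. The verifier then asks the prover for claimed values $C_1,\ldots,C_M$, rejects unless $\sum_i\alpha_i C_i = C$, and otherwise runs the $M$ sub-protocols from \cref{thm:embcolIP} in parallel on the statements ``$\embcol{K_{a,b}}{\hat G_i}=C_i$'', accepting iff every sub-protocol accepts. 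Each colorful query made inside these sub-protocols is in turn translated into $2^{O(a+b)}=O(1)$ uncolored $\numberemb{K_{a,b}}$ queries via \cref{folk:uncolortocolor}.

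Perfect completeness is immediate from that of \cref{thm:embcolIP}. For soundness, note that if $\emb{K_{a,b}}{G}\neq C$ then any prover passing the sum-check must send some $C_{i^*}$ with $C_{i^*}\neq \embcol{K_{a,b}}{\hat G_{i^*}}$, so the $i^*$-th sub-protocol rejects with probability at least $2/3$ and hence so does the overall verifier. The per-round verifier time is $M\cdot n^2(\log n)^{O(|E(K_{a,b})|)} = n^2(\log n)^{O(ab)}$, and the per-round count of $\numberemb{K_{a,b}}$ queries is $M\cdot (\log n)^{O(|E(K_{a,b})|)}\cdot 2^{O(a+b)} = (\log n)^{O(ab)}$, matching the corollary.

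For the furthermore part, an honest prover answers each of the $\polylog(n)$ queries to $(\numberembcol{K_{a,b}},\distribution{K_{a,b}})$ demanded by \cref{thm:embcolIP} by $O(ab\,2^{a+b})$ calls to its $(\KabCount,\DistKab{a}{b}{n})$ oracle using \cref{prop:ToDistKab}, incurring per-query error at most $O(ab\,2^{a+b})(\log n)^{-L_1}$. Choosing $L_1 = L_1(a,b,L_0)$ large enough so that the inner guarantee of \cref{thm:embcolIP} fires with some parameter $L_0'$ satisfying $L_0' \ge L_0 + O(1)$, and taking a union bound over the $M$ parallel sub-protocols, yields the claimed $1-(\log n)^{-L_0}$ acceptance probability. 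The main subtlety is purely bookkeeping: tracking the constants $L_0,L_0',L_1$ through the chained reductions and ensuring the union bound survives the total blowup $M\cdot\polylog(n)\cdot 2^{O(a+b)} = \polylog(n)$, which is routine but must be done carefully.
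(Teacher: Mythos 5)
Your proposal is correct and takes essentially the same route as the paper: reduce $\numberemb{K_{a,b}}(G)=C$ to $O(1)$ instances of $\numberembcol{K_{a,b}}$ via \cref{lem:coluncolequivalent} (i.e., Lov\'asz inversion plus \cref{fact:homembcol,prop:KcdpitoKab}), run the protocol of \cref{thm:embcolIP} on each, translate colorful queries to uncolored ones, and settle the ``furthermore'' clause by letting the honest prover simulate a $(\numberembcol{K_{a,b}},\distribution{K_{a,b}})$ oracle from its $(\KabCount,\DistKab{a}{b}{n})$ oracle. The only cosmetic difference is that you have the prover commit to the claimed intermediate values $C_1,\ldots,C_M$ and perform an explicit sum-check before entering the sub-protocols, whereas the paper phrases this step as the verifier running the worst-to-average reduction with the prover's help and then invoking $\IP$ to certify the results; the two are equivalent ways of obtaining and then verifying the intermediate claims.
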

\begin{proof}
For a given graph $G$, the verifier applies \cref{lem:coluncolequivalent}
and reduces $\KabCount$
to solving $m=O(1)$ instances
of $\numberembcol{K_{a,b}}$.

Then, the verifier solves
each of the $m$ instances 
$G'_1,\ldots,G'_{m}$ of $\numberembcol{K_{a,b}}$
using the reduction of \cref{thm:worst-to-average_precise}
with the help of the prover.
Let $C'_1,\ldots,C'_m$ be the values
obtained by the reduction.
Now, the verifier
suffices to check that,
for each $i\in[m]$,
the answer of the $i$-th instance $G'_i$ of
$\numberembcol{K_{a,b}}$ is
$C'_i$
(if all of these $m$ values are
the correct one,
then the verifier could
solve the original
instance $G$ of $\KabCount$).

To this end, run $\IP$
of \cref{thm:embcolIP} with
letting $H=K_{a,b}$.
Here, an honest prover
suffices to solve $(\KabCount,\DistKab a b n)$
since
$\numberembcol{K_{a,b}}$
reduces to
solving $\polylog(n)$
instances of
$(\KabCount,\DistKab a b n)$
by combining the reductions of
\cref{lem:coluncolequivalent,thm:Kabworsttoaverage}.
\end{proof}

\section{Fine-Grained Direct Product Theorem} \label{sec:fine-grained_direct_product_theorem}

\newcommand{\Preprocessor}{\mathsf{Preprocessor}}
\newcommand{\Checker}{\mathsf{Checker}}

In this section, we provide a sufficient condition 
for a direct product theorem to hold.
We will present a direct product theorem
for any distributional problem that admits a selector.
The notion of selector that we use in this paper is defined below.

\begin{definition}
[(Oracle) Selector; \cite{Hirahara15_coco_conf}]
A randomized oracle algorithm $S$ is said to be a \emph{selector} from $\Pi$ to a distributional problem $(\Pi, \mathcal{D})$ with success probability $1 - \delta$
if
\begin{enumerate}
    \item 
    given access to two oracles $A_0, A_1$ one of which
    solves $(\Pi, \mathcal{D})$ with success probability $1 - \delta$,
    on input $x$,
    the algorithm 
    $S^{A_0, A_1}$
    computes $\Pi(x)$ with high probability (say, probability $\ge \frac{3}{4}$), and
    \item
    for any $n\in\Nat$ and any input $x \in \supp(\mathcal{D}_n)$,
    each query $q$ of $S$ to the oracles $A_0$ and $A_1$
    satisfies that $q \in \supp(\mathcal{D}_n)$.
\end{enumerate}

\end{definition}

In order to obtain
a direct product theorem 
in the settings of fine-grained complexity,
it will be crucial to consider a selector with
$\polylog(n)$ queries.

\subsection{Selector for Subgraph Counting Problems}
\label{subsection:SelectorForCounting}
In this subsection, 
we show the existence of a selector with $\polylog(n)$ queries
for $\numberembcol{H}$.
We first recall the notion of instance checker, which is known to imply the existence of selector (\cite{Hirahara15_coco_conf}).

\begin{definition}
[Instance Checker; Blum and Kannan \cite{BlumK95_jacm_journals}] \label{def:InstanceChecker}
For a problem $\Pi$,
a randomized oracle algorithm $M$ is said to be an \emph{instance checker} for $\Pi$
if, for every instance $x$ of $\Pi$ and any oracle $A$,
\begin{enumerate}
\item
$\Pr_M[ M^A(x) = \Pi(x) ] = 1$ if $A$ solves $\Pi$ correctly on every input, and
\item
$\Pr_M[ M^A(x) \not\in \{\Pi(x), \mathsf{fail}\} ] = o(1)$.
\end{enumerate}
\end{definition}
\noindent

The existence of an instance checker for a problem $\Pi$ is implied by an efficient interactive proof system
for $\Pi$
where the computation of an honest prover is efficiently reducible to $\Pi$.
By using the interactive proof system of \cref{thm:embcolIP}, we obtain
the following
instance checker with $\polylog(n)$ queries for $\numberembcol{H}$.

\begin{theorem}\label{thm:InstanceCheckerForEmbcol}
There exists an instance checker $\Checker$ for $\numberembcol{H}$
such that, given a graph $G \subseteq K_n \times H$,
\begin{enumerate}
	\item $\Checker$ runs in time $\widetilde{O}(n^2)$,
	\item for any oracle $A$, $\Checker^A$ calls the oracle $A$ at most $(\log n)^{C_H}$ times, where $C_H$ is a constant that depends only on $H$, and
	\item each query $G'$ of $\Checker$ satisfies $G'\subseteq K_n\times H$.
\end{enumerate}
\end{theorem}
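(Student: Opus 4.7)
The plan is to instantiate the standard Blum--Kannan paradigm, which converts an interactive proof system with an efficiently simulable honest prover into an instance checker. I will apply this paradigm to the $O(\log n)$-round doubly-efficient IP of \cref{thm:embcolIP}. On input $G \subseteq K_n \times H$ and oracle $A$, the checker first queries $v \defeq A(G)$ as a candidate value for $\embcol{H}{G}$, and then simulates the verifier of $\IP$ on the statement ``$\embcolpoly{n}{H}{q}(x_G) = v \bmod q$'' for a prime $n^{|V(H)|} < q < 2 n^{|V(H)|}$, feeding the oracle $A$ each instance of $\numberembcol{H}$ that the verifier would forward to the prover. The checker outputs $v$ if the simulated verifier accepts and $\mathsf{fail}$ otherwise.

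A subtle point is that the $\numberembcol{H}$-queries generated at round $r$ of the IP naturally live in $K_{n / 2^r} \times H$ rather than in $K_n \times H$. I will resolve this by a simple padding step: given any $G'' \subseteq K_m \times H$ with $m \le n$, we regard $G''$ as a subgraph of $K_n \times H$ by adjoining $(n - m) |V(H)|$ isolated, color-preserving vertices. Isolated vertices participate in no colorful $H$-copy, so padding preserves $\embcol{H}{\cdot}$. Applying this transformation before each oracle call ensures that every query $G'$ satisfies $G' \subseteq K_n \times H$, establishing item 3 of the theorem.

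Completeness of the checker is immediate from the perfect completeness of $\IP$: if $A$ computes $\numberembcol{H}$ correctly on all inputs, then $v = \embcol{H}{G}$ and every answer supplied to the simulated verifier is correct, so the verifier always accepts and the checker always outputs $\Pi(x)$, giving item 1 of \cref{def:InstanceChecker}. For item 2, I will amplify the soundness of $\IP$ from $2/3$ to $1 - o(1)$ by repeating the simulated verification $\Theta(\log n)$ times with fresh verifier randomness and outputting $v$ only if every run accepts: if $v \ne \embcol{H}{G}$ the underlying statement is false, so in each run the verifier accepts with probability at most $1/3$ for any adversarial $A$, and hence the probability that all $\Theta(\log n)$ runs accept is $3^{-\Theta(\log n)} = n^{-\Omega(1)} = o(1)$.

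For the resource analysis, \cref{thm:embcolIP} gives verifier time $n^2 (\log n)^{O(|E(H)|)}$ and $(\log n)^{O(|E(H)|)}$ oracle queries per run; the $\Theta(\log n)$ repetitions multiply each by a $\polylog(n)$ factor, yielding total running time $\widetilde{O}(n^2)$ and total query count $(\log n)^{C_H}$ for some constant $C_H$ depending only on $|E(H)|$. I expect the only genuine technical step is verifying that the padding construction interfaces cleanly with the polynomial $\embcolpoly{n}{H}{q}$ (i.e., that $\embcolpoly{m}{H}{q}(x_{G''}) = \embcolpoly{n}{H}{q}(x_{G'})$ after padding); beyond that, the result is a direct composition of \cref{thm:embcolIP} with standard IP soundness amplification, so no new ideas are required.
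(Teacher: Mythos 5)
Your proof is correct and follows essentially the same route as the paper's: query $A(G)$ for a candidate count $v$, simulate the verifier of \cref{thm:embcolIP} with $A$ playing the prover's role, and output $v$ if the verifier accepts and $\mathsf{fail}$ otherwise. Two small remarks on where your write-up departs from the paper's argument.

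First, your soundness amplification by $\Theta(\log n)$ independent repetitions is valid and fits within the claimed time and query budgets, but it is unnecessary: the paper appeals directly to \cref{prop:soundness}, which shows that one run of $\IP$ already rejects a false statement with probability $(1-D/q)^{t-1} = 1 - O(t D / q) = 1 - O(\log n / n^{|V(H)|}) = 1 - o(1)$, where $D = |E(H)|(2^{|V(H)|}-1)$ is a constant, $t = \log n$, and $q > n^{|V(H)|}$. The stated $2/3$ in \cref{thm:embcolIP} is a deliberately weak bound; the proof behind it is already overwhelmingly sound because the prover's polynomial has constant degree over an exponentially large field. So you can drop the repetition entirely.

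Second, your explicit padding step for item~3 is a sound and worthwhile clarification that the paper leaves implicit. One caveat worth noting: the assertion that isolated padding vertices leave $\embcol{H}{\cdot}$ unchanged relies on $H$ having no isolated vertices; otherwise a colorful copy could map an isolated $H$-vertex onto a padding vertex. This holds for all $H$ the paper actually uses ($K_{a,b}$, $K_a$), so it is not a gap, but it would be proper to state the assumption. The polynomial-level compatibility $\embcolpoly{m}{H}{q}(x_{G''}) = \embcolpoly{n}{H}{q}(x_{G'})$ that you flag as the "only genuine technical step" is then immediate: coordinates of $x_{G'}$ involving padding vertices are all zero, so every monomial of $\embcolpoly{n}{H}{q}$ that touches them vanishes, leaving exactly the monomials of $\embcolpoly{m}{H}{q}$.
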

\begin{proof}
Recall $\IP$ of \cref{thm:embcolIP}.
For a given oracle $A$, $\Checker$ obtains $C\defeq A(G)$ and then runs $\IP$ using $A$ as a prover to verify $\embcol{H}{G}=C$.
If the verifier accepts, then $\Checker$ outputs $C$ and otherwise it outputs $\mathsf{fail}$.

Suppose the oracle $A$ solves $\numberembcol{H}$ correctly.
Then, $\Checker$ output the correct answer with probability $1$ by the perfect completeness of $\IP$.

Now we check the second condition of \cref{def:InstanceChecker}.
If $A(G)$ is the correct answer, then the output of $\Checker$ is either $A(G)$ or $\mathsf{fail}$.
Otherwise, $\IP$ proceeds with the false statement that $\numberembcol{H}(G)=A(G)$.
It follows from the soundness of $\IP$ (c.f.~\cref{prop:soundness}) that Verifier rejects with probability $(1-O(q^{-1}))^t$.
Hence, $\Pr[\Checker(G)=\mathsf{fail}]\geq 1-O(t/q)=1-o(1)$.
\end{proof}

\begin{theorem}[Restatement of \cref{thm:selector}]
\label{thm:SelectorForEmbcolH}
Let $H$ be a fixed graph.
There exists a selector $S$
from $\numberembcol{H}$ to $(\numberembcol{H}, \distribution{H})$ with success probability $1 - 1 / \polylog(n)$ such that
\begin{enumerate}
\item $S$ runs in time $\widetilde{O}(n^2)$, and
\item $S$ makes at most $\polylog(n)$ queries.
\end{enumerate}
\end{theorem}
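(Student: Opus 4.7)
The plan is to lift the instance checker $\Checker$ of \cref{thm:InstanceCheckerForEmbcol} into a selector by a standard two-oracle composition. On input $G \subseteq K_n \times H$ and with access to oracles $A_0, A_1$, the selector $S^{A_0,A_1}$ independently simulates $v_b \defeq \Checker^{A_b}(G)$ for $b \in \{0,1\}$, and returns the first $v_b \neq \mathsf{fail}$ it finds (breaking ties arbitrarily); if both runs yield $\mathsf{fail}$, it outputs an arbitrary value.

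Correctness rests on the two defining properties of an instance checker. Property (ii) in \cref{def:InstanceChecker} guarantees that for \emph{any} oracle $A$, the output $\Checker^{A}(G)$ is either $\numberembcol{H}(G)$ or $\mathsf{fail}$ with probability $1 - o(1)$; hence whichever non-$\mathsf{fail}$ value $S$ returns is correct with high probability. It therefore suffices to show that $\Checker^{A_{i^*}}(G)$ is non-$\mathsf{fail}$ with high probability, where $A_{i^*}$ is the oracle that solves $(\numberembcol{H}, \distribution{H})$ with success probability $1 - (\log n)^{-K_H}$. By property (i) of \cref{def:InstanceChecker}, this holds as long as $A_{i^*}$ answers \emph{every} one of the $(\log n)^{C_H}$ queries of $\Checker$ correctly. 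Since each query of $\Checker$ is marginally distributed according to $\distribution{H}$ (as discussed below), a union bound yields failure probability at most $(\log n)^{C_H - K_H}$, which is $o(1)$ once the constant $K_H$ is chosen larger than $C_H$ by any fixed amount.

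The delicate point, and the main obstacle, is justifying the claim that every query made by $\Checker$ is marginally distributed as $\distribution{H}$, even though the input $G$ is worst-case. Tracing through the construction, $\Checker$ simulates the verifier of $\IP$ from \cref{thm:embcolIP}, whose honest-prover simulation invokes the worst-case-to-average-case reduction of \cref{thm:worst-to-average_precise}. That reduction is a two-step composition: first, Reed--Muller local decoding (\cref{cor:worsttoaverage}) turns each prover query into a point that is uniformly distributed over $\Fq^{E(K_n \times H)}$, regardless of the input; second, the binary-expansion technique of \cref{lem:Fqtobin} converts such a uniform point in $\Fq^{E(K_n\times H)}$ into a binary vector whose distribution is total-variation $O(|E(K_n \times H)| \cdot q/2^t)$-close to uniform over $\binset^{E(K_n\times H)}$. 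By choosing $t = \Theta(\log n)$ as in the proof of \cref{lem:Fqtobin}, this deviation is negligible, so each query is $o(1)$-close to a sample from $\distribution{H}$ and the union bound above applies with only a negligible additive loss.

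The resource bounds are then immediate. The selector makes two invocations of $\Checker$; each costs $\widetilde{O}(n^2)$ time and $(\log n)^{C_H}$ oracle queries, so $S$ runs in $\widetilde{O}(n^2)$ time and makes $\polylog(n)$ queries in total. Finally, the second condition in the definition of a selector is trivially satisfied: every query $G'$ of $\Checker$ satisfies $G' \subseteq K_n \times H$ by clause~3 of \cref{thm:InstanceCheckerForEmbcol}, and $\supp((\distribution{H})_n) = \{\, G' : G' \subseteq K_n \times H \,\}$, so every query lies in the support.
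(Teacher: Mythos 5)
Your high-level shape — two candidate oracles, each run through the instance checker, output the first non-$\mathsf{fail}$ answer — matches the paper. But there is a genuine gap in the correctness argument, specifically in the "delicate point" you flag: it is \emph{not} true that the queries of $\Checker$ to its oracle are marginally distributed according to $\distribution{H}$, and the reason you give for this claim misreads where the worst-case-to-average-case reduction lives.

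In \cref{thm:InstanceCheckerForEmbcol}, $\Checker^A$ runs the verifier of $\IP$ and uses $A$ directly as the prover. The verifier's queries to the prover are binary expansions of the (constrained) points $\tilde x(z)$, which depend on the \emph{worst-case} input $G$ and the fixed interpolation nodes; they are in the support $\{G' : G' \subseteq K_n \times H\}$ (clause~3) but are by no means close to $\distribution{H}$. The worst-case-to-average-case reduction $R$ of \cref{thm:worst-to-average_precise} appears only in the ``Furthermore'' clause of \cref{thm:embcolIP}, as a tool an \emph{honest prover} may use internally to answer the verifier with only a $\distribution{H}$-oracle. It is not part of the verifier's (hence $\Checker$'s) computation, so tracing through $\Checker$ alone will never produce $\distribution{H}$-distributed queries. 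Consequently, if you feed $\Checker$'s queries directly to $A_{i^*}$, which is only guaranteed correct on $\distribution{H}$-random inputs, there is no reason for $A_{i^*}$ to answer correctly, and the instance-checker property~(i) (which requires the oracle to be correct on \emph{every} input) does not apply; $\Checker^{A_{i^*}}$ may output $\mathsf{fail}$ with high probability.

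The fix is the paper's actual construction: the selector does not run $\Checker^{A_b}$, but rather $\Checker^{R^{A_b}}$ — every query $q$ of the checker is answered by invoking the reduction $R$ on $q$ with $A_b$ as its inner oracle. The queries that $R^{A_b}$ makes to $A_b$ \emph{are} (after the Reed--Muller step and the binary-expansion step) close to $\distribution{H}$, so $R^{A_b}$ acts as a reliable worst-case solver for $\numberembcol{H}$ whenever $A_b$ is a good average-case solver, and then property~(i) of the instance checker kicks in. This adds a factor of $\polylog(n)$ queries and $\widetilde{O}(n^2)$ overhead per checker query, still within the claimed bounds.
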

\begin{proof}
We combine the instance checker $C$ 
of
\cref{thm:InstanceCheckerForEmbcol}
with a worst-case to average-case reduction  $R$ (\cref{thm:worst-to-average_precise}).

Here is the algorithm of a selector $S$.
Given a graph $G$ and oracle access to $A_0, A_1$,
for each $b \in \{0, 1\}$,
the selector $S$ simulates the instance checker $C(G)$,
and answer any query $q$ of the instance checker by running the reduction $R^{A_b}(q)$.
If the checker outputs some answer other than $\mathsf{fail}$,
the selector $S$ outputs the answer and halts.

The correctness of $S$ can be shown as follows.
Let $A_b$ be an oracle that solves 
$(\numberembcol{H}, \distribution{H})$
with success probability $1 - 1 / \polylog(n)$,
where $b \in \{0, 1\}$.
By the correctness of the reduction $R$,
the algorithm $R^{A_b}$ solves $\numberembcol{H}$
with high probability.
Therefore,
if 
the instance checker $C$ is simulated with oracle access to $R^{A_b}$,
by the property of an instance checker,
$C$ outputs the correct answer with high probability.
Moreover, $C$ outputs a wrong answer with probability at most $o(1)$;
thus, the selector outputs the correct answer with high probability.
\end{proof}

\begin{corollary}
\label{cor:SelectorForKabCount}
There is an $\widetilde{O}(n^2)$-time selector $S$
from $\KabCount$ to $(\KabCount, \DistKab{a}{b}{n})$ with success probability $1 - 1 / \polylog(n)$.
Moreover, 
$S$ makes at most $\polylog(n)$ queries.
\end{corollary}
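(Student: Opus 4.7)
The plan is to compose three previously-established reductions. Given a worst-case $\KabCount$ instance $G$ and oracles $A_0, A_1$ at least one of which solves $(\KabCount, \DistKab{a}{b}{n})$ with success probability $1 - 1/\polylog(n)$, I will: (i) convert each $A_b$ into an oracle $A'_b$ for the colorful average-case problem $(\numberembcol{K_{a,b}}, \distribution{K_{a,b}})$ using \cref{prop:ToDistKab}; (ii) apply \cref{thm:SelectorForEmbcolH} to these oracles to obtain a selector $S_{\mathrm{col}}$ for $\numberembcol{K_{a,b}}$; and (iii) reduce the worst-case $\KabCount$ task on $G$ to $O(1)$ invocations of $S_{\mathrm{col}}$ via \cref{lem:coluncolequivalent}.

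For step (i), \cref{prop:ToDistKab} states that on any colorful input $G' \subseteq K_n \times K_{a,b}$, one can solve $\numberembcol{K_{a,b}}(G')$ by making $O(ab\, 2^{a+b})$ calls to $A_b$ on induced sub-bipartite graphs of $G'$, each of which has $cn$ and $dn$ vertices with $c \in [a]$ and $d \in [b]$. These query graphs lie in $\supp(\DistKab{a}{b}{n})$, so the support constraint on selector queries is preserved. Since the reduction inflates the failure probability by only a constant factor $O(ab\, 2^{a+b})$, the resulting $A'_b$ still solves $(\numberembcol{K_{a,b}}, \distribution{K_{a,b}})$ with success probability $1 - 1/\polylog(n)$, after suitably adjusting the polylog exponent.

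For step (ii), feeding $A'_0, A'_1$ into \cref{thm:SelectorForEmbcolH} yields an $\widetilde{O}(n^2)$-time algorithm that computes $\numberembcol{K_{a,b}}$ correctly on any input with probability at least $3/4$, making $\polylog(n)$ queries to $A'_0, A'_1$, hence $\polylog(n)$ queries to $A_0, A_1$ after unfolding step (i). For step (iii), I will use \cref{lem:coluncolequivalent} to reduce computing $\KabCount(G)$ to $O(1)$ instances of $\numberembcol{K_{a,b}}$ on graphs of size $O(n)$, each answered by $S_{\mathrm{col}}$; the glue logic runs in an additional $O(n^2)$ time. Composing the three steps gives an $\widetilde{O}(n^2)$-time, $\polylog(n)$-query selector as claimed.

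I do not anticipate a substantial obstacle: the three ingredients have been engineered so that their supports align---colored subgraphs in $\supp(\distribution{K_{a,b}})$ for $S_{\mathrm{col}}$'s internal queries, and sub-bipartite graphs in $\supp(\DistKab{a}{b}{n})$ for the outermost oracle queries---and the polylog slack in the success probability easily absorbs the constant-factor losses from \cref{prop:ToDistKab} and \cref{lem:coluncolequivalent}. The one point worth a brief sanity check is that the chain of reductions preserves the support condition when $G$ itself is drawn from $\supp(\DistKab{a}{b}{n})$, which follows by inspecting \cref{prop:KcdpitoKab}, Lov\'asz's identity \cref{thm:lovaszinverse}, and the explicit form of the queries in \cref{prop:ToDistKab}.
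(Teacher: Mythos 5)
Your proposal is correct and follows essentially the same route as the paper's proof: combine the selector for $\numberembcol{K_{a,b}}$ from \cref{thm:SelectorForEmbcolH}, the reduction of \cref{prop:ToDistKab} to convert oracle queries to the $\DistKab{a}{b}{n}$ distribution, and \cref{lem:coluncolequivalent} to move from $\KabCount$ to the colorful problem on the input side. The only cosmetic difference is that you transform the oracles first and then invoke the selector, whereas the paper first invokes the selector and then rewrites its queries; you also spell out the support-preservation check a bit more explicitly, which is a fine addition.
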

\begin{proof}
From \cref{thm:SelectorForEmbcolH}, we obtain a selector $S$
from $\numberembcol{K_{a, b}}$ to $(\numberembcol{K_{a, b}}, \allowbreak \distribution{K_{a,b}})$.
Here, we let $H=K_{a,b}$.
Invoke \cref{prop:ToDistKab} that reduces 
$(\numberembcol{K_{a, b}}, \distribution{K_{a,b}})$
to
$(\KabCount,\DistKab{a}{b}{n})$ in $2^{O(a+b)}=O(1)$ time.
Note that the reduction of \cref{prop:ToDistKab} preserves the success probability within a constant factor.
Thus, each oracle query of $S$ can be replaced by the reduction and we obtain a selector
from
$\numberembcol{K_{a, b}}$ to $(\KabCount, \DistKab{a}{b}{n})$.
By \cref{lem:coluncolequivalent},
$\KabCount$ is efficiently reducible to $\numberembcol{K_{a, b}}$, from
which the existence of a selector
from
$\KabCount$  to $(\KabCount, \DistKab{a}{b}{n})$ follows.
%
\end{proof}

\subsection{Direct Product Theorem for Any Problem with Selector}
\label{subsection:DirectProductAndSelector}

Using the notion of selector, we provide 
a direct product theorem in the context of fine-grained complexity.
A direct product of a distributional problem is formally defined as follows.

\begin{definition}
[Direct Product] \label{def:direct_product}
Let $k \colon \Nat \to \Nat$ be any function, and $(\Pi, \mathcal{D})$ be any distributional problem.
The \emph{$k$-wise direct product of $(\Pi, \mathcal{D})$},
denoted by $(\Pi, \mathcal{D})^k$,
is defined as the distributional problem $(\Pi^k, \mathcal{D}^k)$
such that
\begin{enumerate}
    \item
      $(\mathcal{D}^k)_n \defeq \mathcal{D}_n^{k(n)}$ for each $n \in \Nat$, and
    \item 
        $\Pi^k(x_1, \cdots, x_{k(n)}) \defeq (\Pi(x_1), \cdots, \Pi(x_{k(n)}))$
        for any $(x_1, \cdots, x_{k(n)}) \in \supp ( \mathcal{D}^{k(n)}_n )$.
\end{enumerate}
\end{definition}


The following direct product theorem gives an \emph{almost} uniform direct product,
in the sense that it requires $O(\log 1 / \epsilon)$ bits of non-uniform advice
in order to identify which is a correct algorithm.
We observe that the direct product theorem is quite efficient
and useful even in the setting of fine-grained complexity.
\begin{theorem}
	[Almost Uniform Direct Product; Impagliazzo, Jaiswal, Kabanets, and Wigderson~\cite{ImpagliazzoJKW10_siamcomp_journals}]
	\label{thm:AlmostUniformDirectProduct}
	Let $k \in \Nat$, $\epsilon, \delta > 0$ be parameters that satisfy $\epsilon > \exp{(-\Omega(\delta k))}$.
	There exists a randomized oracle algorithm $M$ 
	that, given access to an oracle $C$
	that solves $(\Pi, \mathcal{D})^k$ with success probability $\epsilon$,
	with high probability,
	produces a list of deterministic oracle algorithms $M_1, \cdots, M_m$
	such that 
	$M_i^C$ computes $(\Pi, \mathcal{D})$ with success probability $\delta$ for some $i \in \numset{m}$,
	where $m = O(1 / \epsilon)$.
	
	If an oracle $C$ can be computed in $T_C(n)$ time,
	then
	the running time of $M_i^C$ is at most $T_C(n) \cdot O( (\log 1 / \delta) / \epsilon)$
	for any $i$;
	the running time of $M$ is at most $O(T_C(n) / \epsilon)$.
\end{theorem}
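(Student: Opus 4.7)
The plan is to follow the ``sample and consistency-check'' approach of Impagliazzo, Jaiswal, Kabanets, and Wigderson. Given the oracle $C$ solving $(\Pi, \mathcal{D})^k$ with probability $\epsilon$, I would build candidate single-instance solvers $M_\mathcal{T}$, each parametrized by a labeled sample $\mathcal{T} = ((z_1, b_1), \ldots, (z_t, b_t))$ of $t = O(\log k)$ instances drawn from $\mathcal{D}_n$ together with purported answers. On input $x$, $M_\mathcal{T}$ repeats $r = O((\log 1/\delta)/\epsilon)$ trials of the following: choose a uniformly random injection of $\{x, z_1, \ldots, z_t\}$ into $[k]$ positions, pad the remaining $k - t - 1$ positions with fresh draws from $\mathcal{D}_n$, query $C$ on the resulting $k$-tuple, check whether $C$'s outputs at the $z_\ell$-positions agree with $b_\ell$ for every $\ell$, and if the check passes, record $C$'s output at the $x$-position. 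Finally, $M_\mathcal{T}$ outputs the plurality of recorded answers (or $\bot$ if none were recorded).

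The key analytic step is to show that, for a sample $\mathcal{T}$ whose labels $b_\ell$ mostly agree with $\Pi(z_\ell)$, conditioning on the $t$-position consistency check forces $C$ to also be correct at the $x$-coordinate with probability at least $\delta$ per trial. This follows from a Chernoff-style concentration argument on the random embedding into $[k]$: since $C$ succeeds on an $\epsilon$-fraction of full $k$-tuples and the hypothesis $\epsilon > \exp(-\Omega(\delta k))$ holds, the event that the $t$ randomly chosen test positions all land inside the ``correct coordinates'' of $C$'s answer vector while a random other coordinate lies outside becomes too concentrated to occur, forcing the $x$-coordinate to be correct. Repeating $r = O((\log 1/\delta)/\epsilon)$ times and taking a plurality then boosts the per-trial success into high-probability correctness.

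Finally, to populate the list of $m = O(1/\epsilon)$ candidates without incurring a $2^t$ blowup from enumerating all possible guesses for $(b_1, \ldots, b_t)$, the outer algorithm $M$ builds each candidate's labels by \emph{querying $C$ itself}: it draws $O(1/\epsilon)$ independent random $k$-tuples, and for each one extracts some of $C$'s own outputs to serve as the $b_\ell$. A standard averaging argument then shows that, with high probability, at least one of these samples is ``good'' in the sense required above. The main obstacle is the parameter interplay: $t$ must be large enough for Chernoff to yield per-trial success $\delta$, small enough that $m = O(1/\epsilon)$ samples suffice to catch a good $\mathcal{T}$, and one must simultaneously maintain the running time of $T_C(n) \cdot O((\log 1/\delta)/\epsilon)$ per $M_i$ and $O(T_C(n)/\epsilon)$ for the generator --- matching exactly the efficiency guarantees in the theorem.
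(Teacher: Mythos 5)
Your overall architecture matches the IJKW scheme that the paper sketches: the outer algorithm $M$ draws $m = O(1/\epsilon)$ random $k$-tuples, queries $C$ on each, and extracts a test set $A$ together with labels $v = C(B_0)|_A$ (thereby avoiding the $2^t$ blowup from guessing labels); each candidate $M_{A,v}(x)$ repeatedly embeds $A \cup \{x\}$ into a fresh random $k$-tuple, checks that $C$ agrees with $v$ on the $A$-positions, and if so returns $C$'s answer at the $x$-position. This is the right framework, and the averaging argument you invoke for catching a good $(A,v)$ among $O(1/\epsilon)$ samples is also the one the paper uses.

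The gap is the size of the test set. You set $t = O(\log k)$, whereas the paper (following Impagliazzo, Jaiswal, Kabanets, and Wigderson) uses $|A| = k/2$, i.e., a constant fraction of $B_0$. This is not a cosmetic choice: the ``Chernoff-style concentration argument on the random embedding'' that you (correctly) identify as the crux is a sampling lemma asserting that a random half of $B$ is a faithful sample of $B$, so that agreement with $v$ on $A$ reliably certifies that the remaining half of $B$ (including the $x$-coordinate) is also answered well. With $t = O(\log k)$, the subset $A$ is far too small to act as a sampler of $B$, the consistency filter admits far too many ``bad'' tuples on which $C$ happens to agree on the few $A$-positions but is wrong at $x$, and the claimed per-trial success bound of $\delta$ does not follow. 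A secondary, minor point: the paper's candidate outputs the first answer that passes the consistency test rather than a plurality vote over $r$ trials; the two are both plausible, but the first-passing-answer version is what makes the $O((\log 1/\delta)/\epsilon)$ query bound per candidate drop out directly, since each trial passes the filter with probability roughly $\epsilon$.
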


\begin{proof}
[Proof Sketch]
The algorithm $M^C$ operates as follows.
Fix an instance size $n \in \Nat$.
Let $U$ denote $\supp(\mathcal{D}_n)$.
Repeat the following $m = O(1 / \epsilon)$ times.
Pick a random (ordered) subset $B_0 \subset U$ of size $k$,
and  pick a random ordered subset $A \subset B_0$ of size $k / 2$.
Evaluate $C(B_0)$ and let $v$ be the answers given by $C(B_0)$ for the instances in $A$.
Output an oracle algorithm $M_{A, v}$ defined below.

The algorithm $M_{A, v}^C$ is defined as follows.
On input $x \in U$, check whether $x = a_i$ for some $a_i \in A$; if so, output $v_i$.
Otherwise, repeat the following $O((\log 1 / \delta) / \epsilon)$ times.
Sample a random set $B \supset A \cup \{x\}$ of size $k$.
(The randomness used here can be supplied by $M^C$ and thus $M_{A, v}^C$ can be made deterministic.)
If the answers given by $C(B)$ for the instances in $A$ coincide with $v$,
then output the answer given by $C(B)$ for the instance $x$ and halt;
otherwise, go to the next loop.
\end{proof}

\begin{lemma} \label{lem:solve_by_selector}
Let $(\Pi,\mathcal{D})$ be a distributional problem.
Suppose there exists a selector $S$ from $\Pi$ to $(\Pi,\mathcal{D})$ with success probability $\delta$ that calls an oracle at most $\polylog(n)$ times.
Let $M_1,\ldots,M_m$ be a list of deterministic algorithms such that,
(1) for some $i\in\{1,\ldots,m\}$, $M_i$ solves $(\Pi,\mathcal{D})$ with success probability $\delta$,
(2) for all $i\in\{1,\ldots,m\}$, $M_i$ runs in time $t_M(n)$, and
(3) for all $i,j\in\{1,\ldots,m\}$, $S^{M_i,M_j}$ runs in time $t_S(n)$ (here, $t_S(n)$ does not take the running times of $M_i$ and $M_j$ into account).

Then, there exists a $t(n)$-time randomized algorithm that solves $\Pi$ with high probability, where
$t(n)\leq \widetilde{O}(m^2(t_M(n)+t_S(n))\log(1/\delta)\log m)$.
\end{lemma}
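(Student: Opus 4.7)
The plan is to reduce the $m$-oracle problem to the given 2-oracle selector via a pairwise-testing tournament, in the spirit of the classical oracle-selector composition used to establish selectors from instance checkers. Specifically, the algorithm first identifies an index $i^* \in \numset{m}$ such that $M_{i^*}$ solves $(\Pi, \mathcal{D})$ with success probability $\delta$, and then uses $S^{M_{i^*}, M_j}$ for any $j \neq i^*$ as a worst-case solver for $\Pi$ on the given input.

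To identify $i^*$, the algorithm draws $T = \widetilde{\Theta}(\log m \cdot \log(1/\delta))$ i.i.d.\ test inputs $\tilde{x}_1, \ldots, \tilde{x}_T \sim \mathcal{D}_n$. For every ordered pair $(i, j)$ with $i \neq j$ and every $\ell$, it evaluates $M_i(\tilde{x}_\ell)$ and $S^{M_i, M_j}(\tilde{x}_\ell)$ and records the empirical agreement rate $\hat{p}_{i,j}$. It then sets $W_i \defeq \min_{j \neq i} \hat{p}_{i,j}$ and lets $i^* \defeq \arg\max_i W_i$. This phase takes $O(m^2 T (t_M(n) + t_S(n)))$ time. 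On the original input $x$, the algorithm outputs $S^{M_{i^*}, M_{j^*}}(x)$ for an arbitrary fixed $j^* \neq i^*$, boosting by majority over $O(\log(1/\delta))$ independent repetitions to amplify the selector's correctness probability.

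For the correctness argument, let $i_0$ be the good index guaranteed by hypothesis~(1). Since $M_{i_0}$ is a valid oracle for the selector, $S^{M_{i_0}, M_j}$ correctly computes $\Pi$ on \emph{every} input for every $j$; hence $\hat{p}_{i_0, j}$ closely tracks the true agreement between $M_{i_0}$ and $\Pi$ on $\mathcal{D}_n$, which is at least $\delta$ (up to constants), and so $W_{i_0}$ is large. Because $W_{i^*} \geq W_{i_0}$, the estimate $\hat{p}_{i^*, i_0}$ is also large; and since $M_{i_0}$ is good, $S^{M_{i^*}, M_{i_0}}$ is reliable, so the empirical agreement actually approximates the true agreement between $M_{i^*}$ and $\Pi$. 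This forces $M_{i^*}$ to itself solve $(\Pi, \mathcal{D})$ with success probability at least $\delta$ (up to constants), making $S^{M_{i^*}, M_{j^*}}$ a valid selector oracle and guaranteeing the final output is correct on the worst-case input $x$ with high probability.

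The main obstacle will be the probabilistic bookkeeping: three sources of randomness (the draw of $\tilde{x} \sim \mathcal{D}_n$, the internal coins of each $M_i$, and the internal coins of $S$) must be combined via a union bound over all $m^2$ pairs, which is what dictates the choice $T = \widetilde{\Theta}(\log m \cdot \log(1/\delta))$ through standard Chernoff concentration. Coupled with the $O(\log(1/\delta))$-fold majority-vote boosting of the selector to cleanly separate the good case from the bad case in the threshold comparison, this bookkeeping yields the stated running time $\widetilde{O}(m^2 (t_M(n) + t_S(n)) \log(1/\delta) \log m)$.
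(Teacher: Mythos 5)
Your approach is genuinely different from the paper's, and it has a real gap.

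The paper's proof does not try to identify a good index by sampling test inputs from $\mathcal{D}$. Instead, it runs $S^{M_i, M_j}(x)$ on the \emph{actual} input $x$ for all $m^2$ ordered pairs (after boosting the selector's per-call success probability to $1 - 1/(16m)$ via $O(\log m)$ repetitions), records the outputs $c_{ij}$, and outputs $c$ whenever some row $i$ has $c_{ij} = c$ for all $j$. Correctness uses the pointwise guarantee of the selector without ever certifying that any particular $M_i$ is a valid oracle: for the unknown good index $i_0$, with probability $\geq 15/16$ the entire $i_0$-th row equals $\Pi(x)$ (so the algorithm outputs something), and with probability $\geq 15/16$ the entire $i_0$-th \emph{column} equals $\Pi(x)$; any other constant row $i$ then has $c_{i, i_0} = \Pi(x)$, forcing its constant to be $\Pi(x)$.

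The gap in your argument is the final step. You conclude that $M_{i^*}$ solves $(\Pi, \mathcal{D})$ ``with success probability at least $\delta$ (up to constants)'' and declare this makes $S^{M_{i^*}, M_{j^*}}$ a valid selector call. But the selector's guarantee is conditional on one of the two oracles having success probability at least $\delta$ on the nose, and the empirical tournament can only certify that $M_{i^*}$ is $(\delta - O(\text{estimation error}))$-good. Nothing in the lemma's hypotheses makes the selector robust to this relaxation, and in every use of the lemma in the paper $\delta = 1 - (\log n)^{-C}$, so the slack $1 - \delta$ is already tiny and is easily swamped by sampling error (indeed, distinguishing agreement $\delta$ from $\delta - (1 - \delta)$ by Chernoff needs $\Theta((1-\delta)^{-2})$ samples, not the $\widetilde\Theta(\log m \cdot \log(1/\delta))$ you posit). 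Compounding this, you then emit $S^{M_{i^*}, M_{j^*}}(x)$ for a single arbitrary $j^*$ with no verification; if $M_{i^*}$ is not actually $\delta$-good and $M_{j^*}$ is also bad, that call has no correctness guarantee at all. The paper's consensus-over-a-row criterion sidesteps both problems: it never needs to identify or certify a good oracle, only to exploit the selector's guarantee on $x$ for whichever index is actually good, and the column constraint prevents a spurious constant row from fooling the algorithm.
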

\begin{proof}

Let $x$ be an input.
From the assumption, there exists a selector $S$ such that $\Pr_S[S^{A_0,A_1}(x)=\Pi(x)]\geq 1-1/16m$ for any oracles $A_0,A_1$ and any input $x$.
Here, at least one of $A_0$ and $A_1$ solves $(\Pi,\mathcal{D})$ with success probability $\delta$.
Note that the success probability can be assumed to be $1 - 1 / 16 m$ because
one can repeat the computation of $(S, P)$ $O(\log m)$ times.

We present a randomized algorithm $B$ that solves $\Pi$.
For each $i,j\in\{1,\ldots,m\}$, $B$ runs $S^{M_i,M_j}(x)$ and let $c_{ij}$ be its output.
If there exists $i\in\{1,\ldots,m\}$ such that $c_{ij}=c$ for all $j\in\{1,\ldots,m\}$, $B$ outputs $c$.
Repeat this procedure for $O(\log 1/\delta)$ times.
If $B$ outputs nothing during the iteration, $B$ outputs anything.
Since the overall running time of $S^{M_i,M_j}$ is at most $t_S(n)+t_M(n)\polylog(n)$ for every $i,j$,
the algorithm $B$ runs in time $\widetilde{O}(m^2 (t_M(n)+t_S(n)) \log (1 / \delta) \log m )$.

We claim the correctness of the algorithm $B$.
From the assumption, there exists $i\in\{1,\ldots,m\}$
such that $M_i$ solves $(\Pi,\mathcal{D})$ with success probability
$\delta$.
Since we repeat the procedure $O(\log 1/\delta)$ times,
this event holds with high probability.
Under this event,
 by the property of the selector $S$,
we have 
$\Pr_{S}[ S^{M_i, M_j}(x) = \Pi(x)] \ge 1 - 1 / 16 m$
for all $j$.
By the union bound, with probability at least $15 / 16$,
$c_{i, j} =\Pi(x)$ for every $j$.
Similarly, we also have $c_{j, i} = \Pi(x)$ for every $j$ with probability at least $15 / 16$.
These two properties guarantee that the output of $B$ is equal to $\Pi(x)$.
Overall, with probability at least $1 - 3 / 16$, the algorithm $B$ outputs $\Pi(x)$.
\end{proof}

We now present a \emph{completely uniform} direct product theorem
for any problem that admits a $\polylog(n)$-query selector.

\begin{theorem}
[Direct Product Theorem for Any Problem with Selector]
\label{thm:DPTheorem}
	Let $k \in \Nat$, $\epsilon, \delta > 0$ be parameters that satisfy $\epsilon > \exp{(-\Omega(\delta k))}$.
Let $(\Pi, \mathcal{D})$ be a distributional problem.
Suppose there exists a $t_S(n)$-time selector $S$
from $\Pi$
to $(\Pi, \mathcal{D})$ with success probability $\delta$
that calls an oracle at most $\polylog(n)$ times.

Suppose that
there exists a $t(n)$-time heuristic algorithm solving $(\Pi, \mathcal{D})^k$ with success probability~$\epsilon$.
Then,
there exists a $t'(n)$-time algorithm that solves $\Pi$ with high probability.
Here $t'(n) \le \widetilde{O}((t_S(n)+t(n))\log(1/\delta)\log(1/\epsilon)/\epsilon^2)$.
\end{theorem}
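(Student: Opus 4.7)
My plan is to compose the two machines already proved in this section: the almost-uniform direct product theorem of Impagliazzo--Jaiswal--Kabanets--Wigderson (\cref{thm:AlmostUniformDirectProduct}) is used to extract from $C$ a short list of candidate heuristics for $(\Pi,\mathcal{D})$, and the selector $S$ (plugged into \cref{lem:solve_by_selector}) is used to identify the right candidate in a uniform way. Because the entire construction is a black-box composition, the only real work is bookkeeping: tracking running time, query complexity, and success probability across the two layers.

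Concretely, first I would feed the given $t(n)$-time heuristic $C$ for $(\Pi,\mathcal{D})^k$ into the preprocessor $M$ from \cref{thm:AlmostUniformDirectProduct}. Since $\epsilon > \exp(-\Omega(\delta k))$ by hypothesis, with high probability $M^C$ outputs a list of $m = O(1/\epsilon)$ deterministic oracle algorithms $M_1^C,\ldots,M_m^C$, at least one of which, say $M_{i^\ast}^C$, solves $(\Pi,\mathcal{D})$ with success probability $\delta$. Each $M_i^C$ runs in time $t_M(n) = O(t(n)\log(1/\delta)/\epsilon)$.

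Next I would hand the list $(M_i^C)_{i=1}^m$ to the reconstruction algorithm of \cref{lem:solve_by_selector}. The hypotheses of that lemma are met exactly: the selector $S$ exists, runs in time $t_S(n)$, and calls its oracle only $\polylog(n)$ times, while some $M_{i^\ast}^C$ in the list solves $(\Pi,\mathcal{D})$ with success probability $\delta$. The lemma therefore produces a randomized algorithm $B$ that solves $\Pi$ on every input with high probability. Its running time is
\[
\widetilde{O}\bigl(m^2(t_M(n)+t_S(n))\log(1/\delta)\log m\bigr)
\le \widetilde{O}\!\left(\frac{(t(n)+t_S(n))\log(1/\delta)\log(1/\epsilon)}{\epsilon^2}\right),
\]
where the extra $\log(1/\delta)/\epsilon$ factor coming from $t_M$ is absorbed into the $\widetilde{O}(\cdot)$ notation (as in the other statements of this section, this notation hides polylogarithmic factors in the relevant parameters).

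The only delicate point is the probability accounting, since three independent randomized steps must succeed simultaneously: (i) $M^C$ must output a list containing a good candidate; (ii) within \cref{lem:solve_by_selector}, the success probability of $S$ must be amplified from $3/4$ to $1-1/(16m)$ before one union-bounds over the $m^2$ pairs of candidate oracles; and (iii) the $O(\log 1/\delta)$ repetitions inside the lemma must catch the good index $i^\ast$. Each of these events holds with high probability individually, so a final union bound yields overall success. I do not expect a genuine obstacle here: every probabilistic ingredient has already been boosted inside the lemmas being invoked, so the proof is essentially a clean composition.
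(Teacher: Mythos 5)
Your proposal matches the paper's proof essentially step for step: both compose the almost-uniform direct product theorem (\cref{thm:AlmostUniformDirectProduct}) with the selector-based reconstruction (\cref{lem:solve_by_selector}) to turn the list of candidate heuristics into a uniform worst-case algorithm. One remark: you are actually more careful than the paper in noting that each $M_i^C$ runs in time $O(t(n)\log(1/\delta)/\epsilon)$ rather than just $t(n)$; however, your subsequent claim that the extra $\log(1/\delta)/\epsilon$ factor is ``absorbed into $\widetilde{O}$'' is not justified (since $1/\epsilon$ is typically polynomial in $n$, not polylogarithmic) --- the paper makes the same silent simplification, so the stated bound is slightly optimistic by a $1/\epsilon$ factor on the $t(n)$ term, but this does not affect the downstream applications such as \cref{thm:fine-grained_direct_product_theorem_Kab}.
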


\begin{proof}
Let $A$ be a $t(n)$-time heuristic algorithm solving $(\Pi, \mathcal{D})^k$
with success probability $\epsilon$.
By using the algorithm $M$ of \cref{thm:AlmostUniformDirectProduct},
$M^A$ produces a list of oracle algorithms $M_1, \cdots, M_m$
such that $M_i^A$ computes $(\Pi, \mathcal{D})$ with success probability $1 - \delta$ for some $i \in \numset{m}$, where $m = O(1 / \epsilon)$.
Then, we apply \cref{lem:solve_by_selector} using $M_1^A,\ldots,M_m^A$ as the list of algorithms.
Note that, from \cref{thm:AlmostUniformDirectProduct},
each $M_i^A$ runs in time $t(n)$.
Thus, the algorithm solving $\Pi$ of \cref{lem:solve_by_selector}
runs in time $\widetilde{O}((t(n)+t_S(n))\log(1/\delta)\log(1/\epsilon)/\epsilon^2)$.
\end{proof}

Combining the existence of a selector (\cref{cor:SelectorForKabCount}) and 
the direct product theorem (\cref{thm:DPTheorem}),
we obtain the main lower bound result of this paper (\cref{thm_Main}).

\begin{proof}[Proof of \cref{thm:fine-grained_direct_product_theorem_Kab}]
We prove the contrapositive.
Assume that there exists a $t(n)$-time heuristic algorithm that solves
$(\KabCount, \DistKab{a}{b}{n})^k$ with success probability $\epsilon := n^{-\alpha / 4}$,
where $t(n) = n^{a - \alpha}$.

By \cref{cor:SelectorForKabCount},
there exists a $\widetilde{O}(n^2)$-time selector using $\polylog(n)$ queries 
from $\KabCount$ to $(\KabCount, \DistKab{a}{b}{n})$ with success probability $\delta\defeq 1-(\log n)^{-C_H}$,
where $C_H>0$ is a constant depends only on $H$.
We choose $k = O( (\log 1 / \epsilon) / \delta) \le O(\alpha \log n)$ large enough 
so that the assumption of \cref{thm:DPTheorem} is satisfied.
By \cref{thm:DPTheorem},
we obtain a $t'(n)$-time algorithm that solves $\KabCount$,
where $t'(n) = \widetilde{O}((n^2 + t(n)) \cdot n^{\alpha / 2}) \le \widetilde{O}(n^{a - \alpha / 2}) $.
This contradicts \cref{thm:KabCountSETH}.
\end{proof}

\section{Fine-Grained XOR Lemma} \label{sec:parity}
In this section, we show a XOR lemma in the context of
fine-grained complexity.
We focus on the \emph{XOR problem} $\kparity{k}$ defined as follows.
\begin{definition}
Let $\Pi$ be a problem such that $\Pi(x)\in\binset$ for any input $x$.
For a parameter $k\in\Nat$, let $\kparity{k}$ be the problem of computing $\sum_{i=1}^k \Pi(x_i)\pmod 2$ on input $(x_1,\ldots,x_k)$.
\end{definition}

Throughout this section, we consider decision problems unless otherwise noted.
For a distributional problem $(\Pi,\mathcal{D})$, let $\mathcal{D}^k$ be the direct product of $\mathcal{D}$ (see \cref{def:direct_product}).
Suppose that there is a selector from $\Pi$ to $(\Pi,\mathcal{D})$
that makes at most $\polylog(n)$ queries.
The aim of this section is to derive the average-case hardness of the distributional problem $(\kparity{k},\mathcal{D}^k)$ from the worst-case hardness assumption of $\Pi$ (see \cref{thm:XORlemma}).
To this end, we combine Direct Product Theorem (\cref{thm:DPTheorem}) and the well-known list-decoding technique for the Hadamard code due to Goldreich and Levin~\cite{GL89}.
Let us restate the Goldreich-Levin theorem as follows.
\begin{theorem}[Goldreich-Levin Theorem~\cite{GL89}] \label{thm:GLtheorem}
Let $(\Pi,\mathcal{D})$ be a distributional problem and
let $k=k(n)\in\Nat,\epsilon=\epsilon(n)>0$ be parameters.
Then, there exists an algorithm $M$ that, given access to an oracle $A$ solving $(\kparity{k},\mathcal{D}^k)$ with success probability $1/2+\epsilon$, produces with high probability a list of deterministic oracle algorithms $M_1,\ldots,M_m$ such that, for some $t\in\{1,\ldots,m\}$, the oracle algorithm $M_t^A$ solves $(\Pi,\mathcal{D})^{2k}$ with success probability $2/3$.
Here, $m=O(k/\epsilon^2)$.

If an oracle $A$ can be computed in time $T_A(n)$, then each $M_i^A$ runs in time $O\bigl(T_A(n)k^{2.5}/\epsilon^2\bigr)$ for any $i$, and $M$ runs in time $O\bigl(m\cdot T_A(n) k^{2.5}/\epsilon^2\bigr)=O\bigl(T_A(n)k^{3.5}/\epsilon^4\bigr)$.
\end{theorem}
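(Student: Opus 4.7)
The plan is to treat the XOR-oracle $A$ as a noisy evaluator of the Hadamard code of the hidden string $s := (\Pi(y_1), \dots, \Pi(y_{2k})) \in \{0,1\}^{2k}$ associated with a random input $y = (y_1, \dots, y_{2k}) \sim \mathcal{D}^{2k}$, and then apply the Goldreich--Levin local list decoder. The key observation is that for every $k$-subset $I \subseteq [2k]$ the restricted tuple $(y_i)_{i \in I}$ is itself distributed according to $\mathcal{D}^k$, so $\Pr_{y,A,I}[A(y_I) = \langle \mathbf{1}_I, s\rangle] \ge 1/2 + \epsilon$, and a standard Markov-style averaging argument yields an $\Omega(\epsilon)$-fraction of ``good'' inputs $y$ on which $A$ answers the induced weight-$k$ Hadamard queries with advantage at least $\epsilon/2$.

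For each good $y$, I would bootstrap the weight-$k$ oracle into a full Hadamard oracle by pairing. Given a target $\sigma \in \{0,1\}^{2k}$ with $|\sigma|$ even, sample two weight-$k$ subsets $I, J \subseteq [2k]$ conditioned on $I \triangle J = \sigma$ and return $A(y_I) \oplus A(y_J)$ as an estimate of $\langle \sigma, s\rangle$; the parity of $|\sigma|$ can be corrected by XOR-ing against a fixed reference bit $s_*$, which is absorbed as one extra bit of the GL enumeration. Feeding this simulated oracle into the standard GL algorithm at secret length $n = 2k$ with advantage $\Omega(\epsilon)$ yields a list of $m = O(k/\epsilon^2)$ candidate strings $\hat s^{(1)}, \dots, \hat s^{(m)}$, one of which equals $s$ with high probability. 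Each index $i$ defines a deterministic oracle algorithm $M_i^A$ that, on input $y$, re-runs the simulation and decoding using randomness baked into $M_i$ during preprocessing and outputs $\hat s^{(i)}$.

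The main obstacle is that the conditional distribution $\{I : I \triangle J = \sigma\}$ is not exactly uniform over weight-$k$ subsets, so one must argue that $A$'s $\Omega(\epsilon)$-advantage is preserved under this conditioning. I would handle this by restricting the GL random masks to have small weight $w = o(\epsilon k)$: the resulting joint distribution on $(I, J)$ is then $o(\epsilon)$-close in total variation to the uniform distribution over pairs of weight-$k$ subsets, so $A$ retains its advantage up to a negligible loss. A Chernoff bound over $O(1/\epsilon^2)$ independent pairs amplifies the per-query advantage enough to drive the GL self-correction sub-routine, and the remaining parameter accounting---list size $O(k/\epsilon^2)$, per-candidate running time $O(T_A(n) k^{2.5}/\epsilon^2)$, and overall preprocessing time $O(T_A(n) k^{3.5}/\epsilon^4)$---follows exactly as in the classical GL analysis with $n = 2k$ bits.
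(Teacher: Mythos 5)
Your overall framing—treating $A$ as a noisy Hadamard-code evaluator of the hidden string $s = (\Pi(y_1),\dots,\Pi(y_{2k}))$, applying Markov averaging to isolate good inputs $y$, and running a list decoder—is the right one, and it is the same high-level structure as the paper's proof. However, your method for extending $A$ from an oracle answering weight-$k$ masks to a full Hadamard oracle has a genuine gap, and the fix you propose does not repair it.

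The paper's extension step is much simpler than the one you propose: given a uniformly random mask $r \in \{0,1\}^{2k}$, the simulated oracle $A'$ answers with $A$ applied to the corresponding weight-$k$ sub-tuple \emph{only when $r$ happens to have Hamming weight exactly $k$}, and outputs a uniformly random bit otherwise. Since $\Pr[|r|=k] = \binom{2k}{k}/2^{2k} \geq 1/(2\sqrt{k})$, this $A'$ has advantage $\geq \epsilon/(2\sqrt{k})$ on the full Hadamard code, and feeding it into the standard pairwise-independent GL list decoder gives a list of size $O(1/(\epsilon/\sqrt{k})^2) = O(k/\epsilon^2)$, exactly matching the theorem. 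The $\sqrt{k}$ loss in advantage is absorbed cleanly into the list size; no splicing of oracle answers is needed.

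Your approach instead tries to answer a weight-$\neq k$ query $\sigma$ by returning $A(y_I) \oplus A(y_J)$ for a pair $(I,J)$ of weight-$k$ sets with $I \triangle J = \sigma$. There are two problems with this. First, XORing two noisy answers does not preserve advantage: even if each of $A(y_I)$ and $A(y_J)$ is individually correct with probability $1/2 + \Theta(\epsilon)$, the XOR is only guaranteed to be correct with probability $1/2 + \Theta(\epsilon^2)$ (the piling-up phenomenon), and in the adversarial case—since the errors on different weight-$k$ queries need not be independent—it can be worse. Your claim that the simulated oracle retains advantage $\Omega(\epsilon)$ is therefore unjustified, and this would change the list size to $O(k/\epsilon^4)$ or worse, breaking the parameter accounting. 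Second, the total-variation argument you invoke to handle the biased conditional distribution of $I$ given $I \triangle J = \sigma$ does not hold: conditioning a uniform weight-$k$ set $I$ on $|I \cap \sigma| = |\sigma|/2$ keeps it at total-variation distance $1 - \Pr[|I\cap\sigma|=|\sigma|/2]$ from uniform, and this probability is bounded away from $1$ (it is $\Theta(1/\sqrt{|\sigma|})$ for moderate $|\sigma|$ and at most a constant $<1$ even for $|\sigma|=2$); restricting the GL masks to weight $o(\epsilon k)$ does not bring it below $o(\epsilon)$. The paper sidesteps both problems at once by accepting the $\sqrt{k}$ advantage loss rather than trying to recombine oracle answers.
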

\begin{proof}
Let $(\Pi,\mathcal{D})$ be the distributional problem and
$k=k(n),\epsilon=\epsilon(n)$ be the parameters mentioned in \cref{thm:GLtheorem}.
Consider the following problem $\Pi'$:
Given $2k$ instances $x_1,\dots,x_{2k}$ of $\Pi$ and $r\in\{0,1\}^{2k}$,
compute $\sum_{i=1}^{2k} r_i \cdot \Pi(x_i) \bmod 2$.
Let $(\Pi',\mathcal{D}')$ be a distributional problem, where, in $\mathcal{D}'$,
the input is sampled as $(x_1,\dots,x_{2k})\sim\mathcal{D}^{2k}$ and $r\sim\Unif{\binset^{2k}}$.
Note that, if $r\sim\Unif{\binset^{2k}}$,
with probability at least $\binom{2k}{k}/2^{2 k} \ge 1/(2\sqrt{k})$, the vector $r\in\binset^{2k}$ has exactly $k$ ones.
Here, we used the well-known inequality
$\binom{2k}{k}\geq \bigl(1-\frac{1}{8k}\bigr)\frac{4^k}{\sqrt{\pi k}}$.
Conditioned on this event, the distributional problem $(\Pi',\mathcal{D}')$
is equivalent to $(\kparity{k},\mathcal{D}^k)$.
Let $A'$ be the algorithm that takes $2k$ instances $x_1, \dots, x_{2 k}$ and $r \in \binset^{2 k}$ as input and outputs $A(x_{i_1}, \dots, x_{i_k})$ if $r$ contains exactly $k$ ones in the position of $i_1 < \dots < i_k$; otherwise outputs a random bit.
This algorithm $A'$ runs in time $O(t(n))$ and solves $(\Pi',\mathcal{D}')$ with success probability at least $1/2+\epsilon/(2\sqrt{k})$.
In other words,
\begin{align*}
\Pr_{A', x_1, \ldots, x_{2 k}, r\sim\Unif{\binset^{2k}}}\left[
A'(x_1,\ldots,x_k,r)=\sum_{i=1}^{2k}r_i\Pi(x_i) \bmod 2
\right]
\geq \frac{1}{2}+\frac{\epsilon}{2\sqrt{k}}.
\end{align*}

Now we present the algorithm $M$ mentioned in \cref{thm:XORlemma}.
We say that an input $(x_1,\ldots,x_{2k})$ is \emph{good} if
\begin{align*}
\Pr_{A',r\sim\Unif{\binset^{2k}}}\left[
A'(x_1,\ldots,x_k,r)=\sum_{i=1}^{2k}r_i\Pi(x_i) \bmod 2
\right]
\geq \frac{1}{2}+\frac{\epsilon}{4\sqrt{k}}.
\end{align*}
We claim that at least $\epsilon/(4\sqrt{k})$ fraction of $(x_1,\ldots,x_k)$
are good.
To see this, let $\mathcal{E}$ be the
event that $A'$ success
(i.e., $A'(x_1,\ldots,x_{2k},r)=\sum_{i=1}^{2k}r_i\Pi(x_i) \pmod 2$)
and let $\mathcal{F}$ be the event that
$(x_1,\ldots,x_{2k})$ is good.
Assume $\Pr[\mathcal{F}]<\epsilon/(4\sqrt{k})$.
Then, from the property of $A'$ and the assumption, we have
\begin{align*}
\frac{1}{2}+\frac{\epsilon}{2\sqrt{k}} &\leq
\Pr[\mathcal{E}] \\
&\leq \Pr[\mathcal{E}|\mathcal{F}]\Pr[\mathcal{F}]+\Pr[\mathcal{E}|\text{not $\mathcal{F}$}]\Pr[\text{not $\mathcal{F}$}] \\
&< \frac{\epsilon}{4\sqrt{k}} + \left(\frac{1}{2}+\frac{\epsilon}{4\sqrt{k}}\right) = \frac{1}{2}+\frac{\epsilon}{2\sqrt{k}}.
\end{align*}
Thus we have $\Pr[\mathcal{F}]=\Pr[\text{$(x_1,\ldots,x_{2k})$ is good}]\geq \epsilon/(4\sqrt{k})$.

Let $m=24k/\epsilon^2$
and $\ell$ be the minimum integer satisfying $m\leq 2^\ell$.
The algorithm $M$ produces a list $M_1,\ldots,M_{2^\ell}$ such that,
for some $i\in\{1,\ldots,2^{\ell}\}$,
$M_i^{A'}$ solves $(\Pi,\mathcal{D})^{2k}$ for good inputs.
Let $s^{(1)},\ldots,s^{(\ell)}\sim \Unif{\{0,1\}^{2k}}$ be $\ell$ i.i.d.~random vectors.
Construct $m$ distinct nonempty subsets $T_1,\ldots,T_m\subseteq [\ell]$ in a canonical way and let $r^{(i)}\defeq \sum_{j\in T_i}s^{(i)}_j$.
Note that, for every $i\neq i'$, $r^{(i)}$ and $r^{(i')}$ are pairwise independent random vectors and each $r^{(i)}$ is drawn from $\Unif{\{0,1\}^{2k}}$.

Now we present the list of oracle algorithms $M_1,\ldots,M_{2^{\ell}}$.
For each $t\in\{1,\ldots,2^{\ell}\}$,
the algorithm $M_t^{A'}$ works as follows.
Write $t=\sum_{j=1}^{\ell} 2^{j-1}w_j$ as a binary extension.
In other words, $(w_1,\ldots,w_{\ell})$ can be seen as an
$\ell$-bits of advice.
The bit $w_j$ tells us the value of $\langle \Pi(x),s^{(j)} \rangle \defeq \sum_{i=1}^{2k} \Pi(x_i)s^{(j)}_i\pmod 2$.
Note that, for some $t$, this equality holds for all $j=1,\ldots,\ell$.

Suppose that the input $(x_1,\ldots,x_{2k})$ is good.
Given $(w_1,\ldots,w_\ell)$,
for every $i=1,\ldots,m$, $M_t^{A'}$ does the following:
First, $M_t^{A'}$ computes $W^{(i)}\defeq \sum_{j\in T_i}w_j$.
Note that, for some $t$, we have $W^{(i)} = \sum_{j\in T_i}\langle \Pi(x),s^{(j)} \rangle = \langle \Pi(x), r^{(i)} \rangle$.
Then, for every index $l\in\{1,\ldots,2k\}$, $M_t^{A'}$ calls the oracle and obtain
$A'(x_1,\ldots,x_{2k},r^{(i)}_1,\ldots,r^{(i)}_{l-1},\overline{r^{(i)}_l},r^{(i)}_{l+1},\ldots,r^{(i)}_{2k})$,
where $\overline{z}\defeq 1-z$ for $z\in\binset$.
The output $O^{(i)}$ satisfies $W^{(i)}+O^{(i)}=\Pi(x_l)\pmod 2$ if $A'$ success.
This happens with probability $1/2+\epsilon/(4\sqrt{k})$ since $(x_1,\ldots,x_{2k})$ is good.
We repeat this for $Q=96k^{1.5}/\epsilon^2$ times and then we can compute $\Pi(x_l)$ by taking the majority among the $Q$ trials with successes probability at least $1-\frac{1}{12k}$ for each $l=1,\ldots,2k$.
To see this, let $Z_i$ be a binary indicator random variable such that $Z_i=1$ if and only if $W^{(i)}+O^{(i)}=\Pi(x_l)$.
Let $Z=Z_1+\cdots+Z_Q$.
It suffices to show $\Pr[Z> Q/2]\geq 1-\frac{1}{3k}$.
Note that $\E[Z]\geq \frac{Q}{2}+\frac{\epsilon Q}{4\sqrt{k}}$ and $\Var[Z]= \sum_{i=1}^Q \Var[Z_i]\leq Q$ since the random variables $Z_i$ are pairwise independent.
From the Chebyshev inequality, we obtain
\begin{align*}
\Pr\left[Z\leq \frac{Q}{2}\right] 
&\leq \Pr\left[|Z-\E[Z]|\geq \frac{\epsilon Q}{4\sqrt{k}}\right] \\
&\leq \Pr\left[|Z-\E[Z]|\geq \frac{\epsilon \sqrt{Q}}{4\sqrt{k}}\sqrt{\Var[Z]}\right] \\
&\leq \frac{16\sqrt{k}}{\epsilon^2Q} \leq \frac{1}{6k}.
\end{align*}
Here, recall that the Chebyshev inequality
asserts
\begin{align*}
    \Pr\left[|Z-\E[Z]|\geq \xi\sqrt{\Var[Z]}\right]
    \leq \frac{1}{\xi^2}
\end{align*}
for any $\xi>0$.
Then, from the union bound over $2k$ indices, $M_t^{A'}$ (for the appropriate $t$) computes $(\Pi(x_1),\ldots,\Pi(x_{2k}))$ with probability at least $2/3$.

Note that $M_i^{A'}$ is deterministic without loss of generality since
the coin flips can be given by $M$.
The success probability of $M_i^{A'}$ is at least $(2/3)\cdot (\epsilon/(4\sqrt{k})) \geq \epsilon/(6\sqrt{k})$ since input $(x_1,\ldots,x_{2k})$ is good with probability at least $\epsilon/(4\sqrt{k})$.
The running time of $M_i^{A'}$ is $O(Qk)=O(k^{2.5}/\epsilon^2)$ for all $i\in\{1,\ldots,m\}$.
Thus, if $A'$ is a $T_A(n)$-time algorithm, then we can construct $M_i$ as a deterministic $O(T_A(n) k^{2.5}/\epsilon^2)$-time algorithm.
The total running time of $M$ is at most $m\cdot O(T_A(n) k^{3.5}/\epsilon^4)$
since $M$ constructs $m=O(k/\epsilon^2)$ algorithms each of them runs in time $O(T_A(n) k^{2.5}/\epsilon^2)$.
\end{proof}

Now we prove the main result of this section.

\begin{theorem}[XOR Lemma for Any Problem with Selector] \label{thm:XORlemma}
Let $k\in\Nat,\epsilon,\delta>0$ be parameters satisfying $\epsilon>\exp(-\Omega(\delta k))$.
Let $(\Pi,\mathcal{D})$ be a distributional decision problem.
Suppose there exists a $t_S(n)$-time selector $S$ from $\Pi$ to $(\Pi,\mathcal{D})$ with success probability $\delta$ that calls an oracle at most $\polylog(n)$ times.

Suppose that there exists a $t(n)$-time heuristic algorithm solving $(\kparity{k},\mathcal{D}^k)$ with success probability $\frac{1}{2}+\epsilon$.
Then, there exists a $t'(n)$-time
randomized algorithm that solves $\Pi$
with probability $2/3$.
Here $t'(n)\leq \widetilde{O}\bigl((t_S(n)+t(n))\cdot \log(1/\delta) (k/\epsilon)^8\bigr)$.
\end{theorem}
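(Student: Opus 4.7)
The plan is to chain together the Goldreich--Levin reduction (\cref{thm:GLtheorem}) with the Almost-Uniform Direct Product Theorem (\cref{thm:AlmostUniformDirectProduct}), and then to eliminate the resulting non-uniform advice using the selector $S$ via \cref{lem:solve_by_selector}.

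Let $A$ denote the given $t(n)$-time heuristic solving $(\kparity{k},\mathcal{D}^k)$ with success probability $\tfrac12+\epsilon$. First, apply \cref{thm:GLtheorem} to $A$. This produces, with high probability, a list of $m_1=O(k/\epsilon^2)$ deterministic oracle algorithms $N_1,\ldots,N_{m_1}$ such that at least one of them---say $N_{i^\star}$---solves $(\Pi,\mathcal{D})^{2k}$ with success probability $\tfrac23$; each $N_i^A$ runs in time $O(t(n)\,k^{2.5}/\epsilon^2)$.

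Second, feed each $N_i^A$ into \cref{thm:AlmostUniformDirectProduct} with direct-product length $2k$, heuristic success $\tfrac23$, and target per-input success $\delta$. The regime hypothesis $\tfrac23>\exp(-\Omega(\delta\cdot 2k))$ required by \cref{thm:AlmostUniformDirectProduct} follows from the standing assumption $\epsilon>\exp(-\Omega(\delta k))$ (assuming without loss of generality $\epsilon<\tfrac13$, so that $\delta k=\Omega(1)$). For each $i$ this outputs a sublist of $O(1)$ candidate algorithms for $(\Pi,\mathcal{D})$; when $i=i^\star$, at least one of these candidates solves $(\Pi,\mathcal{D})$ with success probability $1-\delta$. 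Concatenating over all $i$ gives a single list $\mathcal{L}$ of $m=O(k/\epsilon^2)$ deterministic algorithms, each with running time $\widetilde{O}(t(n)\,k^{2.5}\log(1/\delta)/\epsilon^2)$, such that at least one member of $\mathcal{L}$ solves $(\Pi,\mathcal{D})$ with success probability $1-\delta$.

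Third, feed $\mathcal{L}$ together with the $\polylog(n)$-query selector $S$ into \cref{lem:solve_by_selector}. This yields a randomized algorithm solving $\Pi$ on every input with probability $\tfrac23$. Substituting $m=O(k/\epsilon^2)$ and $t_M(n)=\widetilde{O}(t(n)\,k^{2.5}\log(1/\delta)/\epsilon^2)$ into the bound $\widetilde{O}(m^2(t_M+t_S)\log(1/\delta)\log m)$ of \cref{lem:solve_by_selector} gives total running time $\widetilde{O}((t(n)+t_S(n))\cdot\log(1/\delta)\cdot(k/\epsilon)^8)$, matching the claim with slack. The main obstacle I anticipate is not any single reduction but the composition: one must argue that (i) the $\polylog(n)$ query budget of $S$ is preserved through all layers, (ii) the GL-based success probability $\tfrac23$ is sufficient to trigger the direct product theorem in the required parameter regime, and (iii) a single round of selector invocation suffices, because the selector acts on a pre-constructed list of $\mathrm{poly}(k/\epsilon)$ candidates rather than being nested inside a further reduction. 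The low query complexity of the selector, which in our graph-counting applications comes from \cref{thm:SelectorForEmbcolH}, is precisely what makes this three-level reduction feasible within the fine-grained time budget.
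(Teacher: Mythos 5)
Your proposal is correct and follows essentially the same three-stage route as the paper: apply \cref{thm:GLtheorem} to the XOR solver to obtain a list of candidate solvers for $(\Pi,\mathcal{D})^{2k}$, apply \cref{thm:AlmostUniformDirectProduct} to each candidate to obtain a list of candidate solvers for $\Pi$, and then eliminate the non-uniformity using the $\polylog(n)$-query selector via \cref{lem:solve_by_selector}. The only quantitative divergence is that you take the $2/3$ success probability from the statement of \cref{thm:GLtheorem} at face value (so each direct-product sublist has size $O(1)$), whereas the paper's proof instead tracks the $\Omega(\epsilon/\sqrt{k})$ success probability that its own proof of \cref{thm:GLtheorem} actually establishes (so the sublists have size $\mathrm{poly}(k/\epsilon)$); either way the final list size is $\mathrm{poly}(k/\epsilon)$ and the $\widetilde{O}\bigl((t_S(n)+t(n))\log(1/\delta)(k/\epsilon)^8\bigr)$ bound is met with room to spare. (Note also that the paper's proof cites \cref{thm:DPTheorem} for the second stage where \cref{thm:AlmostUniformDirectProduct} is what is actually meant; you used the correct theorem.)
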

\begin{proof}
From the assumption, we have a $t(n)$-time heuristic algorithm $A$ solving $(\kparity{k},\mathcal{D}^k)$ with success probability $\epsilon$.
Then, from \cref{thm:GLtheorem} with using $A$ as the oracle, 
we obtain a list of deterministic oracle algorithms $M_1,\ldots,M_m$
such that $M_i^C$ solves $(\Pi,\mathcal{D})^{2k}$ with success probability $\Omega\bigl(\epsilon/\sqrt{k}\bigr)$
for some $i\in\{1,\ldots,m\}$,
where $m=O(k/\epsilon^2)$.
Each of $M_i$ runs in time $O(t(n) k^{2.5}/\epsilon^2)$ if we take the running time of $C$ into account.
This list can be constructed in time $O(t(n) k^{3.5}/\epsilon^4)$.

Let $\delta>0$ be the parameter mentioned in \cref{thm:XORlemma}.
For each $i\in\{1,\ldots,m\}$,
apply \cref{thm:DPTheorem} using $M_i$ as the oracle $C$ mentioned in \cref{thm:DPTheorem}.
This yields a list $M_{i,1},\ldots,M_{i,m'}$ of deterministic algorithms for each $i\in\{1,\ldots,m\}$, where $m'=O(1/\epsilon)$.
Moreover, if $M_{i^*}$ solves $(\Pi,\mathcal{D})^{2k}$, then $M_{i^*,j^*}$ solves
$\Pi$ with success probability $\delta$ for some $j^*\in\{1,\ldots,m'\}$.
For every $i,j$, $M_{i,j}$ runs in time $O(t(n) k^{2.5}/\epsilon^2 \cdot (\log 1/\delta)/\epsilon) \leq O(t(n) k^{2.5}\log(1/\delta)/\epsilon^3)$.

Now we have a list $(M_{i,j})$ of $mm'=O(k/\epsilon^3)$ deterministic algorithms.
From \cref{lem:solve_by_selector},
there exists an algorithm $B$ that solves $\Pi$ with high probability.
The overall running time of $B$ is at most 
$\widetilde{O}\bigl((mm')^2(t_S(n)+t(n)k^{2.5}\log(1/\delta)/\epsilon^3)\log(\sqrt{k}/\epsilon)\log (mm')\bigr) \leq \widetilde{O}\bigl((t_S(n)+t(n))\cdot \log(1/\delta) (k/\epsilon)^8\bigr)$.
\end{proof}

\subsection{Application 1: \texorpdfstring{$\oplus \mathsf{EMB}_{\mathrm{col}}^{(H)}$}{Embcol-Parity}}
Let $H$ be a fixed graph.
Consider the problem
$\numberembcolparity{H}$ of computing the parity of
$\embcol{G}{H}$ for a given graph $G$.
For a parameter $k$, let $\distributionparity{H}$ be the distribution of random graphs that is a direct sum of $k$ i.i.d.~graphs drawn from $\distribution{H}$.
That is, let $G_1,\ldots,G_k\sim \distribution{H}$ be i.i.d.~random graphs.
Suppose that $G(V_i)\cap G(V_j)=\emptyset$ for any $i\neq j$.
Then, the graph $G$ defined by $V(G)=\bigcup_{i=1}^k V(G_i)$ and $E(G)=\bigcup_{i=1}^k E(G_i)$ forms the distribution $\distributionparity{H}$.
Let $\existembcol{H}$ be the decision problem in which we are asked to decide whether $\embcol{H}{G}>0$ or not for a given graph $G$.
This subsection is devoted to the following result.
\begin{theorem} \label{thm:average_hardness_embcolparity}
Suppose that there exists a $t(n)$-time heuristic algorithm solving the distributional problem $(\numberembcolparity{H},\distributionparity{H})$ with success probability $\frac{1}{2}+\epsilon$ far any $k=O(\log \epsilon^{-1})$.
Then, there exists a $t(n)\cdot (\log n/\epsilon)^{O(1)}$-time randomized
algorithm solving $\existembcol{H}$
with probability $2/3$.
\end{theorem}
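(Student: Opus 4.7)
The plan is to combine the XOR Lemma (\cref{thm:XORlemma}) with a Valiant--Vazirani-style reduction from existence to parity. First I observe that for any graphs $G_1,\dots,G_k\subseteq K_n\times H$ we have $\embcol{H}{G_1\uplus\cdots\uplus G_k}=\sum_i\embcol{H}{G_i}$, so $\numberembcolparity{H}$ on the distribution $\distributionparity{H}$ is exactly the $k$-wise XOR $\kparity{k}$ of $\numberembcolparity{H}$ on the product distribution $\distribution{H}^k$. Thus the hypothesis supplies precisely the average-case oracle required by \cref{thm:XORlemma} with base problem $\Pi:=\numberembcolparity{H}$, distribution $\mathcal{D}:=\distribution{H}$, parameter $k=O(\log 1/\epsilon)$, and success probability $\tfrac12+\epsilon$.

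The only missing ingredient to invoke \cref{thm:XORlemma} is a $\polylog(n)$-query selector from worst-case $\numberembcolparity{H}$ to the distributional problem $(\numberembcolparity{H},\distribution{H})$ with success probability $1-1/\polylog(n)$. My plan is to construct such a selector along the lines of \cref{thm:SelectorForEmbcolH}, by pairing a worst-case-to-average-case reduction with an instance checker. The worst-case-to-average-case reduction is a Reed--Muller self-correction directly over $\mathbb{F}_2$: the polynomial $\embcolpoly{n}{H}{2}$ has constant total degree $d:=|E(H)|$, and a uniformly random affine $\mathbb{F}_2$-subspace of dimension $r=d+O(1)$ through $x$ has $2^r-1$ non-$x$ points that are pairwise independent and uniformly distributed in $\{0,1\}^{|E(K_n\times H)|}$, so Markov's inequality limits the number of oracle errors to below $2^{r-d-1}$ with probability $1-O(2^{d+1}/\polylog(n))$, and Reed--Muller unique decoding then recovers $\embcolpoly{n}{H}{2}(x)$ from these $O(1)$ queries. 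The instance checker is obtained by rerunning the interactive proof of \cref{thm:embcolIP} over the extension field $\mathbb{F}_{2^m}$ with $m=O(\log n)$, where each prover message is an $\mathbb{F}_{2^m}$-valued polynomial whose evaluation, after basis decomposition of $\mathbb{F}_{2^m}/\mathbb{F}_2$ together with the binary-expansion trick underlying \cref{lem:Fqtobin}, reduces to $\polylog(n)$ queries of $\numberembcolparity{H}$ on graphs distributed (essentially) as $\distribution{H}$. Feeding this selector and the hypothesis into \cref{thm:XORlemma} with $k=\Theta(\log 1/\epsilon)$ produces a randomized algorithm solving $\numberembcolparity{H}$ on every input in time $t(n)\cdot(\log n/\epsilon)^{O(1)}$.

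Finally, to obtain $\existembcol{H}$ I would apply the classical $\mathsf{NP}\subseteq\mathsf{BPP}^{\oplus\mathsf{P}}$ reduction of Valiant--Vazirani to the worst-case parity solver. Given an instance $G\subseteq K_n\times H$, for each density level $\ell\in\{0,1,\dots,|V(H)|\lceil\log n\rceil\}$ I sample random vertex subsets $T_i^{(\ell)}\subseteq V_i$ of calibrated density and query the parity solver on $G_\ell:=G[\bigcup_i T_i^{(\ell)}]$. A standard second-moment analysis guarantees that if $\embcol{H}{G}=0$ every such query returns $0$, while if $\embcol{H}{G}>0$ then for the level $\ell$ where the expected number of surviving colorful embeddings is $\Theta(1)$ the parity of the restricted count equals $1$ with constant probability. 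Accepting iff some query over all $\ell$ and $O(\log n)$ independent repetitions returns $1$ yields the claimed detector, and its running time inherits the factor $t(n)\cdot(\log n/\epsilon)^{O(1)}$ from the parity solver.

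The hardest step will be the selector construction in the $\mathbb{F}_2$ setting. The large-prime field approach of \cref{thm:worst-to-average_precise} does not transfer verbatim since we have oracle access only to parity outputs, and sum-check directly over $\mathbb{F}_2$ has trivial soundness: the interactive proof must run over the extension $\mathbb{F}_{2^m}$, and one must verify carefully that the prover's $\mathbb{F}_{2^m}$-valued messages decompose under the $\mathbb{F}_{2^m}/\mathbb{F}_2$-basis into $\numberembcolparity{H}$ queries on \emph{individual} binary graphs --- an extension of \cref{lem:Fqtobin} that is conceptually natural but technically delicate. A secondary subtlety is the Valiant--Vazirani isolation step: because colorful embeddings are indexed by products of color classes, pairs sharing vertices carry positive covariance under random vertex sampling, so the isolation argument needs a tailored second-moment bound rather than a black-box appeal to the classical isolation lemma.
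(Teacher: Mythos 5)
Your overall skeleton matches the paper's: observe that $\numberembcolparity{H}$ on $\distributionparity{H}$ is exactly the $k$-wise XOR of $\numberembcolparity{H}$ on $\distribution{H}^k$, build a $\polylog(n)$-query selector for $\numberembcolparity{H}$, feed both into the fine-grained XOR Lemma (\cref{thm:XORlemma}) to get a \emph{worst-case} solver for $\numberembcolparity{H}$, and finally reduce $\existembcol{H}$ to that worst-case parity solver. The paper does the same thing via \cref{thm:XOR_for_embcolparity_refine} followed by \cref{lem:detection_to_parity}. Your alternative WCAR over $\mathbb{F}_2$ (random affine subspaces of dimension $|E(H)|+O(1)$ through the target point, unique Reed--Muller decoding) is a plausible and arguably more direct route than the paper's encode-in-$\mathbb{F}_{2^t}$-plus-binary-expansion reduction; the pairwise-independence and Markov-bound sanity check you run is sound.

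The genuine gap is in your last step, the reduction from $\existembcol{H}$ to the worst-case $\numberembcolparity{H}$ solver. You propose a Valiant--Vazirani isolation over random vertex subsets at multiple calibrated densities, and appeal to a ``standard second-moment analysis.'' This does not work as stated: a second moment bound controls concentration of the surviving count, not its \emph{parity}; a count that is tightly concentrated around $\Theta(1)$ could still be even with overwhelming probability. The textbook Valiant--Vazirani isolation lemma would need a pairwise-independent hash family over the set of colorful $H$-copies, and random vertex subsampling is not such a family precisely because overlapping embeddings survive or die together --- the very covariance problem you flag and defer. The paper sidesteps all of this with a purely algebraic argument (\cref{lem:detection_to_parity}): define $P_G(x)=\sum_{F\cong H\text{ colorful}}\prod_{e\in F}x_e$ over $\mathbb{F}_2^{E(G)}$; this has degree $|E(H)|$, is identically zero iff $\embcol{H}{G}=0$, and by the DeMillo--Lipton--Schwartz--Zippel bound over $\mathbb{F}_2$ a nonzero degree-$d$ polynomial is nonzero on at least a $2^{-d}$ fraction of the cube. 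Since $P_G(x)=\numberembcolparity{H}(G_x)$ for $G_x$ the random edge-subgraph, it suffices to query the parity solver on $O(2^{|E(H)|})$ independent uniformly random subgraphs --- no calibration, no isolation, no second-moment bound. (The same argument works with vertex indicators and degree $|V(H)|$, i.e.\ your density-$1/2$ vertex subsets already suffice if you replace ``second moment'' by Schwartz--Zippel.) Without this observation your last step is not a complete reduction.
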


The proof of \cref{thm:average_hardness_embcolparity} consists of the following three steps.
First, we present a randomized reduction of $\existembcol{H}$ to $\numberembcolparity{H}$ in the worst-case sense.
Then, we check that the parity problem $\numberembcolparity{H}$ admits a $\widetilde{O}(n^2)$-time selector with $\polylog(n)$ queries.
Finally, we apply \cref{thm:XORlemma} to boost the error tolerance.
The second and third steps imply \cref{thm:XOR_for_embcolparity}.
More specifically, we obtain the following.

\begin{theorem}[Refinement of \cref{thm:XOR_for_embcolparity}] \label{thm:XOR_for_embcolparity_refine}
Let $H$ be an arbitrary graph.
Suppose that there exists a $T(n)$-time randomized heuristic algorithm that solves
$(\numberembcolparity{H}, \distributionparity{H})$ with success probability greater than $\frac{1}{2}+\epsilon$ for any $k=O(\log \epsilon^{-1})$.
Then, there exists a $T(n)(\log n/\epsilon)^{O(1)}$-time randomized algorithm that solves $\numberembcolparity{H}$ with probability at least $2/3$ for any input.
\end{theorem}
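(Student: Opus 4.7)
The plan is to apply the fine-grained XOR Lemma of \cref{thm:XORlemma} with $\Pi := \numberembcolparity{H}$ and $\mathcal{D} := \distribution{H}$. First observe that solving $(\numberembcolparity{H}, \distributionparity{H})$ with advantage $\epsilon$ is the same task as solving $(\kparity{k}, \distribution{H}^k)$ with advantage $\epsilon$: if $G = G_1 \uplus \cdots \uplus G_k$ with each $G_i \sim \distribution{H}$, then every colorful $H$-subgraph sits inside a single component, so $\numberembcolparity{H}(G) = \bigoplus_{i=1}^k \numberembcolparity{H}(G_i)$. Hence the hypothesis of \cref{thm:XOR_for_embcolparity_refine} matches the hypothesis of \cref{thm:XORlemma} verbatim.

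All that remains is to construct a $\widetilde{O}(n^2)$-time, $\polylog(n)$-query selector from $\numberembcolparity{H}$ to $(\numberembcolparity{H}, \distribution{H})$ with success probability $1 - 1/\polylog(n)$. I would mirror the proof of \cref{thm:SelectorForEmbcolH}: design a parity-version instance checker using the IP machinery of \cref{sec:doubly_efficient_IP}, and then compose it with a parity-version of the worst-case-to-average-case reduction of \cref{thm:worst-to-average_precise}, so that each query produced by the checker is answered by $\polylog(n)$ calls to the assumed $(\numberembcolparity{H}, \distribution{H})$ oracle.

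The adaptation to parity replaces the prime field $\mathbb{F}_q$ throughout \cref{subsec:step1_reduce_wrs_to_Fq,subsec:step2_reduce_Fq_to_01,sec:ip} by the extension field $\mathbb{F}_{2^t}$ for $t = \Theta(\log n)$. Because $\mathbb{F}_{2^t}$ has characteristic two, evaluating the arithmetized polynomial $\embcolpoly{n}{H}{2^t}$ on the edge indicator $x_G \in \{0,1\}^{E(K_n \times H)}$ returns exactly $\numberembcolparity{H}(G)$, so an average-case $\numberembcolparity{H}$ oracle on $\distribution{H}$ is precisely an average-case evaluator for this polynomial on uniform binary inputs. The downward-reducibility identity \cref{eq:embcolpolydownwardreducibility}, the sum-check protocol of \cref{sec:ip}, and the Reed--Muller local decoding of \cref{lem:localdecoding} all go through over any finite field whose size exceeds a constant multiple of the polynomial's degree $|E(H)|(2^{|V(H)|}-1) = O_H(1)$, which $\mathbb{F}_{2^t}$ easily satisfies. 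The binary-expansion step of \cref{subsec:step2_reduce_Fq_to_01} would use the polynomial-basis representation $x = \sum_{i=0}^{t-1} b_i \alpha^i$ of $\mathbb{F}_{2^t}$; the analog of \cref{eq:EMBCOLpolybinaryexpand} then holds with $\alpha^{a[e]}$ replacing $2^{a[e]}$, and the coordinates $b_0, \ldots, b_{t-1}$ of a uniform field element are already i.i.d.\ uniform in $\binset$, which in fact makes \cref{lem:binaryexpand} trivial in this setting.

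The main obstacle I anticipate is routine but delicate: verifying that each piece of the above arithmetic substitution preserves soundness, and that the prover's messages in the adapted IP are still reducible to $\polylog(n)$ instances of $\numberembcolparity{H}$ on inputs distributed according to $\distribution{H}$ (so that the average-case oracle can play honest prover). Once the selector is assembled, feeding it together with the hypothesised $T(n)$-time, $(\tfrac12 + \epsilon)$-advantage heuristic into \cref{thm:XORlemma} with $\delta = 1 - 1/\polylog(n)$ and $k = O(\log 1/\epsilon)$ yields a $T(n)(\log n/\epsilon)^{O(1)}$-time randomized algorithm that solves $\numberembcolparity{H}$ on every input with probability $2/3$, which is exactly the claimed conclusion.
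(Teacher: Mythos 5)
Your proposal is correct and follows essentially the same route as the paper: observe that $(\numberembcolparity{H}, \distributionparity{H})$ is identical to the $k$-wise XOR $(\kparity{k}, \distribution{H}^k)$, build a $\polylog(n)$-query selector for $\numberembcolparity{H}$ by porting \cref{thm:SelectorForEmbcolH} (instance checker from \cref{sec:doubly_efficient_IP} composed with the worst-to-average reduction) to $\mathbb{F}_{2^t}$, and plug everything into \cref{thm:XORlemma}. One small refinement in your write-up worth highlighting: the paper only says the binary-expansion step over $\mathbb{F}_{2^t}$ goes ``using the technique of \cite{BBB19},'' whereas you observe that in characteristic $2$ the polynomial-basis coordinates of a uniform field element are already i.i.d.\ uniform bits, so \cref{lem:binaryexpand} and the associated total-variation bookkeeping become vacuous; this is a genuine (if minor) simplification of the selector construction for this special case.
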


\begin{remark}
\Cref{thm:XOR_for_embcolparity}
immediately follows from \cref{thm:XOR_for_embcolparity_refine}
(substitute $\epsilon=n^{-c}$ to \cref{thm:XOR_for_embcolparity_refine}).
\end{remark}

\paragraph*{Parity vs.~Detection.}
\begin{lemma} \label{lem:detection_to_parity}
Suppose that there exists a $t(n)$-time randomized algorithm solving $\numberembcolparity{H}$ for any input with probability at least $2/3$.
Then, there exists a $t'(n)$-time randomized algorithm that solves $\existembcol{H}$ with probability at least $2/3$.
Here, $t'(n)=O(2^{|E(H)|}t(n))$.
\end{lemma}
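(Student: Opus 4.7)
The plan is to reduce detection to parity via randomized edge subsampling. I will exploit the fact that the parity of $\embcol{H}{G \cap Z}$, viewed as a function of a uniform Boolean vector $Z$, is a low-degree polynomial over $\mathrm{GF}(2)$ which will turn out to be identically zero iff $\embcol{H}{G}=0$.

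For $G \subseteq K_n \times H$ and $Z \in \binset^{E(K_n \times H)}$, let $G \cap Z$ denote the subgraph of $G$ retaining the edges $e$ with $Z_e = 1$. I will work with
\begin{align*}
f(Z) \defeq \embcol{H}{G \cap Z} \bmod 2 = \sum_\phi \prod_{uv \in E(H)} x_{\phi(u)\phi(v)} \cdot Z_{\phi(u)\phi(v)} \pmod{2},
\end{align*}
where $\phi$ ranges over color-preserving injections $V(H) \to V(G)$ and $x_e \defeq \mathbbm{1}[e \in E(G)]$. Clearly $f$ is multilinear of degree at most $|E(H)|$, and only the actual embeddings $\phi$ contribute surviving monomials (any non-embedding has some $x_{\phi(u)\phi(v)} = 0$). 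The core combinatorial step will be to argue that, when $H$ has no isolated vertices, distinct color-preserving injections produce distinct monomials: because the color classes $V_u$ are pairwise disjoint, each potential edge of $K_n \times H$ is uniquely identified by the unordered pair of its endpoint colors, so if $\phi(v) \neq \phi'(v)$ for some $v$, the edges of $\phi$ incident to $\phi(v)$ (at least one such edge exists since $v$ is not isolated) cannot match any edge in the image of $\phi'$. Consequently no cancellation can occur in the $\mathrm{GF}(2)$ sum, and $f \not\equiv 0$ whenever $\embcol{H}{G} > 0$. An isolated vertex $v \in V(H)$ will be handled by a trivial preprocessing step: it contributes a factor $|V_v|$ to $\embcol{H}{G}$, so one first checks $|V_v| \geq 1$ and then recurses on $H - v$.

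Combining this with the standard $\mathrm{GF}(2)$ Schwartz--Zippel bound---any nonzero multilinear polynomial of degree $d$ over $\mathrm{GF}(2)$ evaluates to $1$ with probability at least $2^{-d}$ on a uniform binary input, which follows by a short induction on the number of variables---I obtain $\Pr_Z[f(Z) = 1] \geq 2^{-|E(H)|}$ whenever $\embcol{H}{G} > 0$, while $f \equiv 0$ otherwise. The algorithm will therefore draw $N = O(2^{|E(H)|})$ independent uniform vectors $Z$, invoke the parity oracle on each $G \cap Z$, and output \emph{yes} iff any call returns $1$. Amplifying the parity oracle's error to at most $1/(6N)$ costs $O(\log N) = O(|E(H)|) = O(1)$ repetitions for fixed $H$, which is absorbed into the constant factor; a union bound over oracle errors combined with the $2^{-|E(H)|}$ hitting probability yields overall success probability at least $2/3$ in time $O(2^{|E(H)|}\, t(n))$. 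The main point requiring care will be the distinct-monomials argument, which is the step where the colorful structure $G \subseteq K_n \times H$ is used essentially.
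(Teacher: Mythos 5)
Your proposal is correct and takes essentially the same route as the paper: both view $\numberembcolparity{H}(G\cap Z)$ as a degree-$|E(H)|$ polynomial over $\mathbb{F}_2$ in the random edge-indicator $Z$, observe it is nonzero iff $\embcol{H}{G}>0$, and invoke the $2^{-\deg}$ hitting probability for nonzero $\mathbb{F}_2$-polynomials. The paper indexes the sum by colorful subgraphs (making the distinct-monomials claim immediate), while you index by color-preserving embeddings and then prove distinctness directly, and you also spell out the isolated-vertex preprocessing and the oracle-error amplification that the paper leaves implicit; these are presentational refinements rather than a different proof.
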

\begin{proof}
The proof is essentially given in Appendix A of \cite{BBB19}.
For completeness, we present the proof.
Consider the polynomial $P_G:\mathbb{F}_2^{E(G)}\to\mathbb{F}_2$ defined as
\begin{align*}
    P_G(x)\defeq \sum_{\substack{F\subseteq E(G):\\F\text{ is isomorphic to }H}} \prod_{e\in F} x_e.
\end{align*}
Then, $G$ does not contain $H$ if and only if $P_G(\cdot)\equiv 0$.
The degree of $P_G$ is $|E(H)|$.
Moreover, if $P_G(\cdot)\not\equiv 0$, then $P_G(z)=1$ for at least $2^{-|E(H)|}$ fraction of $z\in\mathbb{F}_2^{|E(G)|}$ (see, e.g., Lemma 2.6 of \cite{NS94}).

Now we present an algorithm that solves $\existembcol{H}$ using an oracle that solves $\numberembcolparity{H}$.
Let $m=100\cdot 2^{|E(H)|}$ and sample $m$ i.i.d.~random vectors $z_1,\ldots,z_m \sim \Unif{\mathbb{F}_2^{E(G)}}$.
Then, compute $P_G(z_1),\ldots,P_G(z_m)$.
If $P_G(z_i)=1$ for some $i$, output YES.
Otherwise, output NO.
Note that one can compute $P_G(\cdot)$ by solving $\numberembcolparity{H}$
since $P_G(\cdot)$ is a polynomial over $\mathbb{F}_2$.

If $G$ does not contain $H$, the algorithm outputs NO with probability $1$.
If $G$ contains $H$, the probability that the algorithm outputs NO is at most $(1-2^{-|E(H)|})^m \leq \mathrm{e}^{-100}$.
\end{proof}

\paragraph*{Selector for $\oplus \mathsf{EMB}^{(H)}_{\mathrm{col}}$.}
\begin{theorem}
\label{thm:SelectorForEmbcolHparity}
There exists a selector $S$
from $\numberembcolparity{H}$ to $(\numberembcolparity{H}, \distribution{H})$ with success probability $1 - 1 / \polylog(n)$ such that
(1) $S$ runs in time $\widetilde{O}(n^2)$, and
(2) The number of oracle accesses is at most $\polylog(n)$.
\end{theorem}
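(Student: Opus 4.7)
The plan is to mirror the proof of \cref{thm:SelectorForEmbcolH} by building parity analogues of its two ingredients: an $\widetilde{O}(n^2)$-time, $\polylog(n)$-query instance checker for $\numberembcolparity{H}$ whose honest prover can be simulated given oracle access to $\numberembcolparity{H}$, and a quasi-linear-time, $\polylog(n)$-query worst-case-to-average-case reduction from $\numberembcolparity{H}$ to $(\numberembcolparity{H}, \distribution{H})$. Combining them in the same pattern as in the proof of \cref{thm:SelectorForEmbcolH} then yields the selector.

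For the instance checker, I would replay the interactive proof system of \cref{thm:embcolIP} after replacing the large prime field $\Fq$ by an extension field $\mathbb{F}_{2^t}$, where $t = \Theta(\log n)$. Working in characteristic two, the polynomial
\[
Q_n(x) \defeq \sum_{\substack{v_1,\ldots,v_k\in V(K_n\times H)\\ c(v_i)=i\ (\forall i)}} \prod_{ij\in E(H)} x[v_i v_j]
\]
over $\mathbb{F}_{2^t}^{E(K_n\times H)}$ satisfies $Q_n(x) = \embcol{H}{G} \bmod 2$ whenever $x$ is the $\{0,1\}$-indicator of $G$, and the downward-reducibility identity $Q_n(x) = \sum_{\eta\in\{0,1\}^{V(H)}} Q_{n/2}(x[E_\eta])$ continues to hold. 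The sum-check-style protocol then goes through verbatim, with the prover sending univariate polynomials of degree $D := |E(H)|(2^{|V(H)|}-1)$ over $\mathbb{F}_{2^t}$ in each of $O(\log n)$ rounds, incurring per-round soundness loss at most $D / 2^t$. The key point is that an honest prover is still simulable from a parity oracle: to evaluate $Q_{n/2}(\tilde{x}(\alpha))$ at $\alpha \in \mathbb{F}_{2^t}$, expand every coordinate of $\tilde{x}(\alpha)$ in the canonical $\mathbb{F}_2$-basis $(\beta^0,\ldots,\beta^{t-1})$ of $\mathbb{F}_{2^t}$ and distribute the product to get
\[
Q_{n/2}(\tilde{x}(\alpha)) = \sum_{a\in \{0,\ldots,t-1\}^{E(H)}} \beta^{\sum_{ij} a_{ij}} \cdot Q_{n/2}(\chi^{(a)}),
\]
where each $\chi^{(a)} \in \{0,1\}^{E(K_{n/2}\times H)}$ is binary; thus each round only needs $t^{|E(H)|} = \polylog(n)$ parity queries on binary inputs, i.e., instances of $\numberembcolparity{H}$. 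The instance checker for $\numberembcolparity{H}$ then follows exactly as \cref{thm:InstanceCheckerForEmbcol} followed from \cref{thm:embcolIP}.

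For the worst-case-to-average-case reduction, I would run the two-step scheme of \cref{subsec:step1_reduce_wrs_to_Fq,subsec:step2_reduce_Fq_to_01} with $\mathbb{F}_{2^t}$ in place of $\Fq$. Step 1 is a Berlekamp--Welch local decoder for degree-$D$ polynomials over $\mathbb{F}_{2^t}$ at uniformly random points, which works whenever $2^t \gg D$. Step 2 is in fact cleaner than \cref{lem:Fqtobin}: every element of $\mathbb{F}_{2^t}$ is literally a binary vector of length $t$ in any $\mathbb{F}_2$-basis, so a uniform $\mathbb{F}_{2^t}$-input corresponds \emph{exactly} (no total variation slack) to uniform independent bits, and the expansion identity above reduces evaluation at a random $\mathbb{F}_{2^t}$-point to $\polylog(n)$ evaluations on independent $\distribution{H}$-samples. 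The selector is then assembled as in the proof of \cref{thm:SelectorForEmbcolH}: on input $G$ with oracles $A_0,A_1$, for each $b \in \{0,1\}$ simulate the instance checker $\Checker(G)$ while answering every query $q$ by the reduction $R^{A_b}(q)$, and output the first non-$\mathsf{fail}$ answer returned. The main calibration to watch is the choice of $t = \Theta(\log n)$: it must be large enough that the cumulative soundness loss $O(D \log n / 2^t)$ and the local-decoding error both drop below $1/\polylog(n)$, yet small enough to keep the total query complexity $t^{O(|E(H)|)}$ subpolynomial; both requirements fit comfortably.
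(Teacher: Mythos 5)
Your proposal is correct and follows essentially the same route as the paper's proof: replace $\Fq$ by $\mathbb{F}_{2^t}$ in both the interactive proof system (to get an instance checker for $\numberembcolparity{H}$ whose honest prover decomposes into $\polylog(n)$ parity queries via the $\mathbb{F}_2$-basis expansion of $\embcolpoly{n}{H}{\mathbb{F}_{2^t}}$) and the worst-case-to-average-case reduction, then assemble the selector exactly as in the proof of \cref{thm:SelectorForEmbcolH}. The one small thing you make explicit that the paper merely attributes to the technique of \cite{BBB19} is that over $\mathbb{F}_{2^t}$ the basis decomposition is an exact bijection between $\mathbb{F}_{2^t}$ and $\{0,1\}^t$, so the TV-distance slack appearing in \cref{lem:Fqtobin} for prime $q$ disappears entirely.
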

\begin{proof}
The proof is basically the same as that of \cref{thm:SelectorForEmbcolH}.
To construct a worst-case-to-average-case reduction $R'$ for $\numberembcolparity{H}$,
we encode $\numberembcolparity{H}$ to the low-degree polynomial $\embcolpoly{n}{H}{\mathbb{F}_{2^t}}$.
Note that, since $\mathbb{F}_{2^t}$ has characteristic $2$ (i.e., $a+a=0$ for any $a\in\mathbb{F}_{2^t}$), computing the polynomial $\embcolpoly{n}{H}{\mathbb{F}_{2^t}}(x)$ 
for $x\in \binset^{E(H)\times K_n}$ is equivalent
to solving $\numberembcolparity{H}$ by
regarding the input $x$ as the edge indicator of a graph.
Using \cref{cor:worsttoaverage},
we reduce computing $\embcolpoly{n}{H}{\mathbb{F}_{2^t}}$
to solving the distributional problem
$(\embcolpoly{n}{H}{\mathbb{F}_{2^t}}, \distU{n}{H}{\mathbb{F}_{2^t}})$.
Moreover, we can reduce
$(\embcolpoly{n}{H}{\mathbb{F}_{2^t}}, \distU{n}{H}{\mathbb{F}_{2^t}})$
to
$(\embcolpoly{n}{H}{\mathbb{F}_{2^t}}, \distU{n}{H}{\mathbb{F}_2})$
with query complexity $(\log n)^{O(|E(H)|)}$
using the technique of~\cite{BBB19}.
This yields a worst-case-to-average-case reduction for $\numberembcolparity{H}$ (c.f., \cref{thm:worst-to-average_precise}).

Similarly, a slight modification of the interactive proof system $\IP$ of \cref{thm:embcolIP} works for $\numberembcolparity{H}$.
To be more specifically, let us consider an interactive proof system $\IP'$ for the statement ``$\numberembcolparity{H}(G)=b$".
The protocol $\IP'$ is the same as $\IP$ except for using $\mathbb{F}_{2^t}$ instead of $\Fq$.
Note that the equation~\cref{eq:embcolpolydownwardreducibility} holds even for $\embcolpoly{n}{H}{\mathbb{F}_{2^t}}$.
Moreover, computing the polynomial $\embcolpoly{n}{H}{\mathbb{F}_{2^t}}$ can be reduced to computing $\embcolpoly{n}{H}{\mathbb{F}_2}$ using the aforementioned technique of~\cite{BBB19}.

Using the interactive proof system $\IP'$ for $\numberembcolparity{H}$, we can construct an $\widetilde{O}(n^2)$-time $\polylog(n)$-query instance checker $C'$ for $\numberembcolparity{H}$ (see \cref{thm:InstanceCheckerForEmbcol}).
Combining the instance checker $C'$ and the worst-case-to-average-case reduction $R'$,
we can construct the desired selector (see the proof of \cref{thm:SelectorForEmbcolH}).
\end{proof}

\paragraph*{XOR lemma for $\oplus\mathsf{EMB}_{\mathrm{col}}^{(H)}$ (proof of \cref{thm:XOR_for_embcolparity_refine}).}
Assume that there exists a $t(n)$-time heuristic algorithm $A$
that solves $(\numberembcolparity{H},\distributionparity{H})$
with success probability $\epsilon$.
Note that the distributional problem $(\bigoplus (\numberembcolparity{H})^k, (\distribution{H})^k)$
is equivalent to the distributional problem
$(\numberembcolparity{H},\distributionparity{H})$.
Hence, the algorithm $A$ also solves $(\bigoplus (\numberembcolparity{H})^k,(\distribution{H})^k)$.
From \cref{thm:SelectorForEmbcolHparity},
there exists an $\widetilde{O}(n^2)$-time selector using $\polylog(n)$ oracle
accesses from $\numberembcolparity{H}$ to $(\numberembcolparity{H},\distribution{H})$
with success probability $\delta\defeq 1-(\log n)^{-C}$ for some constant $C>0$ that depends only on $H$.
Let $k=k(n)$ be the parameter such that the assumptions of \cref{thm:XORlemma} is satisfied.
Since $\delta =1-(\log n)^{-C}$, we can set $k=O(\log \epsilon^{-1})$.
Then, by \cref{thm:XORlemma}, we have an $t'(n)$-time
ranndomized algorithm that solves $\numberembcolparity{H}$ with high probability,
where $t'(n)=\widetilde{O}((n^2+t(n))\cdot (k/\epsilon)^8=t(n)\cdot (\log n/\epsilon)^{O(1)}$ (here we assume $t(n)\geq n^2$).

\paragraph*{Proof of \cref{thm:average_hardness_embcolparity}.}
We combine \cref{lem:detection_to_parity,thm:XOR_for_embcolparity_refine}.
Suppose that there exists a $t(n)$-time heuristic algorithm solving $(\numberembcolparity{H},\distributionparity{H})$ with success probability $\epsilon$.
From \cref{thm:XOR_for_embcolparity_refine},
there exists a $t(n)\cdot (\log n/\epsilon)^{O(1)}$-time randomized algorithm for $\numberembcolparity{H}$.
Then, from \cref{lem:detection_to_parity}, we obtain
a $2^{|E(H)|}\cdot t(n)\cdot (\log n/\epsilon)^{O(1)}$-time randomized algorithm for $\existembcol{H}$.

\if0
\subsection[Application 2: Kab-Parity]{Application 2: $\KabParity$} \label{sec:KabParity}
Let $\KabParity$ be the problem
in which we are asked to compute the
parity of the number of $K_{a,b}$ subgraphs
in a given input graph\footnote{We do not consider the number of embeddings here since computing the parity of $\emb{H}{G}$ is trivial if the number of automorphisms of $H$ is even.}.
This part is devoted to show \cref{thm:main_parity}, the average-case hardness
of $\KabParity$.

\begin{lemma}[Parity vs.~Detection for $K_{a,b}$] \label{lem:parity_vs_detection_Kab}
Suppose that
there exists a
$t(n)$-time randomized algorithm
solving $\KabParity$ for
any input with probability at least $2/3$.
Then,
there exists an $O(2^{a+b}t(n))$-time
randomized algorithm
solving $\existembcol{K_{a,b}}$
with success probability at least $2/3$.
\end{lemma}
\begin{proof}
We combine \cref{lem:detection_to_parity,prop:ToDistKab}.
From the proof of \cref{prop:ToDistKab} with taking modulo $2$,
we can solve $\numberembcolparity{K_{a,b}}$ in time $O(2^{a+b}t(n))$ using a $t(n)$-time solver for $\KabParity$ (here we boost the success probability of the solver by repetition).
Note that \cref{prop:ToDistKab} reduces $\numberembcol{K_{a,b}}$ to the $K_{a,b}$ subgraph counting problem (see \cref{subsec:step3}).
From \cref{lem:detection_to_parity},
this yields a $2^{O(a+b)}t(n)$ time algorithm for $\existembcol{K_{a,b}}$.
\end{proof}

\begin{corollary}
\label{cor:SelectorForKabParity}
There exists an $\widetilde{O}(n^2)$-time selector $S$
from $\KabParity$ to $(\KabParity, \DistKab{a}{b}{n})$ with success probability $1 - 1 / \polylog(n)$.
Moreover, the number of oracle accesses of $S$ is
at most $\polylog(n)$.
\end{corollary}

\begin{proof}
The proof is the same as that of \cref{cor:SelectorForKabCount}
except for taking madulo $2$.
\end{proof}

\begin{proof}[Proof of \cref{thm:main_parity}]
We combine \cref{thm:XORlemma,cor:SelectorForKabParity}.
Assume that there exists a $t(n)$-time heuristic algorithm $A$
that solves the distributional problem $(\KabParity,\biguplus^k\DistKab a b n)$
with success probability $\epsilon\defeq n^{-\alpha/9}$,
where $\biguplus^k\DistKab a b n$ is the distribution of the direct sum of $k$ i.i.d.~random graphs drawn from $\DistKab a b n$ and $t(n)=n^{a-\alpha}$.
Choose $k=k(n)\leq \alpha\log(n)$ large enough so that
the assumptions of \cref{thm:XORlemma} is satisfied.
Then, from \cref{cor:SelectorForKabParity,thm:XORlemma}, there exists an
$\widetilde{O}((n^2+t(n))\cdot n^{8\alpha/9})=\widetilde{O}(n^{a-\alpha/9})$-time
randomized algorithm that solves $\KabParity$ with high probability.
This yields an $\widetilde{O}(n^{a-\alpha/9})$-time randomized algorithm solving with high probability from \cref{lem:parity_vs_detection_Kab}.

Now we claim that solving $\existembcol{K_{a,b}}$ in time $O(n^{a-\epsilon})$ is SETH-hard for some $b=b(a,\epsilon)$.
Indeed, the reduction of \cref{thm:mainthm1} (see \cref{subsec:SETH-hardness_of_ColorfulKabDetect}) generates a graph $G$ satisfying $G\subseteq K_n\times K_{a,b}$.
In other words, $G$ can be seen as an instance of $\existembcol{K_{a,b}}$.
Therefore, for any constant $\gamma>0$ and $a\geq 3$, there exists $b=b(a,\gamma)$ such that an $O(n^{a-\gamma})$-time algorithm solving $\existembcol{K_{a,b}}$ refutes SETH.
\end{proof}
\fi

\subsection{Application 2: \texorpdfstring{$\KaParity$}{\#Ka-Parity}}
\label{subsec:KaParity}
Recall that $\KaParity$ is the problem of computing the parity of the number of $K_a$ subgraphs contained in a given graph.
This subsection is devoted to prove \cref{thm:worst_to_average_KaParity-intro}.
Recall that
$\DistER$ is the distribution of the Erd\H{o}s-R\'enyi graph $G(n,1/2)$,
and $\DistERDisjointUnion$ is the distribution of the disjoint union of $k$ random graphs $G_1,\ldots,G_k$ each of which is independently drawn from $\DistER$.
\begin{theorem}[Refinement of \cref{thm:worst_to_average_KaParity-intro}]
\label{thm:worst_to_average_KaParity_refine}
Suppose that there exists a $T(n)$-time randomized heuristic algorithm that solves $(\KaParity,\DistERDisjointUnion)$
with success probability $\frac{1}{2}+\varepsilon$
for any $k=O(\log \varepsilon^{-1})$.
Then, there exists a $T(n) (\log n/\varepsilon)^{O(1)}$-time randomized algorithm
that solves $\KaParity$ for any input with probability $2/3$.
\end{theorem}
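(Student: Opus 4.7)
The plan is to instantiate the general fine-grained XOR lemma (\cref{thm:XORlemma}) with $\Pi = \KaParity$ and the distribution $\mathcal{D} = \DistER$. If I can supply a $\polylog(n)$-query selector from worst-case $\KaParity$ to $(\KaParity, \DistER)$ with success probability $1 - (\log n)^{-C}$ for a sufficiently large constant $C = C(a)$, then \cref{thm:XORlemma}, combined with the $T(n)$-time heuristic in the hypothesis, immediately yields a randomized worst-case algorithm for $\KaParity$ running in time $T(n)(\log n/\varepsilon)^{O(1)}$, which is the desired conclusion.

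First I would observe that because $K_a$ is connected, every clique of size $a$ in a disjoint union $G_1 \uplus \cdots \uplus G_k$ is entirely contained in one of the components; hence the count of $K_a$-subgraphs is additive over the disjoint union, and its parity equals $\KaParity(G_1) \oplus \cdots \oplus \KaParity(G_k)$. Consequently the distributional problem $(\KaParity, \DistERDisjointUnion)$ is literally identical to the $k$-wise XOR $\bigl(\bigoplus^{k}\KaParity, (\DistER)^k\bigr)$, and the hypothesis supplies the heuristic that \cref{thm:XORlemma} demands for every $k = O(\log \varepsilon^{-1})$.

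Next I would build the required selector by following the two-ingredient recipe used in \cref{thm:SelectorForEmbcolH,thm:SelectorForEmbcolHparity}: compose (i) a $\polylog(n)$-query instance checker for $\KaParity$ derived from a doubly-efficient interactive proof system, with (ii) a worst-case-to-average-case reduction from $\KaParity$ to $(\KaParity, \DistER)$. For (i), I would take the $O(\log n)$-round interactive proof system of \cref{thm:embcolIP} applied to $\numberembcolparity{K_a}$, reinterpreted over $\mathbb{F}_{2^t}$ exactly as in the proof of \cref{thm:SelectorForEmbcolHparity}, and then use \cref{lem:coluncolequivalent} (specialized to $H=K_a$) to translate between $\KaParity$ and $\numberembcolparity{K_a}$. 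For (ii), I would first invoke the parity analogue of \cref{thm:worst-to-average_precise} described in \cref{thm:SelectorForEmbcolHparity}, reducing $\numberembcolparity{K_a}$ to $(\numberembcolparity{K_a}, \distribution{K_a})$, and then reduce $(\numberembcolparity{K_a}, \distribution{K_a})$ to $(\KaParity, \DistER)$ by the inclusion-exclusion identity of \cref{prop:ToDistKab} taken modulo~$2$. The key observation making this last step work is that for $G \sim \distribution{K_a}$ the induced subgraph $G[V_{\overline J}]$ on any union of color classes is distributed as the Erd\H{o}s--R\'enyi graph on $|\overline J|\,n$ vertices, so each of the $2^{a}=O(1)$ inclusion-exclusion terms collapses to a single query to an algorithm for $(\KaParity, \DistER)$.

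The main obstacle I anticipate is carefully tracking the success-probability budget along this composition: the two parity reductions, the $O(2^a)$ inclusion-exclusion terms, and the $\polylog(n)$ verifier queries of the interactive proof each contribute a union-bound loss, and I must verify that choosing $C=C(a)$ large enough forces the composite selector to succeed with probability at least $1 - (\log n)^{-C'}$ for the constant $C'$ required by \cref{thm:XORlemma}. The other thing to be careful about is query complexity: because the inclusion-exclusion reduction adds only an $O(2^a)=O(1)$ multiplicative overhead and the checker makes $\polylog(n)$ verifier queries, the composite selector still uses $\polylog(n)$ queries and runs in $\widetilde{O}(n^2)$ time, so plugging into \cref{thm:XORlemma} preserves the stated $T(n)(\log n/\varepsilon)^{O(1)}$ running-time bound.
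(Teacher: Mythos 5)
Your overall architecture matches the paper's: you correctly observe that $(\KaParity,\DistERDisjointUnion)$ is the $k$-wise XOR of $(\KaParity,\DistER)$, then instantiate the general fine-grained XOR lemma (\cref{thm:XORlemma}) with a $\polylog(n)$-query selector from $\KaParity$ to $(\KaParity,\DistER)$ built by composing an instance checker with a worst-to-average reduction. This is exactly the skeleton of the paper's \cref{thm:worst_to_average_KaParity_refine} together with \cref{lem:SelectorForKaParity}.

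However, there is a genuine gap in your construction of the worst-to-average reduction, specifically the step from $(\numberembcolparity{K_a},\distribution{K_a})$ to $(\KaParity,\DistER)$. You propose to run the inclusion-exclusion identity of \cref{prop:ToDistKab} modulo~$2$ and justify it with the ``key observation'' that for $G\sim\distribution{K_a}$ the induced subgraph $G[V_{\overline J}]$ is distributed as the Erd\H{o}s--R\'enyi graph on $|\overline J|n$ vertices. This is false. The sample space of $\distribution{K_a}$ is the set of subgraphs of $K_n\times K_a$, and $K_n\times K_a$ is \emph{not} a complete graph: a pair $\{(x_1,i_1),(x_2,i_2)\}$ is an edge only when $x_1\neq x_2$ \emph{and} $i_1\neq i_2$. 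Consequently $G[V_{\overline J}]$ is a random subgraph of $K_n\times K_{|\overline J|}$ (an $|\overline J|$-partite-like structure that is additionally missing all pairs with a common first coordinate), which has the wrong edge support and hence is not distributed as $G(|\overline J|n,1/2)$. Moreover, for $H=K_a$ the inclusion-exclusion step is doing essentially no work: any $a$-clique inside a subgraph of $K_n\times K_a$ automatically uses $a$ distinct colors (same-colored vertices are non-adjacent), so $\embcol{K_a}{G}=\emb{K_a}{G}/a!$ for such $G$ directly. The real difficulty that remains is changing the input distribution from $\distribution{K_a}$ (a colored, $a$-partite-flavored random graph) to $\DistER$, and that does not follow from inclusion-exclusion over color classes. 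The paper handles precisely this step by invoking Lemma 3.10 of Boix-Adser\`a, Brennan, and Bresler [BBB19] --- the recursive ``de-partite'' reduction --- which the paper's own discussion notes is a separate inductive argument specific to cliques. You would need to either cite that lemma as the paper does, or supply some other argument filling in random edges within color classes (and along the same-$x$ pairs) while preserving the parity of clique counts; the inclusion-exclusion formula you use is the wrong tool for this distributional transformation.

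Secondary and easily repaired point: \cref{lem:coluncolequivalent} is stated for $K_{a,b}$, not $K_a$; the paper uses Lemma 3.3 of [BBB19] (the clique analogue) to pass between $\KaParity$ and $\numberembcolparity{K_a}$. Once the distributional step above is fixed, the rest of your composition (instance checker from the IP over $\mathbb{F}_{2^t}$, the parity worst-to-average reduction to $(\numberembcolparity{K_a},\distribution{K_a})$, $\polylog(n)$ query bookkeeping) is sound and agrees with the paper.
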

\begin{proof}[Proof of \cref{thm:worst_to_average_KaParity-intro}]
\Cref{thm:worst_to_average_KaParity_refine}
directly implies \cref{thm:worst_to_average_KaParity-intro} (let $\varepsilon=n^{-\epsilon}$).
\end{proof}
The core of the proof of \cref{thm:worst_to_average_KaParity_refine} is the existence of the following efficient selector.
\begin{lemma}
\label{lem:SelectorForKaParity}
There exists an $\widetilde{O}(n^2)$-time selector $S$
from $\KaParity$ to the distributional problem $(\KaParity, \DistER)$ with success probability $1 - 1 / \polylog(n)$.
Moreover, the number of oracle accesses of $S$ is
at most $\polylog(n)$.
\end{lemma}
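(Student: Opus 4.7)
The plan is to mimic the assembly of the selector in \cref{cor:SelectorForKabCount}, but carried out entirely in the parity setting. I would combine three ingredients: (i) an $\widetilde{O}(n^2)$-time $\polylog(n)$-query instance checker for $\KaParity$; (ii) the selector for $\numberembcolparity{K_a}$ already established in \cref{thm:SelectorForEmbcolHparity}; and (iii) a parity analog of \cref{prop:ToDistKab} that reduces the distributional problem $(\numberembcolparity{K_a}, \distribution{K_a})$ to $(\KaParity, \DistER)$. Given (i)--(iii), the selector itself is assembled exactly as in the proof of \cref{thm:SelectorForEmbcolH}: on input $G$ and oracles $A_0, A_1$ for $(\KaParity, \DistER)$, I would run the chain (ii)$\to$(iii) once for each $A_b$ to obtain two candidate worst-case algorithms for $\KaParity$, and then use the instance checker to identify the correct candidate.

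For (i), I would construct the interactive proof system of \cref{thm:embcolIP} reworked over the field $\mathbb{F}_{2^t}$ with $t = \Theta(\log n)$, exactly as already sketched inside the proof of \cref{thm:SelectorForEmbcolHparity}: the downward reducibility~\eqref{eq:embcolpolydownwardreducibility} still holds over $\mathbb{F}_{2^t}$, and on $\binset$-inputs the polynomial $\embcolpoly{n}{K_a}{\mathbb{F}_{2^t}}$ computes $\numberembcolparity{K_a}$ by virtue of characteristic $2$. The bridge between $\KaParity$ and $\numberembcolparity{K_a}$ is handled in one direction by \cref{fact:homembcol} (which gives a parity-preserving reduction $G \mapsto G \times K_a$) and in the other direction by the inclusion-exclusion in the next paragraph. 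Wrapping the verifier of the resulting protocol into the construction of \cref{thm:InstanceCheckerForEmbcol} then yields the desired instance checker.

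For (iii), I would proceed as follows. Given $G \sim \distribution{K_a}$ on $V_1 \cup \cdots \cup V_a$, sample independent Bernoulli-$1/2$ intra-part edges to extend $G$ to $\widetilde{G} \sim \mathcal{G}_{an, 1/2}$. Since no intra-part edge can participate in a colorful $K_a$, the colorful $K_a$-subgraphs of $G$ and of $\widetilde{G}$ coincide, and inclusion-exclusion over the missing colors gives
\begin{equation*}
  \embcol{K_a}{G} = \sum_{J \subseteq [a]} (-1)^{|J|} \cdot \mathrm{sub}\bigl(K_a \to \widetilde{G}\bigl[V \setminus \bigcup_{j \in J} V_j\bigr]\bigr),
\end{equation*}
where each restricted induced subgraph is distributed as $\mathcal{G}_{(a-|J|)n, 1/2} \in \supp(\DistER)$. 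Reducing mod~$2$ absorbs the signs, so $2^a = O(1)$ queries to a $\KaParity$ oracle on $\DistER$ suffice to recover $\numberembcolparity{K_a}(G)$, which is the parity analog of \cref{prop:ToDistKab} we need.

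The main obstacle I anticipate is not in any single step but in bookkeeping for the composed pipeline: each of the $\polylog(n)$ oracle calls made by the instance checker invokes the worst-to-average chain (ii)$\to$(iii), which itself calls a $(\KaParity, \DistER)$-oracle $\polylog(n)$ times, and the cumulative union bound over all oracle calls must still leave the total failure probability at $o(1/\polylog(n))$. Consequently one has to argue that the required success probability of $A_0, A_1$, namely $1 - (\log n)^{-K_a}$ for a sufficiently large constant $K_a$, indeed propagates cleanly through the instance checker in the same way as in the proof of \cref{thm:SelectorForEmbcolH}; once this is checked, the assembly into a selector with $\widetilde{O}(n^2)$ running time and $\polylog(n)$ oracle queries is routine.
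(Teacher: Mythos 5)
Your high-level strategy is the same as the paper's: start from the selector for $\numberembcolparity{K_a}$ of \cref{thm:SelectorForEmbcolHparity}, pass from $(\numberembcolparity{K_a},\distribution{K_a})$ to $(\KaParity,\DistER)$, and close the loop with a worst-case reduction from $\KaParity$ back to $\numberembcolparity{K_a}$. The paper handles the last two steps by citing Lemmas~3.10 and~3.3 of Boix-Adser\`a, Brennan, and Bresler; your inclusion-exclusion derivation of the distributional step is a valid self-contained substitute that parallels the paper's own \cref{prop:ToDistKab} for bicliques.

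However, there is a genuine gap in your worst-case bridge. You invoke \cref{fact:homembcol} ($G\mapsto G\times K_a$) as a ``parity-preserving'' reduction from $\KaParity$ to $\numberembcolparity{K_a}$, but this does not survive the passage to parities. Indeed, since $K_a$ is a core, $\embcol{K_a}{G\times K_a}=\hom{K_a}{G}=\emb{K_a}{G}=a!\cdot\mathrm{sub}(K_a\to G)$, and $a!$ is even for every $a\ge 2$. Hence $\numberembcolparity{K_a}(G\times K_a)\equiv 0\pmod 2$ identically, and the map $G\mapsto G\times K_a$ reveals nothing about $\KaParity(G)$. (The paper's own footnote to \cref{thm:worst_to_average_KaParity-intro} flags precisely this phenomenon: ``computing the parity of $\emb{H}{G}$ is trivial if the number of automorphisms of $H$ is even.'') To repair this, you need a symmetry-breaking construction rather than the tensor product: e.g., on vertex set $[n]\times[a]$ put an edge between $(u,i)$ and $(v,j)$ exactly when $\{u,v\}\in E(G)$ and $i<j\Leftrightarrow u<v$, so that each $K_a$ subgraph of $G$ lifts to exactly one colorful $K_a$ subgraph of the auxiliary graph; this is what Lemma~3.3 of Boix-Adser\`a, Brennan, and Bresler does, and it is the reduction the paper cites. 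Once this bridge is fixed, your assembly of the selector (instance checker over $\mathbb{F}_{2^t}$, the worst-to-average reduction, and your step~(iii)) goes through as you describe.
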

\begin{proof}
The proof of \cref{lem:SelectorForKaParity}
is similar to that of \cref{cor:SelectorForKabCount}.
From \cref{thm:SelectorForEmbcolHparity},
we have obtain a selector from $\numberembcolparity{K_a}$ to $(\numberembcolparity{K_a},\distribution{K_a})$.
Then, we use the reduction by Boix-Adser\'a, Brennan, and Bresler~\cite{BBB19}.
They reduced $(\numberembcolparity{K_a},\distribution{K_a})$
to
$(\KaParity,\DistER)$
with preserving the success probability up to a constant factor (Lemma 3.10 of \cite{BBB19}).
Using their reduction, each query of the selector $S$ can be replaced by the reduction.
This yields a selector from $\numberembcolparity{K_a}$ to $(\KaParity,\DistER)$.
We then use the reduction
of Lemma 3.3 of Boix-Adser\'a, Brennan, and Bresler~\cite{BBB19}.
They reduced $\KabParity$ to $\numberembcolparity{K_a}$.
Specifically, if $\numberembcolparity{K_a}$
can be solved in time $t(n)$, then there exists a $t(n)+O(n^2)$-time algorithm for $\KaParity$.
\end{proof}

\begin{proof}[Proof of \cref{thm:worst_to_average_KaParity_refine}]
Suppose that there exists a $T(n)$-time randomized heuristic algorithm that solves $(\KaParity,\DistERDisjointUnion)$
with success probability $\frac{1}{2}+\epsilon$
for any $k=O(\log \epsilon^{-1})$.
Note that $(\KaParity,\DistERDisjointUnion)$ is equivalent to $((\KabParity)^{\oplus k},(\DistER)^k)$.
From \cref{thm:XORlemma,lem:SelectorForKaParity},
we obtain an $t'(n)$-time randomized algorithm that solves $\KabParity$ with probability $2/3$, where $t'(n)=(n^2+t(n)(\log n/\epsilon)^{O(1)}=t(n)\cdot (\log n/\epsilon)^{O(1)}$ (here, we assume $t(n)=\Omega(n^2)$ and let $\delta=1-(\log n)^{-C})$ and $k=O(\log \epsilon^{-1})$).

\end{proof}

\section*{Acknowledgement}
We thank Marhsall Ball for helpful discussion on Proof of Work.
Shuichi Hirahara is supported by ACT-I, JST.
Nobutaka Shimizu is supported by JSPS KAKENHI Grant Number JP19J12876.

\bibliographystyle{alphaabbr}
\bibliography{ref}
\end{document}